\def\ps@pprintTitle{%
   \let\@oddhead\@empty
   \let\@evenhead\@empty
   \def\@oddfoot{\reset@font\hfil\thepage\hfil}
   \let\@evenfoot\@oddfoot
}
\newtheorem{theorem}{Theorem}
\newtheorem{lemma}[theorem]{Lemma}
\newtheorem{definition}[theorem]{Definition}
\newtheorem{example}[theorem]{Example}
\newtheorem{remark}[theorem]{Remark}
\newcommand{\lcm}{{\mathrm{lcm}}}
\newcommand{\tr}{{\mathrm{Tr}}}
\newcommand{\Norm}{{\mathrm{N}}}
\newcommand{\gf}{{\mathbb{F}}}
\newcommand{\support}{{\mathrm{suppt}}}
\newcommand{\cC}{{\mathcal{C}}}
\newcommand{\bc}{{\mathbf{c}}}
\newcommand{\bg}{{\mathbf{g}}}
\newcommand{\bu}{{\mathbf{u}}}
\begin{document}

\begin{frontmatter}

\title{Self-orthogonal codes from $p$-divisible codes}

\tnotetext[fn1]{
This research was supported in part by the National Natural Science Foundation of China under Grant 12271059 and in part by the Scientific Innovation Practice Project of Postgraduates of Chang¡¯an University under Grant 300103723069.
}

\author{Xiaoru Li}
\ead{lx\underline{ }lixiaoru@163.com}
\author{Ziling Heng$^{\ast}$}
\ead{zilingheng@chd.edu.cn}

\cortext[cor]{Corresponding author}
\address{School of Science, Chang'an University, Xi'an 710064, China}

\begin{abstract}
Self-orthogonal codes are an important subclass of linear codes which have nice applications in quantum codes and lattices.
It is known that a binary linear code is self-orthogonal if its every codeword has weight  divisible by four, and a ternary linear code
is self-orthogonal if and only if its every codeword has weight  divisible by three. It remains open for a long time to establish
the relationship between the self-orthogonality of a general $q$-ary linear code and the divisibility of its weights, where $q=p^m$ for a prime $p$.
In this paper, we mainly prove that any $p$-divisible  code containing the all-1 vector over the finite field $\gf_q$ is self-orthogonal
for odd prime $p$, which solves this open problem under certain conditions.
 Thanks to this result, we characterize that any projective two-weight code containing the all-1 codeword over $\gf_q$ is self-orthogonal. Furthermore, by the extending and augmentation techniques, we construct six new families of self-orthogonal divisible codes from known cyclic codes. Finally, we construct two more families of self-orthogonal divisible codes with locality 2 which have nice application in distributed storage systems.
\end{abstract}

\begin{keyword}
Linear code \sep self-orthogonal code  \sep divisible code

\MSC  94B05 \sep 94A05

\end{keyword}

\end{frontmatter}

\section{Introduction}\label{sec1}
Let $q=p^m$ for a prime $p$ and a positive integer $m$. Denote by $\gf_q$ the finite field with $q$ elements.
An $[n,k,d]$ $q$-ary linear code $\cC$ is defined as a $k$-dimensional $\gf_q$-linear subspace of $\gf_q^n$.
For fixed $n$ and $k$, it is desirable to construct a linear code with $d$ as large as possible.
However, there exists a tradeoff among the parameters $n,k$ and $d$. An $[n,k,d]$ $q$-ary linear code is said to be optimal
if there exists no $[n,k,d+1]$ code over $\gf_q$.  An $[n,k,d]$ $q$-ary linear code is said to be almost optimal
if there exists an $[n,k,d+1]$ optimal code over $\gf_q$. Let $A_i$ denote the number of all codewords with weight $i$
in an $[n,k,d]$ linear code $\cC$. Then the sequence $(1,A_1,A_2,\cdots,A_n)$ is called the weight distribution of $\cC$.
The polynomial $A(z)=1+A_1z+A_2z^2+\cdots+A_nz^n$ is referred to as the weight enumerator of $\cC$.
The weight distribution of a linear code is of particularly importance as it contains the capabilities of error detection and
correction of the code, and allows to compute the error probability of the code's error detection and correction \cite{K1}.
The weight distributions of linear codes have been extensively studied in the literature \cite{D1, D3, D4, D2, Heng2016, HL, Heng2023, HWL, LiH}.

A linear code over $\gf_q$ is said to be divisible if all its codewords have weights divisible by an integer $\Delta>1$ \cite{H}.
Then the code is said to be $\Delta$-divisible and $\Delta$ is called a divisor of the code \cite{KK}.
The most interesting case is that $\Delta$ is a power of the characteristic of $\gf_q$.
Ward introduced the divisible codes in 1981 \cite{Ward1} and gave a survey in 2001 \cite{Ward2}.
Divisible codes have many applications including Galois geometries, subspace codes, partial spreads, vector space partitions, and Griesmer
codes \cite{KK, K, Ward2, Ward3}.

Define the dual of an $[n,k]$ linear code $\cC$ over $\gf_q$ by
$$\cC^\perp=\left\{\bc\in \gf_q^n:\langle\bc,\bu\rangle=0\text{ for any }\bu\in \cC\right\},$$
where $\langle \bc,\bu\rangle$ denotes the standard inner product of two vectors $\bc$ and $\bu$.
If $\cC\subset \cC^\perp$, then  $\cC$ is called a self-orthogonal code.
In particular, if $\cC=\cC^\perp$, then $\cC$ is said to be self-dual. Self-orthogonal codes are an very important subclass
of linear codes as they have many nice applications.
The generator matrix $G$ of a self-orthogonal code is known as the row-self-orthogonal matrix such that $GG^T=\mathbf{0}$, where $\mathbf{0}$ is the zero matrix and $G^T$ is the transpose of $G$ \cite{M}. In \cite{W}, Wan used self-orthogonal codes to construct even lattices.
In \cite{C, LLX, Steane1, Steane2}, it was shown that self-orthogonal codes can be used to construct quantum codes by the well-known CSS and Steane constructions.

In coding theory, it is interesting to establish the relationship between the divisibility and self-orthogonality of an $[n,k]$ linear code $\cC$ over $\gf_q$.
For $q=2,3,4$, the following results are known from \cite{H}:
\begin{enumerate}
\item[$\bullet$] If $\cC$ is a binary self-orthogonal code, then $\cC$ is divisible by $\Delta=2$. If $\cC$ is a  binary code divisible by $\Delta=4$, then $\cC$ is self-orthogonal.
\item[$\bullet$] $\cC$ is a ternary self-orthogonal code if and only if $\cC$ is divisible by $\Delta=3$.
\item[$\bullet$] $\cC$ is a Hermitian self-orthgognal code over $\gf_4$ if and only if $\cC$ is divisible by $\Delta=2$.
\end{enumerate}
In particular, if $n$ is even and $k=n/2$, the well-known Gleason-Pierce-Ward Theorem implies that
divisible $[n,n/2]$ codes  exist only for the values of $q$ and $\Delta$ given above, except in one trivial situation, and that the codes are always
self-dual except possibly when $q=\Delta=2$ \cite{H}.
Then it is natural to consider the relationship between the divisibility and self-orthogonality of a general $q$-ary linear code.
However, it remains an open problem for a long time.

In this paper, we mainly prove that any $p$-divisible  code containing the all-1 vector over the finite field $\gf_q$ is self-orthogonal
for odd prime $p$, which solves the open problem stated above under certain conditions.
 Thanks to this result, we characterize that any projective two-weight code containing the all-1 codeword over $\gf_q$ is self-orthogonal. Furthermore, by the extending and augmentation techniques, we construct six new families of self-orthogonal divisible codes from known cyclic codes. Finally, we construct two more families of self-orthogonal divisible codes with locality 2 which have nice application in distributed storage systems.

\section{Preliminaries}\label{sec2}
In this section, we will recall some fundamental results on the extended and augmented codes of linear codes, characters and Gaussian sums over finite fields, cyclic codes, BCH codes, $t$-designs, locally recoverable codes, weakly regular bent functions, the MacWilliams equations and the Pless power moments.

\subsection{The extended and augmented codes of linear codes}
Let $\cC$ be an $[n,k,d]$ linear code over $\gf_q$. The extended code $\widehat{\cC}$ of $\cC$ is defined by
\begin{eqnarray*}
  \widehat{\cC} = \left\{(x_0,x_1,\cdots,x_{n-1},x_n) \in \gf_q^{n+1}: (x_0,x_1,\cdots,x_{n-1}) \in \cC \mbox{ with } \sum_{i=0}^{n}x_i=0\right\}.
\end{eqnarray*}
It is obvious that $\widehat{\cC}$ is also linear and it is an $[n+1, k, \widehat{d}]$ linear code, where $\widehat{d}=d$ or $d+1$.
Let $G$ and $H$ denote the generator matrix and parity check matrix of $\cC$, respectively.
Let $\widehat{G}$ and $\widehat{H}$ denote the generator matrix and parity check matrix of $\widehat{\cC}$, respectively.
Then $\widehat{G}$ can be obtained from $G$ by adding a column  to $G$ such that the sum of the elements of each row of $\widehat{G}$ is $0$. Furthermore, $\widehat{H}$ is given by
\begin{eqnarray*}
\left.\widehat{H}=\left[\begin{array}{ccc|c}1&\cdots&1&1\\\hline&&&0\\&H&&\vdots\\&&&0\end{array}\right.\right].
\end{eqnarray*}

Let $\cC$ be an $[n,k,d]$ linear code with generator matrix $G$. Assume that all-$1$ vector $\mathbf{1}=(1,1,\cdots,1)$ is not a codeword in $\cC$. Then the augmented code $\overline{\cC}$ of $\cC$ is the linear code over $\gf_q$ with generator matrix
\begin{eqnarray*}
\left.\left[\begin{array}{c}G\\ \mathbf{1}\end{array}\right.\right].
\end{eqnarray*}
 Obviously that the length of $\overline{\cC}$ is $n$ and the dimension of $\overline{\cC}$ is $k+1$. Generally, we may require information of the complete weight distribution of the original code $\cC$ to determine the minimum distance of the augmented
code $\overline{\cC}$ \cite{D}.
\subsection{Characters over finite fields}
Let $q=p^e$ with $p$ a prime. Let $\zeta_p$ denote the primitive $p$-th root of complex unity.
 Define an additive character of $\gf_q$ by the homomorphism $\phi$ from the additive group $\gf_q$ to the complex multiplicative group $\mathbb{C}^*$ such that
$
\phi(x+y)=\phi(x)\phi(y)
$ for all $x, y \in \gf_q$.
For any $a \in \gf_q$,
an additive character of $\gf_q$ can be defined by the function $\phi_a(x)=\zeta_p^{\tr_{q/p}(ax)},\ x\in \gf_q$, where $\tr_{q/p}(x)=\sum_{i=0}^{e-1}x^{p^i}$ is the trace function from $\gf_q$ to $\gf_p$. Furthermore, the set $\widehat{\gf_q}:=\{\phi_a:a\in \gf_q\}$ gives all $q$ different additive characters of $\gf_q$. Obviously, $\phi_a(x)=\phi_1(ax)$. In particular, $\phi_0$ and $\phi_1$ are called the trivial additive character and the canonical additive character of $\gf_q$, respectively. The orthogonal relation of additive characters (see \cite{L}) is given by
\begin{eqnarray*}
\sum_{x\in \gf_q}\phi_a(x)=\begin{cases}
q    &\text{if $a=0,$ }\\
0     &\text{otherwise .}
\end{cases}
\end{eqnarray*}

Let $\gf_q^*=\langle\beta\rangle$. For each $0\leq j\leq q-2$, a multiplicative character of $\gf_q$ is defined as the homomorphism $\psi$ from the multiplicative group $\gf_q^*$ to the complex multiplicative group $\mathbb{C}^*$ such that
$
\psi(xy)=\psi(x)\psi(y)
$ for all $x, y \in \gf_q^*$.
The function
$\psi_j(\beta^k)=\zeta_{q-1}^{jk}$ for $k=0,1,\cdots,q-2$
 gives a multiplicative character of $\gf_q$, where $0\leq j \leq q-2$. The set $\widehat{\gf_q}^*:=\{\psi_j:0\leq \psi \leq q-2\}$ consists of all the multiplicative characters of $\gf_q$.
 In particular, $\psi_0$ is referred to as the trivial multiplicative character and $\eta:=\psi_{\frac{q-1}{2}}$ is said to be the quadratic multiplicative character of $\gf_q$ if $q$ is odd. The orthogonal relation of multiplicative characters (see \cite{L}) is given by
\begin{eqnarray*}
\sum_{x\in\gf_q^*}\psi_j(x)=\begin{cases}
q-1    &\text{if $j=0,$ }\\
0    &\text{if $j \neq 0$.}
\end{cases}
\end{eqnarray*}

\subsection{Gaussian sums over finite fields}
For an additive character $\phi$ and a multiplicative character $\psi$ of $\gf_q$, the \emph{Gaussian sum} $G(\psi, \phi)$ over $\gf_q$ is defined by
$$G(\psi, \phi)=\sum_{x\in \gf_q^*}\psi(x)\phi(x).$$
Particularly, $G(\eta,\phi)$ is referred to as the \emph{quadratic} \emph{Gaussian sum} over $\gf_q$ for nontrivial $\phi$.

The explicit values of quadratic \emph{Gaussian sum} are given as follows.
\begin{lemma}[\cite{L}, Theorem 5.15]\label{Guassum}
Let $q=p^e$ with $p$ an odd prime. Then
\begin{eqnarray*}
G(\eta,\phi_1)=(-1)^{e-1}(\sqrt{-1})^{(\frac{p-1}{2})^2e}\sqrt{q}
=\left\{
\begin{array}{lll}
(-1)^{e-1}\sqrt{q}    &   \mbox{ for }p\equiv 1\pmod{4},\\
(-1)^{e-1}(\sqrt{-1})^{e}\sqrt{q}    &   \mbox{ for }p\equiv 3\pmod{4}.
\end{array}
\right.
\end{eqnarray*}
\end{lemma}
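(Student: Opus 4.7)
The plan is to establish the formula in two stages: first determine $G(\eta,\phi_1)^2$ to pin down $G(\eta,\phi_1)$ up to a sign, and then fix the remaining ambiguity by reducing the extension-field case to the prime-field case via a lifting identity.

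First, I would compute $G(\eta,\phi_1)^2$. Substituting $x\mapsto -x$ in the defining sum immediately gives $\overline{G(\eta,\phi_1)} = \eta(-1)\,G(\eta,\phi_1)$, and the standard modulus identity $|G(\eta,\phi_1)|^2 = q$ (valid because both $\eta$ and $\phi_1$ are nontrivial) yields $G(\eta,\phi_1)^2 = \eta(-1)\,q$. Since $\eta(-1) = 1$ when $q\equiv 1\pmod{4}$ and $\eta(-1) = -1$ otherwise, this places $G(\eta,\phi_1)$ in $\{\pm\sqrt{q}\}$ or $\{\pm i\sqrt{q}\}$, which already matches the modulus and the parity of the power of $\sqrt{-1}$ in the claimed formula up to an unknown sign.

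Next, I would settle the prime case $e=1$ by invoking Gauss's classical sign theorem: for $p$ an odd prime, $G(\eta_0,\phi_0) = \sqrt{p}$ when $p\equiv 1\pmod{4}$ and $G(\eta_0,\phi_0) = i\sqrt{p}$ when $p\equiv 3\pmod{4}$, where $\eta_0$ and $\phi_0$ denote the quadratic and canonical characters of $\gf_p$. I would take this as external input rather than re-derive it, since the classical sign determination (via Schur's matrix and its eigenvalues, Dirichlet's Poisson-summation argument, or the theta transformation formula) is a deep stand-alone result.

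Finally, I would lift to general $e\geq 1$ using the Hasse--Davenport relation, which gives $G(\eta,\phi_1) = (-1)^{e-1}\bigl(G(\eta_0,\phi_0)\bigr)^e$ whenever the characters on $\gf_q$ are the canonical lifts of those on $\gf_p$ via the trace and norm. Raising Gauss's $e=1$ value to the $e$-th power and using $(\sqrt{p})^e = \sqrt{q}$ and $(i\sqrt{p})^e = i^e\sqrt{q}$, together with the observation that $\bigl((p-1)/2\bigr)^2 e$ is divisible by $4$ when $p\equiv 1\pmod{4}$ and congruent to $e\pmod{4}$ when $p\equiv 3\pmod{4}$, yields exactly $(-1)^{e-1}(\sqrt{-1})^{((p-1)/2)^2 e}\sqrt{q}$ after collecting the factors of $(-1)^{e-1}$ and the appropriate power of $i$. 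The main obstacle is really the classical sign determination in the prime case; once that and Hasse--Davenport are granted, the remainder is careful bookkeeping of signs and fourth roots of unity in two congruence classes of $p$.
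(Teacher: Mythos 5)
Your proof is correct and is essentially the canonical argument for this statement: the paper gives no proof of its own, quoting the result verbatim from Lidl--Niederreiter (Theorem 5.15), whose proof proceeds exactly as you describe --- first $\overline{G(\eta,\phi_1)}=\eta(-1)G(\eta,\phi_1)$ and $|G(\eta,\phi_1)|^2=q$ give $G(\eta,\phi_1)^2=\eta(-1)q$, then the case $e=1$ is settled by Gauss's classical sign determination, and the general case follows from the Davenport--Hasse lifting relation, with the power of $\sqrt{-1}$ accounted for by $\left(\frac{p-1}{2}\right)^2e\equiv 0$ or $e\pmod 4$ according as $p\equiv 1$ or $3\pmod 4$. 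The only caveat is that your two external inputs (Gauss's sign theorem and Davenport--Hasse) are themselves substantial theorems, but invoking them is unavoidable here and is exactly how the cited source structures the argument.
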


The following lemma gives the explicit value of a family of weil sums.
\begin{lemma}[\cite{L}, Theorem 5.33]\label{weilsum}
Let $\phi$ be a nontrivial additive character of $\gf_q$, where $q$ is power of an odd prime. Let $f(x)=a_2x^2+a_1x+a_0 \in \gf_q[x]$, where $a_2 \neq 0$. Then
\begin{eqnarray*}
\sum_{c \in \gf_q}\phi\left(f(c)\right) = \phi\left(a_0-a_1^2(4a_2)^{-1}\right)\eta(a_2)G(\eta, \phi).
\end{eqnarray*}
\end{lemma}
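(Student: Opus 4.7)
The plan is to reduce the sum for a general quadratic $f$ to the pure quadratic sum $\sum_{y\in\gf_q}\phi(a_2y^2)$ by completing the square, and then to evaluate that pure sum using the quadratic character $\eta$. Since $q$ is a power of an odd prime, $2$ and $4a_2$ are invertible in $\gf_q$, so one can legitimately write
\begin{equation*}
f(x)=a_2\!\left(x+\tfrac{a_1}{2a_2}\right)^{\!2}+\left(a_0-\tfrac{a_1^{2}}{4a_2}\right).
\end{equation*}
Substituting $y=c+a_1/(2a_2)$ (which is a bijection of $\gf_q$) and pulling the constant out of $\phi$ yields
\begin{equation*}
\sum_{c\in\gf_q}\phi(f(c))=\phi\!\left(a_0-\tfrac{a_1^{2}}{4a_2}\right)\sum_{y\in\gf_q}\phi(a_2y^{2}).
\end{equation*}

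The main remaining step, and the only real obstacle, is the evaluation of $S(a_2):=\sum_{y\in\gf_q}\phi(a_2y^{2})$. I would split $S(a_2)=1+\sum_{y\in\gf_q^{*}}\phi(a_2y^{2})$, and observe that the squaring map $y\mapsto y^{2}$ is two-to-one from $\gf_q^{*}$ onto the set of nonzero squares. This is captured algebraically by the identity $\#\{y\in\gf_q^{*}:y^{2}=z\}=1+\eta(z)$ for each $z\in\gf_q^{*}$, which gives
\begin{equation*}
\sum_{y\in\gf_q^{*}}\phi(a_2y^{2})=\sum_{z\in\gf_q^{*}}(1+\eta(z))\phi(a_2z)=\sum_{z\in\gf_q^{*}}\phi(a_2z)+\sum_{z\in\gf_q^{*}}\eta(z)\phi(a_2z).
\end{equation*}

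The first inner sum equals $-1$ by the orthogonality relation for additive characters (recalled in Section~2.2), since $a_2\neq 0$. For the second, I would substitute $w=a_2z$ and use that $\eta$ has order two so $\eta(a_2^{-1})=\eta(a_2)$; this gives
\begin{equation*}
\sum_{z\in\gf_q^{*}}\eta(z)\phi(a_2z)=\eta(a_2)\sum_{w\in\gf_q^{*}}\eta(w)\phi(w)=\eta(a_2)G(\eta,\phi).
\end{equation*}
Combining the three contributions yields $S(a_2)=1+(-1)+\eta(a_2)G(\eta,\phi)=\eta(a_2)G(\eta,\phi)$, and plugging back into the displayed identity produces exactly the claimed formula. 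The only subtlety is the parameterization of squares via $1+\eta(z)$, which relies crucially on $p$ being odd; everything else is direct manipulation with orthogonality and a change of variables.
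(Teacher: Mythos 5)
Your proof is correct and follows the standard argument for this result (the paper itself gives no proof, citing Lidl--Niederreiter, Theorem 5.33, whose proof proceeds exactly as you do: complete the square, then evaluate $\sum_{y}\phi(a_2y^2)$ via the identity $\#\{y: y^2=z\}=1+\eta(z)$ and orthogonality). All steps, including the use of $\eta(a_2^{-1})=\eta(a_2)$ and the invertibility of $2$ in odd characteristic, check out.
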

\subsection{Cyclic codes and BCH codes}
Let $\cC$ be an $[n,k]$ linear code over $\gf_q$.
 For each codeword $\bc=(c_0,c_1,\cdots,c_{n-1}) \in \cC$, if its cyclic shift $\bc'=(c_{n-1},c_0,c_1,\cdots,c_{n-2})$ is also in $\cC$, then $\cC$ is called a cyclic code. Since there is a bijective between the vector $\bc=(c_0,c_1,\cdots,c_{n-1}) \in \gf_{q^n}$ and the polynomial $c(x)=c_0+c_1x+\cdots+c_{n-1}x^{n-1} \in \gf_q[x]/(x^n-1)$,  a cyclic code $\cC$ of length $n$ over $\gf_q$ corresponds to an ideal of $\gf_q[x]/(x^n-1)$. Since every ideal of $\gf_q[x]/(x^n-1)$ is principal, there exists a monic polynomial $g(x)$ of the least degree such that
$\cC=\langle g(x)\rangle$ and $g(x) \mid (x^n-1)$. Then $g(x)$ is called the generator polynomial and $h(x)=(x^n-1)/g(x)$ said to be the check polynomial of the cyclic code $\cC$.

Let $n=q^m-1$. Denote by $\gf_{q^m}^*=\langle\alpha\rangle$. For any integer $i$ with $0 \leq i \leq n-1$, let $m_i(x)$ be the minimal polynomial of $\alpha^i$ over $\gf_q$. Define
\begin{eqnarray*}
  g_{(q,m,\delta)}(x) &=& \lcm\left(m_1(x), m_2(x), \cdots, m_{\delta-1}(x)\right)
\end{eqnarray*}
as the least common multiple of these minimal polynomials, where $2 \leq \delta \leq n$. Let $\cC_{(q,m,\delta)}$ be the cyclic code of length $n=q^m-1$ over $\gf_q$ with generator polynomial $g_{(q,m,\delta)}(x)$. Then $\cC_{(q,m,\delta)}$ is referred to as a primitive BCH code with designed distance $\delta$.
Define
\begin{eqnarray*}
  \tilde{g}_{(q,m,\delta)}(x) &=& (x-1)g_{(q,m,\delta)}(x).
\end{eqnarray*}
Let $\tilde{\cC}_{(q,m,\delta)}$ be the cyclic code of length $n=q^m-1$ over $\gf_q$ with generator polynomial $\tilde{g}_{(q,m,\delta)}(x)$. Obviously, $\tilde{\cC}_{(q,m,\delta)}$ is a subcode of $\cC_{(q,m,\delta)}$. For more details on BCH codes, the reader is referred to \cite{D}.

\subsection{Combinatorial $t$-designs from linear codes}
In this subsection, we will first give the definition of $t$-designs and then present the well-known coding-theoretic construction of $t$-designs.

\begin{definition}
 Let $n, \kappa$ and $t$ be positive integers with $1 \leq t \leq \kappa \leq n$. Denote by $\mathcal{P}$ a set of $n$ elements and $\mathcal{B}$ a set of elements that are $\kappa$-subsets of $\mathcal{P}$. If each $t$-subset of $\mathcal{P}$ is contained in precisely $\lambda$ elements of $\mathcal{B}$, then the pair $\mathbb{D}:=(\mathcal{P},\mathcal{B})$ is referred to as a $t$-$(n,\kappa,\lambda)$ design (or for short, $t$-design). The elements of $\mathcal{P}$ are called points and those of $\mathcal{B}$ are called blocks. If $\mathcal{B}$ does not contain repeated blocks, then the $t$-design is said to be simple.
\end{definition}

The following is the well-known coding-theoretic construction of $t$-designs. Let $\mathcal{C}$ be a code of length $n$. For each codeword $\bc \in \mathcal{C}$, the support of $\textbf{c}=\{c_1,c_2,\cdots,c_n \}$ is defined by $$\support(\textbf{c})=\{1 \leq i \leq n : c_i \neq 0 \}.$$
Let $\mathcal{B}_\kappa$ represent the set of supports of all codewords with Hamming weight $\kappa$ in $\mathcal{C}$ and let $\mathcal{P}=\{ 1,2,\cdots,n \}$ represent the set of coordinate positions of the codewords in $\mathcal{C}$.
The pair $(\mathcal{P},\mathcal{B}_\kappa)$ may be a $t$-$(n,\kappa,\lambda)$ design for some positive integer $\lambda$, which is called a support design of code $\mathcal{C}$. In other words, we say that the code $\mathcal{C}$ holds a $t$-$(n,\kappa,\lambda)$ design. If the pair $(\mathcal{P},\mathcal{B}_\kappa)$ is a simple $t$-$(n,\kappa,\lambda)$ design, the following relation holds \cite{H}:
\begin{eqnarray}\label{eqn-t}
|\mathcal{B}_\kappa|=\frac{1}{q-1}A_\kappa, \binom{n}{t}\lambda=\binom{\kappa}{t} \frac{1}{q-1}A_\kappa.
\end{eqnarray}

The following lemma is the well-known Assmus-Mattson Theorem.
\begin{lemma}\cite{H}\label{lem-AMth}
 Let $\mathcal{C}$ be an $[n,k,d]$ code over $\mathbb{F}_{q}$, and let $d^\perp$ denote the minimum distance of $\mathcal{C}^\perp$. Let $w$ be the largest integer satisfying $w \leq n$ and
$
w-\left\lfloor \frac{w+q-1}{q-2} \right\rfloor < d.
$
Define $w^\perp$ analogously with $d^\perp$. Let $(A_i)_{i=0}^{n}$ and $(A_i^\perp)_{i=0}^{n}$ be the weight distributions of $\mathcal{C}$ and $\mathcal{C}^\perp$, respectively. Let $t$ be a positive integer with $t<d$ such that there are at most $d^\perp-t$ weights of $\mathcal{C}$ in the sequence $(A_0,A_1,\cdots,A_{n-t})$. Then
\begin{enumerate}
\item $(\mathcal{P},\mathcal{B}_\kappa)$ is a simple $t$-design provided that $A_\kappa \neq 0$ and $d \leq \kappa \leq w$;

\item $(\mathcal{P},\mathcal{B}_\kappa^\perp)$ is a simple $t$-design provided that $A_\kappa^\perp \neq 0$ and $d^\perp \leq \kappa \leq w^\perp$, where $\mathcal{B}_\kappa^\perp$ denotes the set of supports of all codewords of weight $\kappa$ in $\mathcal{C}^\perp$.
\end{enumerate}
\end{lemma}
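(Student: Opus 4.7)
The plan is to prove statement (1); statement (2) will then follow by applying the same argument with $\mathcal{C}$ and $\mathcal{C}^\perp$ swapped, since $(\mathcal{C}^\perp)^\perp = \mathcal{C}$ and the hypotheses on $t,d,d^\perp,w,w^\perp$ are invariant under that swap. Fix a $t$-subset $T \subseteq \mathcal{P}$ and, for each $\kappa$ with $d \leq \kappa \leq w$, define
$$\lambda_\kappa(T) \;=\; |\{c \in \mathcal{C} : \wt(c) = \kappa,\ \support(c) \supseteq T\}|.$$
My goal is to show $\lambda_\kappa(T)$ depends only on $\kappa$ and $t=|T|$. The floor bound $w - \lfloor (w+q-1)/(q-2)\rfloor < d$ is what prevents two non-proportional weight-$\kappa$ codewords from sharing a support when $\kappa \leq w$: subtracting a suitable scalar multiple of one from the other and pigeon-holing on the ratios of coordinates would otherwise produce a codeword of weight strictly less than $d$. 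Consequently $|\mathcal{B}_\kappa|=A_\kappa/(q-1)$, and dividing $\lambda_\kappa(T)$ by $q-1$ then yields the $t$-design parameter $\lambda$ via \eqref{eqn-t}.

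I would first reduce to shortened codes via inclusion--exclusion on subsets $S \subseteq T$:
$$\lambda_\kappa(T) \;=\; \sum_{S \subseteq T}(-1)^{|S|}\, N_\kappa(S),\qquad N_\kappa(S) := |\{c \in \mathcal{C} : \wt(c) = \kappa,\ c|_S = 0\}|.$$
Here $N_\kappa(S)$ is the weight-$\kappa$ coefficient of the weight enumerator of the shortened code $\mathcal{C}_S := \{c\in\mathcal{C} : c|_S=0\}$, so it suffices to show that the weight distribution of $\mathcal{C}_S$ depends only on $|S|$ for every $S \subseteq T$. Since the dual of $\mathcal{C}_S$ (viewed as a code of length $n-|S|$ on coordinates outside $S$) is the punctured code $\pi_{S^c}(\mathcal{C}^\perp)$, the MacWilliams identity expresses the weight enumerator of $\mathcal{C}_S$ as a Krawtchouk-type linear transform of that of $\pi_{S^c}(\mathcal{C}^\perp)$, with transform coefficients depending only on $|S|$. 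Expanding the weight distribution of the punctured dual via the intersection counts $|\support(c) \cap S|$ for $c\in\mathcal{C}^\perp$ then produces a linear system relating the unknowns $N_\kappa(S)$ to the global distributions $(A_j)_j$ and $(A_j^\perp)_j$ through binomial-coefficient identities.

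The decisive step is uniqueness of this linear system. Because $\mathcal{C}$ has at most $d^\perp - t$ nonzero weights in the range $(A_0,A_1,\ldots,A_{n-t})$, the combined MacWilliams/Pless power moment relations between $\mathcal{C}$ and $\mathcal{C}^\perp$ supply exactly $d^\perp - t$ independent constraints on the unknown low-weight entries of $\pi_{S^c}(\mathcal{C}^\perp)$'s weight distribution; together with the $t<d$ vanishing these pin the unknowns down to values determined solely by $|S|$. Consequently $N_\kappa(S)$, and hence $\lambda_\kappa(T)$, is a function of $t$ alone, which finishes statement (1). The main obstacle I anticipate is the bookkeeping in this Vandermonde-type inversion: in particular, verifying that the defining bound for $w$ together with the ``at most $d^\perp - t$ weights'' hypothesis combine to produce a coefficient matrix of full rank, so the unique solution asserted actually exists and captures every $\kappa$ in the range $d \leq \kappa \leq w$.
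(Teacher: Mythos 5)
First, note that the paper offers no proof of this statement: it is the Assmus--Mattson theorem, quoted verbatim from \cite{H}, so there is no internal argument to compare against and your proposal has to stand on its own. Its skeleton does match the standard proof: reduce, by inclusion--exclusion over $S\subseteq T$, to showing that the weight distribution of the shortened code $\cC_S$ depends only on $|S|$; determine that distribution from power-moment (Vandermonde) equations; and use the bound defining $w$ only to guarantee that each support of a weight-$\kappa$ codeword with $\kappa\le w$ arises from exactly $q-1$ codewords, so that $\lambda_\kappa(T)/(q-1)$ counts blocks and the design is simple. Those three ingredients are correctly identified.

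There are, however, two genuine gaps. First, part (2) does \emph{not} follow ``by swapping $\cC$ and $\cC^\perp$'': the hypothesis ``at most $d^\perp-t$ nonzero weights of $\cC$ below $n-t$'' is not self-dual, and nothing in the assumptions bounds the number of weights of $\cC^\perp$. In the actual argument, part (2) is extracted from the same computation as part (1): once the weight distributions of all $\cC_S$ with $|S|\le t$ are known to depend only on $|S|$, one recovers the joint distribution of $(\mathrm{wt}(c),|\support(c)\cap T|)$, hence the weight distribution of the punctured code $\cC^T$, hence by MacWilliams that of $(\cC^T)^\perp=(\cC^\perp)_T$, and then repeats the inclusion--exclusion on the dual side. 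Second, your ``decisive step'' undercounts both the unknowns and the equations. For $|S|=j\le t$ the shortened code $\cC_S$ may have up to $(d^\perp-t)+(t-j)=d^\perp-j$ nonzero weights: the at most $d^\perp-t$ weights of $\cC$ lying in $[1,n-t]$, plus up to $t-j$ further candidates in the interval $(n-t,\,n-j]$, which you never account for. The matching supply of equations is not ``exactly $d^\perp-t$ constraints'' but the first $d^\perp-j$ power moments of $\cC_S$, which are available precisely because $d\bigl((\cC_S)^\perp\bigr)=d\bigl((\cC^\perp)^S\bigr)\ge d^\perp-j$ (puncturing $j$ coordinates lowers weights by at most $j$). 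Without this count the Vandermonde system you invoke is not visibly square, which is exactly the ``full rank'' issue you flag as unresolved; as written, the proposal therefore does not close its own central step.
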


The Assmus-Mattson Theorem is a powerful tool for constructing $t$-designs from linear codes \cite{D}.

\subsection{Locally recoverable codes}
In order to recover the data in distributed and cloud storage systems,
locally recoverable codes (LRCs for short) were proposed by Gopalan, Huang, Simitci and Yikhanin \cite{GH}.
For a positive integer $n$, we denote by $[n]=\{ 0,1,\cdots,n-1 \}$.  Let $\mathcal{C}$ be an $[n,k,d]$ linear code over $\mathbb{F}_q$. We index the coordinates of the codewords in $\mathcal{C}$ with the elements in $[n]$.
For each $i \in [n]$, if there exist a subset $R_{i} \subseteq [n] \backslash {i}$ of size $r$ and a function $f_{i}(x_1,x_2,\cdots,x_r)$ on $\mathbb{F}_q^{r}$ meeting $c_i=f_{i}(\mathbf{c}_{R_i})$ for any $\mathbf{c}=(c_0,\cdots,c_{n-1}) \in \mathcal{C}$, then we call $\mathcal{C}$ an $(n,k,d,q;r)$-LRC, where $\mathbf{c}_{R_i}$ is the projection of $\mathbf{c}$ at $R_{i}$. The set $R_{i}$ is called  the repair set of $c_i$ and $r$ is referred to as the locality of $\mathcal{C}$. Locally recoverable codes also have been implemented in practice by Microsoft and Facebook \cite{HC, SM}.

We also present the conventional definition of linear locally recoverable codes as follows.

\begin{definition}\label{Def-locality}\cite{HY}
Let $\cC$ be a linear code over $\gf_q$ with a generator matrix $G=[\bg_1, \bg_2, \cdots, \bg_n]$. If $\bg_i$ is a linear combination of $l(\leq r)$ other columns of $G$, then
$r$ is called the locality of the $i$-th symbol of each codeword of $\cC$. Besides, $\cC$ is called a locally recoverable code with locality $r$ if all the symbols of codeword of $\cC$ have locality $r$.
\end{definition}

Constructing LRCs with small locality has been an interesting research topic because LRCs have nice application in large-scale distributed storage systems.
A lot of progress has been made on the research of locally recoverable codes. Very recently, in \cite{TP}, the authors established general theory on the minimum locality of linear codes and studied the minimum locality of some families of linear codes.

By the following lemma, we can derive the locality of a linear code if its dual holds a $1$-design.

\begin{lemma}\label{lem-locality}\cite{TP}
Let $\cC$ be a nontrivial linear code of length $n$ and $d^{\perp}$ be the minimum distance of $\cC^{\perp}$. If $(\mathcal{P}(\cC^{\perp}), \mathcal{B}_{d^{\perp}}(\cC^{\perp}))$ is a $1-(n, d^{\perp}, \lambda^{\perp})$ design with $\lambda^{\perp} \geq 1$, then $\cC$ has locality $d^{\perp}-1$.
\end{lemma}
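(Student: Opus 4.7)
The plan is to extract, from every minimum-weight codeword of $\cC^\perp$, a linear recovery relation among $d^\perp$ symbols, and then use the $1$-design hypothesis to guarantee such a relation exists for every coordinate.

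First I would unfold the $1$-design condition. Saying that $(\mathcal{P}(\cC^\perp), \mathcal{B}_{d^\perp}(\cC^\perp))$ is a $1$-$(n, d^\perp, \lambda^\perp)$ design with $\lambda^\perp \geq 1$ means that every coordinate position $i \in [n]$ lies in the support of at least one codeword $\bh = (h_0, h_1, \ldots, h_{n-1}) \in \cC^\perp$ of Hamming weight $d^\perp$; in particular $h_i \neq 0$ and $|\support(\bh)| = d^\perp$. Since $\bh \in \cC^\perp$, the defining orthogonality $\langle \bh, \bc \rangle = 0$ holds for every $\bc \in \cC$, which, written coordinatewise, gives
\begin{equation*}
c_i \;=\; -h_i^{-1}\sum_{\substack{j \in \support(\bh)\\ j \neq i}} h_j\, c_j \qquad \text{for all } \bc \in \cC.
\end{equation*}

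Next I would translate this into the generator-matrix language of Definition~\ref{Def-locality}. Writing $G = [\bg_0, \bg_1, \ldots, \bg_{n-1}]$ and expressing each coordinate of a codeword as $c_j = \bm \bg_j$ for an arbitrary message vector $\bm$, the above identity forces
\begin{equation*}
\bg_i \;=\; -h_i^{-1}\sum_{\substack{j \in \support(\bh)\\ j \neq i}} h_j\, \bg_j,
\end{equation*}
so the $i$-th column of $G$ is a linear combination of exactly $d^\perp - 1$ other columns. Because the $1$-design property supplies such a codeword $\bh$ for every coordinate $i$, the locality of each symbol is at most $d^\perp - 1$.

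Finally, to conclude that the locality equals $d^\perp - 1$ and cannot be smaller, I would run the argument in reverse: if some $\bg_i$ were expressible as a linear combination of $\ell < d^\perp - 1$ other columns, the corresponding linear dependence would produce a nonzero vector in $\cC^\perp$ of Hamming weight at most $\ell + 1 < d^\perp$, contradicting the definition of $d^\perp$. Combining the two bounds yields locality exactly $d^\perp - 1$. The main obstacle, though modest, is this careful back-and-forth between the coordinatewise and column-wise viewpoints of locality; once that dictionary is in place, the $1$-design hypothesis does all the structural work of guaranteeing uniform coverage of the coordinates.
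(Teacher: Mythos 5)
Your proof is correct. Note that the paper itself gives no proof of this lemma --- it is quoted from the reference \cite{TP} --- so there is nothing internal to compare against; your argument (every coordinate lies in the support of some minimum-weight dual codeword by the $1$-design hypothesis, that codeword yields a linear recovery of the corresponding symbol from the other $d^{\perp}-1$ symbols in its support, and no shorter recovery relation can exist since it would produce a nonzero dual codeword of weight below $d^{\perp}$) is the standard one and matches how this result is established in \cite{TP}. The only remark worth adding is that under Definition~\ref{Def-locality} as stated in this paper the upper bound alone already suffices, while your final minimality step is what justifies the sharper ``minimum locality'' reading of the lemma used in \cite{TP}.
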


\subsection{Weakly regular bent functions}
Let $f(x)$ denote a function from $\gf_{p^e}$ to $\gf_p$. Define the Walsh transform of $f(x)$ as follows:
\begin{eqnarray*}
\text{W}_f(\beta):=\sum_{x \in \gf_{p^e}}\zeta_p^{f(x)-\tr_{p^e/p}(\beta x)},\ \beta \in \gf_{p^e}.
\end{eqnarray*}
The function $f(x)$ is said to be a $p$-ary bent function if $|\text{W}_f(\beta)|=p^{\frac{e}{2}}$ for any $\beta \in \gf_{p^e}$. For a bent function $f(x)$, if there exists some $p$-ary function $f^*(x)$ such that $\text{W}_f(\beta)=p^{\frac{e}{2}}\zeta_p^{f^*(\beta)}$ for any $\beta \in \gf_{p^e}$, then $f(x)$ is called a regular bent function. If there exists some $p$-ary function $f^*(x)$ and a complex $u$ with unit magnitude such that $\text{W}_f(\beta)=up^{\frac{e}{2}}\zeta_p^{f^*(\beta)}$ for any $\beta \in \gf_{p^e}$, then $f(x)$ is  said to be a weakly regular bent function, where $f^*(x)$ is called the dual of $f(x)$. It turns out in \cite{Helleseth1} and \cite{Helleseth2} that if $f(x)$ is a weakly regular bent function, then
\begin{eqnarray}\label{eq-Wf}
\text{W}_f(\beta)=\varepsilon \sqrt{p^*}^e\zeta_p^{f^*(\beta)},
\end{eqnarray}
where $\varepsilon=\pm1$ is called the sign of the Walsh transform of $f(x)$ and $p^*=(-1)^{\frac{p-1}{2}}p$. Obviously, the dual of a weakly regular bent function $f(x)$ is also a weakly regular bent function and $(f^*)^*(x)=f(-x)$. Besides, the sign of the Walsh transform of $f^*(x)$ is $\eta_0^e(-1)\varepsilon$, where $\eta_0$ is the quadratic multiplicative character of $\gf_p$.

\begin{definition}\label{def-weakly}
Denote by $\mathcal{RF}$ the set of all $p$-ary weakly regular bent functions $f(x)$ such that $f(0)=0$ and $f(ax)=a^hf(x)$ for any $x \in \gf_{p^e}$ and $a \in \gf_p^*$, where $h$ is a positive even integer with $\gcd(h-1, p-1)=1$.
\end{definition}

Note that almost all known weakly regular bent functions are contained in $\mathcal{RF}$.
Some known weakly regular bent functions are listed in Table \ref{tab1}, where $p$ is an odd prime.

\begin{table}[h!]
\scriptsize
\begin{center}
\caption{Known weakly regular bent functions.}\label{tab1}
\begin{tabular}{@{}lllll@{}}
\toprule
Bent function & $e$ & $p$ & $\varepsilon$ &Reference  \\
\midrule
$f(x)=\sum_{i=0}^{\lfloor\frac{e}{2}\rfloor}\tr_{p^e/p}(c_ix^{p^i+1})$ (nondegenerate) & arbitrary & arbitrary & \cite[Proposition 1]{Helleseth1} & \cite{Helleseth1}\\
$\substack{\mbox{$f(x)=\sum_{i=1}^{p^k-1}\tr_{p^e/p}(c_ix^{i(p^k-1)})+\tr_{p^l/p}(\delta x^{\frac{p^e-1}{t}})$, where}\\ \mbox{$t \mid (p^k+1), c_i \in \gf_{p^e},\delta \in \gf_{p^l}$ and $l$ is the least positive}\\
 \mbox{integer satisfying $l \mid m$ and $t \mid (p^l-1)$}}$& $e=2k$ & arbitrary & $(-1)^{\frac{(p-1)e}{4}}$ \cite{T} & \cite{Helleseth1, LH}\\
$f(x)=\tr_{3^e/3}(cx^{\frac{3^e-1}{4}+3^k+1}), \gf_{3^e}^*=\langle\alpha\rangle, c=\alpha^{\frac{3^k+1}{4}}$& $e=2k$ with $k$ odd& $p=3$ & $(-1)^{\frac{e}{2}+1}$ \cite{T} & \cite{Helleseth1}\\
$f(x)=\tr_{p^e/p}(x^{p^{3k}+p^{2k}-p^k+1}+x^2)$& $e=4k$ & arbitrary & $-1$ \cite{T} & \cite{Helleseth2}\\
$f(x)=\tr_{3^e/3}(cx^{\frac{3^i+1}{2}}), c \in \gf_{3^e}^*$ & $i$ odd and $gcd(i,e)=1$ & $p=3$ & $(-1)^{m-1}\eta(c)$ \cite{T} & \cite{CRS}\\
\bottomrule
\end{tabular}
\end{center}
\end{table}

\subsection{The MacWilliams equations and the Pless power moments}
The MacWilliams equations and the Pless power moments are useful in calculating the minimum distance of the dual of a linear code.
\begin{lemma}\cite[Pless power moments, Page 260]{H} \label{lem-Pless}
For an $[n,k,d]$ linear code $\cC$ over $\gf_q$, let $(1, A_1, A_2, \cdots, A_n)$ and $(1, A_1^{\perp}, A_2^{\perp}, \cdots, A_n^{\perp})$ represent the weight distributions of $\cC$ and $\cC^{\perp}$, respectively. The first four Pless power moments are listed in the following:
\begin{eqnarray*}
\sum_{i=0}^{n}A_i&=&q^k,\\
\sum_{i=0}^{n}iA_i&=&q^{k-1}(qn-n-A_1^{\perp}), \\
\sum_{i=0}^{n}i^2A_{i}&=&q^{k-2}\left((q-1)n(qn-n+1)-(2qn-q-2n+2)A_1^{\perp}+2A_2^{\perp}\right),\\
\sum_{i=0}^{n}i^3A_{i}&=&q^{k-3}\left[(q-1)n(q^2n^2-2qn^2+3qn-q+n^2-3n+2)\right.\nonumber\\&&\left.-(3q^2n^2-3q^2n-6qn^2+12qn+q^2-6q+3n^2-9n+6)A_1^{\perp}\right.\nonumber\\&&\left.
+6(qn-q-n+2)A_2^{\perp}-6A_3^\perp\right].
\end{eqnarray*}
\end{lemma}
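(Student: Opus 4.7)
The plan is to derive the Pless power moments directly from the MacWilliams identities in their binomial-moment form, namely
\begin{equation*}
\sum_{i=0}^{n} \binom{n-i}{\ell} A_i \;=\; q^{k-\ell} \sum_{i=0}^{\ell} \binom{n-i}{n-\ell} A_i^\perp, \qquad 0 \leq \ell \leq n,
\end{equation*}
which in turn follows by applying the MacWilliams transform to the Hamming weight enumerator $A(x,y)$ of $\cC$ and reading off coefficients of $x^\ell y^{n-\ell}$. This identity will be my workhorse; I assume it as a standard fact.

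First, I would handle the easy cases: $\ell=0$ immediately yields $\sum_i A_i = q^k$, since $A_0^\perp = 1$. For $\ell=1$, the right-hand side becomes $q^{k-1}(n + A_1^\perp)$, and since $\binom{n-i}{1} = n-i$, rearranging gives
\begin{equation*}
\sum_{i=0}^{n} i A_i \;=\; n q^k - q^{k-1}(n + A_1^\perp) \;=\; q^{k-1}\bigl(qn - n - A_1^\perp\bigr),
\end{equation*}
matching the second formula of the lemma.

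For $\ell=2$ and $\ell=3$, the key idea is to convert binomial expressions $\binom{n-i}{j}$ back into monomials $i^s$, using the change-of-basis
\begin{equation*}
i^s \;=\; \sum_{j=0}^{s} c_{s,j}(n)\, \binom{n-i}{j},
\end{equation*}
where the coefficients $c_{s,j}(n)$ are polynomials in $n$ determined by inverting the triangular system that writes falling factorials of $n-i$ in the monomial basis $\{1,i,i^2,i^3\}$. Substituting the $\ell=0,1,2$ MacWilliams identities into this expansion for $s=2$ produces the third formula; likewise the $\ell=0,1,2,3$ identities yield the fourth. I would organize the computation so that at each stage the coefficient of every $A_j^\perp$ is collected as a polynomial in $q$ and $n$, and then expanded to match the statement.

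The main obstacle is not conceptual but purely combinatorial bookkeeping: the coefficients of $A_1^\perp$, $A_2^\perp$, and $A_3^\perp$ in the third power moment are lengthy polynomials in $q$ and $n$, and a single sign or binomial slip will propagate. My mitigation strategy is to keep the identity in its cleanest binomial form for as long as possible, then expand $\binom{n}{2}, \binom{n}{3}, \binom{n-1}{2}, \binom{n-2}{2}$, etc.\ only at the very end, and cross-check two special cases (e.g.\ $\cC = \gf_q^n$, and a repetition code) where all $A_i$ and $A_i^\perp$ are known explicitly and both sides must agree.
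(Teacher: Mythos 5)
The paper offers no proof of this lemma at all---it is quoted verbatim as a cited result from Huffman and Pless \cite{H}, where it is obtained essentially as you propose, by inverting the binomial (MacWilliams) moment identities into the monomial basis $\{1,i,i^2,i^3\}$. Your derivation is correct (the $\ell=0$ and $\ell=1$ cases check out exactly as written, and the $\ell=2,3$ cases are the same routine change of basis plus bookkeeping), so your proposal coincides with the standard proof given in the cited source.
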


\begin{lemma}\cite[MacWilliams equations]{H}\label{lem-Mac}
  Let $\cC$ be an $[n,k,d]$ code over $\gf_q$. Let $A(z)=\sum_{i=0}^{n}A_iz^i$ and $A^{\perp}(z)$ be the weight enumerator of $\cC$ and $\cC^{\perp}$, respectively. Then
  \begin{eqnarray*}
    A^{\perp}(z) &=& q^{-k}(1+(q-1)z)^nA\left(\frac{1-z}{1+(q-1)z}\right).
  \end{eqnarray*}
\end{lemma}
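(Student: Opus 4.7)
The plan is to establish the univariate MacWilliams identity as a specialization of its homogeneous (bivariate) form, and to prove the bivariate form by a character-theoretic (Poisson summation) argument. To that end, I would first introduce the homogeneous weight enumerator $W_{\cC}(x,y):=\sum_{i=0}^n A_i\,x^{n-i}y^i$, noting that $A(z)=W_{\cC}(1,z)$ and likewise $A^{\perp}(z)=W_{\cC^{\perp}}(1,z)$. The target identity to establish in bivariate form is
\[
W_{\cC^{\perp}}(x,y) \;=\; q^{-k}\, W_{\cC}\bigl(x+(q-1)y,\; x-y\bigr),
\]
from which the stated univariate version follows by setting $x=1$, $y=z$ and factoring $(1+(q-1)z)^n$ out of $W_{\cC}(1+(q-1)z,1-z)$.

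The workhorse is the orthogonality relation for the canonical additive character $\phi_1$ of $\gf_q$, already recalled in the paper. For any function $f:\gf_q^n\to\bC$, define the Fourier transform $\widehat{f}(\bc):=\sum_{\bv\in \gf_q^n} f(\bv)\,\phi_1(\langle \bc,\bv\rangle)$. A direct application of orthogonality shows $\sum_{\bc\in\cC}\phi_1(\langle \bc,\bv\rangle)=|\cC|$ if $\bv\in\cC^{\perp}$ and $0$ otherwise, which yields the Poisson summation formula
\[
\sum_{\bu\in\cC^{\perp}}f(\bu)\;=\;\frac{1}{|\cC|}\sum_{\bc\in\cC}\widehat{f}(\bc).
\]
I would then apply this identity to the tensor-product function $f(\bu)=\prod_{i=1}^n g(u_i)$ with $g(0)=x$ and $g(a)=y$ for every $a\in\gf_q^*$, so that $f(\bu)=x^{n-\wt(\bu)}y^{\wt(\bu)}$. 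Summing $f$ over $\cC^{\perp}$ gives precisely $W_{\cC^{\perp}}(x,y)$.

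The only actual computation is evaluating $\widehat{f}(\bc)=\prod_i \widehat{g}(c_i)$. For $c_i=0$ one has $\widehat{g}(0)=x+(q-1)y$; for $c_i\ne 0$, the additive-character orthogonality gives $\sum_{a\in\gf_q^*}\phi_1(c_i a)=-1$, hence $\widehat{g}(c_i)=x-y$. Therefore $\widehat{f}(\bc)=(x+(q-1)y)^{n-\wt(\bc)}(x-y)^{\wt(\bc)}$, and substituting into the Poisson formula yields the bivariate MacWilliams identity above. Specializing as described completes the proof.

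I do not expect any real obstacle: the argument is purely formal once Poisson summation is in place, and Poisson summation itself is a one-line consequence of orthogonality. The only point requiring a little care is the bookkeeping in the substitution $x=1$, $y=z$, where one must pull the factor $(1+(q-1)z)^{n-i}$ out of each term, rewrite it as $(1+(q-1)z)^n\cdot(1+(q-1)z)^{-i}$, and then recognize the remaining sum as $A\!\left(\frac{1-z}{1+(q-1)z}\right)$; this is a routine manipulation and poses no difficulty.
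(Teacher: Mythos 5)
Your argument is correct: the Poisson-summation step, the evaluation $\widehat{g}(0)=x+(q-1)y$ and $\widehat{g}(c)=x-y$ for $c\neq 0$, and the specialization $x=1$, $y=z$ are all sound, and together they give exactly the stated identity. Note, however, that the paper offers no proof of this lemma at all — it is quoted verbatim from the reference [H] — so there is no in-paper argument to compare against; what you have written is the standard character-theoretic (discrete Fourier/Poisson summation) proof of the MacWilliams identity, and it is complete as proposed.
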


\section{The main results}\label{sec3}
Let $q$ be an odd prime power. In this section, we will give a sufficient condition for a $q$-ary linear code to be self-orthogonal by the divisibility of the code. Then we will apply this result to study the self-orthogonality of projective two-weight codes, Griesmer codes and generalized ${\rm Reed}$-${\rm Muller}$ codes.
\subsection{A sufficient condition for a $q$-ary linear code to be self-orthgognal}
\begin{lemma}\cite{W}\label{lem-self-orthogonal}
  Let $q$ be a power of an odd prime and $\cC$ be a $q$-ary linear code. Then $\cC$ is self-orthogonal if and only if $\bc \cdot \bc=0$ for all $\bc \in \cC$.
\end{lemma}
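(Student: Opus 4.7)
The plan is to prove the two implications of the biconditional separately, with the nontrivial direction resting on a polarization identity that works precisely because the characteristic is odd.

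The forward direction is immediate: if $\cC \subseteq \cC^{\perp}$, then every $\bc \in \cC$ lies in $\cC^{\perp}$, hence $\bc \cdot \bc = 0$ by definition of the dual. So the only real content is the converse.

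For the converse, I would assume $\bc \cdot \bc = 0$ for every $\bc \in \cC$ and aim to show $\bc_1 \cdot \bc_2 = 0$ for all $\bc_1, \bc_2 \in \cC$. Since $\cC$ is $\gf_q$-linear, $\bc_1 + \bc_2 \in \cC$, and expanding via bilinearity yields the polarization identity
\begin{equation*}
0 = (\bc_1 + \bc_2) \cdot (\bc_1 + \bc_2) = \bc_1 \cdot \bc_1 + 2\, \bc_1 \cdot \bc_2 + \bc_2 \cdot \bc_2 = 2\, \bc_1 \cdot \bc_2.
\end{equation*}
Here the hypothesis kills the diagonal terms, leaving $2\,\bc_1 \cdot \bc_2 = 0$ in $\gf_q$. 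Because $q$ is a power of an odd prime, $2$ is a unit in $\gf_q$, so we may divide through to conclude $\bc_1 \cdot \bc_2 = 0$. This shows $\cC \subseteq \cC^{\perp}$.

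There is essentially no obstacle here beyond noting where odd characteristic is used: the step $2\,\bc_1 \cdot \bc_2 = 0 \Rightarrow \bc_1 \cdot \bc_2 = 0$ would fail in characteristic $2$ (which is precisely why the binary situation requires divisibility by $4$ rather than by $2$, as discussed in the introduction). I would close by remarking that this is why the subsequent development of the paper restricts to odd $p$ when relating divisibility to self-orthogonality.
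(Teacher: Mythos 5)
Your proof is correct: the forward direction is immediate from the definition of $\cC^{\perp}$, and the polarization identity $0=(\bc_1+\bc_2)\cdot(\bc_1+\bc_2)=2\,\bc_1\cdot\bc_2$ together with the invertibility of $2$ in odd characteristic gives the converse. The paper itself states this lemma without proof, citing Wan's article, so there is no in-paper argument to compare against; your argument is the standard one and is exactly what the citation is standing in for, including the correct observation of where odd characteristic is essential.
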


\begin{theorem}\label{th-selforthogonal}
  Let $q=p^e$, where $p$ is an odd prime. Let $\cC$ be an $[n,k,d]$ linear code over $\gf_q$ with $\mathbf{1} \in \cC$, where $\mathbf{1}$ is the all-$1$ vector of length $n$. If $\cC$ is $p$-divisible, then $\cC$ is self-orthogonal.
\end{theorem}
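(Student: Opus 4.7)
The plan is to reduce the self-orthogonality condition to a purely weight-theoretic one using Lemma \ref{lem-self-orthogonal}: it suffices to show that $\bc\cdot\bc = 0$ in $\gf_q$ for every $\bc\in\cC$. The key observation is that $\bc\cdot\bc = \sum_{i=1}^n c_i^2$ can be rewritten, by grouping coordinates according to the value they take, as $\sum_{a\in\gf_q} N_a(\bc)\, a^2$, where $N_a(\bc)$ counts the number of coordinates $i$ with $c_i=a$. Since $\gf_q$ has characteristic $p$, it is enough to prove that $N_a(\bc)\equiv 0\pmod p$ for every $a\in\gf_q$.

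To get this, I would exploit the assumption $\mathbf{1}\in\cC$. For any $\bc\in\cC$ and any $a\in\gf_q$, the translate $\bc-a\mathbf{1}$ again lies in $\cC$, hence its Hamming weight is divisible by $p$ by the $p$-divisibility hypothesis. But the weight of $\bc-a\mathbf{1}$ equals $n-N_a(\bc)$, because the zero coordinates are precisely those $i$ with $c_i=a$. Therefore
\begin{equation*}
n - N_a(\bc) \equiv 0 \pmod p \qquad \text{for every } a\in\gf_q.
\end{equation*}

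Next I would extract $n\equiv 0\pmod p$ by summing over $a\in\gf_q$. Since $\sum_{a\in\gf_q} N_a(\bc) = n$, the sum of the $q$ congruences above gives $qn - n = (q-1)n \equiv 0\pmod p$, and because $q=p^e$ satisfies $q-1\equiv -1\pmod p$, this forces $n\equiv 0\pmod p$. Combined with the displayed congruence, this yields $N_a(\bc)\equiv 0\pmod p$ for every $a$, and so
\begin{equation*}
\bc\cdot\bc \;=\; \sum_{a\in\gf_q} N_a(\bc)\, a^2 \;=\; 0 \quad \text{in } \gf_q,
\end{equation*}
completing the argument via Lemma \ref{lem-self-orthogonal}.

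The whole proof is short, and in some sense there is no serious obstacle; the only subtlety worth flagging is the use of the hypothesis that $p$ is odd. That assumption enters through Lemma \ref{lem-self-orthogonal}, where $\bc\cdot\bc=0$ for all $\bc$ is equivalent to self-orthogonality only in odd characteristic (in characteristic $2$, $\bc\cdot\bc$ reduces to the Hamming weight mod $2$ and does not control cross terms $\bc\cdot\bc'$). So the main conceptual point of the plan is that the combination \emph{$p$-divisibility $+$ closure under adding scalar multiples of $\mathbf{1}$} forces every value-multiplicity $N_a(\bc)$ to vanish modulo $p$, which in odd characteristic is exactly enough to kill the quadratic form $\bc\cdot\bc$.
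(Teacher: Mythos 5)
Your proposal is correct and follows essentially the same route as the paper: both reduce to showing $\bc\cdot\bc=0$ via Wan's criterion (Lemma \ref{lem-self-orthogonal}), both recover the value-counts from weights via $N_a(\bc)=n-\wt(\bc-a\mathbf{1})$ using $\mathbf{1}\in\cC$, and both conclude from $p$-divisibility that every $N_a(\bc)$ vanishes modulo $p$. The only cosmetic differences are that the paper expands $(\bc+a\mathbf{1})\cdot(\bc+a\mathbf{1})$ rather than $\bc\cdot\bc$ directly, and obtains $p\mid n$ from $\wt(\mathbf{1})=n$ instead of your summation over $a\in\gf_q$; your version is, if anything, slightly more streamlined.
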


\begin{proof}
Let $a$ be any fixed nonzero element in $\gf_q$.
  Since $\mathbf{1} \in \cC$, $a \mathbf{1} \in \cC$. Then we have $\bc +a \mathbf{1} =(c_0+a, c_1+a, \cdots, c_{n-1}+a) \in \cC$ for any $\bc = (c_0,c_1,\cdots,c_{n-1}) \in \cC$. Since $\cC$ is a linear code, we deduce that $\bc$ runs over $\cC$ if and only if $\bc+a\mathbf{1}$ runs over $\cC$ for fixed $a \in \gf_q^*$. By Lemma \ref{lem-self-orthogonal}, $\cC$ is self-orthogonal if and only if $(\bc+a\mathbf{1}) \cdot (\bc+a\mathbf{1}) = 0$ for all $\bc \in \cC$. Denote by $$N_s=\left| \left\{0 \leq i \leq n-1 : c_i = s\right\} \right|\mbox{ for } s \in \gf_q^*.$$ Then
  \begin{eqnarray*}
   (\bc+a\mathbf{1}) \cdot (\bc+a\mathbf{1}) &=& \left(\begin{array}{c}
                                                         c_0+a,\  c_1+a,\  \cdots,\ c_{n-1}+a
                                                       \end{array}\right)
                                                       \left(\begin{array}{c}
                                                               c_0+a \\
                                                               c_1+a \\
                                                               \vdots \\
                                                               c_{n-1}+a
                                                             \end{array}\right) \\
    &=& \sum_{i=0}^{n-1}(c_i+a)^2\\
    &=& \sum_{i=0}^{n-1}c_i^2+2a\sum_{i=0}^{n-1}c_i+na^2\\
    &=& \sum_{s \in \gf_q^*} N_s s^2 +2a\sum_{s \in \gf_q^*} N_s s+na^2.
  \end{eqnarray*}
Let $\text{wt}(\bc)$ denote the weight of the codeword $\bc \in \cC$. For any $s \in \gf_q^*$,
\begin{eqnarray*}
  \text{wt}(\bc - s\mathbf{1}) = n-|\{0 \leq i \leq n-1: c_i-s =0\} |= n-N_{s}.
\end{eqnarray*}
Then $N_{s}=n-\text{wt}(\bc - s\mathbf{1})$. Hence,
\begin{eqnarray*}
 (\bc+a\mathbf{1}) \cdot (\bc+a\mathbf{1}) = \sum_{s \in \gf_q^*}(n-\text{wt}(\bc - s\mathbf{1}))s^2+2a\sum_{s \in \gf_q^*}(n-\text{wt}(\bc - s\mathbf{1}))s+na^2.
\end{eqnarray*}
Since $p \mid \text{wt}(\bc)$ for any codeword $\bc \in \cC$ and $\mathbf{1} \in \cC$, we derive that $p \mid \text{wt}(\bc - s\mathbf{1})$ and $p \mid n$. Hence, we have $(\bc+a\mathbf{1}) \cdot (\bc+a\mathbf{1})=0$ for all $\bc \in \cC$. Then $\cC$ is self-orthogonal.
\end{proof}

\begin{remark}
  We remark that the condition in Theorem \ref{th-selforthogonal} such that $\cC$ is self-orthogonal is not necessary. In fact, there exist self-orthogonal codes which are not $p$-divisible. In the following, we give an example.

    Let $q=3^3$, $k=3$, $\gf_q^*=\langle\beta\rangle$, $\textbf{a}=(0,\beta^0,\beta^1,\ldots,\beta^{q-2})$ and $\textbf{v}$ be the all-$1$ vector of length $q$. Then the generalized Reed-Solomon code $GRS_k(\textbf{a},\textbf{v})$ has parameters $[27,3,25]$ and weight enumerator
  \begin{eqnarray*}
    A(z) &=& 1+9126z^{25}+1404z^{26}+9152z^{27}.
  \end{eqnarray*}
  By Magma, $\mathbf{1} \in GRS_k(\textbf{a},\textbf{v})$ and $GRS_k(\textbf{a},\textbf{v})$ is self-orthogonal. However, $GRS_k(\textbf{a},\textbf{v})$ is not $p$-divisible.
\end{remark}

In the following, we give a new approach to prove the self-orthogonality of generalized Reed-Muller codes (GRM codes for short \cite{RMcodes}) by Theorem \ref{th-selforthogonal}.
Let $\gf_q[x_1,x_2, \cdots, x_m]$ be the polynomial ring of $m$ variables. Denote by $\text{Eval}_\textbf{z}(f):=f(z_1,z_2,\cdots,z_m)$ the evaluation of $f$ at the vector $\textbf{z}$. Let $\text{Eval}(f):=(\text{Eval}_\textbf{z}(f): \textbf{z} \in \gf_q^m)$ denote the evaluation vector of $f$ whose coordinates are the evaluations of $f$ at all $q^m$ vectors in $\gf_q^m$.
\begin{definition}
  Let $n:=q^m$ and $\rho \leq m(q-1)$. The $\rho$-th order $q$-ary ${\rm Reed}$-${\rm Muller}$ code ${\rm RM}_q(\rho,m)$ is defined by
  \begin{eqnarray*}
   {\rm RM}_q(\rho,m) := \{{\rm Eval}(f): f \in \gf_q[x_1,x_2,\cdots,x_m], {\rm deg}(f) \leq \rho\}.
  \end{eqnarray*}
\end{definition}

\begin{lemma}\cite{AA, EA, KL, PR}
  Let notations be the same as above. Suppose $0 \leq \rho <m(q-1)$. Then ${\rm RM}_q(\rho,m)$ has parameters
    \begin{eqnarray*}
    \left[q^m, \sum_{j=0}^m(-1)^j\binom{m}{j}\binom{m+r-jq}{r-jq}, (b+1)q^a\right],
    \end{eqnarray*}
    where $m(q-1)-\rho=a(q-1)+b$ with $a,b \geq 0$ and $b < q-1$.
\end{lemma}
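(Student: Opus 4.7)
The plan is to establish the three parameter claims (length, dimension, minimum distance) separately, with the minimum distance carrying the substantive content.

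The length $n = q^m$ is immediate from the definition, since $\text{Eval}(f)$ has one coordinate for each of the $q^m$ points in $\gf_q^m$. For the dimension, the key reduction is that $x^q = x$ on $\gf_q$, so the evaluation map factors through the quotient $\gf_q[x_1, \ldots, x_m]/\langle x_i^q - x_i : 1 \leq i \leq m\rangle$, whose natural representatives are the \emph{reduced} polynomials in which each variable has degree at most $q-1$. A short induction on $m$, using that a univariate polynomial of degree $<q$ with $q$ roots is zero, shows that no nonzero reduced polynomial vanishes on all of $\gf_q^m$; hence evaluation is injective on reduced polynomials. Therefore $\dim \RM_q(\rho,m)$ equals the number of tuples $(i_1,\ldots,i_m)$ with $0 \le i_j \le q-1$ and $i_1+\cdots+i_m \le \rho$. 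A standard inclusion-exclusion on the upper-bound constraints $i_j \le q-1$ then yields $\sum_{j=0}^{m}(-1)^j\binom{m}{j}\binom{m+\rho-jq}{\rho-jq}$, with the convention $\binom{n}{k}=0$ for $k<0$, matching the stated formula.

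For the minimum distance, I would first prove the upper bound $d \le (b+1)q^a$ by exhibiting an explicit codeword. Pick $q-1-b$ distinct elements $\alpha_1,\ldots,\alpha_{q-1-b} \in \gf_q$ and set
\[
f(\bx) = \prod_{j=1}^{q-1-b}(x_{a+1}-\alpha_j) \cdot \prod_{i=a+2}^{m} \prod_{\gamma \in \gf_q^*}(x_i - \gamma).
\]
The total degree is $(q-1-b) + (m-a-1)(q-1) = (m-a)(q-1) - b = \rho$, so $f \in \RM_q(\rho,m)$. Since $\prod_{\gamma \in \gf_q^*}(x_i-\gamma) = x_i^{q-1} - 1$ vanishes on $\gf_q^*$ and is nonzero only at $x_i = 0$, the right-hand factor is nonzero exactly when $x_{a+2} = \cdots = x_m = 0$, contributing $q^{a+1}$ points. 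Among these, the left-hand factor is nonzero at the $b+1$ values of $x_{a+1}$ outside $\{\alpha_1,\ldots,\alpha_{q-1-b}\}$ (with the first $a$ coordinates unconstrained), giving total weight $(b+1)q^a$.

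For the matching lower bound, I would proceed by induction on $m$. The base case $m = 1$ is the elementary fact that a nonzero univariate polynomial of degree $\le \rho$ has at most $\rho$ roots, hence at least $q - \rho = b+1$ nonzero evaluations (here $a=0$). For the inductive step, expand $f(x_1,\ldots,x_m) = \sum_{k=0}^{q-1} x_m^k f_k(x_1,\ldots,x_{m-1})$ with $\deg f_k \le \rho - k$, partition $\gf_q^m$ by the value of $x_m$, and bound the weight on each slice using the induction hypothesis applied to slice polynomials. The main obstacle is to land on the value $(b+1)q^a$ exactly: one must split into cases according to which $f_k$ are nonzero and track how the representation $m(q-1) - \rho = a(q-1) + b$ passes to the representation of $(m-1)(q-1) - (\rho - k)$ for the slice polynomials, combining the slice bounds via the elementary identity $\sum_{\alpha \in \gf_q}\alpha^k \in \{0,-1\}$. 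This case analysis is the technical heart of the argument and is worked out in detail in the cited references.
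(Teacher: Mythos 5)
The paper does not prove this lemma at all; it is quoted as a classical result with citations to Assmus--Key, Kasami--Lin--Peterson, and Pellikaan--Wu, so there is no internal proof to compare against. Your sketch follows the standard textbook route for exactly this reason, and the parts you actually carry out are correct: the length is immediate; the dimension argument (injectivity of evaluation on reduced polynomials plus inclusion--exclusion over the constraints $i_j\le q-1$) is the standard one and yields the stated formula, where the $r$ in the paper's display should be read as $\rho$; and your upper-bound codeword is right --- its degree is $(q-1-b)+(m-a-1)(q-1)=(m-a)(q-1)-b=\rho$ and its support is exactly the $(b+1)q^a$ points with $x_{a+2}=\cdots=x_m=0$ and $x_{a+1}$ avoiding the chosen $\alpha_j$. (The only degenerate case is $\rho=0$, where $a=m$ and the construction as written references a nonexistent variable $x_{a+1}$; there the claim is trivial since the code consists of constants.)

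The one genuine shortfall is that the minimum-distance lower bound --- the only substantive assertion in the lemma --- is not actually established. You name the right strategy (slice by the value of $x_m$, apply induction to the restrictions $f(\cdot,\alpha)$, and track how the Euclidean representation of $m(q-1)-\rho$ descends to the slices), and this induction does go through in the references, but the case analysis you defer is precisely where all the work lies: one must handle separately the case where some slice $f(\cdot,\alpha)$ vanishes identically (so that $x_m-\alpha$ divides $f$ and the degree drops) versus the case where at least $b+1$ (or $q$) slices are nonzero, and verify that the two parameter regimes $b<q-1$ versus the carry into $a$ combine to give exactly $(b+1)q^a$. As submitted, your proposal is a correct and well-organized outline rather than a proof; since the paper itself treats the lemma as citable background, this is acceptable in context, but if you intend the argument to be self-contained you must execute that induction.
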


Several classical codes have non-trivial divisibility and the most prominent one is the generalized ${\rm Reed}$-${\rm Muller}$ code. The Theorem of AX \cite{AX} described the divisibility of generalized ${\rm Reed}$-${\rm Muller}$ codes.

\begin{theorem}\label{th-AX}\cite{AX}
  Let $q$ be a power of prime $p$. Then the $\rho$-th order generalized ${\rm Reed}$-${\rm Muller}$ code ${\rm RM}_q(\rho,m)$ over $\gf_q$ is divisible by $q^{\lceil m/\rho \rceil-1}$. Moreover, this divisor is the highest power of $p$ that divides the code.
\end{theorem}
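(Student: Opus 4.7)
The plan is to deduce the divisibility by counting zeros of polynomials. Any codeword of $\RM_q(\rho,m)$ has the form $c_f = \text{Eval}(f)$ for a polynomial $f\in \gf_q[x_1,\ldots,x_m]$ of total degree $d\le \rho$, and its weight equals $\wt(c_f)=q^m-N(f)$, where $N(f):=|\{x\in\gf_q^m:f(x)=0\}|$. Since $\lceil m/\rho\rceil-1\le m-1$, the summand $q^m$ is already divisible by $q^{\lceil m/\rho\rceil-1}$, so it suffices to prove the same divisibility for $N(f)$. Moreover, $d\le \rho$ forces $\lceil m/d\rceil\ge \lceil m/\rho\rceil$, so it is enough to establish the sharper statement $q^{\lceil m/d\rceil-1}\mid N(f)$.

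The second step is to expand $N(f)$ through additive characters of $\gf_q$. By the orthogonality relation of additive characters,
$$N(f)=\frac{1}{q}\sum_{t\in\gf_q}\sum_{x\in\gf_q^m}\phi_1(tf(x)) = q^{m-1}+\frac{1}{q}\sum_{t\in\gf_q^*}S(t),\quad S(t):=\sum_{x\in\gf_q^m}\phi_1(tf(x)).$$
Because $q^{m-1}$ is itself divisible by $q^{\lceil m/d\rceil-1}$, the task reduces to proving $v_p(S(t))\ge e\lceil m/d\rceil$ for every $t\in\gf_q^*$, where $q=p^e$ and $v_p$ denotes the $p$-adic valuation.

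This valuation estimate is the main obstacle and the technical heart of the argument. I would attack it by lifting to the Witt ring of $\gf_q$: expand $\phi_1(tf(x))$ via Teichm\"uller representatives, regroup terms by monomial degrees, and rewrite $S(t)$ as a polynomial combination of Gauss sums over $\gf_q$. Stickelberger's theorem then controls the $p$-adic valuation of each Gauss-sum factor through the base-$p$ digit sum of its multiplicative index, and a careful combinatorial bound on these digit sums --- tied to the constraint $\deg(f)\le d$, so that every monomial appearing in $tf(x)$ contributes at most $d$ to the total degree --- yields $v_p(S(t))\ge e\lceil m/d\rceil$. This is Ax's classical divisibility estimate specialized to the zero count of a single polynomial, and after combination with the first step delivers the divisibility of every codeword weight by $q^{\lceil m/\rho\rceil-1}$.

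Finally, to verify that $q^{\lceil m/\rho\rceil-1}$ is the exact largest power of $p$ that divides $\RM_q(\rho,m)$, I would exhibit an explicit $f_0$ of degree $\rho$ whose evaluation attains the bound. A natural candidate is a non-degenerate polynomial of degree $\rho$ for which equality holds in Stickelberger's bound for the dominant Gauss-sum term in the expansion of $S(t)$; for such an $f_0$ one obtains $v_p(\wt(c_{f_0}))=e(\lceil m/\rho\rceil-1)$, ruling out any strictly larger power of $p$ and completing the proof.
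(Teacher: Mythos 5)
This theorem is not proved in the paper at all: it is quoted verbatim from Ax's 1964 paper \cite{AX} as a known result, so there is no in-paper argument to compare your proposal against. Judged on its own terms, your reduction is correct and cleanly done: a codeword is $\mathrm{Eval}(f)$ with $\deg f = d \le \rho$, its weight is $q^m - N(f)$, the term $q^m$ is harmless, and $\lceil m/d\rceil \ge \lceil m/\rho\rceil$ lets you pass to the sharper degree-$d$ statement; the character-sum expansion $N(f) = q^{m-1} + q^{-1}\sum_{t\ne 0}S(t)$ is also the right starting point, and the Teichm\"uller/Gauss-sum/Stickelberger route is indeed (essentially) Ax's original method. (You should also dispose of the degenerate case $d=0$, where $\lceil m/d\rceil$ is undefined, but that is trivial.)

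The genuine gap is that the two hard steps are asserted rather than proved. First, the bound $v_p(S(t)) \ge e\lceil m/d\rceil$ is exactly the content of Ax's key combinatorial lemma, and it is not a routine consequence of Stickelberger: the naive accounting of digit sums yields only $v_p(S(t)) \ge e\,m/d$, and since $e\,m/d$ need not be an integer multiple of $e$, rounding up gives $\lceil em/d\rceil$, which is strictly smaller than $e\lceil m/d\rceil$ whenever $e>1$ and $d \nmid m$ (e.g.\ $m=3$, $d=2$, $e=2$ gives $3$ versus $4$). Promoting the exponent to $e\lceil m/d\rceil$ requires the delicate analysis of which tuples of multiplicative characters can actually occur in the expansion, i.e.\ the part of Ax's proof you have labelled ``a careful combinatorial bound'' without supplying it. Second, the sharpness claim (``this divisor is the highest power of $p$ that divides the code'') is not established by gesturing at ``a non-degenerate polynomial for which equality holds in Stickelberger's bound''; one must name a concrete $f_0$ of degree $\rho$ and compute $v_p$ of its weight exactly. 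The standard witness is a sum of $\lceil m/\rho\rceil$ pairwise disjoint monomials, e.g.\ $f_0 = x_1\cdots x_\rho + x_{\rho+1}\cdots x_{2\rho} + \cdots$, whose zero count factors over the blocks and can be evaluated in closed form; without such a computation the ``moreover'' part of the theorem remains unproved.
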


By Theorem \ref{th-selforthogonal}, it is easy to derive the self-orthogonality of the generalized ${\rm Reed}$-${\rm Muller}$ code for odd prime power $q$.
\begin{theorem}
  Let $q$ be an odd prime power and $\rho < m$. Then all the $\rho$-th order generalized ${\rm Reed}$-${\rm Muller}$ code ${\rm RM}_q(\rho,m)$ over $\gf_q$ is self-orthogonal.
\end{theorem}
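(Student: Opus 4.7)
The plan is to obtain the result as a direct combination of Theorem \ref{th-selforthogonal} with the AX Theorem (Theorem \ref{th-AX}). To apply Theorem \ref{th-selforthogonal}, I need to verify two hypotheses on the code ${\rm RM}_q(\rho,m)$: that the all-$1$ vector is a codeword, and that the code is $p$-divisible.

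First I would observe that the all-$1$ vector $\mathbf{1}$ of length $n=q^m$ arises as the evaluation vector ${\rm Eval}(f)$ of the constant polynomial $f(x_1,\ldots,x_m) = 1 \in \gf_q[x_1,\ldots,x_m]$, which has degree $0$. Since $\rho \geq 0$ (and in our setting $\rho \geq 1$ actually, because $\rho < m$ and $m \geq 1$; but $\rho = 0$ is also harmless since then the code consists only of constant vectors), the constant polynomial $1$ satisfies ${\rm deg}(1) \leq \rho$, so $\mathbf{1} \in {\rm RM}_q(\rho,m)$ by the defining evaluation description.

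Next I would extract $p$-divisibility from Theorem \ref{th-AX}: the code ${\rm RM}_q(\rho,m)$ is divisible by $q^{\lceil m/\rho \rceil - 1}$. The hypothesis $\rho < m$ is exactly what is needed here, because it forces $m/\rho > 1$, hence $\lceil m/\rho \rceil \geq 2$, hence $\lceil m/\rho \rceil - 1 \geq 1$. Consequently the divisor is at least $q = p^e$, which in particular is a multiple of $p$. Therefore every codeword of ${\rm RM}_q(\rho,m)$ has weight divisible by $p$, i.e.\ the code is $p$-divisible.

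With both hypotheses of Theorem \ref{th-selforthogonal} verified, that theorem immediately yields self-orthogonality of ${\rm RM}_q(\rho,m)$ over the odd prime power $\gf_q$. There is essentially no combinatorial obstacle here; the only subtlety is appreciating why the stated restriction $\rho < m$ is imposed, namely that $\rho = m$ would give $\lceil m/\rho \rceil - 1 = 0$ and hence only trivial divisibility from Theorem \ref{th-AX}, which would not suffice to invoke Theorem \ref{th-selforthogonal}. I would remark on this at the end of the proof to make clear that the bound $\rho < m$ is dictated by the divisibility input, not by the self-orthogonality transfer step.
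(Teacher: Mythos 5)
Your proposal is correct and follows exactly the paper's own route: establish $\mathbf{1}\in {\rm RM}_q(\rho,m)$ via the constant polynomial, obtain $p$-divisibility from the AX Theorem (Theorem \ref{th-AX}) using $\rho<m$, and conclude by Theorem \ref{th-selforthogonal}. You simply spell out the divisibility and the role of the hypothesis $\rho<m$ more explicitly than the paper does.
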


\begin{proof}
By the definition of the $\rho$-th order generalized ${\rm Reed}$-${\rm Muller}$ code ${\rm RM}_q(\rho,m)$, we have $\mathbf{1} \in {\rm RM}_q(\rho,m)$.
Then the desired conclusion follows from Theorems  \ref{th-AX} and \ref{th-selforthogonal}.
\end{proof}

\subsection{The self-orthogonality of projective two-weight codes and Griesmer codes}

A linear $[n,k,d]$ code $\cC$ is called projective if no two columns of a generator matrix $G$ are linearly dependent. Then the minimum distance of $\cC^{\perp}$ is $d^{\perp} \geq 3$ and the columns of $G$ are pairwise different points in a projective $(k-1)$-dimensional space. In \cite{BI}, the authors studied the divisibility of  projective two-weight codes.
\begin{lemma}\label{lem-projective}\cite{BI}
  Let $q=p^e$, where $p$ is a prime and $e$ is a positive integer. Then all the projective two-weight codes over $\gf_q$ of dimension $k \geq 3$ are $p$-divisible expect the MacDonald codes with parameters $[(q^k-q)/(q-1),k,q^{k-1}-1]$ and weights $q^{k-1}-1$ and $q^{k-1}$.
\end{lemma}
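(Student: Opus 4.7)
The plan is to combine the Pless power moments (Lemma~\ref{lem-Pless}) with the MacWilliams identity (Lemma~\ref{lem-Mac}) to convert the projective two-weight hypothesis into explicit closed-form relations among the weights, from which divisibility by $p$ can be read off. Let $\cC$ be a projective $[n,k]$ two-weight code over $\gf_q$ with nonzero weights $w_1 < w_2$ and frequencies $A_{w_1}, A_{w_2}$. Projectivity says no two columns of a generator matrix are scalar multiples, which is equivalent to $A_1^\perp = A_2^\perp = 0$; substituting this into the first three Pless moments yields
\begin{align*}
A_{w_1} + A_{w_2} &= q^k - 1,\\
w_1 A_{w_1} + w_2 A_{w_2} &= q^{k-1}(q-1)n,\\
w_1^2 A_{w_1} + w_2^2 A_{w_2} &= q^{k-2}(q-1)n(qn - n + 1).
\end{align*}
I would solve the first two equations for $A_{w_1}, A_{w_2}$ as rational functions of $w_1, w_2, n$ and substitute into the third, obtaining the single closed-form identity
\begin{equation*}
w_1 w_2 (q^k - 1) = q^{k-2}(q-1)n\bigl[q(w_1 + w_2) - (q-1)n - 1\bigr].
\end{equation*}

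Next, I would reinterpret $\cC$ geometrically: the columns of a generator matrix form a set $M$ of $n$ distinct points in $\PG(k-1,q)$, and the weight of a nonzero codeword $uG$ equals $n - |M \cap H_u|$ for the hyperplane $H_u = \{v : u \cdot v = 0\}$. The two-weight hypothesis makes $M$ a two-intersection set with intersection sizes $h_i = n - w_i$; by the Delsarte--Calderbank correspondence this determines a strongly regular Cayley graph on $\gf_q^k$ whose eigenvalues are explicit integer-valued expressions in $w_1, w_2, q$, and $n$. Integrality of these eigenvalues forces $w_2 - w_1 = p^s$ for some integer $s \geq 0$.

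If $s \geq 1$, then $p \mid (w_2 - w_1)$; reducing the second Pless moment modulo $p$ and using the first ($A_{w_1} + A_{w_2} \equiv -1 \pmod p$) gives $r(q^k - 1) \equiv 0 \pmod p$ for the common residue $r := w_1 \bmod p = w_2 \bmod p$, whence $r \equiv 0 \pmod p$ and both weights are $p$-divisible. The remaining case is $s = 0$, i.e.\ $w_2 = w_1 + 1$; substituting into the closed-form identity gives a quadratic in $n$ whose only solution with $A_{w_1}, A_{w_2} \in \mathbb{Z}_{>0}$ (after discarding the degenerate root $n = (q^k-1)/(q-1)$, which corresponds to the simplex code) is $w_1 = q^{k-1} - 1$, $w_2 = q^{k-1}$, and $n = (q^k - q)/(q - 1)$, which matches the MacDonald configuration exactly. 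The main obstacle is the integrality step establishing $w_2 - w_1 = p^s$: while classical via the strongly regular graph spectrum, a purely arithmetic derivation requires a delicate $p$-adic analysis of the closed-form identity above, and this is where the genuine content of the lemma resides.
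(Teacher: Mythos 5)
You should first be aware that the paper contains no proof of Lemma \ref{lem-projective}: it is imported verbatim from \cite{BI}, so there is no internal argument to compare against, and your proposal must stand on its own. Its global shape is reasonable — reduce to Delsarte's dichotomy (either the weight difference is a positive power of $p$, or the weights are consecutive and the code must be identified as MacDonald) — and two pieces are correct: the identity $w_1w_2(q^k-1)=q^{k-2}(q-1)n\bigl[q(w_1+w_2)-(q-1)n-1\bigr]$ does follow from the first three Pless moments with $A_1^{\perp}=A_2^{\perp}=0$, and in the case $p\mid(w_2-w_1)$ your reduction of the second moment modulo $p$ (using $k\ge 3$, so $p\mid q^{k-1}(q-1)n$, and $A_{w_1}+A_{w_2}\equiv-1$) correctly forces $p\mid w_1$ and $p\mid w_2$. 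But both load-bearing steps are missing.

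The first gap is the claim that integrality of the strongly regular graph eigenvalues forces $w_2-w_1=p^s$. The restricted eigenvalues of the Cayley graph attached to a two-weight code are $(q-1)n-qw_1$ and $(q-1)n-qw_2$, which are integers automatically for \emph{any} two-weight code; integrality is vacuous here and cannot produce the $p$-power condition. What is actually needed is Delsarte's theorem that the weights of a projective two-weight code have the form $up^s$ and $(u+1)p^s$, whose proof is a separate, genuinely nontrivial duality/$p$-adic argument; your closing sentence effectively concedes that this core step is not supplied. The second gap is the case $w_2=w_1+1$: the closed-form identity then becomes a single Diophantine equation in the two unknowns $w_1$ and $n$, not a quadratic in $n$ alone, and extracting the unique admissible solution $(w_1,n)=\bigl(q^{k-1}-1,(q^k-q)/(q-1)\bigr)$ requires systematically exploiting the side constraints $A_{w_1},A_{w_2}>0$ and $n\le(q^k-1)/(q-1)$ (projectivity) uniformly in $q$ and $k\ge 3$; you assert the conclusion without such an analysis, and discarding "the degenerate root" only makes sense once $w_1$ has somehow been pinned down. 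As it stands, the proposal is a correct reduction of the lemma to Delsarte's theorem plus an unproved classification of the consecutive-weight case, not a proof.
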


By Theorem \ref{th-selforthogonal}, we derive the following theorem.

\begin{theorem}\label{col2}
  Let $q$ be an odd prime power and $k \geq 3$. Then any $[n,k]$ projective two-weight code $\cC$ over $\gf_q$ containing the all-$1$ vector is self-orthogonal.
\end{theorem}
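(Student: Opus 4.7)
The plan is to derive the theorem directly from the two preceding results: Lemma \ref{lem-projective}, which describes the divisibility of projective two-weight codes, and Theorem \ref{th-selforthogonal}, which promotes $p$-divisibility (in the presence of $\mathbf{1}$) to self-orthogonality. Concretely, I would first invoke Lemma \ref{lem-projective} on the given code $\cC$. This lemma tells us that $\cC$ is $p$-divisible, \emph{unless} $\cC$ is a MacDonald code with parameters $[(q^k-q)/(q-1),\,k,\,q^{k-1}-1]$ and nonzero weights $q^{k-1}-1$ and $q^{k-1}$. The only real task is therefore to rule out this exceptional case under the hypothesis $\mathbf{1}\in\cC$.

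For that, I would use the simple observation that the Hamming weight of $\mathbf{1}\in\gf_q^n$ is $n$, so if $\mathbf{1}$ lies in a two-weight code with nonzero weights $w_1<w_2$, then $n\in\{w_1,w_2\}$, and in particular $n\le w_2$. For the MacDonald case $w_2=q^{k-1}$, and a short calculation gives
\[
n - q^{k-1} \;=\; \frac{q^k-q}{q-1} - q^{k-1} \;=\; \frac{q^{k-1}-q}{q-1},
\]
which is strictly positive for $k\geq 3$ and $q\geq 2$. Hence $n>q^{k-1}>q^{k-1}-1$, and so $\mathbf{1}$ cannot be a codeword of any MacDonald code in the range under consideration. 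This contradicts the assumption $\mathbf{1}\in\cC$, eliminating the MacDonald exception.

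Consequently, Lemma \ref{lem-projective} forces $\cC$ to be $p$-divisible. Since $q$ is an odd prime power and $\mathbf{1}\in\cC$, Theorem \ref{th-selforthogonal} immediately gives $\cC\subseteq\cC^{\perp}$, i.e.\ $\cC$ is self-orthogonal. The whole argument is a two-line deduction once the length-versus-weight inequality above is in hand, and I do not anticipate any substantive obstacle beyond that arithmetic check; the only modelling point to be careful about is making sure that the exceptional MacDonald family in Lemma \ref{lem-projective} is stated with the nonzero weights $q^{k-1}-1$ and $q^{k-1}$ (rather than, say, one of them being replaced by $n$), which is indeed the case as recorded in the lemma.
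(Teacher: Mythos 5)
Your proposal is correct and follows essentially the same route as the paper: invoke Lemma \ref{lem-projective}, exclude the MacDonald exception by noting its nonzero weights $q^{k-1}-1$ and $q^{k-1}$ are both strictly less than the length $(q^k-q)/(q-1)$ so $\mathbf{1}$ cannot be a codeword, and then conclude via Theorem \ref{th-selforthogonal}. Your explicit computation of $n-q^{k-1}=(q^{k-1}-q)/(q-1)>0$ merely spells out the weight comparison the paper asserts in one line.
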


\begin{proof}
  By Lemma \ref{lem-projective}, we know that all the projective two-weight codes of dimension $k \geq 3$ are $p$-divisible expect the MacDonald codes with parameters $[(q^k-q)/(q-1),k,q^{k-1}-1]$ and weights $q^{k-1}-1$ and $q^{k-1}$. Besides, we deduce that the MacDonald code does not contain the all-$1$ vector of length $(q^k-q)/(q-1)$ as it has no codeword with weight $(q^k-q)/(q-1)$. They by Theorem \ref{th-selforthogonal}, the desired conclusion follows.
\end{proof}

The following example confirms the conclusion in Theorem \ref{col2}.
\begin{example}\label{exa-CDbar}
Let $q$ be an odd prime power and $m > 2$ be an integer. Denote by the defining set $D=\{x \in \gf_{q^m}: \tr_{q^m/q}(x)=0\}$. Then the linear code $\overline{\cC_D}=\{(\tr_{q^m/q}(bx)+c)_{x \in D}: b \in \gf_{q^m}, c \in \gf_q\}$ has the following properties which are easy to prove.
\begin{itemize}
  \item $\overline{\cC_D}$ has parameters $[q^{m-1}, m, q^{m-1}-q^{m-2}]$.
  \item $\overline{\cC_D}$ has weight enumerator $1+(q-1)z^{q^{m-1}}+q(q^{m-1}-1)z^{q^{m-1}-q^{m-2}}$.
  \item $\overline{\cC_D}^{\perp}$ has parameters $[q^{m-1}, q^{m-1}-m, 3]$ and then $\overline{\cC_D}$ is a projective two-weight linear code.
\end{itemize}
Then by Theorem \ref{col2}, $\overline{\cC_D}$ is a self-orthogonal code.
\end{example}

\begin{remark}
  We remark that a projective two-weight code $\cC$ also may be self-orthogonal code if $\mathbf{1} \notin \cC$. For example, let $$\cC_D=\left\{(\tr_{q^m/q}(bd_1), \tr_{q^m/q}(bd_2), \cdots, \tr_{q^m/q}(bd_n)): b \in \gf_{q^m}\right\}$$ be a linear code with defining set $$D=\{x \in \gf_{q^m}^*: \tr_{q^s/q}(\Norm_{q^m/q^s}(x))+c=0\}, c \in \gf_q.$$ This linear code is a projective two-weight code and its weight distribution was given in \cite{HL}. We find that $\mathbf{1} \notin \cC_D$, while $\cC_D$ is a self-orthogonal code.
\end{remark}

The Griesmer bound for an $[n,k,d]$ linear code $\cC$ over $\gf_q$ is given by
\begin{eqnarray*}
 n \geq \sum_{i=0}^{k-1}\left\lceil\frac{d}{q^i}\right\rceil,
\end{eqnarray*}
where $\lceil\cdot\rceil$ denotes the ceiling function. A linear code achieving this bound is called a Griesmer code. In \cite{DS, Ward3}, the authors  studied the  divisibility of the Griesmer codes.
\begin{lemma}\cite{DS, Ward3}\label{lem-Griesmer}
  Let $p$ be a prime and $\cC$ be an $[n,k,d]$ Griesmer code over $\gf_p$. If $p^e$ divides the minimum distance of $\cC$, then $\cC$ is $p^e$-divisible.
\end{lemma}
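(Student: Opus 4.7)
The plan is to prove the statement by induction on the dimension $k$, carried out through the classical \emph{residual-code} construction. The base case $k=1$ is immediate, since every nonzero codeword of a one-dimensional code has weight exactly $d$, so $p^e\mid d$ at once yields $p^e$-divisibility.

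For the inductive step, fix a minimum-weight codeword $c_0\in\cC$; after permuting coordinates one may assume $\mathrm{supp}(c_0)=\{0,1,\dots,d-1\}$. Let $\cC'$ denote the image of $\cC$ under projection $\pi$ onto the last $n-d$ coordinates. The kernel of $\pi$ is exactly $\langle c_0\rangle$ (any codeword supported inside $\mathrm{supp}(c_0)$ has weight at most $d$, and a short averaging over shifts by $c_0$ forces it to be a scalar multiple of $c_0$), so $\cC'$ has length $n-d$ and dimension $k-1$; the standard residual-distance argument over $\gf_p$ then gives $d(\cC')\ge\lceil d/p\rceil$. Using the identity $\bigl\lceil\lceil x/p\rceil/p^j\bigr\rceil=\lceil x/p^{j+1}\rceil$ together with the Griesmer equality for $\cC$, one computes
$$n-d \;=\; \sum_{i=1}^{k-1}\left\lceil\frac{d}{p^i}\right\rceil \;=\; \sum_{j=0}^{k-2}\left\lceil\frac{\lceil d/p\rceil}{p^j}\right\rceil,$$
which is exactly the Griesmer value for the parameters $[n-d,\,k-1,\,\lceil d/p\rceil]$. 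Hence $\cC'$ is itself a Griesmer code with minimum distance exactly $d/p$, and since $p^e\mid d$ forces $p^{e-1}\mid d/p$, the inductive hypothesis gives that $\cC'$ is $p^{e-1}$-divisible.

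The remaining step, lifting $p^{e-1}$-divisibility of $\cC'$ back to $p^e$-divisibility of $\cC$, is where the main work lies. For any $c\in\cC$ and any $\alpha\in\gf_p$ the projection satisfies $\pi(c+\alpha c_0)=\pi(c)$, so counting nonzero positions across the $p$ shifts $c+\alpha c_0$ yields the averaging identity
$$\sum_{\alpha\in\gf_p}\wt(c+\alpha c_0)\;=\;(p-1)\,d\;+\;p\cdot\wt\bigl(\pi(c)\bigr),$$
whose right-hand side is divisible by $p^e$ by the inductive hypothesis and the assumption $p^e\mid d$. This only shows that the $p$ weights $\wt(c+\alpha c_0)$ \emph{sum} to $0\pmod{p^e}$; to extract $p^e\mid\wt(c)$ itself, I would repeat the argument with several minimum-weight codewords $c_0^{(1)},c_0^{(2)},\dots$ in place of $c_0$, producing a system of congruences on $\wt(c)$ that collapses once one knows the minimum-weight codewords span $\cC$.

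I expect this last collapse to be the main obstacle, because minimum-weight codewords need not generate an arbitrary Griesmer code. The fallback is the minihyper interpretation: an $[n,k,d]$ Griesmer code over $\gf_p$ corresponds to a canonical minihyper in $\PG(k-1,p)$, and the Hamada--Bonisoli--Maruta classification writes such a minihyper as a sum of geometric building blocks (points, lines, subspaces) whose associated subcodes are each automatically $p^e$-divisible whenever $p^e\mid d$; the divisibility then transfers to the whole code by $\gf_p$-linearity.
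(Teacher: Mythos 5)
The paper offers no proof of this lemma: it is quoted directly from \cite{DS, Ward3}, so your argument has to be measured against Ward's original one rather than anything internal to the paper. Your reduction is the correct classical opening and is carried out accurately: the kernel of the projection is exactly $\langle c_0\rangle$, the residual code $\cC'$ has minimum distance at least $\lceil d/p\rceil$, and the nested-ceiling computation shows $\cC'$ meets the Griesmer bound with parameters $[n-d,\,k-1,\,d/p]$, so the inductive hypothesis applies to it. Up to that point everything checks.

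The genuine gap is exactly where you locate it, and neither proposed repair closes it. The averaging identity only constrains the sum $\sum_{\alpha\in\gf_p}\wt(c+\alpha c_0)$ modulo $p^e$. Granting $p^{e-1}$-divisibility of $\cC$ (by a separate induction on $e$) and writing $m_c:=\wt(c)/p^{e-1}\bmod p$, the congruences you obtain say $\sum_{\alpha\in\gf_p} m_{c+\alpha c_0}=0$ in $\gf_p$ for every minimum-weight $c_0$, together with $m_0=m_{c_0}=0$. But any $\gf_p$-linear functional on $\cC$ vanishing on the span of the minimum-weight codewords satisfies all of these constraints, so the system cannot force $m\equiv 0$ without substantial further input; and since $\wt$ is not linear, even knowing that minimum-weight codewords span $\cC$ (which they need not) would not make the system collapse. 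The minihyper fallback is not available either: it is false in general that the minihyper of a Griesmer code decomposes as a sum of subspaces (not every Griesmer code is of Belov type), and the Hamada--Maruta classification results hold only under restrictive parameter hypotheses, whereas the lemma is asserted for all Griesmer codes over $\gf_p$. Ward's published proof in \cite{Ward3} requires a genuinely more delicate argument at precisely this lifting step, which is where the real content of the theorem lies; what you have is a correct reduction plus an honestly flagged but unfilled hole.
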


Based on Theorem \ref{th-selforthogonal}, we derive the following theorem.

\begin{theorem}\label{th-G}
  Let $p$ be an odd prime. Then any $[n,k,d]$ Griesmer code $\cC$ over $\gf_p$ such that $p \mid d$ and $\mathbf{1} \in \cC$ is self-orthogonal.
\end{theorem}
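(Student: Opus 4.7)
The plan is to reduce the statement to a direct application of two results already established in the excerpt, namely Lemma \ref{lem-Griesmer} (which controls divisibility of Griesmer codes) and Theorem \ref{th-selforthogonal} (which upgrades $p$-divisibility plus the presence of $\mathbf{1}$ to self-orthogonality). In other words, the strategy is to argue that every hypothesis of Theorem \ref{th-selforthogonal} holds for $\cC$, and then just cite it.

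First, I would invoke Lemma \ref{lem-Griesmer} with $e=1$. The lemma asserts that, for an $[n,k,d]$ Griesmer code $\cC$ over $\gf_p$, whenever $p^e \mid d$ the code is $p^e$-divisible. The assumption $p \mid d$ is precisely the $e=1$ case, so this immediately gives that every codeword of $\cC$ has weight divisible by $p$, that is, $\cC$ is $p$-divisible.

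Next I would apply Theorem \ref{th-selforthogonal}. Its hypotheses require an odd prime $p$, a $p$-divisible $\gf_q$-linear code containing the all-$1$ vector. Here $q=p$ is an odd prime (so in particular $q=p^e$ with $e=1$), $p$-divisibility was just established in the previous step, and $\mathbf{1}\in\cC$ by assumption. The theorem then yields $\cC\subseteq\cC^{\perp}$, which is exactly the desired self-orthogonality.

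There is essentially no substantive obstacle: the statement is a corollary obtained by chaining Lemma \ref{lem-Griesmer} into Theorem \ref{th-selforthogonal}, and the only thing to verify is that all hypotheses match. The mild point to double-check is simply that the arithmetic hypothesis $p \mid d$ is the correct divisibility witness needed to feed into Theorem \ref{th-selforthogonal} (it is, because Theorem \ref{th-selforthogonal} requires only $p$-divisibility, not a higher power), and that the ground field $\gf_p$ fits the odd prime power setting of Theorem \ref{th-selforthogonal} (it does, trivially, with $e=1$).
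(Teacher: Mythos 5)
Your proposal is correct and follows exactly the paper's own proof: invoke Lemma \ref{lem-Griesmer} with $e=1$ to conclude that $p\mid d$ forces $p$-divisibility of the Griesmer code, then feed this together with $\mathbf{1}\in\cC$ and the oddness of $p$ into Theorem \ref{th-selforthogonal}. No gaps; the hypothesis-matching you flag is the only thing to check, and it holds.
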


\begin{proof}
  By Lemma \ref{lem-Griesmer}, we deduce that $\cC$ is $p$-divisible if $p\mid d$. Then by Theorem \ref{th-selforthogonal}, $\cC$ is self-orthogonal.
\end{proof}

\begin{remark}
  We remark that the projective code $\overline{\cC_D}$ in Example \ref{exa-CDbar} is also a Griesmer code. Particularly, when $q=p$, we can derive that $\overline{\cC_D}$ is self-orthogonal by Theorem \ref{th-G}.
\end{remark}

\section{Self-orthogonal codes from cyclic codes}\label{sec4}
In this section, we will apply Theorem \ref{th-selforthogonal} to construct several families of $p$-divisibility self-orthogonal codes from cyclic codes.
For a codeword $\bc$ in a linear code $\cC$, let $\text{wt}(\bc)$ denote the Hamming weight of $\bc$.
\subsection{The first family of self-orthogonal codes from Zhou-Ding cyclic codes}

In this subsection, let $m,k$ be two positive integers and $e=\text{gcd}(m,k)$. Let $q=p^e$, where $p$ is an odd prime. Let $s=\frac{m}{e} \geq 3$ and $\frac{k}{e}$ be odd. Denote by $\gf_{q^s}^*=\langle\alpha\rangle$. Let $h_1(x)$ and $h_2(x)$ denote the minimal polynomials of $(-\alpha)^{-1}$ and $\alpha^{-\frac{p^k+1}{2}}$ over $\gf_p$, respectively. Define a family of cyclic codes as
\begin{eqnarray}\label{eq-C}
  \cC_1 &=& \left\{ \bc_{(a,b)}=\left(\tr_{q^s/p}\left(a(-\alpha)^t + b\alpha^{\frac{(p^k+1)t}{2}}\right)\right)_{t=0}^{q^s-2}: a,b \in \gf_{q^s} \right\}
\end{eqnarray}
with parity-check polynomial $h_1(x)h_2(x)$. The weight distribution of $\cC_1$ in Equation (\ref{eq-C}) was given in \cite{Zhou2014} and the complete weight distribution of $\cC_1$ in Equation (\ref{eq-C}) was studied in \cite{Heng2016}.

Let $\overline{\widehat{\cC_1}}$ be the augmented code of extended code of cyclic code $\cC_1$. By the definitions of the extended and augmented codes of linear codes, we have
\begin{eqnarray}\label{eq-Cbar}
  \overline{\widehat{\cC_1}} &=& \Bigg\{\bc_{(a,b,c)}=\bigg(\tr_{q^s/p}\left(a(-\alpha)^0 + b\alpha^{\frac{p^k+1}{2}\cdot0}\right)+c, \tr_{q^s/p}\left(a(-\alpha)^1 + b\alpha^{\frac{p^k+1}{2}\cdot1}\right)+c, \nonumber\\ && \cdots, \tr_{q^s/p}\left(a(-\alpha)^{q^s-2} + b\alpha^{\frac{p^k+1}{2}\cdot(q^s-2)}\right)+c, c\bigg): a, b \in \gf_{q^s},c \in \gf_p \Bigg\}.
\end{eqnarray}
In this subsection, we will determine the weight distribution of $\overline{\widehat{\cC_1}}$ and prove that $\overline{\widehat{\cC_1}}$ is self-orthogonal.

In order to determine the weight distribution of $\overline{\widehat{\cC_1}}$, we recall some lemmas as follows.

\begin{lemma}\cite{Zhou2014}\label{lem-weightC}
  Let $\frac{k}{e}$ and $s$ be odd. Then the cyclic code $\cC_1$ in Equation (\ref{eq-C}) has parameters $[p^m-1, 2m, p^m-p^{m-1}-(p-1)p^{(m+e-2)/2}]$ and its weight distribution is listed in Table \ref{tab4.1}.
  \begin{table}[h]
\begin{center}
\caption{The weight distribution of $\cC_1$ in Lemma \ref{lem-weightC}.}\label{tab4.1}
\begin{tabular}{@{}ll@{}}
\toprule%
Weight & Frequency  \\
\midrule
$0$ & $1$\\
$p^m-p^{m-1}-(p-1)p^{\frac{m+e-2}{2}}$ & $\frac{(p^{m-e} + p^{(m-e)/2})(p^m-1)}{2}$\\
$p^m-p^{m-1}$ & $(p^m-p^{m-e}+1)(p^m-1)$\\
$p^m-p^{m-1}+(p-1)p^{\frac{m+e-2}{2}}$ & $\frac{(p^{m-e} - p^{(m-e)/2})(p^m-1)}{2}$\\
\bottomrule
\end{tabular}
\end{center}
\end{table}
\end{lemma}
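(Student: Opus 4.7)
The plan is to reduce the weight distribution to a family of exponential sums over $\gf_{p^m}$ and evaluate those sums via the theory of quadratic forms and Gaussian sums. The dimension check comes first: under the hypotheses $s = m/e \geq 3$ and $k/e$ odd, the $p$-cyclotomic cosets modulo $p^m-1$ containing the discrete logarithms of $(-\alpha)^{-1}$ and $\alpha^{-(p^k+1)/2}$ each have full size $m$ and are disjoint, so $h_1$ and $h_2$ are distinct irreducible polynomials of degree $m$ and $\deg(h_1 h_2) = 2m$.

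For the weights, orthogonality of the additive characters of $\gf_p$ yields
\[
\wt(\bc_{(a,b)}) = \frac{(p-1)(p^m-1)}{p} - \frac{1}{p}\sum_{y \in \gf_p^*} S(ya, yb),
\]
where, after using $-1 = \alpha^{(p^m-1)/2}$ to rewrite $(-\alpha)^t = \alpha^{t(p^m+1)/2}$ and substituting $u = \alpha^t$ (which runs over $\gf_{p^m}^*$ as $t$ runs from $0$ to $p^m-2$),
\[
S(a,b) = \sum_{u \in \gf_{p^m}^*} \zeta_p^{\tr_{p^m/p}(a u^{(p^m+1)/2} + b u^{(p^k+1)/2})}.
\]

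The heart of the proof is the evaluation of $S(a,b)$. Using $u^{(p^m+1)/2} = \eta(u)\, u$, where $\eta$ is the quadratic character on $\gf_{p^m}^*$, and splitting the sum on $\eta(u) = \pm 1$, the substitutions $u = v^2$ on the squares and $u = \gamma v^2$ on the non-squares (for a fixed non-square $\gamma$) turn each piece into a character sum of the form $\sum_{v} \zeta_p^{\tr(A v^2 + B v^{p^k+1})}$. Because $p^k+1$ is a Dembowski--Ostrom exponent, the expression $Q(v) := \tr(A v^2 + B v^{p^k+1})$ defines a quadratic form on $\gf_{p^m} \cong \gf_p^m$. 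Under the hypotheses that $s$ and $k/e$ are odd, with $\gcd(k, m) = e$, the rank of $Q$ takes only the values $m$ and $m-e$, and its type (hyperbolic vs.\ elliptic) is controlled by explicit quadratic-character invariants of $A$ and $B$. Applying the standard formula for quadratic-form Weil sums, derivable by diagonalization from Lemmas \ref{Guassum} and \ref{weilsum}, then produces the three possible values of $\sum_{y} S(ya, yb)$ which map precisely onto the three nonzero weights in Table \ref{tab4.1}.

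Finally, the frequencies can be pinned down either by a direct enumeration of pairs $(a,b) \in \gf_{p^m}^2$ yielding each rank/type of $Q$, or equivalently by matching against the first three Pless power moments (Lemma \ref{lem-Pless}) applied to $\cC_1$ with its already-known dimension $2m$. The main obstacle is the quadratic-form step: classifying the rank and type of $Q$ as $(A, B)$ varies over $\gf_{p^m}^2$, and verifying that the parity hypotheses $s$ odd and $k/e$ odd are precisely what forces the rank to take only the two values $m$ and $m-e$ rather than some smaller divisor of $m$. Once this classification is carried out, the remaining computations are routine character-sum bookkeeping.
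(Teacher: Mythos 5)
First, a point of reference: the paper does not prove this lemma at all --- it is imported verbatim from \cite{Zhou2014} --- so there is no internal proof to measure your argument against, and the comparison below is with the method of the cited source. Your architecture is the correct one and is essentially the one used there: the dimension comes from showing that the two relevant $p$-cyclotomic cosets are full-sized and disjoint; character orthogonality reduces $\wt(\bc_{(a,b)})$ to the sums $S(ya,yb)$, $y\in\gf_p^*$; the identity $u^{(p^m+1)/2}=\eta(u)u$ together with the square/non-square substitution converts everything into Weil sums of the quadratic forms $Q_{A,B}(v)=\tr_{p^m/p}(Av^2+Bv^{p^k+1})$; and the parity hypotheses on $s$ and $k/e$ with $\gcd(k,m)=e$ are what confine the rank to the two values $m$ and $m-e$. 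All of this is sound.

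The difficulty is that the proposal stops exactly where the content of the lemma begins. The specific weight values $p^m-p^{m-1}\pm(p-1)p^{(m+e-2)/2}$ and, above all, the three frequencies in Table \ref{tab4.1} are equivalent to knowing, for each rank/type class of $Q_{A,B}$, precisely how many pairs $(a,b)\in\gf_{p^m}^2$ fall into it; you correctly flag this classification as ``the main obstacle'' but give no indication of how to carry it out (the standard route is via the low-order power moments of the exponential sums themselves, $\sum_{a,b}S(a,b)^j$ for $j=1,2,3$, not of the code). Your fallback via the first three Pless power moments (Lemma \ref{lem-Pless}) also does not close as stated: with three unknown frequencies you need three usable equations beyond $\sum_i A_i=p^{2m}$, and the third moment requires the value of $A_2^{\perp}$, i.e., a proof that $\cC_1$ is projective (no two coordinate positions $\gf_p$-proportional). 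That is a nontrivial arithmetic claim about the exponents $(p^m+1)/2$ and $(p^k+1)/2$ modulo $p^m-1$ which you would have to establish separately. In short: right method, correctly identified pivot, but the pivot itself --- the rank/type classification and the enumeration of $(a,b)$ per class --- is missing, and one of the two proposed ways of recovering the frequencies rests on an unverified prerequisite.
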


\begin{lemma}\cite{Heng2016}\label{lem-Nc}
 Let $\frac{k}{e}$ and $s$ be odd. Let $$N(c):=\left|\left\{ 0 \leq t \leq q^s-2: \tr_{q^s/p}(a(-\alpha)^t+b\alpha^{\frac{p^k+1}{2}t})=c\right\} \right|,$$ where $c \in \gf_p^*$. Then the value distribution of $N(c)$ is listed in Table \ref{tab4.2} if $e$ is even, or $e$ is odd and $p \equiv 1\pmod{4}$. The value distribution of $N(c)$ is listed in Table \ref{tab4.3} if $e$ is odd and $p \equiv 3\pmod{4}$.
  \begin{table}[h]
\begin{center}
\caption{The value distribution of $N(c)$ in Lemma \ref{lem-Nc} ($e$ is even, or $e$ is odd and $p \equiv 1\pmod{4}$).}\label{tab4.2}
\begin{tabular}{@{}ll@{}}
\toprule%
Value & Frequency  \\
\midrule
$0$ & $1$\\
$p^{m-1}+p^{\frac{m+e-2}{2}}$ & $\frac{(p^{m-e} - p^{(m-e)/2})(p^m-1)}{2}$\\
$p^{m-1}$ & $(p^m-p^{m-e}+1)(p^m-1)$\\
$p^{m-1}-p^{\frac{m+e-2}{2}}$ & $\frac{(p^{m-e} + p^{(m-e)/2})(p^m-1)}{2}$\\
\bottomrule
\end{tabular}
\end{center}
\end{table}
 \begin{table}[h]
\begin{center}
\caption{The value distribution of $N(c)$ in Lemma \ref{lem-Nc} ($e$ is odd and $p \equiv 3\pmod{4}$).}\label{tab4.3}
\begin{tabular}{@{}ll@{}}
\toprule%
Value & Frequency  \\
\midrule
$0$ & $1$\\
$p^{m-1}+p^{\frac{m+e-2}{2}}$ & $\frac{(p^{m-e} - p^{(m-e)/2})(p^m-1)}{2}$\\
$p^{m-1}+p^{\frac{m-1}{2}}$ & $\frac{(p^{m}-1)(p^m-p^{m-e}+1-\frac{p^{m-e}-1}{p^{2e}-1})}{2}$\\
$p^{m-1}-p^{\frac{m-1}{2}}$ & $\frac{(p^{m}-1)(p^m-p^{m-e}+1-\frac{p^{m-e}-1}{p^{2e}-1})}{2}$\\
$p^{m-1}+p^{\frac{m+2e-1}{2}}$ & $\frac{(p^{m}-1)(p^{m-e}-1)}{2(p^{2e}-1)}$\\
$p^{m-1}-p^{\frac{m+2e-1}{2}}$ & $\frac{(p^{m}-1)(p^{m-e}-1)}{2(p^{2e}-1)}$\\
$p^{m-1}-p^{\frac{m+e-2}{2}}$ & $\frac{(p^{m-e} + p^{(m-e)/2})(p^m-1)}{2}$\\
\bottomrule
\end{tabular}
\end{center}
\end{table}
\end{lemma}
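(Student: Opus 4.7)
The plan is a standard character-sum calculation that reduces $N(c)$ to Weil-type sums evaluable via quadratic Gauss sums. Fix $c \in \gf_p^*$ and treat $(a,b) \in \gf_{q^s}^2$ as the varying parameter. By orthogonality of the additive characters of $\gf_p$,
\[
N(c) \;=\; \frac{q^s - 1}{p} \;+\; \frac{1}{p} \sum_{y \in \gf_p^*} \zeta_p^{-yc}\, S(ya,yb), \qquad S(a,b) := \sum_{t=0}^{q^s-2} \zeta_p^{\tr_{q^s/p}\bigl(a(-\alpha)^t + b\alpha^{(p^k+1)t/2}\bigr)}.
\]

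The next step is to recognize $S(a,b)$ as a Weil-type sum. Substitute $x = \alpha^t$, which is a bijection $t \leftrightarrow x \in \gf_{q^s}^*$; since $-1 = \alpha^{(q^s-1)/2}$, one has $(-\alpha)^t = \eta(x)\,x$ and $\alpha^{(p^k+1)t/2} = x^{(p^k+1)/2}$, where $\eta$ denotes the quadratic character of $\gf_{q^s}^*$. Splitting the sum according to whether $x$ is a square ($x=y^2$) or a nonsquare ($x=\alpha y^2$) yields
\[
S(a,b) \;=\; \tfrac12\bigl[ W(a,b) + W(-a\alpha,\, b\alpha^{(p^k+1)/2})\bigr] - 1, \qquad W(A,B) := \sum_{y \in \gf_{q^s}} \zeta_p^{\tr_{q^s/p}(Ay^2 + By^{p^k+1})}.
\]
Because the Frobenius $y\mapsto y^{p^k}$ is $\gf_p$-linear, the map $y\mapsto \tr_{q^s/p}(Ay^2 + By^{p^k+1})$ is a $\gf_p$-valued quadratic form on $\gf_{q^s}\cong \gf_p^m$. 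Standard theory of quadratic forms over finite fields then evaluates $W(A,B)$ as a power of $p$ times a quadratic Gauss sum of the type in Lemma~\ref{Guassum}, with the exact value controlled by the rank and type of the form at $(A,B)$. Plugging back into the formula for $N(c)$ produces the finite list of values in Tables~\ref{tab4.2}--\ref{tab4.3}.

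The final step is to count, for each admissible value, the number of pairs $(a,b) \in \gf_{q^s}^2$ producing it, by classifying the rank/type of the quadratic form associated to both $W(a,b)$ and $W(-a\alpha, b\alpha^{(p^k+1)/2})$ and combining their contributions. The main obstacle is precisely this classification: one must track how the two summands correlate, and in particular how the sign factors from $\alpha^{(p^k+1)/2}$ and from $\eta$ interact. The dichotomy between Tables~\ref{tab4.2} and \ref{tab4.3} emerges here, because when $e$ is odd and $p\equiv 3\pmod 4$ the Gauss sum in Lemma~\ref{Guassum} picks up an extra factor of $\sqrt{-1}$; this causes certain $(a,b)$ to contribute the intermediate values $p^{m-1}\pm p^{(m-1)/2}$ and $p^{m-1}\pm p^{(m+2e-1)/2}$ rather than the cleaner $p^{m-1}\pm p^{(m+e-2)/2}$ that appears in the other regime. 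The already established weight distribution of $\cC_1$ in Lemma~\ref{lem-weightC}, together with the first two Pless power moments from Lemma~\ref{lem-Pless}, provide strong consistency checks that determine the multiplicities uniquely.
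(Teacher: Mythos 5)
First, a point of reference: the paper does not prove this lemma at all — it is quoted verbatim from \cite{Heng2016}, so the only ``proof'' in the paper is the citation, and the comparison has to be with the argument in that reference. Your outline reconstructs the strategy of that line of work correctly up to a point: the orthogonality reduction $N(c)=\frac{q^s-1}{p}+\frac{1}{p}\sum_{y\in\gf_p^*}\zeta_p^{-yc}S(ya,yb)$ is right, and so is the substitution $x=\alpha^t$ with $(-\alpha)^t=\eta(x)x$ and the resulting splitting $S(a,b)=\frac12\bigl[W(a,b)+W(-a\alpha,\,b\alpha^{(p^k+1)/2})\bigr]-1$ into two Weil sums attached to the $\gf_p$-valued quadratic forms $y\mapsto\tr_{q^s/p}(Ay^2+By^{p^k+1})$ on $\gf_{p^m}$. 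This is the same exponential-sum machinery used in \cite{Zhou2014} and \cite{Heng2016}.

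The gap is that essentially all of the content of the lemma lives in the step you defer. The list of possible values follows once each $W$ is expressed through the rank and type of its quadratic form via Lemma \ref{Guassum}, but the \emph{frequencies} require the joint distribution of (rank, type) for the two correlated forms $Q_{a,b}$ and $Q_{-a\alpha,\,b\alpha^{(p^k+1)/2}}$ as $(a,b)$ ranges over $\gf_{q^s}^2$; this is a genuinely nontrivial computation (it rests on the rank distribution of the forms $\tr_{q^s/p}(by^{p^k+1})$ under the stated parity conditions on $k/e$ and $s$, as in \cite{LF}), and it is exactly what you label ``the main obstacle'' without carrying it out. Your proposed fallback — that the weight distribution of $\cC_1$ together with the first two Pless power moments ``determine the multiplicities uniquely'' — does not close the gap: in the case $e$ odd and $p\equiv 3\pmod 4$, Table \ref{tab4.3} has six unknown frequencies, while the total count and two moment identities supply only three equations (even granting the two $\pm$ symmetries, one equation is still missing), and the weight distribution of $\cC_1$ constrains the distribution of $N(0)$, not of $N(c)$ for $c\neq 0$. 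As written, the proposal is a correct framing of the computation rather than a proof; the decisive counting argument is absent.
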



\begin{theorem}\label{th-Ceven}
Let $m,k$ be two positive integers and $e=\text{gcd}(m,k)$. Let $q=p^e$, where $p$ is an odd prime. Let $s=\frac{m}{e} \geq 3$ and $\frac{k}{e}$ be odd.  If $e$ is even, or $e$ is odd and $p \equiv 1 \pmod{4}$, then the code $\overline{\widehat{\cC_1}}$ defined in Equation (\ref{eq-Cbar}) is a self-orthogonal $p$-divisible code with parameters $[p^m, 2m+1, p^m-p^{m-1}-(p-1)p^{\frac{m+e-2}{2}}]$ and its weight distribution is listed in Table \ref{tab4.4}. Besides, $\overline{\widehat{\cC_1}}^{\perp}$ has parameters $[p^m, p^m-2m-1, 3]$ and is at least almost optimal according to the sphere-packing bound if $p > 3$.
 \begin{table}[h]
\begin{center}
\caption{The weight distribution of $\overline{\widehat{\cC_1}}$ in Theorem \ref{th-Ceven} ($e$ is even, or $e$ is odd and $p \equiv 1\pmod{4}$).}\label{tab4.4}
\begin{tabular}{@{}ll@{}}
\toprule%
Weight & Frequency  \\
\midrule
$0$ & $1$\\
$p^m-p^{m-1}-(p-1)p^{\frac{m+e-2}{2}}$ & $\frac{(p^{m-e} + p^{(m-e)/2})(p^m-1)}{2}$\\
$p^m-p^{m-1}-p^{\frac{m+e-2}{2}}$ & $\frac{(p^{m-e} - p^{(m-e)/2})(p^m-1)(p-1)}{2}$\\
$p^m-p^{m-1}$ & $p(p^m-p^{m-e}+1)(p^m-1)$\\
$p^m-p^{m-1}+p^{\frac{m+e-2}{2}}$ & $\frac{(p^{m-e} + p^{(m-e)/2})(p^m-1)(p-1)}{2}$\\
$p^m-p^{m-1}+(p-1)p^{\frac{m+e-2}{2}}$ & $\frac{(p^{m-e} - p^{(m-e)/2})(p^m-1)}{2}$\\
$p^m$&$p-1$\\
\bottomrule
\end{tabular}
\end{center}
\end{table}
\end{theorem}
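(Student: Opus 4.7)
The plan is to compute the weight distribution of $\overline{\widehat{\cC_1}}$ by a case analysis on the parameters $(a,b,c)$, then deduce self-orthogonality directly from Theorem~\ref{th-selforthogonal}, and finally handle the dual code through the Pless power moments. For the basic parameters, the length is $p^m-1+1=p^m$ by construction, and the extension coordinate attached to each codeword of $\cC_1$ is forced to be zero because $\sum_{t=0}^{q^s-2}\tr_{q^s/p}(a(-\alpha)^t + b\alpha^{(p^k+1)t/2})=0$ for every $a,b\in \gf_{q^s}$. Hence $\widehat{\cC_1}$ has the same weight spectrum as $\cC_1$ given in Table~\ref{tab4.1}, and in particular contains no vector of weight $p^m$; thus $\mathbf{1}\notin \widehat{\cC_1}$, so augmentation by $\mathbf{1}$ increases the dimension to $2m+1$.

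Next I would compute $\wt(\bc_{(a,b,c)})$ in four cases. If $a=b=0$ and $c=0$, the codeword vanishes; if $a=b=0$ and $c\neq 0$, the codeword is the constant vector with entry $c$, contributing $p-1$ codewords of weight $p^m$; if $(a,b)\neq(0,0)$ and $c=0$, the weight is inherited from Table~\ref{tab4.1}; if $(a,b)\neq(0,0)$ and $c\neq 0$, the number of zero coordinates among the first $p^m-1$ positions is $N(-c)$ and the last coordinate is nonzero, so the total weight equals $p^m-N(-c)$. By the substitution $(a,b)\mapsto(c^{-1}a,c^{-1}b)$, the distribution of $N(-c)$ over $(a,b)\in \gf_{q^s}^2\setminus\{(0,0)\}$ is independent of $c\in \gf_p^*$, and Table~\ref{tab4.2} from Lemma~\ref{lem-Nc} gives the three nonzero values with frequencies multiplied by $p-1$. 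Combining the four cases reproduces Table~\ref{tab4.4}.

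Since the hypothesis $s\geq 3$ implies $(m+e-2)/2\geq 1$, every nonzero weight in Table~\ref{tab4.4} is divisible by $p$, so $\overline{\widehat{\cC_1}}$ is $p$-divisible. Because $\mathbf{1}\in \overline{\widehat{\cC_1}}$ by construction, Theorem~\ref{th-selforthogonal} immediately yields self-orthogonality. The dual has length $p^m$ and dimension $p^m-2m-1$; to obtain $d^\perp=3$ I would substitute Table~\ref{tab4.4} into the first four Pless power moments (Lemma~\ref{lem-Pless}) and solve the resulting linear system for $A_1^\perp, A_2^\perp, A_3^\perp$, confirming $A_1^\perp=A_2^\perp=0$ and $A_3^\perp>0$. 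Almost optimality for $p>3$ then follows by checking that the sphere-packing bound admits the hypothetical parameters $[p^m, p^m-2m-1, d^\perp+1]$.

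The main obstacle will be the Pless moment bookkeeping: although each identity is mechanical, six distinct nonzero weight classes enter the power sums $\sum_i i^j A_i$ for $j=1,2,3$, and one must carefully verify that the linear combinations simplify to the required pattern $A_1^\perp=A_2^\perp=0$ with $A_3^\perp>0$. The value-distribution step for $N(-c)$ is the only other delicate point, and the symmetry argument above reduces it directly to Lemma~\ref{lem-Nc}.
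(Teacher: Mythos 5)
Your proposal is correct and follows essentially the same route as the paper: split codewords by $c=0$ versus $c\neq 0$, read the weights off Lemma~\ref{lem-weightC} and $p^m-N(-c)$ from Lemma~\ref{lem-Nc}, deduce self-orthogonality from $p$-divisibility plus $\mathbf{1}\in\overline{\widehat{\cC_1}}$ via Theorem~\ref{th-selforthogonal}, and obtain $d^{\perp}=3$ from the Pless power moments. The only additions are your explicit checks that the extension coordinate vanishes and that the distribution of $N(-c)$ is independent of $c$, which the paper leaves implicit.
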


\begin{proof}
 For any $\bc_{(a,b,c)} \in \overline{\widehat{\cC_1}}$ and $\bc_{(a,b)} \in \cC_1$, it is obvious that $\text{wt}(\bc_{(a,b,c)})=\text{wt}(\bc_{(a,b)})$ for $c=0$. Then by Lemma \ref{lem-weightC}, we have
 \begin{eqnarray*}
   \text{wt}(\bc_{(a,b,0)}) &=& \begin{cases}
                                  0 & \mbox{$1$ time,} \\
                                 p^m-p^{m-1}-(p-1)p^{\frac{m+e-2}{2}} & \mbox{$\frac{(p^{m-e} + p^{(m-e)/2})(p^m-1)}{2}$ times,} \\
                                 p^m-p^{m-1} & \mbox{$(p^m-p^{m-e}+1)(p^m-1)$ times,}\\
                                  p^m-p^{m-1}+(p-1)p^{\frac{m+e-2}{2}} & \mbox{$\frac{(p^{m-e} - p^{(m-e)/2})(p^m-1)}{2}$ times.}
                                \end{cases}
 \end{eqnarray*}

 For $c \neq 0$, we have $\text{wt}(\bc_{(a,b,c)})=n-N(-c)=p^m-N(-c)$, where $N(-c)$ is defined in Lemma \ref{lem-Nc}. By Lemma \ref{lem-Nc}, we deduce that
 \begin{eqnarray*}
   \text{wt}(\bc_{(a,b,c)}) &=& \begin{cases}
                                  p^m & \mbox{$p-1$ times,} \\
                                 p^m-p^{m-1}-p^{\frac{m+e-2}{2}} & \mbox{$\frac{(p^{m-e} - p^{(m-e)/2})(p^m-1)(p-1)}{2}$ times,} \\
                                 p^m-p^{m-1} & \mbox{$(p^m-p^{m-e}+1)(p^m-1)(p-1)$ times,}\\
                                  p^m-p^{m-1}+p^{\frac{m+e-2}{2}} & \mbox{$\frac{(p^{m-e} + p^{(m-e)/2})(p^m-1)(p-1)}{2}$ times.}
                                \end{cases}
 \end{eqnarray*}
Furthermore, the dimension of $\overline{\widehat{\cC_1}}$ is $2m+1$ as the zero codeword in $\overline{\widehat{\cC_1}}$ occurs only once. Then the parameters and weight distribution of $\overline{\widehat{\cC_1}}$ directly follow.

 In the following, we determine the minimum distance of $\overline{\widehat{\cC_1}}^{\perp}$. Denote by $w_1=p^m-p^{m-1}-(p-1)p^{\frac{m+e-2}{2}}$, $w_2=p^m-p^{m-1}-p^{\frac{m+e-2}{2}}$, $w_3=p^m-p^{m-1}$, $w_4=p^m-p^{m-1}+p^{\frac{m+e-2}{2}}$, $w_5=p^m-p^{m-1}+(p-1)p^{\frac{m+e-2}{2}}$ and $w_6=p^m$. Let $A_{w_i} (1 \leq i \leq 6)$ denote the frequency of the weight $w_i$ in Table \ref{tab4.4}. By the second, third and fourth Pless power moments in \cite[Page 260]{H}, we have
 \begin{eqnarray*}
 \left\{\begin{array}{l}
         \sum_{i=1}^{6}w_iA_{w_i}=p^{2m}(pn-n-A_1^{\perp}), \\ \sum_{i=1}^{6}w_i^2A_{w_i}=p^{2m-1}\left((p-1)n(pn-n+1)-(2pn-p-2n+2)A_1^{\perp}+2A_2^{\perp}\right),\\ \sum_{i=1}^{6}w_i^3A_{w_i}=p^{2m-2}\left[(p-1)n(p^2n^2-2pn^2+3pn-p+n^2-3n+2)\right.\\
         \qquad\qquad\qquad\quad\left.-(3p^2n^2-3p^2n-6pn^2+12pn+p^2-6p+3n^2-9n+6)A_1^{\perp}\right.\\
         \qquad\qquad\qquad\quad\left. +6(pn-p-n+2)A_2^{\perp}-6A_3^\perp\right],
       \end{array}\right.
\end{eqnarray*}
where $n=p^m$. Solving the above system of linear equations gives
\begin{eqnarray*}
A_1^{\perp}=A_2^{\perp}=0,\ A_3^{\perp}=\frac{p^e(p^m-1)(p^2-3p+2)}{6}.
\end{eqnarray*}
Then $d(\overline{\widehat{\cC_1}}^{\perp})=3$ and $\overline{\widehat{\cC_1}}^{\perp}$ has parameters $[p^m, p^m-2m-1, 3]$. Besides, it is easy to verify that $\overline{\widehat{\cC_1}}^{\perp}$ is at least almost optimal according to the sphere-packing bound if $p > 3$ as the $p$-ary $[p^m, p^m-2m-1, 5]$ linear code dose not exist.

 It is obvious that $\mathbf{1} \in \overline{\widehat{\cC_1}}$.  Moreover, $\overline{\widehat{\cC_1}}$ is a $p$-divisible code and $\overline{\widehat{\cC_1}}$ is also a self-orthogonal code by Theorem \ref{th-selforthogonal}.
\end{proof}

\begin{theorem}\label{th-Codd}
Let $m,k$ be two positive integers and $e=\gcd(m,k)$. Let $q=p^e$, where $p$ is an odd prime. Let $s=\frac{m}{e} \geq 3$ and $\frac{k}{e}$ be odd. If $e$ is odd and $p \equiv 3 \pmod{4}$, then the code $\overline{\widehat{\cC_1}}$ defined in Equation (\ref{eq-Cbar}) is a self-orthogonal $p$-divisible code over $\gf_p$ with parameters $[p^m, 2m+1, p^m-p^{m-1}-(p-1)p^{\frac{m+e-2}{2}}]$ and its weight distribution is listed in Table \ref{tab4.5}.
 \begin{table}[h]
\begin{center}
\caption{The weight distribution of $\overline{\widehat{\cC_1}}$ in Theorem \ref{th-Ceven} ($e$ is odd and $p \equiv 3\pmod{4}$).}\label{tab4.5}
\begin{tabular}{@{}ll@{}}
\toprule%
Value & Frequency  \\
\midrule
$0$ & $1$\\
$p^m-p^{m-1}-(p-1)p^{\frac{m+e-2}{2}}$ & $\frac{(p^{m-e} + p^{(m-e)/2})(p^m-1)}{2}$\\
$p^m-p^{m-1}-p^{\frac{m+e-2}{2}}$ & $\frac{(p^{m-e} - p^{(m-e)/2})(p^m-1)(p-1)}{2}$\\
$p^m-p^{m-1}-p^{\frac{m+2e-1}{2}}$ & $\frac{(p^m-1)(p^{m-e}-1)(p-1)}{2(p^{2e}-1)}$\\
$p^m-p^{m-1}-p^{\frac{m-1}{2}}$ & $\frac{(p^m-1)(p^m-p^{m-e}+1-\frac{p^{m-e}-1}{p^{2e}-1})(p-1)}{2}$\\
$p^m-p^{m-1}$ & $(p^m-p^{m-e}+1)(p^m-1)$\\
$p^m-p^{m-1}+p^{\frac{m-1}{2}}$ & $\frac{(p^m-1)(p^m-p^{m-e}+1-\frac{p^{m-e}-1}{p^{2e}-1})(p-1)}{2}$\\
$p^m-p^{m-1}+p^{\frac{m+2e-1}{2}}$ & $\frac{(p^m-1)(p^{m-e}-1)(p-1)}{2(p^{2e}-1)}$\\
$p^m-p^{m-1}+p^{\frac{m+e-2}{2}}$ & $\frac{(p^{m-e} + p^{(m-e)/2})(p^m-1)(p-1)}{2}$\\
$p^m-p^{m-1}+(p-1)p^{\frac{m+e-2}{2}}$ & $\frac{(p^{m-e} - p^{(m-e)/2})(p^m-1)}{2}$\\
$p^m$&$p-1$\\
\bottomrule
\end{tabular}
\end{center}
\end{table}
\end{theorem}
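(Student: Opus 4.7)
The plan is to follow the same blueprint as the proof of Theorem \ref{th-Ceven}, but feed in the value distribution of $N(c)$ from Table \ref{tab4.3} instead of Table \ref{tab4.2}, since the hypothesis on $e$ and $p$ has switched to the case $e$ odd with $p\equiv 3\pmod 4$. Concretely, I would partition a generic codeword $\bc_{(a,b,c)}\in\overline{\widehat{\cC_1}}$ into two cases according to whether the augmentation coefficient $c\in\gf_p$ is zero or nonzero. When $c=0$, the extra coordinate contributes $0$ and the first $p^m-1$ coordinates form the codeword $\bc_{(a,b)}\in\cC_1$, so $\wt(\bc_{(a,b,0)})=\wt(\bc_{(a,b)})$, and the weight distribution is read off directly from Lemma \ref{lem-weightC}, contributing the rows of weights $0$, $p^m-p^{m-1}-(p-1)p^{(m+e-2)/2}$, $p^m-p^{m-1}$, and $p^m-p^{m-1}+(p-1)p^{(m+e-2)/2}$ with the frequencies indicated.

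For $c\neq 0$, the last coordinate is $c$ (hence nonzero), and for each of the first $p^m-1$ coordinates the relevant contribution is $\tr_{q^s/p}(a(-\alpha)^t+b\alpha^{(p^k+1)t/2})+c$, which vanishes exactly when $\tr_{q^s/p}(a(-\alpha)^t+b\alpha^{(p^k+1)t/2})=-c$. Thus $\wt(\bc_{(a,b,c)})=p^m-N(-c)$, with $N(-c)$ as in Lemma \ref{lem-Nc}. Substituting each of the seven values from Table \ref{tab4.3} yields the seven nonzero-$c$ weights in Table \ref{tab4.5} (the value $N(-c)=0$ producing $\wt=p^m$), and the frequencies pick up a factor of $p-1$ from the $p-1$ choices of $c\in\gf_p^*$, exactly as in the $e$-even case. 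I would then combine the two lists and verify that the total frequency adds to $p\cdot p^{2m}=p^{2m+1}$, which simultaneously confirms that the dimension is $2m+1$ (no unexpected coincidences collapse codewords beyond the single zero codeword).

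Next I would verify the structural properties needed for self-orthogonality. The all-$1$ vector lies in $\overline{\widehat{\cC_1}}$ because it arises from choosing $a=b=0$ and $c=1$ in the defining formula (\ref{eq-Cbar}). A direct inspection of Table \ref{tab4.5} shows that every listed weight is a multiple of $p$: the weights containing the summand $p^{(m+e-2)/2}$ are divisible by $p$ since $m+e-2$ is at least $e\geq 1$ and in fact divisibility follows because $m\equiv 0\pmod e$ with $e$ odd forces $m+e$ even, and the remaining weights $p^m-p^{m-1}\pm p^{(m-1)/2}$ and $p^m-p^{m-1}\pm p^{(m+2e-1)/2}$ are also clearly divisible by $p$. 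Hence $\overline{\widehat{\cC_1}}$ is $p$-divisible, and Theorem \ref{th-selforthogonal} immediately yields self-orthogonality.

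The main obstacle is the careful bookkeeping for the seven-term value distribution of $N(c)$: unlike the $e$-even case of Theorem \ref{th-Ceven}, where three nonzero values of $N(c)$ produced three weights, here one must track seven distinct nonzero values and confirm that none of them collide with a weight arising from the $c=0$ branch (otherwise the frequencies would have to be merged). A secondary subtlety is checking $p$-divisibility of the ``inner'' weights $p^m-p^{m-1}\pm p^{(m-1)/2}$ and $p^m-p^{m-1}\pm p^{(m+2e-1)/2}$, where the exponents depend on the parity of $m$ through $e\mid m$ and $e$ odd; this reduces to verifying that $(m-1)/2\geq 1$ and $(m+2e-1)/2\geq 1$, both of which hold since $s=m/e\geq 3$ and $e\geq 1$. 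Once these checks are in hand, the self-orthogonality conclusion is immediate from Theorem \ref{th-selforthogonal}, and no Pless-moment computation is needed for this theorem as no claim about $\overline{\widehat{\cC_1}}^{\perp}$ is made in the statement.
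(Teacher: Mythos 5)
Your proposal follows essentially the same route as the paper's proof: split on $c=0$ versus $c\neq 0$, read the $c=0$ weights from Lemma \ref{lem-weightC}, obtain the $c\neq 0$ weights as $p^m-N(-c)$ via Table \ref{tab4.3} of Lemma \ref{lem-Nc}, conclude the dimension from the uniqueness of the zero codeword, and deduce self-orthogonality from $p$-divisibility and $\mathbf{1}\in\overline{\widehat{\cC_1}}$ via Theorem \ref{th-selforthogonal}. The extra bookkeeping you flag (no collisions between the two branches, total count $p^{2m+1}$, integrality of the exponents, which in fact rests on $s$ being odd so that $m=se$ is odd) is consistent with, and slightly more careful than, what the paper records.
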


\begin{proof}
   For any $\bc_{(a,b,c)} \in \overline{\widehat{\cC_1}}$ and $\bc_{(a,b)} \in \cC_1$, it is obvious that $\text{wt}(\bc_{(a,b,c)})=\text{wt}(\bc_{(a,b)})$ for $c=0$. Then by Lemma \ref{lem-weightC}, we have
 \begin{eqnarray*}
   \text{wt}(\bc_{(a,b,0)}) &=& \begin{cases}
                                  0 & \mbox{$1$ time,} \\
                                 p^m-p^{m-1}-(p-1)p^{\frac{m+e-2}{2}} & \mbox{$\frac{(p^{m-e} + p^{(m-e)/2})(p^m-1)}{2}$ times,} \\
                                 p^m-p^{m-1} & \mbox{$(p^m-p^{m-e}+1)(p^m-1)$ times,}\\
                                  p^m-p^{m-1}+(p-1)p^{\frac{m+e-2}{2}} & \mbox{$\frac{(p^{m-e} - p^{(m-e)/2})(p^m-1)}{2}$ times.}
                                \end{cases}
 \end{eqnarray*}
 For $c \neq 0$, we have $\text{wt}(\bc_{(a,b,c)})=n-N(-c)=p^m-N(-c)$, where $N(-c)$ is defined in Lemma \ref{lem-Nc}. By Lemma \ref{lem-Nc}, we deduce that
 \begin{eqnarray*}
   \text{wt}(\bc_{(a,b,c)}) &=& \begin{cases}
                                  p^m & \mbox{$p-1$ times,} \\
                                 p^m-p^{m-1}-p^{\frac{m+e-2}{2}} & \mbox{$\frac{(p^{m-e} - p^{(m-e)/2})(p^m-1)(p-1)}{2}$ times,} \\
                                 p^m-p^{m-1}-p^{\frac{m-1}{2}} & \mbox{$\frac{(p^{m}-1)(p^m-p^{m-e}+1-\frac{p^{m-e}-1}{p^{2e}-1})(p-1)}{2}$ times,}\\
                                 p^m-p^{m-1}+p^{\frac{m-1}{2}} & \mbox{$\frac{(p^{m}-1)(p^m-p^{m-e}+1-\frac{p^{m-e}-1}{p^{2e}-1})(p-1)}{2}$ times,}\\
                                 p^m-p^{m-1}-p^{\frac{m+2e-1}{2}} & \mbox{$\frac{(p^{m}-1)(p^{m-e}-1)(p-1)}{2(p^{2e}-1)}$ times,}\\
                                 p^m-p^{m-1}+p^{\frac{m+2e-1}{2}} & \mbox{$\frac{(p^{m}-1)(p^{m-e}-1)(p-1)}{2(p^{2e}-1)}$ times,}\\
                                 p^m-p^{m-1}+p^{\frac{m+e-2}{2}} & \mbox{$\frac{(p^{m-e} + p^{(m-e)/2})(p^m-1)(p-1)}{2}$ times.}
                                \end{cases}
 \end{eqnarray*}
Furthermore, the dimension of $\overline{\widehat{\cC_1}}$ is $2m+1$ as the zero codeword in $\overline{\widehat{\cC_1}}$ occurs only once. Then the parameters and weight distribution of $\overline{\widehat{\cC_1}}$ directly follow.

 It is easy to deduce that $\mathbf{1} \in \overline{\widehat{\cC_1}}$ and $\overline{\widehat{\cC_1}}$ is a $p$-divisible code and then $\overline{\widehat{\cC_1}}$ is a self-orthogonal code by Theorem \ref{th-selforthogonal}.
\end{proof}

\subsection{The second family of self-orthogonal codes from Luo-Feng cyclic codes}

In this subsection, let $m,k$ be two positive integers and $e=\text{gcd}(m,k)$. Let $q=p^e$, where $p$ is an odd prime. Let $s=\frac{m}{e} \geq 3$ be odd and $\frac{k}{e}$ be even. Denote by $\gf_{q^s}^*=\langle\alpha\rangle$. Let $h_1(x)$ and $h_2(x)$ denote the minimal polynomials of $\alpha^{-1}$ and $\alpha^{-\frac{p^k+1}{2}}$ over $\gf_p$, respectively. Define a family of $p$-ary cyclic codes as
\begin{eqnarray}\label{eq-C1}
  \cC_2 &=& \left\{ \bc_{(a,b)}=\left(\tr_{q^s/p}\left(a\alpha^t + b\alpha^{\frac{(p^k+1)t}{2}}\right)\right)_{t=0}^{q^s-2}: a,b \in \gf_{q^s} \right\}
\end{eqnarray}
with parity-check polynomial $h_1(x)h_2(x)$. In \cite{LF}, Luo and Feng gave the weight distribution of $\cC_2$ in Equation (\ref{eq-C1}) for odd $\frac{k}{e}$ and any $s$. They proposed an open problem to give the weight distribution of $\cC_2$ for even $\frac{k}{e}$ and any $s$. This open problem was solved by Heng and Yue \cite{Heng2016}. The weight distribution and complete weight distribution of $\cC_2$ for even $\frac{k}{e}$ and odd $s$ were given in \cite{Heng2016}.

Let $\overline{\widehat{\cC_2}}$ be the augmented code of extended code of cyclic code $\cC_2$. By the definitions of the extended and augmented codes of linear codes, we have
\begin{eqnarray}\label{eq-C1bar}
  \overline{\widehat{\cC_2}} &=&\Bigg \{\bc_{(a,b,c)}=\bigg(\tr_{q^s/p}\left(a\alpha^0 + b\alpha^{\frac{p^k+1}{2}\cdot0}\right)+c, \tr_{q^s/p}\left(a\alpha^1 + b\alpha^{\frac{p^k+1}{2}\cdot1}\right)+c, \nonumber\\ && \cdots, \tr_{q^s/p}\left(a\alpha^{q^s-2} + b\alpha^{\frac{p^k+1}{2}\cdot(q^s-2)}\right)+c, c\bigg): a, b \in \gf_{q^s},c \in \gf_p \Bigg\}.
\end{eqnarray}
In this subsection, we will determine the weight distribution of $\overline{\widehat{\cC_2}}$ at first and then prove that $\overline{\widehat{\cC_2}}$ is self-orthogonal.

In order to determine the weight distribution of $\overline{\widehat{\cC_2}}$, we recall some lemmas as follows.

\begin{lemma}\cite{Heng2016}\label{lem-weightC1}
  Let $\frac{k}{e}$ be even and $s$ be odd. Then the cyclic code $\cC_2$ in Equation (\ref{eq-C1}) has parameters $[p^m-1, 2m, p^m-p^{m-1}-(p-1)p^{(m+e-2)/2}]$ and its weight distribution is listed in Table \ref{tab4.6}.
  \begin{table}[h]
\begin{center}
\caption{The weight distribution of $\cC_2$ in Lemma \ref{lem-weightC1}.}\label{tab4.6}
\begin{tabular}{@{}ll@{}}
\toprule%
Weight & Frequency  \\
\midrule
$0$ & $1$\\
$p^m-p^{m-1}-(p-1)p^{\frac{m+e-2}{2}}$ & $\frac{(p^{m-e} + p^{(m-e)/2})(p^m-1)}{2}$\\
$p^m-p^{m-1}$ & $(p^m-p^{m-e}+1)(p^m-1)$\\
$p^m-p^{m-1}+(p-1)p^{\frac{m+e-2}{2}}$ & $\frac{(p^{m-e} - p^{(m-e)/2})(p^m-1)}{2}$\\
\bottomrule
\end{tabular}
\end{center}
\end{table}
\end{lemma}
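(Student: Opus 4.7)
The plan is to count the Hamming weight of $\bc_{(a,b)} \in \cC_2$ via additive character sums and then reduce the problem to evaluating a Weil-type exponential sum
\begin{equation*}
S(a, b) = \sum_{x \in \gf_{p^m}} \zeta_p^{\tr_{p^m/p}(ax + bx^{(p^k+1)/2})}.
\end{equation*}
Since $\{\alpha^t : 0 \leq t \leq p^m-2\}$ is exactly $\gf_{p^m}^*$, one has $\wt(\bc_{(a,b)}) = p^m - N_{a,b}$, where $N_{a,b}$ denotes the number of $x \in \gf_{p^m}$ with $\tr_{p^m/p}(ax + bx^{(p^k+1)/2}) = 0$. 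Applying orthogonality of additive characters over $\gf_p$ gives $N_{a,b} = p^{m-1} + \tfrac{1}{p}\sum_{y \in \gf_p^*} S(ya, yb)$, so the whole question reduces to understanding $S(a,b)$ as $(a,b)$ ranges over $\gf_{p^m}^2$.

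First I would dispose of the degenerate cases: $(a,b)=(0,0)$ gives $S=p^m$ and the zero codeword; $b=0, a\neq 0$ gives $S(a,0)=0$, producing weight $p^m-p^{m-1}$; the case $a=0, b\neq 0$ is reduced to counting solutions of the monomial equation $x^{(p^k+1)/2}=c$, which is controlled by $\gcd\bigl((p^k+1)/2,\,p^m-1\bigr)$. The hypothesis that $k/e$ is even and $s=m/e$ is odd pins this gcd down, and lets me conclude that $x\mapsto x^{(p^k+1)/2}$ is (up to a factor of $2$) essentially the identity on a concrete index-$2$ subgroup, which is the source of the square-root term $p^{(m+e-2)/2}$ in the weights.

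The technical core is the evaluation of $S(a,b)$ when $ab\neq 0$. The standard device is to compute $|S(a,b)|^2$ by writing $y=xz$ in the double sum and collapsing the inner sum over $x$ to zero unless the inner linear part vanishes. Using Lemma \ref{weilsum} to evaluate the resulting quadratic Weil sum and Lemma \ref{Guassum} for the sign of the quadratic Gauss sum, one obtains a closed form for $|S(a,b)|^2$ taking values in $\{0,\, p^{m+e}\}$, together with a formula for the sign of $S(a,b)$ when it is nonzero. This yields the three nonzero values $N_{a,b}\in\{p^{m-1},\; p^{m-1}\pm(p-1)p^{(m+e-2)/2}\}$, hence the three nonzero weights appearing in Table \ref{tab4.6}.

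It remains to find the three frequencies. For this I would use a moment/counting argument: the total number of codewords is $p^{2m}$, the first moment $\sum_{a,b}\wt(\bc_{(a,b)})$ equals $(p-1)p^{2m-1}(p^m-1)/p$, and the second moment $\sum_{a,b}|S(a,b)|^2$ has been computed in the previous step. Three linear equations in the three unknown multiplicities $A_{w_1},A_{w_3},A_{w_5}$ yield the claimed values. The main obstacle is the second step — the exponent $(p^k+1)/2$ is not a genuine quadratic form exponent, so one cannot simply apply the classical theory of quadratic forms over $\gf_{p^m}$. The parity assumptions ``$k/e$ even, $s$ odd'' are precisely what is needed to reduce the non-quadratic monomial $x^{(p^k+1)/2}$ to a tractable form via a bijection between cosets of the subgroup of squares in $\gf_{p^m}^*$, which is the content of the computation carried out in \cite{Heng2016} and which I would follow here.
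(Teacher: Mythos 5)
The paper does not actually prove this lemma: it is imported verbatim from \cite{Heng2016}, so there is no internal proof to compare against. Judged on its own terms, your sketch follows the standard architecture (express $\wt(\bc_{(a,b)})$ through $N_{a,b}$ and the exponential sums $S(a,b)$, dispose of the degenerate pairs, evaluate $S(a,b)$ for $ab\neq 0$, then recover the three frequencies from the total count and the first two moments), and you correctly identify that the hypotheses ``$k/e$ even, $s$ odd'' enter through the interaction of the exponent $(p^k+1)/2$ with the subgroup of squares.

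The step you label the technical core, however, does not work as described. Writing $y=xz$ in $|S(a,b)|^2=\sum_{x,y}\zeta_p^{\tr_{p^m/p}(a(x-y)+b(x^d-y^d))}$ with $d=(p^k+1)/2$ produces an inner sum $\sum_{y}\zeta_p^{\tr_{p^m/p}(a(z-1)y+b(z^d-1)y^d)}$, which is a sum of exactly the same shape as $S$ itself; nothing collapses, because $y\mapsto y^d$ is neither $\gf_p$-linear (as it would be for $d$ a $p$-power) nor amenable to the additive substitution $y=x+w$ that linearizes the genuine quadratic exponent $p^k+1$. The evaluation has to go through the decomposition of $\gf_{p^m}^*$ into squares and non-squares, which converts $\sum_{x}\zeta_p^{\tr_{p^m/p}(ax+bx^{(p^k+1)/2})}$ into sums of the form $\sum_u\zeta_p^{\tr_{p^m/p}(Au^2+Bu^{p^k+1})}$ attached to honest quadratic forms, followed by a rank/type analysis of this two-parameter family of forms; this is where $e=\gcd(m,k)$ and the parities of $k/e$ and $m/e$ genuinely intervene, and it is the bulk of the work in \cite{Heng2016}. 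You gesture at this coset bijection in your closing paragraph but then defer it entirely to the reference, so the crucial claim $|S(a,b)|^2\in\{0,\,p^{m+e}\}$ --- equivalently, that exactly the three listed nonzero weights occur --- is asserted rather than established. (A minor slip: the first moment should be $\sum_{a,b}\wt(\bc_{(a,b)})=(p-1)p^{2m-1}(p^m-1)$; your expression carries a spurious division by $p$.) The moment/counting argument for the three frequencies is sound once the weight values themselves are in hand.
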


\begin{lemma}\cite{Heng2016}\label{lem-Nc1}
 Let $\frac{k}{e}$ be even and $s$ be odd. Let $N(c)=| \{ 0 \leq t \leq q^s-2: \tr_{q^s/p}(a\alpha^t+b\alpha^{\frac{p^k+1}{2}t})=c\}|$, where $c \in \gf_p^*$. Then the value distribution of $N(c)$ is listed in Table \ref{tab4.7}.
  \begin{table}[h]
\begin{center}
\caption{The value distribution of $N(c)$ in Lemma \ref{lem-Nc1}.}\label{tab4.7}
\begin{tabular}{@{}ll@{}}
\toprule%
Value & Frequency  \\
\midrule
$0$ & $1$\\
$p^{m-1}+p^{\frac{m+e-2}{2}}$ & $\frac{(p^{m-e} - p^{(m-e)/2})(p^m-1)}{2}$\\
$p^{m-1}$ & $(p^m-p^{m-e}+1)(p^m-1)$\\
$p^{m-1}-p^{\frac{m+e-2}{2}}$ & $\frac{(p^{m-e} + p^{(m-e)/2})(p^m-1)}{2}$\\
\bottomrule
\end{tabular}
\end{center}
\end{table}
\end{lemma}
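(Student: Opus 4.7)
The plan is to evaluate $N(c)$ via Fourier inversion on $\gf_p$ and reduce to an exponential sum on $\gf_{q^s}$ whose shape is amenable to quadratic-form Gauss sum evaluation. By the orthogonality relation for additive characters of $\gf_p$,
\begin{align*}
N(c) &= \frac{1}{p}\sum_{t=0}^{q^s-2}\sum_{y\in\gf_p}\zeta_p^{y\bigl(\tr_{q^s/p}(a\alpha^t+b\alpha^{(p^k+1)t/2})-c\bigr)} \\
&= \frac{q^s-1}{p}+\frac{1}{p}\sum_{y\in\gf_p^*}\zeta_p^{-yc}\,\Sigma(ya, yb),
\end{align*}
where $\Sigma(A,B):=\sum_{x\in\gf_{q^s}^*}\chi\bigl(Ax+Bx^{(p^k+1)/2}\bigr)$ and $\chi$ is the canonical additive character of $\gf_{q^s}$. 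It therefore suffices to determine the joint value distribution of $\Sigma(ya,yb)$ as $(a,b)\in\gf_{q^s}^2$ and $y\in\gf_p^*$ vary, and to evaluate the resulting average against $\zeta_p^{-yc}$ for $c \in \gf_p^*$.

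The main obstacle is evaluating $\Sigma(A,B)$ for $(A,B)\neq(0,0)$. My approach is to linearize the exponent $(p^k+1)/2$: split $\gf_{q^s}^*$ into its subgroup of nonzero squares and its non-square coset via a fixed non-square $\beta$, substitute $x=u^2$ and $x=\beta u^2$ respectively, and observe that $x^{(p^k+1)/2}$ becomes $u^{p^k+1}$ and $\beta^{(p^k+1)/2}u^{p^k+1}$. This converts $\Sigma(A,B)$ into a sum of two standard quadratic-form exponential sums of shape
\[
\tfrac{1}{2}\sum_{u\in\gf_{q^s}^*}\chi(Uu^2+Vu^{p^k+1}),
\]
whose evaluation is governed by the rank and type of the associated $\gf_p$-quadratic form on $\gf_{q^s}$ viewed as an $m$-dimensional $\gf_p$-vector space. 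Since $k/e$ is even and $s$ is odd, I expect this form to have a single uniform non-degenerate type, and its Gauss sum to be computable by diagonalization together with Lemma \ref{Guassum} and Lemma \ref{weilsum}, producing values in $\{0,\pm p^{(m+e-2)/2}\}$. Counting the number of pairs $(A,B)$ yielding each such value is then a routine orbit count once the type is identified; this should parallel the $\cC_1$-side case in Lemma \ref{lem-Nc} but with the quadratic character factor trivializing because $k/e$ is even.

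Substituting back into the Fourier expression, the contribution $\frac{1}{p}\sum_{y\in\gf_p^*}\zeta_p^{-yc}\,\Sigma(ya, yb)$ is controlled by how scaling by $y$ permutes the possible values of $\Sigma$. The pair $(a,b)=(0,0)$ yields $\Sigma\equiv 0$ and hence $N(c)=0$, accounting for the multiplicity $1$ in the zero row of Table \ref{tab4.7}. For $(a,b)\neq(0,0)$, the averaging over $y\in\gf_p^*$ collapses the three possible values of $\Sigma$ into the three nonzero values $p^{m-1}$ and $p^{m-1}\pm p^{(m+e-2)/2}$; the multiplicity counts match those in Table \ref{tab4.7} precisely because the underlying $\Sigma$-distribution depends only on how many $(A,B)$ fall into each quadratic-form orbit. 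The hardest part of the argument, and the reason the lemma assumes $k/e$ even and $s$ odd, will be justifying that the quadratic form $Uu^2+Vu^{p^k+1}$ has uniform non-degenerate type in this regime so that no mixed $(-1)^{s-1}$ signs or square-root-of-$p^*$ artefacts appear, thereby producing the clean three-value distribution listed in the table.
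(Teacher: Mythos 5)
First, note that the paper offers no proof of this lemma at all: it is imported verbatim from \cite{Heng2016}, so there is no in-paper argument to compare against. Your skeleton — orthogonality over $\gf_p$ to reduce $N(c)$ to the sums $\Sigma(ya,yb)=\sum_{x\in\gf_{q^s}^*}\chi(yax+ybx^{(p^k+1)/2})$, then the square/non-square substitution $x=u^2$, $x=\beta u^2$ to convert everything into $\gf_p$-quadratic-form exponential sums $\sum_u\chi(Uu^2+Vu^{p^k+1})$ — is indeed the standard route taken in the cited source and its precursors (Luo--Feng, Zhou--Ding), so the overall strategy is sound.

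The genuine gap is in the step you yourself flag as the hardest, and your stated expectation there is wrong. The forms $Q_{A,B}(u)=\tr_{q^s/p}(Au^2+Bu^{p^k+1})$ do \emph{not} have ``a single uniform non-degenerate type'': the three nonzero values $p^{m-1}$ and $p^{m-1}\pm p^{(m+e-2)/2}$ with \emph{unequal} multiplicities $\frac{(p^{m-e}\mp p^{(m-e)/2})(p^m-1)}{2}$ exist precisely because the relevant forms are degenerate of rank $m-e$ (radical of $\gf_p$-dimension $e$, whence $|{\sum_u}|=p^{(m+e)/2}$, not $p^{m/2}$) and their type varies with $(A,B)$. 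Moreover the sign of each Gauss sum depends on $y$ through $\eta_0(y)^{\,\mathrm{rank}}$, so the average $\frac1p\sum_{y\in\gf_p^*}\zeta_p^{-yc}\Sigma(ya,yb)$ does not simply ``collapse'' the values; whether the $y$-dependence cancels is exactly where the parity hypotheses on $k/e$ and $s$ enter, and it must be computed, not assumed. Finally, calling the multiplicity count ``a routine orbit count'' understates the problem: determining how many pairs $(A,B)$ fall into each rank/type class is the main content of the lemma and in \cite{Heng2016} is pinned down by computing the first few power moments of the exponential sums rather than by direct orbit enumeration. As written, your proposal is a plausible plan whose two load-bearing steps (rank/type distribution and the $y$-average) are asserted rather than proved, and one of the assertions (uniform non-degeneracy) is false.
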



\begin{theorem}\label{th-C1}
Let $m,k$ be two positive integers and $e=\text{gcd}(m,k)$. Let $q=p^e$, where $p$ is an odd prime. Let $s=\frac{m}{e} \geq 3$ be odd and $\frac{k}{e}$ be even.  Then the code $\overline{\widehat{\cC_2}}$ defined in Equation (\ref{eq-C1bar}) is a self-orthogonal $p$-divisible code over $\gf_p$ with parameters $[p^m, 2m+1, p^m-p^{m-1}-(p-1)p^{\frac{m+e-2}{2}}]$ and its weight distribution is listed in Table \ref{tab4.8}. Besides, $\overline{\widehat{\cC_2}}^{\perp}$ has parameters $[p^m,p^m-2m-1,3]$ and is at least almost optimal according to the sphere-packing bound if $p > 3$.
 \begin{table}[h]
\begin{center}
\caption{The weight distribution of $\overline{\widehat{\cC_2}}$ in Theorem \ref{th-C1}.}\label{tab4.8}
\begin{tabular}{@{}ll@{}}
\toprule%
Weight & Frequency  \\
\midrule
$0$ & $1$\\
$p^m-p^{m-1}-(p-1)p^{\frac{m+e-2}{2}}$ & $\frac{(p^{m-e} + p^{(m-e)/2})(p^m-1)}{2}$\\
$p^m-p^{m-1}-p^{\frac{m+e-2}{2}}$ & $\frac{(p^{m-e} - p^{(m-e)/2})(p^m-1)(p-1)}{2}$\\
$p^m-p^{m-1}$ & $p(p^m-p^{m-e}+1)(p^m-1)$\\
$p^m-p^{m-1}+p^{\frac{m+e-2}{2}}$ & $\frac{(p^{m-e} + p^{(m-e)/2})(p^m-1)(p-1)}{2}$\\
$p^m-p^{m-1}+(p-1)p^{\frac{m+e-2}{2}}$ & $\frac{(p^{m-e} - p^{(m-e)/2})(p^m-1)}{2}$\\
$p^m$&$p-1$\\
\bottomrule
\end{tabular}
\end{center}
\end{table}
\end{theorem}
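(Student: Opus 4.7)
The plan is to mirror the strategy that was used for Theorem \ref{th-Ceven}, replacing the inputs by the even $k/e$ analogues supplied by Lemmas \ref{lem-weightC1} and \ref{lem-Nc1}. First I would split an arbitrary codeword $\bc_{(a,b,c)}\in\overline{\widehat{\cC_2}}$ into two cases according to whether the augmented coordinate $c\in\gf_p$ vanishes. When $c=0$, the first $p^m-1$ coordinates coincide with a codeword of $\cC_2$ and the last coordinate is $0$, so $\text{wt}(\bc_{(a,b,0)})=\text{wt}(\bc_{(a,b)})$ and the contribution of these vectors to the weight distribution is read directly from Table \ref{tab4.6}. When $c\neq 0$, the last coordinate contributes $1$, and the number of nonzero entries among the first $p^m-1$ coordinates equals $(p^m-1)-N(-c)$ where $N(\cdot)$ is the frequency function of Lemma \ref{lem-Nc1}; hence $\text{wt}(\bc_{(a,b,c)})=p^m-N(-c)$. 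Feeding the value distribution from Table \ref{tab4.7} into this formula and multiplying each frequency by $p-1$ (to account for the choice of $c\in\gf_p^*$) produces the remaining rows of Table \ref{tab4.8}, including the $p-1$ copies of the all-$1$ vector obtained from $a=b=0$.

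Next I would verify the code parameters. The total number of codewords counted above is $p^{2m+1}$, and the zero codeword is recorded exactly once, which confirms that $\overline{\widehat{\cC_2}}$ has dimension $2m+1$ and length $p^m$. The minimum weight $p^m-p^{m-1}-(p-1)p^{(m+e-2)/2}$ is then read off as the smallest positive entry of Table \ref{tab4.8}. Observing that every weight in the table is divisible by $p$ establishes $p$-divisibility, and since the row with $a=b=0$, $c=1$ produces $\mathbf{1}\in\overline{\widehat{\cC_2}}$, Theorem \ref{th-selforthogonal} immediately yields self-orthogonality.

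To handle the dual code, I would apply the first four Pless power moments from Lemma \ref{lem-Pless} to the six-term weight distribution of $\overline{\widehat{\cC_2}}$, exactly as was done in the proof of Theorem \ref{th-Ceven}. Solving the resulting linear system for $A_1^{\perp}$, $A_2^{\perp}$, $A_3^{\perp}$ should give $A_1^{\perp}=A_2^{\perp}=0$ and $A_3^{\perp}=p^e(p^m-1)(p^2-3p+2)/6$, confirming $d(\overline{\widehat{\cC_2}}^{\perp})=3$. A direct sphere-packing estimate then rules out the existence of a $p$-ary $[p^m,p^m-2m-1,5]$ code when $p>3$, which gives the claimed almost optimality.

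The main obstacle is really bookkeeping rather than a genuine technical difficulty: one has to combine the $c=0$ contribution from Lemma \ref{lem-weightC1} with the $c\neq 0$ contribution from Lemma \ref{lem-Nc1} carefully, track the $(p-1)$-multiplicities correctly, and check that the frequencies in Table \ref{tab4.8} sum to $p^{2m+1}-1$. The Pless power moment computation is routine but must be carried out with care because the six weights are close together and the system is algebraically stiff; the same substitutions that worked in Theorem \ref{th-Ceven} go through unchanged because the $c=0$ part of the weight distribution is identical to the one appearing there.
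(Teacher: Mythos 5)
Your proposal is correct and follows essentially the same route as the paper: the paper likewise derives the weight distribution by splitting on $c=0$ (via Lemma \ref{lem-weightC1}) and $c\neq 0$ (via Lemma \ref{lem-Nc1}) in direct analogy with Theorem \ref{th-Ceven}, then applies the Pless power moments to obtain $A_1^{\perp}=A_2^{\perp}=0$ and $A_3^{\perp}=p^e(p^m-1)(p^2-3p+2)/6$, and concludes self-orthogonality from $\mathbf{1}\in\overline{\widehat{\cC_2}}$, $p$-divisibility, and Theorem \ref{th-selforthogonal}. No gaps.
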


\begin{proof}
 Similarly to the proof of Theorem \ref{th-Ceven}, the parameters and weight distribution of $\overline{\widehat{\cC_2}}$ can also be determined by Lemmas \ref{lem-weightC1} and \ref{lem-Nc1}. The weight distribution of $\overline{\widehat{\cC_2}}$ is given in Table \ref{tab4.8}. In the following, we will determine the minimum distance of $\overline{\widehat{\cC_2}}^{\perp}$. Denote by $w_1=p^m-p^{m-1}-(p-1)p^{\frac{m+e-2}{2}}$, $w_2=p^m-p^{m-1}-p^{\frac{m+e-2}{2}}$, $w_3=p^m-p^{m-1}$, $w_4=p^m-p^{m-1}+p^{\frac{m+e-2}{2}}$, $w_5=p^m-p^{m-1}+(p-1)p^{\frac{m+e-2}{2}}$ and $w_6=p^m$. Let $A_{w_i} (1 \leq i \leq 6)$ denote the frequency of the weight $w_i$ in Table \ref{tab4.8}. By the second, third and fourth Pless power moments in \cite[Page 260]{H}, we have
 \begin{eqnarray*}
 \left\{\begin{array}{l}
         \sum_{i=1}^{6}w_iA_{w_i}=p^{2m}(pn-n-A_1^{\perp}), \\ \sum_{i=1}^{6}w_i^2A_{w_i}=p^{2m-1}\left((p-1)n(pn-n+1)-(2pn-p-2n+2)A_1^{\perp}+2A_2^{\perp}\right),\\ \sum_{i=1}^{6}w_i^3A_{w_i}=p^{2m-2}\left[(p-1)n(p^2n^2-2pn^2+3pn-p+n^2-3n+2)\right.\\
         \qquad\qquad\qquad\quad\left.-(3p^2n^2-3p^2n-6pn^2+12pn+p^2-6p+3n^2-9n+6)A_1^{\perp}\right.\\
         \qquad\qquad\qquad\quad\left. +6(pn-p-n+2)A_2^{\perp}-6A_3^\perp\right],
       \end{array}\right.
\end{eqnarray*}
where $n=p^m$. Solving the above system of linear equations gives
\begin{eqnarray*}
A_1^{\perp}=A_2^{\perp}=0,\ A_3^{\perp}=\frac{p^e(p^m-1)(p^2-3p+2)}{6}.
\end{eqnarray*}
Then $d(\overline{\widehat{\cC_2}}^{\perp})=3$ and $\overline{\widehat{\cC_2}}^{\perp}$ has parameters $[p^m, p^m-2m-1, 3]$. Besides, it is easy to verify that $\overline{\widehat{\cC_2}}^{\perp}$ is at least almost optimal according to the sphere-packing bound if $p > 3$ as the $p$-ary $[p^m, p^m-2m-1, 5]$ code does not exist.

 It is obvious that $\mathbf{1} \in \overline{\widehat{\cC_2}}$.  Moreover, $\overline{\widehat{\cC_2}}$ is a $p$-divisible code and $\overline{\widehat{\cC_2}}$ is a self-orthogonal code by Theorem \ref{th-selforthogonal}.
\end{proof}

\subsection{The third family of self-orthogonal codes from Ma-Zeng-Liu-Feng-Ding cyclic codes}

In this subsection, let $q=p^e$, where $p$ is an odd prime. Let $\gf_{q^m}^*=\langle\theta\rangle$, $g_1=\theta$ and $g_2=\theta^{\frac{q^m-1}{2}+1}$. Let $h_1(x)$ and $h_2(x)$ denote the minimal polynomials of $g_1^{-1}$ and $g_2^{-1}$ over $\gf_q$, respectively. Define a family of cyclic codes as
\begin{eqnarray}\label{eq-C3}
  \cC_3 =\left \{ \bc_{(a,b)}=\left(\tr_{q^m/q}(ag_1^t + bg_2^{t})\right)_{t=0}^{q^m-2}: a,b \in \gf_{q^m} \right\}
\end{eqnarray}
with parity-check polynomial $h_1(x)h_2(x)$. The weight distribution of $\cC_3$ in Equation (\ref{eq-C3}) was given by Ma et al. in \cite{Ma2011} and its complete weight distribution  was given in \cite{Li2015}.

Let $\overline{\widehat{\cC_3}}$ be the augmented code of extended code of cyclic code $\cC_3$. By the definitions of the extended and augmented codes of linear codes, we have
\begin{eqnarray}\label{eq-C3bar}
  \overline{\widehat{\cC_3}} &=& \bigg\{\bc_{(a,b,c)}=\Big(\tr_{q^m/q}(ag_1^0 + bg_2^{0})+c, \tr_{q^m/q}(ag_1^1 + bg_2^{1})+c, \cdots, \nonumber\\&& \tr_{q^m/q}(ag_1^{q^m-2} + bg_2^{q^m-2})+c, c\Big): a, b \in \gf_{q^m},c \in \gf_q \bigg\}.
\end{eqnarray}

In the subsection, we will determine the weight distribution of $\overline{\widehat{\cC_3}}$ and prove that $\overline{\widehat{\cC_3}}$ is self-orthogonal.

In order to determine the weight distribution of $\overline{\widehat{\cC_3}}$, we recall some lemmas as follows.

\begin{lemma}\cite{Ma2011}\label{lem-weightC3}
  Let $q$ be an odd prime power and $m$ be a positive integer. Then the cyclic code $\cC_3$ in Equation (\ref{eq-C3}) has parameters $[q^m-1, 2m, d]$, where
  \begin{eqnarray*}
    d &=& \begin{cases}
            \frac{(q-1)q^{m-1}}{2}, & \mbox{if $m$ is odd}, \\
            \frac{(q-1)q^{\frac{m-2}{2}}(q^{\frac{m}{2}}-1)}{2}, & \mbox{if $m$ is even}.
          \end{cases}
  \end{eqnarray*}
   Besides, its weight distributions are listed in Tables \ref{tab4.9} and \ref{tab4.10} for odd $m$ and even $m$, respectively.
 \begin{table}[h]
\begin{center}
\caption{The weight distribution of $\cC_3$ in Lemma \ref{lem-weightC3} ($m$ is odd).}\label{tab4.9}
\begin{tabular}{@{}ll@{}}
\toprule%
Weight & Frequency  \\
\midrule
$0$ & $1$\\
$(q-1)q^{m-1}$ & $(q^m-1)^2$\\
$\frac{(q-1)q^{m-1}}{2}$ & $2(q^m-1)$\\
\bottomrule
\end{tabular}
\end{center}
\end{table}
 \begin{table}[h]
\begin{center}
\caption{The weight distribution of $\cC_3$ in Lemma \ref{lem-weightC3} ($m$ is even).}\label{tab4.10}
\begin{tabular}{@{}ll@{}}
\toprule%
Weight & Frequency  \\
\midrule
$0$ & $1$\\
$(q-1)q^{m-1}$ & $\frac{(q^m-1)^2}{2}$\\
$\frac{(q-1)q^{\frac{m-2}{2}}(q^{\frac{m}{2}}+1)}{2}$ & $q^m-1$\\
$\frac{(q-1)q^{\frac{m-2}{2}}(q^{\frac{m}{2}}-1)}{2}$ & $q^m-1$\\
$(q-1)q^{\frac{m-2}{2}}(q^{\frac{m}{2}}+1)$ & $\frac{(q^m-1)^2}{4}$\\
$(q-1)q^{\frac{m-2}{2}}(q^{\frac{m}{2}}-1)$ & $\frac{(q^m-1)^2}{4}$\\
\bottomrule
\end{tabular}
\end{center}
\end{table}
\end{lemma}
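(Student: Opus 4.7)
The plan is to reduce the two-term exponential sum defining each codeword to a pair of single-trace sums, one restricted to the nonzero squares of $\gf_{q^m}^*$ and the other to the non-squares, and then evaluate them by quadratic Gaussian sums. First, since $\theta$ is a primitive element of $\gf_{q^m}^*$ and $q$ is odd, $\theta^{(q^m-1)/2}$ is the unique element of order $2$, namely $-1$, so $g_2=\theta\cdot\theta^{(q^m-1)/2}=-g_1$ and therefore $g_2^t=(-1)^tg_1^t$ for every $t$. Splitting the index set $\{0,1,\ldots,q^m-2\}$ by the parity of $t$ (equivalently, by whether $x=g_1^t$ is a nonzero square or a non-square in $\gf_{q^m}^*$) and making the bijective change of variables $u=a+b,\ v=a-b$, the weight of $\bc_{(a,b)}$ becomes
\begin{eqnarray*}
\text{wt}(\bc_{(a,b)}) = |\{x\in C_0:\tr_{q^m/q}(ux)\neq 0\}| + |\{x\in C_1:\tr_{q^m/q}(vx)\neq 0\}|,
\end{eqnarray*}
where $C_0$ and $C_1$ denote the nonzero squares and the non-squares of $\gf_{q^m}$.

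Next, I would evaluate each term using additive characters. Writing the indicator $\mathbf{1}_{C_i}(x)=\tfrac{1}{2}(1+(-1)^i\eta_{q^m}(x))$ and applying the orthogonality of the canonical additive character $\chi_0$ of $\gf_q$ together with the identity $\chi_0(y\tr_{q^m/q}(z))=\chi(yz)$, where $\chi$ is the canonical additive character of $\gf_{q^m}$, one obtains for $u\neq 0$
\begin{eqnarray*}
|\{x\in C_i:\tr_{q^m/q}(ux)=0\}| = \frac{q^{m-1}-1}{2} + \frac{(-1)^iG(\eta_{q^m},\chi)\,\eta_{q^m}(u)}{2q}\sum_{y\in\gf_q^*}\eta_{q^m}(y).
\end{eqnarray*}
The parity of $m$ provides the crucial dichotomy: since $\eta_{q^m}(y)=\eta_q(y)^m$ for $y\in\gf_q^*$, the inner sum vanishes when $m$ is odd and equals $q-1$ when $m$ is even. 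In the odd-$m$ case the character-sum contribution disappears and the count of nonvanishing traces is $(q-1)q^{m-1}/2$ for every nonzero $u$, independent of $u$; a four-way split on whether each of $u,v$ is zero then reproduces Table~\ref{tab4.9}. In the even-$m$ case, Lemma~\ref{Guassum} gives $G(\eta_{q^m},\chi)=\varepsilon q^{m/2}$ with $\varepsilon=\pm 1$, and a further split on the sign pattern $(\eta_{q^m}(u),\eta_{q^m}(v))\in\{\pm 1\}^2$ yields the six weights of Table~\ref{tab4.10}.

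I expect the main obstacle to be the bookkeeping in the even-$m$ case: verifying that the Gaussian-sum sign $\varepsilon$ merely permutes each pair $\{q^{m/2}+1,\ q^{m/2}-1\}$ so that the tabulated frequencies are insensitive to it, and counting pairs $(u,v)$ correctly across the nine sub-cases to recover the multiplicities $(q^m-1)^2/4$, $(q^m-1)^2/2$, and $q^m-1$ as stated. A secondary task is to justify $\dim\cC_3=2m$ by checking that $h_1(x)$ and $h_2(x)$ are coprime and each of degree $m$; this reduces to verifying that $(q^m-1)/2+1$ does not lie in the $q$-cyclotomic coset of $1$ modulo $q^m-1$ (which holds because $(q^m-1)/2+1$ is strictly between $q^{m-1}$ and $q^m$ for $q\geq 3$) and that its own coset has full size $m$, both of which follow from elementary number-theoretic considerations when $q$ is odd.
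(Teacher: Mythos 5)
The paper does not prove this lemma at all: it is imported verbatim from \cite{Ma2011} (and the frequencies from the companion complete-weight-enumerator paper \cite{Li2015}), so there is no in-paper argument to compare against. Your proposal supplies a correct, self-contained derivation, and it is essentially the standard one underlying the cited reference. The key reduction $g_2=-g_1$, hence $\tr(ag_1^t+bg_2^t)=\tr((a+(-1)^tb)\theta^t)$, correctly splits the coordinates into the square class $C_0$ (even $t$) and non-square class $C_1$ (odd $t$) with the invertible substitution $u=a+b$, $v=a-b$; your character-sum evaluation $\tfrac{q^{m-1}-1}{2}+\tfrac{(-1)^iG(\eta_{q^m},\chi)\eta_{q^m}(u)}{2q}\sum_{y\in\gf_q^*}\eta_{q^m}(y)$ is right, and the dichotomy $\eta_{q^m}|_{\gf_q^*}=\eta_q^m$ (because $(q^m-1)/(q-1)\equiv m\pmod 2$) correctly kills the Gauss-sum term for odd $m$ and contributes $(q-1)\varepsilon q^{m/2}$ for even $m$. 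I verified that the resulting case analysis reproduces both tables, that the sign $\varepsilon$ only swaps members of the pairs $q^{m/2}\pm 1$ without affecting the frequencies, and that the frequencies sum to $q^{2m}$. Two minor remarks: the dimension claim follows more directly from your own weight computation (the codeword vanishes only when $u=v=0$, so $(a,b)\mapsto\bc_{(a,b)}$ is injective and $|\cC_3|=q^{2m}$), sparing the cyclotomic-coset check; and your coset argument, while correct, should reduce $(q^m+1)/2\cdot q^j$ modulo $q^m-1$ explicitly when verifying the coset has full size $m$, which your "elementary considerations" gloss over but which does go through since $\tfrac{q^m+1}{2}(q^j-1)\equiv q^j-1\not\equiv 0\pmod{q^m-1}$ for $0<j<m$.
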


\begin{lemma}\cite{Li2015}\label{lem-Nc3}
 Let $q$ be an odd prime power and $m$ be a positive integer. Let $N(c)=\mid \{ 0 \leq t \leq q^m-2: \tr_{q^m/q}(ag_1^t+bg_2^{t})=c\} \mid$, where $c \in \gf_q^*$. Then the value distributions of $N(c)$ are listed in Tables \ref{tab4.11} and \ref{tab4.12} for odd $m$ and even $m$, respectively.
  \begin{table}[h]
\begin{center}
\caption{The value distribution of $N(c)$ in Lemma \ref{lem-Nc3} ($m$ is odd).}\label{tab4.11}
\begin{tabular}{@{}ll@{}}
\toprule%
Value & Frequency  \\
\midrule
$0$ & $q-1$\\
$q^{m-1}$ & $\frac{(q^m-1)^2(q-1)}{2}$\\
$q^{m-1}+q^{\frac{m-1}{2}}$ & $\frac{(q^m-1)^2(q-1)}{4}$\\
$q^{m-1}-q^{\frac{m-1}{2}}$ & $\frac{(q^m-1)^2(q-1)}{4}$\\
$\frac{q^{m-1}+q^{\frac{m-1}{2}}}{2}$ & $(q^m-1)(q-1)$\\
$\frac{q^{m-1}-q^{\frac{m-1}{2}}}{2}$ & $(q^m-1)(q-1)$\\
\bottomrule
\end{tabular}
\end{center}
\end{table}
  \begin{table}[h]
\begin{center}
\caption{The value distribution of $N(c)$ in Lemma \ref{lem-Nc3} ($m$ is even).}\label{tab4.12}
\begin{tabular}{@{}ll@{}}
\toprule%
Value & Frequency  \\
\midrule
$0$ & $q-1$\\
$q^{m-1}$ & $\frac{(q^m-1)^2(q-1)}{2}$\\
$q^{m-1}+q^{\frac{m-2}{2}}$ & $\frac{(q^m-1)^2(q-1)}{4}$\\
$q^{m-1}-q^{\frac{m-2}{2}}$ & $\frac{(q^m-1)^2(q-1)}{4}$\\
$\frac{q^{m-1}+q^{\frac{m-2}{2}}}{2}$ & $(q^m-1)(q-1)$\\
$\frac{q^{m-1}-q^{\frac{m-2}{2}}}{2}$ & $(q^m-1)(q-1)$\\
\bottomrule
\end{tabular}
\end{center}
\end{table}
\end{lemma}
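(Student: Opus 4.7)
The plan is to reduce $N(c)$ to a character sum and exploit the special structure $g_2=-g_1$. Since $\theta$ has order $q^m-1$ and $q$ is odd, $\theta^{(q^m-1)/2}=-1$, so $g_2=\theta\cdot\theta^{(q^m-1)/2}=-g_1$, and hence $ag_1^t+bg_2^t=g_1^t(a+(-1)^tb)$. As $t$ ranges over the even (respectively odd) integers in $\{0,\dots,q^m-2\}$, $g_1^t$ runs bijectively through the nonzero squares $\mathrm{Sq}^*$ (respectively non-squares $\mathrm{NSq}$) of $\gf_{q^m}^*$. Substituting $\alpha:=a+b$ and $\beta:=a-b$ (a bijection of $\gf_{q^m}^2$) and applying orthogonality of additive characters of $\gf_q$ together with the identity $\mathbf{1}_{\mathrm{Sq}^*}(x)=\tfrac12(1+\eta(x))$, where $\eta$ is the quadratic character of $\gf_{q^m}$, the calculation becomes
\begin{eqnarray*}
N(c)=\frac{q^m-1}{q}+\frac{1}{q}\sum_{y\in\gf_q^*}\chi(-yc)\,S(y),
\end{eqnarray*}
where $\chi$ is a fixed nontrivial additive character of $\gf_q$ and each $S(y)$ is a linear combination of $-1$, $q^m-1$, $\eta(y\alpha)G$ and $\eta(y\beta)G$, with $G:=G(\eta,\phi)$ the quadratic Gauss sum over $\gf_{q^m}$ from Lemma \ref{Guassum}.

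I would then carry out a four-way case analysis on whether $\alpha,\beta$ vanish, coupled with a parity analysis on $m$. The case $\alpha=\beta=0$ (one pair $(a,b)$) immediately gives $N(c)=0$. The two cases with exactly one of $\alpha,\beta$ zero produce the halved values $(q^{m-1}\pm q^{(m-1)/2})/2$ for odd $m$ and $(q^{m-1}\pm q^{(m-2)/2})/2$ for even $m$. The case $\alpha,\beta\neq0$ splits further by the four sign patterns of $(\eta(\alpha),\eta(\beta))$ into subclasses of size $((q^m-1)/2)^2$ each, yielding $N(c)=q^{m-1}$ when the signs agree and $N(c)=q^{m-1}\pm q^{(m-1)/2}$ (respectively $\pm q^{(m-2)/2}$) when they disagree. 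The odd/even dichotomy stems from the identity $\eta(y)=\eta_0(y)^{1+q+\cdots+q^{m-1}}$ for $y\in\gf_q^*$, where $\eta_0$ is the quadratic character of $\gf_q$: the exponent is even iff $m$ is even. For $m$ even, $\eta|_{\gf_q^*}\equiv1$, the $y$-sum factors through $\sum_{y\in\gf_q^*}\chi(-yc)=-1$, and $G/q=\pm q^{(m-2)/2}$ since $G=\pm q^{m/2}$. For $m$ odd, the $\eta_0$-twisted $y$-sum produces a secondary Gauss sum $G(\eta_0,\chi)$ over $\gf_q$; its product with $G$ has absolute value $q^{(m+1)/2}$, yielding $\pm q^{(m-1)/2}$ after division by $q$. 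Multiplying each subcase count by $|\gf_q^*|=q-1$ for the range of $c$ then yields the tabulated frequencies.

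The main obstacle is tracking the $c$-dependence in the odd-$m$ case: for fixed $(a,b)$ with $\alpha,\beta\neq0$ and disagreeing signs $\eta(\alpha)\neq\eta(\beta)$, $N(c)$ toggles between $q^{m-1}+q^{(m-1)/2}$ and $q^{m-1}-q^{(m-1)/2}$ according to the factor $\eta_0(-c)$. Since $\eta_0\colon\gf_q^*\to\{\pm1\}$ attains each value on exactly $(q-1)/2$ elements, summing over $c\in\gf_q^*$ gives each signed value with multiplicity $(q-1)/2$ per such pair $(a,b)$; combined with the $(q^m-1)^2/2$ pairs having disagreeing signs, this produces the tabulated frequency $(q^m-1)^2(q-1)/4$ for each of $q^{m-1}\pm q^{(m-1)/2}$. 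Performing this balancing consistently with the frequencies for the halved values — which involve an analogous $\eta_0(-c)$-splitting — is the bookkeeping-heaviest step; for $m$ even the $c$-dependence disappears because $\eta|_{\gf_q^*}$ is trivial, and the tabulated multiplicities arise directly from the sizes of the $(\alpha,\beta)$-classes.
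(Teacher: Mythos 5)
The paper does not prove Lemma \ref{lem-Nc3}; it is imported verbatim from \cite{Li2015}, so there is no internal proof to compare against. Your derivation is correct and is the standard argument for such reducible cyclic codes: the key observation $g_2=-g_1$ reduces $N(c)$ to counting squares and non-squares $x$ with $\tr_{q^m/q}(x(a\pm b))=c$, and the orthogonality/quadratic-Gauss-sum computation, the parity criterion $\eta|_{\gf_q^*}=\eta_0^m$, and the final $c$-averaging via $\eta_0(-c)$ all check out against the tabulated values and frequencies (the frequencies indeed sum to $q^{2m}(q-1)$ in both parities).
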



\begin{theorem}\label{th-C3odd}
Let $q=p^e$ be an odd prime power and $m$ be an odd integer with $m \geq 3$. Then the code $\overline{\widehat{\cC_3}}$ defined in Equation (\ref{eq-C3bar}) is a self-orthogonal $p$-divisible code over $\gf_q$ with parameters $[q^m, 2m+1, \frac{q^{m-1}(q-1)}{2}]$ and its weight distribution is listed in Table \ref{tab4.13}. Besides, $\overline{\widehat{\cC_3}}^{\perp}$ has parameters $[q^m,q^m-2m-1,3]$ and is at least almost optimal according to the sphere-packing bound if $q > 3$.
 \begin{table}[h]
\begin{center}
\caption{The weight distribution of $\overline{\widehat{\cC_3}}$ in Theorem \ref{th-C3odd} ($m$ is odd).}\label{tab4.13}
\begin{tabular}{@{}ll@{}}
\toprule%
Weight & Frequency  \\
\midrule
$0$ & $1$\\
$q^{m}-q^{m-1}$ & $\frac{(q^m-1)^2(q+1)}{2}$\\
$q^{m}-q^{m-1}-q^{\frac{m-1}{2}}$ & $\frac{(q^m-1)^2(q-1)}{4}$\\
$q^{m}-q^{m-1}+q^{\frac{m-1}{2}}$ & $\frac{(q^m-1)^2(q-1)}{4}$\\
$q^m-\frac{q^{m-1}-q^{\frac{m-1}{2}}}{2}$ & $(q^m-1)(q-1)$\\
$q^m-\frac{q^{m-1}+q^{\frac{m-1}{2}}}{2}$ & $(q^m-1)(q-1)$\\
$\frac{q^{m-1}(q-1)}{2}$ & $2(q^m-1)$\\
$q^m$ & $q-1$\\
\bottomrule
\end{tabular}
\end{center}
\end{table}
\end{theorem}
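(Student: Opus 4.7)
The plan is to mirror the arguments used in the proofs of Theorems \ref{th-Ceven}, \ref{th-Codd}, and \ref{th-C1}, applied with the weight data for $\cC_3$ from Lemma \ref{lem-weightC3} and the value distribution of $N(c)$ from Lemma \ref{lem-Nc3}. I will split the codewords $\bc_{(a,b,c)} \in \overline{\widehat{\cC_3}}$ according to whether $c=0$ or $c \in \gf_q^*$. For $c=0$, the codeword has the same weight as $\bc_{(a,b)} \in \cC_3$, so I can read the contribution directly from Table \ref{tab4.9}. For $c \neq 0$, I use the identity $\text{wt}(\bc_{(a,b,c)}) = q^m - N(-c)$, where $N(-c)$ is defined in Lemma \ref{lem-Nc3}; the extra ``$+1$'' coordinate never vanishes for $c \in \gf_q^*$, so the final symbol always contributes to the weight, and I just translate each row of Table \ref{tab4.11} into a weight of the form $q^m - \text{value}$. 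Summing the two cases and noting that the zero codeword occurs exactly once establishes the dimension $2m+1$, the minimum distance $\tfrac{q^{m-1}(q-1)}{2}$, and the entries in Table \ref{tab4.13}.

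Next, I verify the parameters of the dual. Exactly as in the proof of Theorem \ref{th-Ceven}, I write down the second, third and fourth Pless power moments from Lemma \ref{lem-Pless}, using the six weights $w_1,\ldots,w_7$ listed in Table \ref{tab4.13} with $n=q^m$. Solving the resulting linear system for $A_1^\perp$, $A_2^\perp$, $A_3^\perp$ should yield $A_1^\perp = A_2^\perp = 0$ and a positive value for $A_3^\perp$, giving $d(\overline{\widehat{\cC_3}}^{\perp}) = 3$ and parameters $[q^m,q^m-2m-1,3]$. Almost optimality relative to the sphere-packing bound for $q>3$ then follows from the standard non-existence argument for a $q$-ary $[q^m, q^m-2m-1, 5]$ code.

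Finally, I handle $p$-divisibility and self-orthogonality together. Every weight listed in Table \ref{tab4.13} is visibly divisible by $p$, because each entry is an integer multiple of some power of $q = p^e$ (note that $\tfrac{q^{m-1}(q-1)}{2}$ and $\tfrac{q^{m-1} \pm q^{(m-1)/2}}{2}$ are divisible by $p$ for odd $m \geq 3$). Hence $\overline{\widehat{\cC_3}}$ is $p$-divisible. Since the all-$1$ vector $\mathbf{1}$ is obtained from the codeword with $(a,b)=(0,0)$ and $c=1$, we have $\mathbf{1} \in \overline{\widehat{\cC_3}}$. Theorem \ref{th-selforthogonal} then delivers self-orthogonality immediately.

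The main obstacle I anticipate is the bookkeeping in step one: there are two sources of each ``middle'' weight (one from $c=0$ and one from the $c\neq 0$ contributions via $q^m - q^{m-1}$), so the frequencies for the weight $q^m-q^{m-1}$ must be added correctly across the two cases, producing the factor $\tfrac{(q^m-1)^2(q+1)}{2}$ in Table \ref{tab4.13}. The Pless system is straightforward but tedious; the divisibility check is also routine once the weight table is correctly assembled.
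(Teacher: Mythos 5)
Your proposal follows essentially the same route as the paper's own proof: the same case split on $c=0$ versus $c\neq 0$ using Lemmas \ref{lem-weightC3} and \ref{lem-Nc3}, the same merging of the weight $q^m-q^{m-1}$ across the two cases, the same Pless power moment computation for $d^\perp=3$, and the same appeal to Theorem \ref{th-selforthogonal} via $p$-divisibility and $\mathbf{1}\in\overline{\widehat{\cC_3}}$. The only slip is calling the seven nonzero weights $w_1,\ldots,w_7$ ``six weights,'' which is immaterial.
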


\begin{proof}
  For any $\bc_{(a,b,c)} \in \overline{\widehat{\cC_3}}$ and $\bc_{(a,b)} \in \cC_3$, it is easy to deduce that $\text{wt}(\bc_{(a,b,c)})=\text{wt}(\bc_{(a,b)})$ for $c=0$. Then by Lemma \ref{lem-weightC3}, we have
 \begin{eqnarray*}
   \text{wt}(\bc_{(a,b,0)}) &=& \begin{cases}
                                  0 & \mbox{$1$ time,} \\
                                 q^{m}-q^{m-1} & \mbox{$(q^m-1)^2$ times,} \\
                                  \frac{q^{m-1}(q-1)}{2} & \mbox{$2(q^m-1)$ times.}
                                \end{cases}
 \end{eqnarray*}
 For $c \neq 0$, we have $\text{wt}(\bc_{(a,b,c)})=n-N(-c)=q^m-N(-c)$, where $N(-c)$ is defined in Lemma \ref{lem-Nc3}. By Lemma \ref{lem-Nc3}, we deduce that
 \begin{eqnarray*}
   \text{wt}(\bc_{(a,b,c)}) &=& \begin{cases}
                                  q^m & \mbox{$q-1$ times,} \\
                                  q^{m}-q^{m-1} & \mbox{$\frac{(q^m-1)^2(q-1)}{2}$ times,}\\
                                  q^{m}-q^{m-1}-q^{\frac{m-1}{2}} & \mbox{$\frac{(q^m-1)^2(q-1)}{4}$ times,}\\
                                  q^{m}-q^{m-1}+q^{\frac{m-1}{2}} & \mbox{$\frac{(q^m-1)^2(q-1)}{4}$ times,}\\
                                  q^m-\frac{q^{m-1}-q^{\frac{m-1}{2}}}{2} & \mbox{$(q^m-1)(q-1)$ times,}\\
                                  q^m-\frac{q^{m-1}+q^{\frac{m-1}{2}}}{2} & \mbox{$(q^m-1)(q-1)$ times.}
                                \end{cases}
 \end{eqnarray*}
Furthermore, the dimension of $\overline{\widehat{\cC_3}}$ is $2m+1$ as the zero codeword in $\overline{\widehat{\cC_3}}$ occurs only once. Then the parameters and weight distribution of $\overline{\widehat{\cC_3}}$ directly follow.

 In the following, we will determine the minimum distance of $\overline{\widehat{\cC_3}}^{\perp}$. Denote by $w_1=q^{m}-q^{m-1}$, $w_2=q^{m}-q^{m-1}-q^{\frac{m-1}{2}}$, $w_3=q^{m}-q^{m-1}+q^{\frac{m-1}{2}}$, $w_4=q^m-\frac{q^{m-1}-q^{\frac{m-1}{2}}}{2}$, $w_5=q^m-\frac{q^{m-1}+q^{\frac{m-1}{2}}}{2}$, $w_6=\frac{q^{m-1}(q-1)}{2}$ and $w_7=q^m$. Let $A_{w_i} (1 \leq i \leq 7)$ represent the frequency of the weight $w_i$ in Table \ref{tab4.13}. By the second, third and fourth Pless power moments in \cite[Page 260]{H}, we have
 \begin{eqnarray*}
 \left\{\begin{array}{l}
         \sum_{i=1}^{7}w_iA_{w_i}=q^{2m}(qn-n-A_1^{\perp}), \\ \sum_{i=1}^{7}w_i^2A_{w_i}=q^{2m-1}\left((q-1)n(qn-n+1)-(2qn-q-2n+2)A_1^{\perp}+2A_2^{\perp}\right),\\ \sum_{i=1}^{7}w_i^3A_{w_i}=q^{2m-2}\left[(q-1)n(q^2n^2-2qn^2+3qn-q+n^2-3n+2)\right.\\
         \qquad\qquad\qquad\quad\left.-(3q^2n^2-3q^2n-6qn^2+12qn+q^2-6q+3n^2-9n+6)A_1^{\perp}\right.\\
         \qquad\qquad\qquad\quad\left. +6(qn-q-n+2)A_2^{\perp}-6A_3^\perp\right],
       \end{array}\right.
\end{eqnarray*}
where $n=q^m$. Solving the above system of linear equations gives
\begin{eqnarray*}
A_1^{\perp}=A_2^{\perp}=0,\ A_3^{\perp}=\frac{(q^m-1)(q-1)(q^m(q-2)-3)}{24}.
\end{eqnarray*}
Then $d(\overline{\widehat{\cC_3}}^{\perp})=3$ and $\overline{\widehat{\cC_3}}^{\perp}$ has parameters $[q^m, q^m-2m-1, 3]$. Besides, it is easy to verify that $\overline{\widehat{\cC_3}}^{\perp}$ is at least almost optimal according to the sphere-packing bound if $q > 3$ as the $q$-ary $[q^m, q^m-2m-1, 5]$ code does not exist.

 It is easy to deduce that $\mathbf{1} \in \overline{\widehat{\cC_3}}$ and $\overline{\widehat{\cC_3}}$ is a $p$-divisible code and $\overline{\widehat{\cC_3}}$ is a self-orthogonal code by Theorem \ref{th-selforthogonal}.
\end{proof}

\begin{theorem}\label{th-C3even}
Let $q=p^e$ be an odd prime power and $m$ be an even integer with $m \geq 4$. Then the code $\overline{\widehat{\cC_3}}$ defined in Equation (\ref{eq-C3bar}) is a self-orthogonal $p$-divisible code over $\gf_q$ with parameters $[q^m, 2m+1, \frac{q^{\frac{m-2}{2}}(q-1)(q^{\frac{m}{2}}-1)}{2}]$ and its weight distribution is listed in Table \ref{tab4.14}. Besides, $\overline{\widehat{\cC_3}}^{\perp}$ has parameters $[q^m,q^m-2m-1,3]$ and is at least almost optimal according to the sphere-packing bound if $q > 3$.
 \begin{table}[h]
\begin{center}
\caption{The weight distribution of $\overline{\widehat{\cC_3}}$ in Theorem \ref{th-C3even} ($m$ is even).}\label{tab4.14}
\begin{tabular}{@{}ll@{}}
\toprule%
Weight & Frequency  \\
\midrule
$0$ & $1$\\
$q^{m}-q^{m-1}$ & $\frac{q(q^m-1)^2}{2}$\\
$q^{m}-q^{m-1}-q^{\frac{m-2}{2}}$ & $\frac{(q^m-1)^2(q-1)}{4}$\\
$q^{m}-q^{m-1}+q^{\frac{m-2}{2}}$ & $\frac{(q^m-1)^2(q-1)}{4}$\\
$q^m-\frac{q^{m-1}-q^{\frac{m-2}{2}}}{2}$ & $(q^m-1)(q-1)$\\
$q^m-\frac{q^{m-1}+q^{\frac{m-2}{2}}}{2}$ & $(q^m-1)(q-1)$\\
$q^{\frac{m-2}{2}}(q-1)(q^{\frac{m}{2}}+1)$ & $\frac{(q^m-1)^2}{4}$\\
$q^{\frac{m-2}{2}}(q-1)(q^{\frac{m}{2}}-1)$ & $\frac{(q^m-1)^2}{4}$\\
$\frac{q^{\frac{m-2}{2}}(q-1)(q^{\frac{m}{2}}+1)}{2}$ & $q^m-1$\\
$\frac{q^{\frac{m-2}{2}}(q-1)(q^{\frac{m}{2}}-1)}{2}$ & $q^m-1$\\
$q^m$ & $q-1$\\
\bottomrule
\end{tabular}
\end{center}
\end{table}
\end{theorem}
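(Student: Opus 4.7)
The plan is to mirror the argument used in Theorem \ref{th-C3odd}, replacing the odd-$m$ input with the even-$m$ weight data from Lemma \ref{lem-weightC3} (Table \ref{tab4.10}) and Lemma \ref{lem-Nc3} (Table \ref{tab4.12}). First I would split the codewords $\bc_{(a,b,c)} \in \overline{\widehat{\cC_3}}$ according to whether $c=0$ or $c \in \gf_q^*$. When $c=0$, the last coordinate vanishes and the weight of $\bc_{(a,b,0)}$ equals the weight of $\bc_{(a,b)} \in \cC_3$, so Table \ref{tab4.10} supplies the contribution: one codeword of weight $0$, together with codewords of weights $(q-1)q^{m-1}$, $\frac{(q-1)q^{(m-2)/2}(q^{m/2}\pm 1)}{2}$ and $(q-1)q^{(m-2)/2}(q^{m/2}\pm 1)$ with the frequencies listed there.

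For $c \neq 0$, I would use the identity $\text{wt}(\bc_{(a,b,c)}) = q^m - N(-c)$ (the extra last coordinate contributes $1$ to the weight, and the $q^m-1$ positions where $\tr_{q^m/q}(ag_1^t+bg_2^t)=-c$ become zero after adding $c$). Applying the value distribution of $N(-c)$ in Table \ref{tab4.12} then yields codewords of weights $q^m$, $q^m-q^{m-1}$, $q^m-q^{m-1}\mp q^{(m-2)/2}$, and $q^m-\frac{q^{m-1}\mp q^{(m-2)/2}}{2}$ with the stated frequencies. Merging the $c=0$ and $c\neq 0$ contributions (the common weight $q^m-q^{m-1}$ aggregates to frequency $\frac{q(q^m-1)^2}{2}$, and the all-$q^m$ weight arises exactly $q-1$ times) reproduces Table \ref{tab4.14}. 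Since the zero codeword occurs exactly once, the dimension is $2m+1$ and the minimum distance is read directly off the table as $\frac{q^{(m-2)/2}(q-1)(q^{m/2}-1)}{2}$.

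To prove the dual minimum distance equals $3$, I would apply the second, third and fourth Pless power moments of Lemma \ref{lem-Pless} with $n=q^m$ and $k=2m+1$ to the seven distinct nonzero weights $w_1,\dots,w_7$ of Table \ref{tab4.14} (the six listed plus $q^m$). This gives a $3\times 3$ linear system in $A_1^\perp$, $A_2^\perp$, $A_3^\perp$ whose solution I expect to be $A_1^\perp = A_2^\perp = 0$ and a strictly positive value for $A_3^\perp$ analogous to the odd-$m$ case; consequently $\overline{\widehat{\cC_3}}^{\perp}$ has parameters $[q^m, q^m-2m-1, 3]$. Almost-optimality for $q>3$ follows because the sphere-packing bound rules out a $q$-ary $[q^m, q^m-2m-1, 5]$ code for $q>3$ exactly as in Theorem \ref{th-C3odd}.

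Finally, $p$-divisibility is immediate from Table \ref{tab4.14}: every listed weight is divisible by $p$ since every term is a multiple of $q^{(m-2)/2}=p^{e(m-2)/2}$ or of $q^m=p^{em}$, both of which are powers of $p$. Together with $\mathbf{1}\in\overline{\widehat{\cC_3}}$ (the all-$1$ vector is precisely the generator adjoined by augmentation with $a=b=0$, $c=1$), Theorem \ref{th-selforthogonal} yields self-orthogonality. The main obstacle I anticipate is purely bookkeeping, namely verifying that the Pless power moments collapse to the clean expression for $A_3^\perp$ given the combined weight data, and tracking the aggregation of the weight $q^m-q^{m-1}$ across the two cases; these are essentially arithmetic checks parallel to those already carried out for the odd-$m$ case.
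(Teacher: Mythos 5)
Your proposal follows the paper's proof essentially verbatim: the same $c=0$ versus $c\neq 0$ split using Tables \ref{tab4.10} and \ref{tab4.12}, the same weight identity $\mathtt{wt}(\bc_{(a,b,c)})=q^m-N(-c)$, the same Pless-power-moment computation of $A_1^{\perp},A_2^{\perp},A_3^{\perp}$, and the same appeal to Theorem \ref{th-selforthogonal} for self-orthogonality. The only slip is cosmetic: Table \ref{tab4.14} has ten distinct nonzero weights (the paper's $w_1,\dots,w_{10}$), not seven, but this does not affect the validity of the argument.
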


\begin{proof}
  For any $\bc_{(a,b,c)} \in \overline{\widehat{\cC_3}}$ and $\bc_{(a,b)} \in \cC_3$, it is easy to deduce that $\text{wt}(\bc_{(a,b,c)})=\text{wt}(\bc_{(a,b)})$ for $c=0$. Then by Lemma \ref{lem-weightC3}, we have
 \begin{eqnarray*}
   \text{wt}(\bc_{(a,b,0)}) &=& \begin{cases}
                                  0 & \mbox{$1$ time,} \\
                                 q^{m}-q^{m-1} & \mbox{$\frac{(q^m-1)^2}{2}$ times,} \\
                                 q^{\frac{m-2}{2}}(q-1)(q^{\frac{m}{2}}+1) & \mbox{$\frac{(q^m-1)^2}{4}$ times,}\\
                                 q^{\frac{m-2}{2}}(q-1)(q^{\frac{m}{2}}-1) & \mbox{$\frac{(q^m-1)^2}{4}$ times,}\\
                                 \frac{q^{\frac{m-2}{2}}(q-1)(q^{\frac{m}{2}}+1)}{2} & \mbox{$q^m-1$ times,}\\
                                 \frac{q^{\frac{m-2}{2}}(q-1)(q^{\frac{m}{2}}-1)}{2} & \mbox{$q^m-1$ times.}
                                \end{cases}
 \end{eqnarray*}
 For $c \neq 0$, we have $\text{wt}(\bc_{(a,b,c)})=n-N(-c)=q^m-N(-c)$, where $N(-c)$ is defined in Lemma \ref{lem-Nc3}. By Lemma \ref{lem-Nc3}, we deduce that
 \begin{eqnarray*}
   \text{wt}(\bc_{(a,b,c)}) &=& \begin{cases}
                                  q^m & \mbox{$q-1$ times,} \\
                                  q^{m}-q^{m-1} & \mbox{$\frac{(q^m-1)^2(q-1)}{2}$ times,}\\
                                  q^{m}-q^{m-1}-q^{\frac{m-2}{2}} & \mbox{$\frac{(q^m-1)^2(q-1)}{4}$ times,}\\
                                  q^{m}-q^{m-1}+q^{\frac{m-2}{2}} & \mbox{$\frac{(q^m-1)^2(q-1)}{4}$ times,}\\
                                  q^m-\frac{q^{m-1}-q^{\frac{m-2}{2}}}{2} & \mbox{$(q^m-1)(q-1)$ times,}\\
                                  q^m-\frac{q^{m-1}+q^{\frac{m-2}{2}}}{2} & \mbox{$(q^m-1)(q-1)$ times.}
                                \end{cases}
 \end{eqnarray*}
Furthermore, the dimension of $\overline{\widehat{\cC_3}}$ is $2m+1$ as the zero codeword in $\overline{\widehat{\cC_3}}$ occurs only once. Then the parameters and weight distribution of $\overline{\widehat{\cC_3}}$ directly follow.

 In the following, we will determine the minimum distance of $\overline{\widehat{\cC_3}}^{\perp}$. Denote by $w_1=q^{m}-q^{m-1}$, $w_2=q^{m}-q^{m-1}-q^{\frac{m-2}{2}}$, $w_3=q^{m}-q^{m-1}+q^{\frac{m-2}{2}}$, $w_4=q^m-\frac{q^{m-1}-q^{\frac{m-2}{2}}}{2}$, $w_5=q^m-\frac{q^{m-1}+q^{\frac{m-2}{2}}}{2}$, $w_6=q^{\frac{m-2}{2}}(q-1)(q^{\frac{m}{2}}+1)$, $w_7=q^{\frac{m-2}{2}}(q-1)(q^{\frac{m}{2}}-1)$, $w_8=\frac{q^{\frac{m-2}{2}}(q-1)(q^{\frac{m}{2}}+1)}{2}$, $w_9=\frac{q^{\frac{m-2}{2}}(q-1)(q^{\frac{m}{2}}-1)}{2}$ and $w_{10}=q^m$. Let $A_{w_i} (1 \leq i \leq 10)$ denote the frequency of the weight $w_i$ in Table \ref{tab4.14}. By the second, third and fourth Pless power moments in \cite[Page 260]{H}, we have
 \begin{eqnarray*}
 \left\{\begin{array}{l}
         \sum_{i=1}^{10}w_iA_{w_i}=q^{2m}(qn-n-A_1^{\perp}), \\ \sum_{i=1}^{10}w_i^2A_{w_i}=q^{2m-1}\left((q-1)n(qn-n+1)-(2qn-q-2n+2)A_1^{\perp}+2A_2^{\perp}\right),\\ \sum_{i=1}^{10}w_i^3A_{w_i}=q^{2m-2}\left[(q-1)n(q^2n^2-2qn^2+3qn-q+n^2-3n+2)\right.\\
         \qquad\qquad\qquad\quad\left.-(3q^2n^2-3q^2n-6qn^2+12qn+q^2-6q+3n^2-9n+6)A_1^{\perp}\right.\\
         \qquad\qquad\qquad\quad\left. +6(qn-q-n+2)A_2^{\perp}-6A_3^\perp\right],
       \end{array}\right.
\end{eqnarray*}
where $n=q^m$. Solving the above system of linear equations gives
\begin{eqnarray*}
A_1^{\perp}=A_2^{\perp}=0,\ A_3^{\perp}=\frac{(q^m-1)(q-1)(q-2)(q^m+3)}{24}>0.
\end{eqnarray*}
Then $d(\overline{\widehat{\cC_3}}^{\perp})=3$ and $\overline{\widehat{\cC_3}}^{\perp}$ has parameters $[q^m, q^m-2m-1, 3]$. Besides, it is easy to verify that $\overline{\widehat{\cC_3}}^{\perp}$ is at least almost optimal according to the sphere-packing bound if $q > 3$ as the $q$-ary $[q^m, q^m-2m-1, 5]$ code does not exist.

 It is easy to deduce that $\mathbf{1} \in \overline{\widehat{\cC_3}}$ and $\overline{\widehat{\cC_3}}$ is a $p$-divisible code and then $\overline{\widehat{\cC_3}}$ is also a self-orthogonal code by Theorem \ref{th-selforthogonal}.
\end{proof}

\subsection{The fourth family of self-orthogonal codes from irreducible cyclic codes}

In this subsection, let $q=p^e$, where $p$ is an odd prime. Let $m$ and $s$ be two positive integers with $m=2s$ and $s \geq 2$. Let $N=p^e+1$, $\gf_{q^m}^*=\langle\alpha\rangle$ and $\theta=\alpha^N$. Let $h(x)$ denote the minimal polynomials of $\theta^{-1}$ over $\gf_q$. Define a family of
cyclic codes over $\gf_q$ as
\begin{eqnarray}\label{eq-C4}
  \cC_4 &=& \left\{ \bc_{(a,b)}=(\tr_{q^m/q}\left(a\theta^{t})\right)_{t=0}^{\frac{q^m-1}{N}-1}: a \in \gf_{q^m} \right\}
\end{eqnarray}
with parity-check polynomial $h(x)$. The complete weight distribution of $\cC_4$ in Equation (\ref{eq-C4}) was studied in \cite{Li2015DCC}.

Let $\overline{\widehat{\cC_4}}$ be the augmented code of extended code of cyclic code $\cC_4$. By the definitions of the extended and augmented codes of linear codes, we have
\begin{eqnarray}\label{eq-C4bar}
  \overline{\widehat{\cC_4}} &=& \Big\{\bc_{(a,b,c)}=\big(\tr_{q^m/q}(a)+c, \tr_{q^m/q}(a\theta)+c, \cdots, \tr_{q^m/q}(a\theta^{\frac{q^m-1}{N}-1})+c, c\big): \nonumber\\&& a \in \gf_{q^m},c \in \gf_q \Big\}.
\end{eqnarray}
In this subsection, we will determine the weight distribution of $\overline{\widehat{\cC_4}}$ and prove that $\overline{\widehat{\cC_4}}$ is self-orthogonal.

In order to determine the weight distribution of $\overline{\widehat{\cC_4}}$, we recall some lemmas as follows.

\begin{lemma}\cite{Li2015DCC}\label{lem-Nc4odd}
 Let $q=p^e$ with odd prime $p$ and a positive integer $e$. Let $m=2s$ with odd integer $s \geq 3$. Let $N(c)=| \{ 0 \leq t \leq \frac{q^m-1}{N}-1: \tr_{q^m/q}(a\theta^{t})=c\} |$, where $c \in \gf_q$. Then
 \begin{equation*}
   N(0)=\begin{cases}
          \frac{q^m-1}{N} & \mbox{ $1$ time, }\\
          \frac{q^{m-1}-1+(q-1)(N-1)q^{\frac{m-2}{2}}}{N} & \mbox{$\frac{q^m-1}{N}$ times,}\\
          \frac{q^{m-1}-1-(q-1)q^{\frac{m-2}{2}}}{N} & \mbox{$\frac{(q^m-1)(N-1)}{N}$ times,}
        \end{cases}
 \end{equation*}
for $c=0$ and
 \begin{equation*}
   N(c)=\begin{cases}
          0 & \mbox{ $1$ time, }\\
          \frac{q^{m-1}-(N-1)q^{\frac{m-2}{2}}}{N} & \mbox{$\frac{q^m-1}{N}$ times,}\\
          \frac{q^{m-1}+q^{\frac{m-2}{2}}}{N} & \mbox{$\frac{(q^m-1)(N-1)}{N}$ times,}
        \end{cases}
 \end{equation*}
for $c \in \gf_q^*$.
\end{lemma}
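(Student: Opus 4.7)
The plan is to compute $N(c)$ via additive character orthogonality and reduce it to a Gauss period sum over the subgroup $C_0:=\langle\theta\rangle$ of $\gf_{q^m}^*$, which is the group of $N$-th powers and has order $(q^m-1)/N$. By orthogonality of additive characters of $\gf_q$,
\[
N(c)=\frac{1}{q}\sum_{t=0}^{(q^m-1)/N-1}\sum_{y\in\gf_q}\zeta_p^{\tr_{q/p}(y(\tr_{q^m/q}(a\theta^t)-c))}=\frac{q^m-1}{qN}+\frac{1}{q}\sum_{y\in\gf_q^*}\phi_1(-yc)\,T(ya),
\]
where $\phi_1$ is the canonical additive character of $\gf_q$, $T(\beta):=\sum_{x\in C_0}\chi(\beta x)$ and $\chi$ denotes the canonical additive character of $\gf_{q^m}$; the identity $\zeta_p^{\tr_{q/p}(y\tr_{q^m/q}(u))}=\chi(yu)$ is trace transitivity. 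The case $a=0$ is trivial and yields $N(0)=(q^m-1)/N$ and $N(c)=0$ for $c\in\gf_q^*$, each with multiplicity $1$.

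Next I would exploit that $T(\beta)$ is constant on cosets of $C_0$ in $\gf_{q^m}^*$ and that $\gf_q^*\subseteq C_0$; the latter holds because $(q-1)N=q^2-1$ divides $q^m-1$ (as $m=2s$ is even), so every nonzero element of $\gf_q$ is an $N$-th power in $\gf_{q^m}^*$. Therefore $T(ya)=T(a)$ for every $y\in\gf_q^*$, and the inner character sum collapses to
\[
\sum_{y\in\gf_q^*}\phi_1(-yc)T(ya)=T(a)\sum_{y\in\gf_q^*}\phi_1(-yc)=\begin{cases}(q-1)T(a),& c=0,\\ -T(a),& c\in\gf_q^*.\end{cases}
\]
Hence $N(c)$ depends on $a\neq 0$ only through $T(a)$, and the two tables in the lemma correspond to the two additive contributions $(q-1)T(a)/q$ and $-T(a)/q$.

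The heart of the proof is then evaluating $T(\beta)$ for $\beta\neq 0$. Choosing a multiplicative character $\psi$ of $\gf_{q^m}^*$ of order $N$ and writing the indicator of $C_0$ as $\frac{1}{N}\sum_{j=0}^{N-1}\psi^j(x)$ gives
\[
T(\beta)=\frac{1}{N}\Bigl(-1+\sum_{j=1}^{N-1}\overline{\psi^j(\beta)}\,G_{q^m}(\psi^j,\chi)\Bigr).
\]
Because $N=q+1$ divides $q^2-1$, each $\psi^j$ factors through $\Norm_{q^m/q^2}$ as some $\psi_0^j$ on $\gf_{q^2}^*$, and the Davenport--Hasse lifting theorem yields $G_{q^m}(\psi^j,\chi)=(-1)^{s-1}G_{q^2}(\psi_0^j,\chi_0)^s=G_{q^2}(\psi_0^j,\chi_0)^s$ since $s$ is odd. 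Because $q\equiv -1\pmod N$, we are in the semi-primitive case and every $G_{q^2}(\psi_0^j,\chi_0)$ equals $\pm q$. A direct consequence is that, across the $N$ cosets of $C_0$, $T(\beta)$ takes exactly two values:
\[
T_1:=\frac{-1+(N-1)q^s}{N}\text{ on one distinguished coset,}\qquad T_2:=\frac{-1-q^s}{N}\text{ on the remaining }N-1\text{ cosets.}
\]
Substituting $T_1$ and $T_2$ into the reduced formula for $N(c)$ and using $q^s=q^{m/2}$ and $q^{s-1}=q^{(m-2)/2}$, then partitioning the $q^m-1$ nonzero choices of $a$ according to $aC_0$ (one coset of size $(q^m-1)/N$, and $N-1$ cosets of combined size $(q^m-1)(N-1)/N$), reproduces exactly the two tables in the lemma.

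The main obstacle is the semi-primitive Gauss-sum step: one must verify that precisely one coset yields $T_1$ while the others yield $T_2$, with the signs matching. A clean alternative to tracking the individual signs $\varepsilon_j$ in $G_{q^2}(\psi_0^j)=\varepsilon_j q$ is to compute the first two moments of $T$ directly from its definition, namely $\sum_\beta T(\beta)=-1$ and $\sum_\beta T(\beta)^2=(q^m-1)(1+(N-1)q^m)/N^2$ (the latter by isolating the pairs $x+y=0$ in $\sum_{x,y\in C_0}\chi(\beta(x+y))$ and using $-1\in C_0$). Together with the two-valued structure forced by semi-primitivity, these moment identities pin down $\{T_1,T_2\}$ and their multiplicities up to a single sign choice, which the Davenport--Hasse lift fixes.
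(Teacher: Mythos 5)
The paper does not prove this lemma at all --- it is imported verbatim from the cited reference [Li--Yue--Fu, Des.\ Codes Cryptogr.\ 80 (2016)], so there is no internal proof to compare against; what you have written is a self-contained reconstruction via Gaussian periods. Your reduction is correct and I can confirm the arithmetic: the orthogonality step, the observation that $\gf_q^*\subseteq C_0$ (because $N=q+1$ divides $(q^m-1)/(q-1)=\bigl(1+q^2+\cdots+q^{m-2}\bigr)(q+1)$ for even $m$), the collapse of the inner sum to $(q-1)T(a)$ or $-T(a)$, and the final substitution of $T_1=\frac{(N-1)q^s-1}{N}$ and $T_2=\frac{-q^s-1}{N}$ into $N(0)=\frac{q^m-1}{qN}+\frac{q-1}{q}T(a)$ and $N(c)=\frac{q^m-1}{qN}-\frac{1}{q}T(a)$ do reproduce exactly the four tabulated values with the stated multiplicities. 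This is the standard semi-primitive (uniform cyclotomy) computation, which is essentially how the cited source proceeds as well.

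The one genuine gap is at the decisive step: knowing only that each $G_{q^2}(\psi_0^j,\chi_0)=\pm q$ does \emph{not} directly yield the ``one coset versus $N-1$ cosets'' split, and your fallback via the two moments does not close it either. Granting two-valuedness, the moment identities $\sum_i T(\alpha^i)=-1$ and $\sum_i T(\alpha^i)^2=\frac{(N-1)q^m+1}{N}$ (note your two displayed moments silently switch between summing over coset representatives and over all of $\gf_{q^m}^*$) admit \emph{two} solutions: $T_1$ on one coset and $T_2$ on $N-1$ cosets, or $\frac{q^s-1}{N}$ on $N-1$ cosets and $\frac{-(N-1)q^s-1}{N}$ on one coset. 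These give genuinely different tables, so the ``single sign choice'' you defer is precisely the content of the lemma, and saying Davenport--Hasse ``fixes'' it is not enough: the lift only tells you $G_{q^m}(\psi^j,\chi)=G_{q^2}(\psi_0^j,\chi_0)^s$. What is actually needed is the explicit sign $G_{q^2}(\psi_0^j,\chi_0)=(-1)^jq$, obtained e.g.\ by summing over cosets of $\gf_q^*$ in $\gf_{q^2}^*$ to get $G_{q^2}(\psi_0^j,\chi_0)=q\,\psi_0^j(u_0)$ with $\tr_{q^2/q}(u_0)=0$, whence $\psi_0^j(u_0)=\zeta_N^{j(q+1)/2}=(-1)^j$. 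Then for odd $s$ the sum $\sum_{j=1}^{N-1}(-1)^j\overline{\psi^j(\alpha^i)}$ telescopes to $N-1$ exactly when $i\equiv N/2\pmod N$ and to $-1$ otherwise, which is what forces the multiplicity pattern $\{1,N-1\}$ in the stated orientation (and explains why the companion lemma for even $s$ has the opposite signs). With that computation inserted, your proof is complete.
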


\begin{lemma}\cite{Li2015DCC}\label{lem-Nc4even}
 Let $q=p^e$ with odd prime $p$ and a positive integer $e$. Let $m=2s$ with even integer $s \geq 2$. Let $N(c)=|\{ 0 \leq t \leq \frac{q^m-1}{N}-1: \tr_{q^m/q}(a\theta^{t})=c\} |$, where $c \in \gf_q$. Then
 \begin{equation*}
   N(0)=\begin{cases}
          \frac{q^m-1}{N} & \mbox{ $1$ time, }\\
          \frac{q^{m-1}-1-(q-1)(N-1)q^{\frac{m-2}{2}}}{N} & \mbox{$\frac{q^m-1}{N}$ times,}\\
          \frac{q^{m-1}-1+(q-1)q^{\frac{m-2}{2}}}{N} & \mbox{$\frac{(q^m-1)(N-1)}{N}$ times,}
        \end{cases}
 \end{equation*}
for $c=0$ and
 \begin{equation*}
   N(c)=\begin{cases}
          0 & \mbox{ $1$ time, }\\
          \frac{q^{m-1}+(N-1)q^{\frac{m-2}{2}}}{N} & \mbox{$\frac{q^m-1}{N}$ times,}\\
          \frac{q^{m-1}-q^{\frac{m-2}{2}}}{N} & \mbox{$\frac{(q^m-1)(N-1)}{N}$ times,}
        \end{cases}
 \end{equation*}
for $c \in \gf_q^*$.
\end{lemma}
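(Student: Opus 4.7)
The plan is to compute the value distribution of
$N(c) = |\{0 \leq t \leq (q^m-1)/N - 1 : \tr_{q^m/q}(a\theta^t) = c\}|$
as $a$ ranges over $\gf_{q^m}$ by expressing $N(c)$ as a character sum and evaluating the resulting Gauss periods attached to the subgroup $H = \langle \theta \rangle \leq \gf_{q^m}^*$ of order $(q^m-1)/N$ and index $N = q+1$. The contribution from $a = 0$ is immediate: it gives $N(0) = (q^m-1)/N$ once and $N(c) = 0$ once for each $c \in \gf_q^*$, accounting for the multiplicity-one rows in both tables.

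For $a \neq 0$, orthogonality of the additive characters of $\gf_q$ yields
\[
N(c) = \frac{q^m - 1}{qN} + \frac{1}{q} \sum_{y \in \gf_q^*} \zeta_p^{-\tr_{q/p}(yc)}\, T(ya),
\]
where $T(z) = \sum_{t=0}^{(q^m-1)/N - 1} \phi(z\theta^t)$ is the Gauss period at $z$ and $\phi$ is the canonical additive character of $\gf_{q^m}$. Expanding
$T(z) = N^{-1}\sum_{\psi^N=1} \overline{\psi(z)}\, G(\psi,\phi)$
in terms of multiplicative characters of order dividing $N$ reduces everything to the Gauss sums $G(\psi,\phi)$ over $\gf_{q^m}$, where $G(1,\phi) = -1$ and each nontrivial $G(\psi,\phi)$ has absolute value $q^{m/2}$.

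Because $N = q + 1$ divides $q^2 - 1$, the order of $q$ modulo $N$ equals $2$ and we are in the \emph{semiprimitive} regime. In this regime each nontrivial $G(\psi,\phi)$ can be lifted from $\gf_{q^2}$ to $\gf_{q^m}$ via the Davenport--Hasse relation and then evaluated as $\pm q^{m/2}$ using Lemma \ref{Guassum}. Consequently $T(z)$ takes exactly two values as $z$ ranges over $\gf_{q^m}^*$: one value on $H$ (with frequency $(q^m-1)/N$) and another on the union of the remaining $N-1$ cosets (with frequency $(q^m-1)(N-1)/N$). Substituting these back, and using $\sum_{y \in \gf_q^*} \zeta_p^{-\tr_{q/p}(yc)} = q-1$ for $c = 0$ and $-1$ for $c \neq 0$, together with the observation that $y \mapsto ya$ permutes cosets of $H$, yields the tabulated frequencies.

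The main obstacle is the last step of the Gauss-sum evaluation: pinning down the correct sign of the nontrivial $G(\psi,\phi)$ when $s$ is even. The tables of Lemma \ref{lem-Nc4odd} and Lemma \ref{lem-Nc4even} differ precisely in which of two numerical values is assumed on $H$ versus on its complement, so the parity of $s$ entering the sign $(-1)^{m-1}$ in Lemma \ref{Guassum}, combined with the Davenport--Hasse lift, must be tracked carefully --- a single overlooked sign would silently swap the two non-trivial rows and invalidate the distribution.
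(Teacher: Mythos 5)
First, note that the paper itself offers no proof of Lemma~\ref{lem-Nc4even}: it is imported verbatim from \cite{Li2015DCC}. Your Gaussian-period outline is the standard route to such statements and is almost certainly the one taken in that reference, so the overall strategy is sound: the $a=0$ rows are immediate, the orthogonality expansion of $N(c)$ is correct, and the problem does reduce to the periods $T(z)$ of order $N=q+1$, which is a semiprimitive case since $q\equiv -1\pmod N$.

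Two points, however, keep this from being a proof. First, to pull $T(ya)$ out of the sum over $y\in\gf_q^*$ you need $T(ya)=T(a)$ for all $y$, i.e.\ that $\gf_q^*$ is contained in $\langle\theta\rangle=\langle\alpha^N\rangle$; the weaker observation that $y\mapsto ya$ ``permutes cosets'' would not let you factor the sum, and if the permutation were nontrivial the two-valued structure of your answer would collapse. The needed containment does hold here, but only because $m=2s$ is even: $(q^m-1)/(q-1)=1+q+\cdots+q^{m-1}\equiv 0\pmod{q+1}$, so $\gf_q^*\subseteq\langle\theta\rangle$. This should be stated and used explicitly. Second, and more seriously, you defer exactly the computation that constitutes the content of the lemma: determining which of the two period values $\bigl(-1-(N-1)q^{m/2}\bigr)/N$ and $\bigl(-1+q^{m/2}\bigr)/N$ is attained on $\langle\theta\rangle$ and which on its complement. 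That sign is governed by the parity of $s$ in the semiprimitive Gaussian-period evaluation (equivalently, by Lemma~\ref{Guassum} together with Davenport--Hasse applied with $\gamma=s$ lifting steps from $\gf_{q^2}$), and it is precisely what distinguishes Lemma~\ref{lem-Nc4even} (even $s$) from Lemma~\ref{lem-Nc4odd} (odd $s$). As written, your argument establishes that the value distribution has the displayed shape but not which frequency attaches to which value, so the lemma is not yet proved; carrying out the sign computation $N\eta_0+1=(-1)^{s-1}(N-1)q^{m/2}$ (valid here since $(p^e+1)/N=1$) would close the gap and reproduce the stated tables.
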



\begin{theorem}\label{th-C4odd}
 Let $q=p^e$ with odd prime $p$ and a positive integer $e$. Let $m=2s$ with odd integer $s \geq 3$ and $N=p^e+1$. Then the code $\overline{\widehat{\cC_4}}$ defined in Equation (\ref{eq-C4bar}) is a self-orthogonal $p$-divisible code over $\gf_q$ with parameters $[\frac{q^m-1}{N}+1, m+1, \frac{(q-1)q^{m-1}-(q-1)(N-1)q^{\frac{m-2}{2}}}{N}]$ and its weight distribution is listed in Table \ref{tab4.15}. Besides, $\overline{\widehat{\cC_4}}^{\perp}$ has parameters $[\frac{q^m-1}{N}+1,\frac{q^m-1}{N}-m,3]$ and $\overline{\widehat{\cC_4}}^{\perp}$ is at least almost optimal according to the sphere-packing bound.
 \begin{table}[h]
\begin{center}
\caption{The weight distribution of $\overline{\widehat{\cC_4}}$ in Theorem \ref{th-C4odd}.}\label{tab4.15}
\begin{tabular}{@{}ll@{}}
\toprule%
Weight & Frequency  \\
\midrule
$0$ & $1$\\
$\frac{(q-1)q^{m-1}-(q-1)(N-1)q^{\frac{m-2}{2}}}{N}$ & $\frac{q^m-1}{N}$\\
$\frac{(q-1)q^{m-1}+(q-1)q^{\frac{m-2}{2}}}{N}$ & $\frac{(q^m-1)(N-1)}{N}$\\
$\frac{q^m-q^{m-1}+(N-1)(q^{\frac{m-2}{2}}+1)}{N}$ & $\frac{(q^m-1)(q-1)}{N}$\\
$\frac{q^m-q^{m-1}-q^{\frac{m-2}{2}}+N-1}{N}$ & $\frac{(q^m-1)(q-1)(N-1)}{N}$\\
$\frac{q^m-1}{N}+1$ & $q-1$\\
\bottomrule
\end{tabular}
\end{center}
\end{table}
\end{theorem}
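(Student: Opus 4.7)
The plan is to follow the template established by Theorems \ref{th-Ceven}, \ref{th-C1}, and \ref{th-C3odd}: split the analysis of the codewords $\bc_{(a,b,c)} \in \overline{\widehat{\cC_4}}$ according to whether $c=0$ or $c \in \gf_q^*$, use Lemma \ref{lem-Nc4odd} to read off frequencies, then invoke Theorem \ref{th-selforthogonal} once $p$-divisibility is verified. First I would record that the extended coordinate of $\bc_{(a,\cdot,0)}$ is $\tr_{q^m/q}\!\bigl(a \sum_{t=0}^{(q^m-1)/N-1}\theta^t\bigr)=0$ because $\theta$ has order $(q^m-1)/N>1$, so for $c=0$ the weight of $\bc_{(a,\cdot,0)}$ equals the weight of the underlying $\cC_4$-codeword, namely $\tfrac{q^m-1}{N}-N_a(0)$. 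Substituting the three possible values of $N_a(0)$ from Lemma \ref{lem-Nc4odd} yields weights $0$, $\tfrac{(q-1)q^{m-1}-(q-1)(N-1)q^{(m-2)/2}}{N}$, and $\tfrac{(q-1)q^{m-1}+(q-1)q^{(m-2)/2}}{N}$ with frequencies $1$, $\tfrac{q^m-1}{N}$, and $\tfrac{(q^m-1)(N-1)}{N}$ respectively.

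For $c \in \gf_q^*$, the last coordinate $c$ contributes $1$ and the preceding coordinates contribute $\tfrac{q^m-1}{N}-N_a(-c)$, so $\text{wt}(\bc_{(a,\cdot,c)})=\tfrac{q^m-1}{N}+1-N_a(-c)$. Plugging in the three values of $N_a(-c)$ from Lemma \ref{lem-Nc4odd} and summing over the $q-1$ choices of $c$ gives frequencies $q-1$, $\tfrac{(q^m-1)(q-1)}{N}$, and $\tfrac{(q^m-1)(q-1)(N-1)}{N}$ at weights $\tfrac{q^m-1}{N}+1$, $\tfrac{q^m-q^{m-1}+(N-1)(q^{(m-2)/2}+1)}{N}$, and $\tfrac{q^m-q^{m-1}-q^{(m-2)/2}+N-1}{N}$, which I will confirm by the algebraic manipulations $\tfrac{q^m+N-1-Nx}{N}$ for each value $x$ of $N_a(-c)$. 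Merging the two cases produces Table \ref{tab4.15}, and the frequencies sum to $q^{m+1}$, confirming the dimension is $m+1$; the smallest weight is identified after factoring $(q-1)q^{m-1}-(q-1)(N-1)q^{(m-2)/2}=(q-1)q^{m/2}(q^{(m-2)/2}-1)>0$.

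Next I would address the dual. Using the Pless power moments of Lemma \ref{lem-Pless} for the six nonzero weights $w_1,\dots,w_6$ and their frequencies from Table \ref{tab4.15}, I set up the three linear equations for $A_1^{\perp}, A_2^{\perp}, A_3^{\perp}$ with $n=\tfrac{q^m-1}{N}+1$, solve to obtain $A_1^{\perp}=A_2^{\perp}=0$ and $A_3^{\perp}>0$, and conclude $d(\overline{\widehat{\cC_4}}^{\perp})=3$. Almost-optimality then follows by noting that a $q$-ary $[n,n-m-1,5]$ code would violate the sphere-packing bound for these parameters, exactly as in Theorems \ref{th-C3odd} and \ref{th-C3even}.

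Finally, for self-orthogonality I use Theorem \ref{th-selforthogonal}: $\mathbf{1} \in \overline{\widehat{\cC_4}}$ by construction, and $p$-divisibility of every tabulated weight follows because, using $N=q+1$ coprime to $q$, each numerator carries a factor $q^{(m-2)/2}$ or $q$; in particular $\tfrac{q^m-1}{N}+1=\tfrac{q(q^{m-1}+1)}{q+1}$ is an integer multiple of $q$ because $m-1$ is odd forces $q+1 \mid q^{m-1}+1$. The main obstacle I anticipate is purely bookkeeping: handling the three-by-three Pless system cleanly to extract $A_3^{\perp}$ in closed form without sign or arithmetic slips, since the rational weights with denominator $N$ make the cubic moment algebraically heavier than in the earlier $\cC_1,\cC_2,\cC_3$ cases; everything else is a direct transcription of the method already used in this paper.
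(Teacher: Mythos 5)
Your proposal is correct and follows essentially the same route as the paper's own proof: the same split into $c=0$ and $c\in\gf_q^*$, the same use of Lemma \ref{lem-Nc4odd} to read off $\mathrm{wt}(\bc)=\tfrac{q^m-1}{N}-N(0)$ resp.\ $n-N(-c)$, the same Pless power moment computation for $A_1^{\perp},A_2^{\perp},A_3^{\perp}$, and the same appeal to Theorem \ref{th-selforthogonal} after checking $\mathbf{1}\in\overline{\widehat{\cC_4}}$ and $p$-divisibility. Your explicit verification that each tabulated weight carries a factor of $q$ (using $\gcd(N,q)=1$ and $q+1\mid q^{m-1}+1$ for odd $m-1$) is a small but welcome addition that the paper leaves implicit.
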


\begin{proof}
It is obvious that the length of $\overline{\widehat{\cC_4}}$ is $n=\frac{q^m-1}{N}+1$.
   Choose any codeword $\bc_{(a,b,c)} \in \overline{\widehat{\cC_4}}$. For $c = 0$, we have $\text{wt}(\bc_{(a,b,0)})=n-1-N(0)=\frac{q^m-1}{N}-N(0)$, where $N(0)$ is defined in Lemma \ref{lem-Nc4odd}. Then by Lemma \ref{lem-Nc4odd}, we derive
 \begin{eqnarray*}
   \text{wt}(\bc_{(a,b,0)}) &=& \begin{cases}
                                  0 & \mbox{$1$ time,} \\
                                 \frac{(q-1)q^{m-1}-(q-1)(N-1)q^{\frac{m-2}{2}}}{N} & \mbox{$\frac{q^m-1}{N}$ times,} \\
                                 \frac{(q-1)q^{m-1}+(q-1)q^{\frac{m-2}{2}}}{N} & \mbox{$\frac{(q^m-1)(N-1)}{N}$ times.}
                                \end{cases}
 \end{eqnarray*}
 For $c \neq 0$, we have $\text{wt}(\bc_{(a,b,c)})=n-N(-c)=p^m-N(-c)$, where $N(-c)$ is defined in Lemma \ref{lem-Nc4odd}. By Lemma \ref{lem-Nc4odd}, we deduce that
 \begin{eqnarray*}
   \text{wt}(\bc_{(a,b,c)}) &=& \begin{cases}
                                  \frac{q^m-1}{N}+1 & \mbox{$q-1$ times,} \\
                                 \frac{q^m-q^{m-1}+(N-1)(q^{\frac{m-2}{2}}+1)}{N} & \mbox{$\frac{(q^m-1)(q-1)}{N}$ times,}\\
                                  \frac{q^m-q^{m-1}-q^{\frac{m-2}{2}}+N-1}{N} & \mbox{$\frac{(q^m-1)(q-1)(N-1)}{N}$ times.}
                                \end{cases}
 \end{eqnarray*}
Furthermore, the dimension of $\overline{\widehat{\cC_4}}$ is $m+1$ as the zero codeword in $\overline{\widehat{\cC_4}}$ occurs only once. Then the parameters and weight distribution of $\overline{\widehat{\cC_4}}$ directly follow.

 In the following, we will determine the minimum distance of $\overline{\widehat{\cC_4}}^{\perp}$. Denote by $$w_1=\frac{(q-1)q^{m-1}-(q-1)(N-1)q^{\frac{m-2}{2}}}{N}, w_2=\frac{(q-1)q^{m-1}+(q-1)q^{\frac{m-2}{2}}}{N},$$ $$w_3=\frac{q^m-q^{m-1}+(N-1)(q^{\frac{m-2}{2}}+1)}{N}, w_4=\frac{q^m-q^{m-1}-q^{\frac{m-2}{2}}+N-1}{N} \text{ and } w_5=\frac{q^m-1}{N}+1.$$ By the second, third and fourth Pless power moments in \cite[Page 260]{H}, we have
 \begin{eqnarray*}
 \left\{\begin{array}{l}
         \sum_{i=1}^{5}w_iA_{w_i}=q^{m}(qn-n-A_1^{\perp}), \\ \sum_{i=1}^{5}w_i^2A_{w_i}=q^{m-1}\left((q-1)n(qn-n+1)-(2qn-q-2n+2)A_1^{\perp}+2A_2^{\perp}\right),\\ \sum_{i=1}^{5}w_i^3A_{w_i}=q^{m-2}\left[(q-1)n(q^2n^2-2qn^2+3qn-q+n^2-3n+2)\right.\\
         \qquad\qquad\qquad\quad\left.-(3q^2n^2-3q^2n-6qn^2+12qn+q^2-6q+3n^2-9n+6)A_1^{\perp}\right.\\
         \qquad\qquad\qquad\quad\left. +6(qn-q-n+2)A_2^{\perp}-6A_3^\perp\right],
       \end{array}\right.
\end{eqnarray*}
where $n=\frac{q^m-1}{N}+1$ and $A_{w_i} (1 \leq i \leq 5)$  denote the frequency of the weight $w_i$ in Table \ref{tab4.15}. Solving the above system of linear equations gives
\begin{eqnarray*}
A_1^{\perp}=A_2^{\perp}=0,\ A_3^{\perp}=\frac{(q^m-1)(q^2-3q+2)(q^m+q^{\frac{m+4}{2}}+2q^2+q-q^{\frac{m+2}{2}})}{6(q+1)^3}.
\end{eqnarray*}
Then $d(\overline{\widehat{\cC_4}}^{\perp})=3$ and $\overline{\widehat{\cC_4}}^{\perp}$ has parameters $[\frac{q^m-1}{N}+1, \frac{q^m-1}{N}-m, 3]$. Besides, it is easy to verify that $\overline{\widehat{\cC_4}}^{\perp}$ is at least almost optimal according to the sphere-packing bound as the $q$-ary $[\frac{q^m-1}{N}+1, \frac{q^m-1}{N}-m, 5]$ code does not exist.

 It is easy to deduce that $\mathbf{1} \in \overline{\widehat{\cC_4}}$ and $\overline{\widehat{\cC_4}}$ is a $p$-divisible code. Then $\overline{\widehat{\cC_4}}$ is a self-orthogonal code by Theorem \ref{th-selforthogonal}.
\end{proof}

\begin{theorem}\label{th-C4even}
 Let $q=p^e$ with odd prime $p$ and a positive integer $e$. Let $m=2s$ with even integer $s \geq 2$ and $N=p^e+1$. Then the code $\overline{\widehat{\cC_4}}$ defined in Equation (\ref{eq-C4bar}) is a self-orthogonal $p$-divisible code with parameters $[\frac{q^m-1}{N}+1, m+1, \frac{q^m-q^{m-1}-(N-1)(q^{\frac{m-2}{2}}-1)}{N}]$ and its weight distribution is listed in Table \ref{tab4.16}. Besides, $\overline{\widehat{\cC_4}}^{\perp}$ has parameters $[\frac{q^m-1}{N}+1,\frac{q^m-1}{N}-m,3]$ and $\overline{\widehat{\cC_4}}^{\perp}$ is at least almost optimal according to the sphere-packing bound.
 \begin{table}[h]
\begin{center}
\caption{The weight distribution of $\overline{\widehat{\cC_4}}$ in Theorem \ref{th-C4even}.}\label{tab4.16}
\begin{tabular}{@{}ll@{}}
\toprule%
Weight & Frequency  \\
\midrule
$0$ & $1$\\
$\frac{(q-1)q^{m-1}+(q-1)(N-1)q^{\frac{m-2}{2}}}{N}$ & $\frac{q^m-1}{N}$\\
$\frac{(q-1)q^{m-1}-(q-1)q^{\frac{m-2}{2}}}{N}$ & $\frac{(q^m-1)(N-1)}{N}$\\
$\frac{q^m-q^{m-1}-(N-1)(q^{\frac{m-2}{2}}-1)}{N}$ & $\frac{(q^m-1)(q-1)}{N}$\\
$\frac{q^m-q^{m-1}+q^{\frac{m-2}{2}}+N-1}{N}$ & $\frac{(q^m-1)(q-1)(N-1)}{N}$\\
$\frac{q^m-1}{N}+1$ & $q-1$\\
\bottomrule
\end{tabular}
\end{center}
\end{table}
\end{theorem}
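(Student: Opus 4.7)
The proof will closely mirror that of Theorem \ref{th-C4odd}, with the inputs from Lemma \ref{lem-Nc4odd} replaced by those from Lemma \ref{lem-Nc4even} to reflect the parity change in $s$. The length of $\overline{\widehat{\cC_4}}$ is $n=\frac{q^m-1}{N}+1$. For a generic codeword $\bc_{(a,b,c)}\in\overline{\widehat{\cC_4}}$ I will split the computation into the two cases $c=0$ and $c\neq 0$. When $c=0$, the appended coordinate is zero and so $\text{wt}(\bc_{(a,b,0)}) = \big(\tfrac{q^m-1}{N}\big)-N(0)$, and Lemma \ref{lem-Nc4even} then yields the three contributions of weights $0$, $\frac{(q-1)q^{m-1}+(q-1)(N-1)q^{(m-2)/2}}{N}$, and $\frac{(q-1)q^{m-1}-(q-1)q^{(m-2)/2}}{N}$ with the stated frequencies. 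When $c\in\gf_q^*$, the appended coordinate is nonzero and $\text{wt}(\bc_{(a,b,c)})=n-N(-c)$; substituting the second part of Lemma \ref{lem-Nc4even} produces the three remaining weights in Table \ref{tab4.16}, one of which is the all-$1$ weight $\frac{q^m-1}{N}+1$ occurring $q-1$ times (corresponding to $a=0$, $c\neq 0$). Combining the two cases and using that only the zero codeword gives weight $0$, one obtains $\dim\overline{\widehat{\cC_4}}=m+1$ and the full weight distribution as claimed.

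To pin down the parameters of the dual, I will plug the weight distribution into the second, third and fourth Pless power moments from Lemma \ref{lem-Pless} exactly as done in the proof of Theorem \ref{th-C4odd}, with $n=\frac{q^m-1}{N}+1$ and the five weights $w_1,\ldots,w_5$ now taken from Table \ref{tab4.16}. Solving the resulting linear system in $A_1^\perp, A_2^\perp, A_3^\perp$ should give $A_1^\perp=A_2^\perp=0$ and a strictly positive value for $A_3^\perp$. Hence $d(\overline{\widehat{\cC_4}}^\perp)=3$ and $\overline{\widehat{\cC_4}}^\perp$ has parameters $\big[\tfrac{q^m-1}{N}+1,\tfrac{q^m-1}{N}-m,3\big]$; since a $q$-ary $\big[\tfrac{q^m-1}{N}+1,\tfrac{q^m-1}{N}-m,5\big]$ code is forbidden by the sphere-packing bound, the dual is at least almost optimal.

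Finally, to conclude self-orthogonality it suffices to observe that $\mathbf{1}\in\overline{\widehat{\cC_4}}$ by construction (it is the codeword obtained by taking $a=0$ and $c=1$) and that every weight appearing in Table \ref{tab4.16} is a multiple of $p$, so that $\overline{\widehat{\cC_4}}$ is $p$-divisible. Theorem \ref{th-selforthogonal} then immediately gives that $\overline{\widehat{\cC_4}}$ is self-orthogonal.

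The main technical obstacle is the Pless power moment computation: the formulas for $w_1,\ldots,w_5$ are messy rational expressions in $q$ and $N=q+1$, so one must carefully carry out the algebra (or verify with computer algebra) to confirm $A_1^\perp=A_2^\perp=0$ and to obtain a clean, provably positive, closed form for $A_3^\perp$. Once this computation is in place, the divisibility verification is routine (every weight in Table \ref{tab4.16} is manifestly divisible by $p$ since $p\mid q^{(m-2)/2}$ and $p\mid q^{m-1}$), and the remainder of the argument is the same template used in Theorems \ref{th-C3odd}, \ref{th-C3even} and \ref{th-C4odd}.
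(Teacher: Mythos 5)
Your proposal is correct and follows essentially the same template as the paper's own proof: compute the weights case-by-case from Lemma \ref{lem-Nc4even} for $c=0$ and $c\neq 0$, apply the Pless power moments to get $A_1^{\perp}=A_2^{\perp}=0$ and $A_3^{\perp}>0$, invoke the sphere-packing bound, and conclude self-orthogonality from $\mathbf{1}\in\overline{\widehat{\cC_4}}$, $p$-divisibility and Theorem \ref{th-selforthogonal}. (You in fact cite Lemma \ref{lem-Nc4even} for the $c\neq 0$ case, where the paper's text mistakenly refers to Lemma \ref{lem-Nc4odd} while using the values from the even case.)
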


\begin{proof}
It is obvious that the length of $\overline{\widehat{\cC_4}}$ is $n=\frac{q^m-1}{N}+1$.
   Choose any codeword $\bc_{(a,b,c)} \in \overline{\widehat{\cC_4}}$. For $c = 0$, we have $\text{wt}(\bc_{(a,b,0)})=n-1-N(0)=\frac{q^m-1}{N}-N(0)$, where $N(0)$ is defined in Lemma \ref{lem-Nc4even}. Then by Lemma \ref{lem-Nc4even}, we derive
 \begin{eqnarray*}
   \text{wt}(\bc_{(a,b,0)}) &=& \begin{cases}
                                  0 & \mbox{$1$ time,} \\
                                 \frac{(q-1)q^{m-1}+(q-1)(N-1)q^{\frac{m-2}{2}}}{N} & \mbox{$\frac{q^m-1}{N}$ times,} \\
                                 \frac{(q-1)q^{m-1}-(q-1)q^{\frac{m-2}{2}}}{N} & \mbox{$\frac{(q^m-1)(N-1)}{N}$ times.}
                                \end{cases}
 \end{eqnarray*}
 For $c \neq 0$, we have $\text{wt}(\bc_{(a,b,c)})=n-N(-c)=p^m-N(-c)$, where $N(-c)$ is defined in Lemma \ref{lem-Nc4odd}. By Lemma \ref{lem-Nc4odd}, we deduce that
 \begin{eqnarray*}
   \text{wt}(\bc_{(a,b,c)}) &=& \begin{cases}
                                  \frac{q^m-1}{N}+1 & \mbox{$q-1$ times,} \\
                                 \frac{q^m-q^{m-1}-(N-1)(q^{\frac{m-2}{2}}-1)}{N} & \mbox{$\frac{(q^m-1)(q-1)}{N}$ times,}\\
                                  \frac{q^m-q^{m-1}+q^{\frac{m-2}{2}}+N-1}{N} & \mbox{$\frac{(q^m-1)(q-1)(N-1)}{N}$ times.}
                                \end{cases}
 \end{eqnarray*}
Furthermore, the dimension of $\overline{\widehat{\cC_4}}$ is $m+1$ as the zero codeword in $\overline{\widehat{\cC_4}}$ occurs only once. Then the parameters and weight distribution of $\overline{\widehat{\cC_4}}$ directly follow.

 In the following, we will determine the minimum distance of $\overline{\widehat{\cC_4}}^{\perp}$. Denote by $$w_1=\frac{(q-1)q^{m-1}+(q-1)(N-1)q^{\frac{m-2}{2}}}{N}, w_2=\frac{(q-1)q^{m-1}-(q-1)q^{\frac{m-2}{2}}}{N},$$ $$w_3=\frac{q^m-q^{m-1}-(N-1)(q^{\frac{m-2}{2}}-1)}{N}, w_4=\frac{q^m-q^{m-1}+q^{\frac{m-2}{2}}+N-1}{N} \text{ and } w_5=\frac{q^m-1}{N}+1.$$ By the second, third and fourth Pless power moments in \cite[Page 260]{H}, we have
 \begin{eqnarray*}
 \left\{\begin{array}{l}
         \sum_{i=1}^{5}w_iA_{w_i}=q^{m}(qn-n-A_1^{\perp}), \\ \sum_{i=1}^{5}w_i^2A_{w_i}=q^{m-1}\left((q-1)n(qn-n+1)-(2qn-q-2n+2)A_1^{\perp}+2A_2^{\perp}\right),\\ \sum_{i=1}^{5}w_i^3A_{w_i}=q^{m-2}\left[(q-1)n(q^2n^2-2qn^2+3qn-q+n^2-3n+2)\right.\\
         \qquad\qquad\qquad\quad\left.-(3q^2n^2-3q^2n-6qn^2+12qn+q^2-6q+3n^2-9n+6)A_1^{\perp}\right.\\
         \qquad\qquad\qquad\quad\left. +6(qn-q-n+2)A_2^{\perp}-6A_3^\perp\right],
       \end{array}\right.
\end{eqnarray*}
where $n=\frac{q^m-1}{N}+1$ and $A_{w_i} (1 \leq i \leq 5)$ denots the frequency of the weight $w_i$ in Table \ref{tab4.16}. Solving the above system of linear equations gives
\begin{eqnarray*}
A_1^{\perp}=A_2^{\perp}=0,\ A_3^{\perp}=\frac{(q^m-1)(q^2-3q+2)(q^m+q^{\frac{m+2}{2}}+2q^2+q-q^{\frac{m+4}{2}})}{6(q+1)^3}.
\end{eqnarray*}
Then $d(\overline{\widehat{\cC_4}}^{\perp})=3$ and $\overline{\widehat{\cC_4}}^{\perp}$ has parameters $[\frac{q^m-1}{N}+1, \frac{q^m-1}{N}-m, 3]$. Besides, it is easy to verify that $\overline{\widehat{\cC_4}}^{\perp}$ is at least almost optimal according to the sphere-packing bound.

 It is easy to deduce that $\mathbf{1} \in \overline{\widehat{\cC_4}}$ and $\overline{\widehat{\cC_4}}$ is a $p$-divisible code. Then $\overline{\widehat{\cC_4}}$ is also a self-orthogonal code by Theorem \ref{th-selforthogonal}.
\end{proof}

\subsection{The fifth family of self-orthogonal codes from the BCH codes with designed distance $\delta_2=(q-1)q^{m-1}-1-q^{\lfloor(m-1)/2\rfloor}$}
In this subsection, let $q$ be an odd prime. Let $m$ be a positive integer with $m \geq 3$. Define $\tilde{g}_{(q,m,\delta_2)}(x)=(x-1)g_{(q,m,\delta_2)}(x)$ with $\delta_2=(q-1)q^{m-1}-1-q^{\lfloor(m-1)/2\rfloor}$. Let $\tilde{\cC}_{(q,m,\delta_2)}$ and $\cC_{(q,m,\delta_2)}$ respectively be the cyclic codes over $\gf_q$ with generator polynomials $\tilde{g}_{(q,m,\delta_2)}(x)$ and $g_{(q,m,\delta_2)}(x)$. By \cite{DFZ}, $\tilde{\cC}_{(q,m,\delta_2)}$ can be written as
\begin{eqnarray*}
  \tilde{\cC}_{(q,m,\delta_2)} &=& \left\{\left(\tr_{q^m/q}(ax^{1+q^{\lfloor(m-1)/2\rfloor+1}}+bx)\right)_{x \in \gf_{q^m}^*}: a \in \gf_{q^m}, b \in \gf_{q^m}\right\},
\end{eqnarray*}
and $\cC_{(q,m,\delta_2)}$ can be written as
\begin{eqnarray*}
  \cC_{(q,m,\delta_2)} &=& \left\{\left(\tr_{q^m/q}(ax^{1+q^{\lfloor(m-1)/2\rfloor+1}}+bx)+c\right)_{x \in \gf_{q^m}^*}: a \in \gf_{q^m}, b \in \gf_{q^m}, c \in \gf_q\right\}.
\end{eqnarray*}
Note that $\cC_{(q,m,\delta_2)}$ is the augmented code of $\tilde{\cC}_{(q,m,\delta_2)}$.

Ding et al. used character sums of quadratic functions to determine the weight distribution of $\tilde{\cC}_{(q,m,\delta_2)}$ in \cite{DFZ}. Subsequently, Li et al. presented the weight distributions of $\cC_{(q,m,\delta_2)}$ and $\overline{\cC}_{(q,m,\delta_2)}$ in \cite{LWL}, where $\overline{\cC}_{(q,m,\delta_2)}$ is the extended code of $\cC_{(q,m,\delta_2)}$.

 The objective of this subsection is to prove that $\overline{\cC}_{(q,m,\delta_2)}$ is self-orthogonal and determine the dual distance of $\overline{\cC}_{(q,m,\delta_2)}$. Besides, we will prove that $\overline{\cC}_{(q,m,\delta_2)}$ holds a $1$-design or a $2$-design and determine the locality of $\overline{\cC}_{(q,m,\delta_2)}$.
 In the following, we recall the weight distributions of $\overline{\cC}_{(q,m,\delta_2)}$ for odd $m$ and even $m$, respectively.

\begin{lemma}\cite{LWL}\label{lem-D1odd}
  Let $q$ be an odd prime and $m$ be an odd integer with $m \geq 3$. Then the extended code $\overline{\cC}_{(q,m,\delta_2)}$ has parameters $[q^m, 2m+1, (q-1)q^{m-1}-q^{\frac{m-1}{2}}]$ and its weight distribution is listed in Table \ref{tab6.3}.
\end{lemma}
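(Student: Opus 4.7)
The plan is to derive the weight distribution of $\overline{\cC}_{(q,m,\delta_2)}$ from the value distribution of the underlying cyclic code $\tilde{\cC}_{(q,m,\delta_2)}$, already established by Ding, Fan and Zhou \cite{DFZ}. With $k=(m+1)/2$ for odd $m$, a generic codeword of $\cC_{(q,m,\delta_2)}$ has the form
\begin{eqnarray*}
\bc_{(a,b,c)}=\left(\tr_{q^m/q}(ax^{1+q^k}+bx)+c\right)_{x\in\gf_{q^m}^*},\qquad a,b\in\gf_{q^m},\ c\in\gf_q,
\end{eqnarray*}
and the extension coordinate is forced by the requirement that the sum of all $q^m$ coordinates vanish. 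My first step is to evaluate this extension coordinate: the linear part $\sum_{x \in \gf_{q^m}^*} \tr_{q^m/q}(bx) = 0$ by orthogonality of characters, the constant part contributes $(q^m-1)c = -c$ in $\gf_q$, and the quadratic part $\sum_{x \in \gf_{q^m}^*} \tr_{q^m/q}(ax^{1+q^k})$ can be read off the value distribution of the nondegenerate quadratic form $x \mapsto \tr_{q^m/q}(ax^{1+q^k})$ (which has rank $m$, as $\gcd(k,m)=1$ for odd $m$). This yields a closed-form expression for the extension coordinate in terms of $a$ and $c$.

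With the extension coordinate in hand, the weight of an extended codeword decomposes as $(q^m-1) - N_{-c}(a,b)$ plus an indicator $[\text{ext}\neq 0]$, where
\begin{eqnarray*}
N_t(a,b) := \left|\left\{x \in \gf_{q^m}^* : \tr_{q^m/q}(ax^{1+q^k}+bx) = t\right\}\right|,\quad t\in\gf_q.
\end{eqnarray*}
I would evaluate $N_t(a,b)$ via an additive-character expansion, splitting on whether $a=0$. The case $a=0$ is handled immediately by orthogonality of the canonical additive character of $\gf_{q^m}$. For $a \neq 0$, the standard linearization of the Dembowski--Ostrom monomial $x^{1+q^k}$ allows one to complete the square, and Lemma \ref{weilsum} combined with Lemma \ref{Guassum} evaluates the resulting inner sum in terms of the quadratic character $\eta$ of $\gf_q$ applied to an explicit discriminant depending on $a,b,t$.

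Summing these character-sum evaluations and classifying the multiplicities of $\eta(\cdot)=\pm 1$ as $(a,b)$ ranges over $\gf_{q^m}^2$ produces the complete value distribution of $\{N_t(a,b)\}$, which is precisely the information tabulated in \cite{DFZ}. Combining this with the extension-coordinate analysis, adjusting weights by $+1$ for each codeword whose extension coordinate is nonzero, and noting that the zero codeword arises only from $(a,b,c)=(0,0,0)$, I obtain the claimed parameters $[q^m, 2m+1, (q-1)q^{m-1}-q^{(m-1)/2}]$ together with the full weight distribution of Table \ref{tab6.3}. The main obstacle will be the careful bookkeeping of the quadratic-character values: one must track how $\eta$ of the discriminant depends on $(a,b)$ and how this interacts with the shift by $c$, so as to avoid sign errors and double-counting across the subcases distinguished by whether $a=0$, $b=0$, and $c=0$.
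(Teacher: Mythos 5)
This lemma is not proved in the paper at all: it is imported verbatim from \cite{LWL} (with the underlying character-sum computation going back to \cite{DFZ}), so there is no internal argument to compare your proposal against. Judged on its own terms, your outline follows exactly the route that \cite{DFZ} and \cite{LWL} take --- write the weight of $\bc_{(a,b,c)}$ as $(q^m-1)-N_{-c}(a,b)$ plus an indicator for the extension coordinate, expand $N_t(a,b)$ in additive characters, and evaluate the resulting Weil sums via Lemmas \ref{weilsum} and \ref{Guassum} --- so the strategy is sound and would reproduce Table \ref{tab6.3}.

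Two caveats. First, what you have written is a plan, not a proof: the entire quantitative content of the lemma (the four frequencies in Table \ref{tab6.3}) lives in the ``careful bookkeeping'' you defer, which is precisely the theorem of \cite{DFZ} that you would be re-deriving. Second, two of your asserted steps need more justification than you give. The claim that $\tr_{q^m/q}(ax^{1+q^k})$ has rank $m$ for every $a\neq 0$ does not follow merely from $\gcd(k,m)=1$; one must check that the radical equation $a^{q^k}y^{q^{2k}}+ay=0$ has only the trivial solution, which for $k=(m+1)/2$ reduces to $y^{q-1}=-a^{1-q^k}$ and fails to have solutions because $-1$ is not a $(q-1)$-st power in $\gf_{q^m}$ when $m$ is odd and $q$ is odd. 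And the extension coordinate is far simpler than your quadratic-form computation suggests: since every count $N_t(a,b)$ is divisible by $q$ (for $m\geq 3$), the sum $\sum_{x\in\gf_{q^m}^*}\tr_{q^m/q}(ax^{1+q^k}+bx)$ vanishes in $\gf_q$ and the appended coordinate is simply $c$, so it is nonzero exactly when $c\neq 0$. Neither issue is fatal, but both must be nailed down before the sketch becomes a proof.
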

 \begin{table}[h]
\begin{center}
\caption{The weight distribution of $\overline{\cC}_{(q,m,\delta_2)}$ in Lemma \ref{lem-D1odd}.}\label{tab6.3}
\begin{tabular}{@{}ll@{}}
\toprule%
Weight & Frequency  \\
\midrule
$0$ & $1$\\
$(q-1)q^{m-1}-q^{\frac{m-1}{2}}$ & $\frac{q^m(q^m-1)(q-1)}{2}$\\
$(q-1)q^{m-1}$ & $(q^m+q)(q^m-1)$\\
$(q-1)q^{m-1}+q^{\frac{m-1}{2}}$ & $\frac{q^m(q^m-1)(q-1)}{2}$\\
$q^m$ & $q-1$\\
\bottomrule
\end{tabular}
\end{center}
\end{table}

\begin{lemma}\cite{LWL}\label{lem-D1even}
  Let $q$ be an odd prime and $m$ be an even integer with $m \geq 4$. Then the extended code $\overline{\cC}_{(q,m,\delta_2)}$ has parameters $[q^m, \frac{3m}{2}+1, (q-1)q^{m-1}-q^{\frac{m-2}{2}}]$ and its weight distribution is listed in Table \ref{tab6.4}.
\end{lemma}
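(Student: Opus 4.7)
The plan is to read $\overline{\cC}_{(q,m,\delta_2)}$ as a length-$q^m$ evaluation code obtained by evaluating $x \mapsto \tr_{q^m/q}(ax^{1+q^{s+1}}+bx)+c$ at every $x\in\gf_{q^m}$, where $m=2s$; the value at $x=0$ supplies the appended coordinate. Under this identification the weight of $\bc_{(a,b,c)}$ equals $q^m-N(a,b,c)$, with $N(a,b,c)=|\{x\in\gf_{q^m}:\tr_{q^m/q}(ax^{1+q^{s+1}}+bx)=-c\}|$. Applying orthogonality of the additive characters of $\gf_q$ gives
\[
N(a,b,c) = q^{m-1} + \frac{1}{q}\sum_{y\in\gf_q^*}\zeta_p^{\tr_{q/p}(yc)}\, S(ya,yb),\quad S(A,B):=\sum_{x\in\gf_{q^m}}\zeta_p^{\tr_{q^m/p}(Ax^{1+q^{s+1}}+Bx)},
\]
so the weight problem reduces to evaluating $S(A,B)$.

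The sum $S(A,B)$ is a Weil sum of a quadratic function on $\gf_{q^m}$ over $\gf_p$ (the exponent $1+q^{s+1}$ is a sum of two Frobenius powers, which is exactly the regime to which Lemmas \ref{Guassum} and \ref{weilsum} are adapted). I would evaluate it by the standard completion-of-the-square technique: substitute $x\mapsto x+x_0$, where $x_0$ solves the linearized equation $Ax^{q^{s+1}}+A^{q^{s-1}}x+B=0$, to kill the linear term, and then reduce the remaining quadratic character sum to a quadratic Gauss sum via Lemma \ref{Guassum}. The outcome expresses $S(A,B)$ in terms of three ingredients: (i) the $\gf_q$-rank of the quadratic form $Q_A(x)=\tr_{q^m/q}(Ax^{1+q^{s+1}})$, (ii) the sign $\eta(A)$, and (iii) whether the linearized equation in $x$ is solvable for the given pair $(A,B)$.

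The final step is to classify $A\in\gf_{q^m}$ by the rank of $Q_A$ and perform the counting. For even $m=2s$ the Frobenius-induced identity $\tr_{q^m/q}(Ax^{1+q^{s+1}})=\tr_{q^m/q}(A^{q^{s-1}}x^{1+q^{s-1}})$ collapses the parametrization: the map $A\mapsto Q_A$ acquires a nontrivial $\gf_q$-kernel of dimension $s=m/2$, so only $q^{s}$ distinct nonzero forms appear. This is exactly what forces the code dimension to be $\tfrac{3m}{2}+1$ rather than $2m+1$, and it is the qualitative feature that distinguishes the even-$m$ case from the odd-$m$ case of Lemma \ref{lem-D1odd}. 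With the rank spectrum of $Q_A$ in hand, splitting the count into the subcases $a=0$ (linear plus constant) and $a\neq 0$ at each admissible rank, summing over $b\in\gf_{q^m}$ and $c\in\gf_q$, and combining the signs $\eta(ya)$ as $y$ ranges over $\gf_q^*$, yields the weights and frequencies listed in Table \ref{tab6.4}. The main obstacle will be exactly this rank-and-sign bookkeeping in the even-$m$ regime, in particular identifying the radical of the bilinear form attached to $x^{1+q^{s+1}}$ and correctly tracking solvability of the auxiliary linearized equation; this is the analysis carried out in \cite{LWL}.
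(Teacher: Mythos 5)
The paper offers no proof of this lemma: it is quoted verbatim from \cite{LWL}, so there is no internal argument to compare yours against. Your overall architecture --- writing the weight of $\bc_{(a,b,c)}$ as $q^m-N(a,b,c)$, expanding $N$ by orthogonality of additive characters, and evaluating the resulting Weil sums attached to the quadratic form $\tr_{q^m/q}(ax^{1+q^{h}})$ via completion of the square, quadratic Gauss sums (Lemmas \ref{Guassum} and \ref{weilsum}), and a rank/solvability classification --- is indeed the standard route by which \cite{DFZ} and \cite{LWL} obtain these distributions, so the plan is structurally sound.

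There is, however, a concrete error at the one step that actually distinguishes the even-$m$ case. For $m=2s$ the exponent in the defining trace expression is $1+q^{\lfloor(m-1)/2\rfloor+1}=1+q^{s}$, not $1+q^{s+1}$ as you write. This matters because your explanation of why the dimension drops to $\tfrac{3m}{2}+1$ does not work: the Frobenius identity $\tr_{q^m/q}(Ax^{1+q^{s+1}})=\tr_{q^m/q}(A^{q^{s-1}}x^{1+q^{s-1}})$ only identifies the family of forms with exponent $s+1$ with the family with exponent $s-1$ under the relabeling $A\mapsto A^{q^{s-1}}$; it does not force $Q_A=Q_{A'}$ for distinct $A,A'$ within one family, so it produces no kernel and would leave you with dimension $2m+1$. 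The correct mechanism is that with the exponent $1+q^{m/2}$ one has $x^{1+q^{m/2}}=\Norm_{q^m/q^{m/2}}(x)\in\gf_{q^{m/2}}$, whence $\tr_{q^m/q}(ax^{1+q^{m/2}})=\tr_{q^{m/2}/q}\bigl(\tr_{q^m/q^{m/2}}(a)\,x^{1+q^{m/2}}\bigr)$ depends on $a$ only through $\tr_{q^m/q^{m/2}}(a)$; the kernel of $a\mapsto Q_a$ is the kernel of this relative trace, of $\gf_q$-dimension $m/2$, giving $q^{m/2}$ distinct forms and hence $q^{m/2}\cdot q^{m}\cdot q=q^{\frac{3m}{2}+1}$ codewords. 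With that correction the rest of your bookkeeping (rank and sign of the surviving nondegenerate forms, solvability of the linearized equation in the $b$-summation, and the $c$-shift) can proceed as you describe and should reproduce Table \ref{tab6.4}.
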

 \begin{table}[h]
\begin{center}
\caption{The weight distribution of $\overline{\cC}_{(q,m,\delta_2)}$ in Lemma \ref{lem-D1even}.}\label{tab6.4}
\begin{tabular}{@{}ll@{}}
\toprule%
Weight & Frequency  \\
\midrule
$0$ & $1$\\
$(q-1)q^{m-1}-q^{\frac{m-2}{2}}$ & $q^m(q^{\frac{m}{2}}-1)(q-1)$\\
$(q-1)q^{m-1}$ & $q(q^m-1)$\\
$(q-1)(q^{m-1}+q^{\frac{m-2}{2}})$ & $q^{\frac{3m}{2}}-q^m$\\
$q^m$ & $q-1$\\
\bottomrule
\end{tabular}
\end{center}
\end{table}

To give our first result in this subsection, we denote by
\begin{eqnarray*}
  U(i,j) &=& u\binom{(q-1)q^m-q^{\frac{m-1}{2}}}{i}\binom{n-(q-1)q^m+q^{\frac{m-1}{2}}}{j}+v\binom{(q-1)q^{m-1}}{i}\binom{n-(q-1)q^{m-1}}{j}\\
  &&+w\binom{(q-1)q^{m-1}+q^{\frac{m-1}{2}}}{i}\binom{n-(q-1)q^{m-1}-q^{\frac{m-1}{2}}}{j},
\end{eqnarray*}
where $n=q^m$, $u=\frac{q^m(q^m-1)(q-1)}{2}$, $v=(q^m+q)(q^m-1)$ and $w=\frac{q^m(q^m-1)(q-1)}{2}$.

\begin{theorem}\label{th-C5odd}
  Let $q$ be an odd prime and $m$ be an odd integer with $m \geq 3$. Then the extended code $\overline{\cC}_{(q,m,\delta_2)}$ is a self-orthogonal $q$-divisible code and its dual code $\overline{\cC}_{(q,m,\delta_2)}^{\perp}$ has parameters $[q^m, q^m-2m-1, 5]$ for $q=3$ and $[q^m, q^m-2m-1, 4]$ for $q > 3$. The weight enumerator $A^{\perp}(z)$ of $\overline{\cC}_{(q,m,\delta_2)}^{\perp}$ is given by
  \begin{eqnarray*}
    q^{2m+1}A^{\perp}(z) &=& \sum_{l=0}^n\left[\binom{n}{l}(q-1)^l+\sum_{i+j=l}U(i,j)(-1)^i(q-1)^j+t\binom{n}{l}(-1)^l\right]z^l,
  \end{eqnarray*}
  where $n=q^m$ and $t=q-1$.
  In particular, the following conclusions hold:
  \begin{itemize}
  \item For $q=3$, $\overline{\cC}_{(q,m,\delta_2)}^{\perp}$ is at least almost optimal according to the sphere-packing bound. For $q>3$, $\overline{\cC}_{(q,m,\delta_2)}^{\perp}$ is optimal according to the sphere-packing bound.
    \item For $q=3$, the minimum weight codewords in $\overline{\cC}_{(q,m,\delta_2)}$ hold a $$2-\left(3^m, 2\cdot 3^{m-1}-3^{\frac{m-1}{2}}, \frac{(2\cdot 3^{m-1}-3^{\frac{m-1}{2}})(2 \cdot 3^{m-1}-3^{\frac{m-1}{2}}-1)}{2}\right)$$  design and the minimum weight codewords in $\overline{\cC}_{(q,m,\delta_2)}^{\perp}$ hold a $$2-\left(3^m, 5, \frac{5(3^{m-1}-1)}{2}\right)$$ design.
    \item For $q > 3$, the minimum weight codewords in $\overline{\cC}_{(q,m,\delta_2)}$ hold a $1$-design and the minimum weight codewords in $\overline{\cC}_{(q,m,\delta_2)}^{\perp}$ also hold a $1$-design.
    \item The extended code $\overline{\cC}_{(q,m,\delta_2)}$ has locality $4$ for $q=3$ and has locality $3$ for $q > 3$.
  \end{itemize}
\end{theorem}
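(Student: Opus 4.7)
The plan is to proceed in four stages: divisibility/self\-orthogonality, dual weight enumerator, minimum distance of the dual, and finally the structural consequences (optimality, designs, locality).

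First, inspection of Table \ref{tab6.3} shows that every nonzero weight of $\overline{\cC}_{(q,m,\delta_2)}$ is divisible by $q$: since $m\ge 3$ and $m$ is odd, both $q^{m-1}$ and $q^{(m-1)/2}$ are divisible by $q$, and the weight $q^m$ is trivially so. The presence of weight $q^m$ with frequency $q-1$ shows that $\mathbf{1}\in\overline{\cC}_{(q,m,\delta_2)}$ (these are its nonzero scalar multiples). Hence by Theorem \ref{th-selforthogonal}, $\overline{\cC}_{(q,m,\delta_2)}$ is self\-orthogonal and $q$-divisible.

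Second, I would plug the weight enumerator read off from Table \ref{tab6.3} into the MacWilliams identity (Lemma \ref{lem-Mac}). Writing $A(z)=1+uz^{w_1}+vz^{w_2}+wz^{w_3}+tz^{n}$ with $w_1,w_2,w_3$ and $u,v,w,t$ as in the definition of $U(i,j)$ and expanding $(1+(q-1)z)^{n-w_i}(1-z)^{w_i}$ by the binomial theorem gives exactly the stated closed form for $q^{2m+1}A^{\perp}(z)$, where the three summands correspond to the all\-zero codeword, the three intermediate\-weight classes, and the $t$ codewords of weight $n$. This step is pure bookkeeping.

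Third, to locate the minimum distance of the dual I would use the first four Pless power moments (Lemma \ref{lem-Pless}). Substituting the weights and frequencies from Table \ref{tab6.3} into the system for $A_1^{\perp},A_2^{\perp},A_3^{\perp}$ and solving yields $A_1^{\perp}=A_2^{\perp}=A_3^{\perp}=0$. The delicate point, which I expect to be the main obstacle, is distinguishing $q=3$ from $q>3$. For this I would extract the coefficient of $z^4$ in the MacWilliams formula above, obtaining an explicit polynomial expression for $A_4^{\perp}$ in $q$; the claim is that this expression factors with $(q-3)$ as a factor, vanishing precisely when $q=3$ and being strictly positive when $q>3$. For $q=3$ one then extracts $A_5^{\perp}$ from the coefficient of $z^5$ and verifies it is positive, establishing $d^{\perp}=5$; for $q>3$ one concludes $d^{\perp}=4$.

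Finally, the remaining items are short. Optimality of $\overline{\cC}_{(q,m,\delta_2)}^{\perp}$ follows by comparing the parameters against the sphere\-packing bound (for $q=3$ we get at least almost optimality since a $[3^m,3^m-2m-1,6]$ ternary code is ruled out; for $q>3$ we get optimality since no $[q^m,q^m-2m-1,5]$ code can exist by the same bound). For the design statements I would invoke the Assmus\-Mattson Theorem (Lemma \ref{lem-AMth}) applied to $\overline{\cC}_{(q,m,\delta_2)}$ ignoring the weight $n=q^m$ class (which corresponds only to $\mathbf{1}$ and its scalar multiples); this leaves three nonzero weights in the range $[1,n-t]$, so with $t=2$ and $d^{\perp}=5$ (case $q=3$) the hypothesis $d^{\perp}-t\ge 3$ is met and yields the advertised $2$-designs in $\overline{\cC}_{(q,m,\delta_2)}$ and its dual, while for $q>3$ we can only take $t=1$ and obtain $1$-designs. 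The block counts come from equation (\ref{eqn-t}) using the frequencies in Table \ref{tab6.3} and the value of $A_{d^{\perp}}^{\perp}$ computed above. Since the dual then holds a $1$-design on its minimum weight supports, Lemma \ref{lem-locality} gives locality $d^{\perp}-1$, i.e.\ $4$ when $q=3$ and $3$ when $q>3$, completing the proof.
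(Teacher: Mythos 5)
Your proposal is correct and follows essentially the same route as the paper: divisibility and the all-one vector from Table \ref{tab6.3} plus Theorem \ref{th-selforthogonal}, the MacWilliams expansion for $A^{\perp}(z)$, coefficient extraction showing $A_1^{\perp}=A_2^{\perp}=A_3^{\perp}=0$ with $A_4^{\perp}$ carrying a factor of $(q-3)$ and $A_5^{\perp}>0$ for $q=3$, then sphere-packing, Assmus--Mattson, and Lemma \ref{lem-locality}. The only cosmetic difference is that you invoke the Pless power moments for the first three dual coefficients where the paper reads them all off the MacWilliams expansion directly; the two are equivalent.
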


\begin{proof}
Firstly, by Lemma \ref{lem-D1odd}, we derive that $\overline{\cC}_{(q,m,\delta_2)}$ is $q$-divisible and $\mathbf{1} \in \overline{\cC}_{(q,m,\delta_2)}$. Then by Theorem \ref{th-selforthogonal}, we deduce that $\overline{\cC}_{(q,m,\delta_2)}$ is self-orthogonal.

Then we determine the weight enumerator $A^{\perp}(z)$ and minimum distance of $\overline{\cC}_{(q,m,\delta_2)}^{\perp}$.
  Let $n=q^m$. By Lemma \ref{lem-D1odd}, we derive that the weight enumerator of $\overline{\cC}_{(q,m,\delta_2)}$ is
  \begin{eqnarray*}
    A(z) &=& 1+uz^{(q-1)q^{m-1}-q^{\frac{m-1}{2}}}+vz^{(q-1)q^{m-1}}+wz^{(q-1)q^{m-1}+q^{\frac{m-1}{2}}}+tz^{q^m},
  \end{eqnarray*}
  where $u=\frac{q^m(q^m-1)(q-1)}{2}$, $v=(q^m+q)(q^m-1)$, $w=\frac{q^m(q^m-1)(q-1)}{2}$ and $t=q-1$. By Lemma \ref{lem-Mac}, the weight enumerator of $\overline{\cC}_{(q,m,\delta_2)}$ satisfies
  \begin{eqnarray}\label{eq-A}
    q^{2m+1}A^{\perp}(z) &=& (1+(q-1)z)^nA\left(\frac{1-z}{1+(q-1)z}\right)\nonumber\\
    &=&(1+(q-1)z)^n+u(1-z)^{(q-1)q^{m-1}-q^{\frac{m-1}{2}}}(1+(q-1)z)^{n-(q-1)q^{m-1}+q^{\frac{m-1}{2}}}\nonumber\\
    && +v(1-z)^{(q-1)q^{m-1}}(1+(q-1)z)^{n-(q-1)q^{m-1}}\nonumber\\
    && +w(1-z)^{(q-1)q^{m-1}+q^{\frac{m-1}{2}}}(1+(q-1)z)^{n-(q-1)q^{m-1}-q^{\frac{m-1}{2}}}+t(1-z)^n.
  \end{eqnarray}
Furthermore, we have
\begin{eqnarray}\label{eq-A1}
  (1+(q-1)z)^n &=& \sum_{l=0}^n\binom{n}{l}(q-1)^lz^l,
\end{eqnarray}
\begin{eqnarray}\label{eq-A2}
  &&u(1-z)^{(q-1)q^{m-1}-q^{\frac{m-1}{2}}}(1+(q-1)z)^{n-(q-1)q^{m-1}+q^{\frac{m-1}{2}}}\nonumber \\
  &=& u\sum_{l=0}^n\left(\sum_{i+j=l}\binom{(q-1)q^{m-1}-q^{\frac{m-1}{2}}}{i}\binom{n-(q-1)q^{m-1}+q^{\frac{m-1}{2}}}{j}(-1)^i(q-1)^j\right)z^l,
\end{eqnarray}
\begin{eqnarray}\label{eq-A3}
  &&v(1-z)^{(q-1)q^{m-1}}(1+(q-1)z)^{n-(q-1)q^{m-1}}\nonumber\\
  &=&v\sum_{l=0}^n\left(\sum_{i+j=l}\binom{(q-1)q^{m-1}}{i}\binom{n-(q-1)q^{m-1}}{j}(-1)^i(q-1)^j\right)z^l,
\end{eqnarray}
\begin{eqnarray}\label{eq-A4}
&&w(1-z)^{(q-1)q^{m-1}+q^{\frac{m-1}{2}}}(1+(q-1)z)^{n-(q-1)q^{m-1}-q^{\frac{m-1}{2}}}\nonumber \\
  &=& w\sum_{l=0}^n\left(\sum_{i+j=l}\binom{(q-1)q^{m-1}+q^{\frac{m-1}{2}}}{i}\binom{n-(q-1)q^{m-1}-q^{\frac{m-1}{2}}}{j}(-1)^i(q-1)^j\right)z^l,
\end{eqnarray}
\begin{eqnarray}\label{eq-A5}
  t(1-z)^n &=& t\sum_{l=0}^{n}\binom{n}{l}(-1)^lz^l.
\end{eqnarray}
Substituting Equations (\ref{eq-A1}), (\ref{eq-A2}), (\ref{eq-A3}), (\ref{eq-A4}) and (\ref{eq-A5}) into (\ref{eq-A}) yields
\begin{eqnarray*}
    q^{2m+1}A^{\perp}(z) &=& \sum_{l=0}^n\left[\binom{n}{l}(q-1)^l+\sum_{i+j=l}U(i,j)(-1)^i(q-1)^j+t\binom{n}{l}(-1)^l\right]z^l.
  \end{eqnarray*}
Then we have
\begin{eqnarray*}
  q^{2m+1}A_1^{\perp} &=& \binom{n}{1}(q-1)+\sum_{i+j=1}U(i,j)(-1)^i(q-1)^j-(q-1)\binom{n}{1}\\
  &=& -U(1,0)+U(0,1)(q-1)=0,
\end{eqnarray*}
\begin{eqnarray*}
  q^{2m+1}A_2^{\perp} &=& \binom{n}{2}(q-1)^2+\sum_{i+j=2}U(i,j)(-1)^i(q-1)^j+(q-1)\binom{n}{2}\\
  &=& \frac{q^{m+1}(q^m-1)(q-1)}{2}+U(2,0)-(q-1)U(1,1)+(q-1)^2U(0,2)\\
  &=&0,
\end{eqnarray*}
\begin{eqnarray*}
  q^{2m+1}A_3^{\perp} &=& \binom{n}{3}(q-1)^3+\sum_{i+j=3}U(i,j)(-1)^i(q-1)^j-(q-1)\binom{n}{3}\\
  &=& \frac{q^{m+1}(q^m-1)(q^m-2)(q-2)(q-1)}{6}+(q-1)^3U(0,3)\\&&-(q-1)^2U(1,2)+(q-1)U(2,1)-U(3,0)\\
  &=&0,
\end{eqnarray*}
\begin{eqnarray*}
  q^{2m+1}A_4^{\perp} &=& \binom{n}{4}(q-1)^4+\sum_{i+j=4}U(i,j)(-1)^i(q-1)^j+(q-1)\binom{n}{4}\\
  &=& \frac{q^{m+1}(q^m-1)(q^m-2)(q^m-3)(q^2-3q+3)(q-1)}{24}+(q-1)^4U(0,4)\\&&-(q-1)^3U(1,3)+(q-1)^2U(2,2)-(q-1)U(3,1)+U(4,0)\\
  &=&\frac{q^{3m+1}(q-3)(q-1)(q-2)(q^m-1)}{24},
\end{eqnarray*}
which implies that $A_4^{\perp}=\frac{q^{m}(q-3)(q-1)(q-2)(q^m-1)}{24}$. If $q > 3$, then $A_1^{\perp}=A_2^{\perp}=A_3^{\perp}=0$ and $A_4^{\perp} > 0$. Hence, the minimum distance of $\overline{\cC}_{(q,m,\delta_2)}^{\perp}$ is $4$. If $q=3$, then $A_1^{\perp}=A_2^{\perp}=A_3^{\perp}=A_4^{\perp}=0$. Now let $q=3$ and we have
\begin{eqnarray*}
  3^{2m+1}A_5^{\perp} &=& \binom{3^m}{5}2^5+\sum_{i+j=5}U(i,j)(-1)^i2^j-2\binom{3^m}{5}\\
  &=& \frac{3^{m}(3^m-1)(3^m-2)(3^m-3)(3^m-4)}{4}+2^5U(0,5)-2^4U(1,4)\\&&+2^3T(2,3)-4T(3,2)+2T(4,1)-T(5,0)\\
  &=&\frac{3^{3m+1}(3^m-1)(3^{m-1}-1)}{4},
\end{eqnarray*}
which implies that $A_5^{\perp}=\frac{3^{m}(3^m-1)(3^{m-1}-1)}{4} > 0$. Then the minimum distance of $\overline{\cC}_{(q,m,\delta_2)}^{\perp}$ is $5$. Furthermore, it is easy to verify that $\overline{\cC}_{(q,m,\delta_2)}^{\perp}$ is at least almost optimal according to the sphere-packing bound for $q=3$ and $\overline{\cC}_{(q,m,\delta_2)}^{\perp}$ is optimal according to the sphere-packing bound for $q > 3$.

Finally, we prove that the locality of $\overline{\cC}_{(q,m,\delta_2)}$ is $4$ or $3$. By Lemma \ref{lem-AMth} and Equation (\ref{eqn-t}), we deduce that the minimum weight codewords in $\overline{\cC}_{(q,m,\delta_2)}$ hold a $$2-\left(3^m, 2 \cdot 3^{m-1}-3^{\frac{m-1}{2}}, \frac{(2 \cdot 3^{m-1}-3^{\frac{m-1}{2}})(2\cdot 3^{m-1}-3^{\frac{m-1}{2}}-1)}{2}\right)$$ design and the minimum weight codewords in $\overline{\cC}_{(q,m,\delta_2)}^{\perp}$ hold a $$2-\left(3^m, 5, \frac{5(3^{m-1}-1)}{2}\right)$$ design when $q=3$. We also deduce that the minimum weight codewords in $\overline{\cC}_{(q,m,\delta_2)}$ hold a $1$-design and the minimum weight codewords in $\overline{\cC}_{(q,m,\delta_2)}^{\perp}$ hold a $1$-design when $q>3$. Then by Lemma \ref{lem-locality}, the locality of $\overline{\cC}_{(q,m,\delta_2)}$ is $d(\overline{\cC}_{(q,m,\delta_2)}^{\perp})-1$. Then the desired conclusions follow.
\end{proof}

To give the second result in this subsection, we denote by
\begin{eqnarray*}
  V(i,j) &=& u\binom{(q-1)q^{m-1}-q^{\frac{m-2}{2}}}{i}\binom{n-(q-1)q^{m-1}+q^{\frac{m-2}{2}}}{j}\\
  &&+v\binom{(q-1)q^{m-1}}{i}\binom{n-(q-1)q^{m-1}}{j}\\
  &&+w\binom{(q-1)(q^{m-1}+q^{\frac{m-2}{2}})}{i}\binom{n-(q-1)(q^{m-1}+q^{\frac{m-2}{2}})}{j},
\end{eqnarray*}
where $n=q^m$, $u=q^m(q^{\frac{m}{2}}-1)(q-1)$, $v=q(q^m-1)$ and $w=q^{\frac{3m}{2}}-q^m$.
\begin{theorem}\label{th-C5even}
  Let $q$ be an odd prime and $m$ be an even integer with $m \geq 4$. Then the extended code $\overline{\cC}_{(q,m,\delta_2)}$ is a self-orthogonal $q$-divisible code and its dual code $\overline{\cC}_{(q,m,\delta_2)}^{\perp}$ has parameters $[q^m, q^m-\frac{3m}{2}-1, 4]$. Besides, $\overline{\cC}_{(q,m,\delta_2)}^{\perp}$ is optimal according to the sphere-packing bound. The weight enumerator $A^{\perp}(z)$ of $\overline{\cC}_{(q,m,\delta_2)}^{\perp}$ is given by
  \begin{eqnarray*}
    q^{\frac{3m}{2}+1}A^{\perp}(z) &=& \sum_{l=0}^n\left[\binom{n}{l}(q-1)^l+\sum_{i+j=l}V(i,j)(-1)^i(q-1)^j+t\binom{n}{l}(-1)^l\right]z^l,
  \end{eqnarray*}
  where $n=q^m$ and $t=q-1$.
  In particular, the minimum weight codewords in $\overline{\cC}_{(q,m,\delta_2)}$ hold a $1$-design and the minimum weight codewords in $\overline{\cC}_{(q,m,\delta_2)}^{\perp}$ also hold a $1$-design. Then $\overline{\cC}_{(q,m,\delta_2)}$ has locality $3$.
\end{theorem}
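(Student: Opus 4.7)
The plan is to mirror closely the argument of Theorem \ref{th-C5odd}, but with the weight distribution from Lemma \ref{lem-D1even} (the even-$m$ case). First I would observe from Lemma \ref{lem-D1even} that every nonzero weight of $\overline{\cC}_{(q,m,\delta_2)}$ is a multiple of $q$, so the code is $q$-divisible. Since the generator polynomial expression shows that the all-$1$ vector $\mathbf{1}$ lies in $\overline{\cC}_{(q,m,\delta_2)}$ (it is obtained by taking $a=b=0$ and $c=1$ in the trace representation), Theorem \ref{th-selforthogonal} immediately gives that the extended code is self-orthogonal.

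Next I would derive the weight enumerator $A^\perp(z)$ of the dual by plugging the weight enumerator
\[
A(z)=1+uz^{(q-1)q^{m-1}-q^{\frac{m-2}{2}}}+vz^{(q-1)q^{m-1}}+wz^{(q-1)(q^{m-1}+q^{\frac{m-2}{2}})}+tz^{q^m}
\]
(with $u,v,w,t$ as defined before $V(i,j)$) into the MacWilliams identity of Lemma \ref{lem-Mac}, and expand $(1-z)^A(1+(q-1)z)^B=\sum_l\bigl(\sum_{i+j=l}\binom{A}{i}\binom{B}{j}(-1)^i(q-1)^j\bigr)z^l$ term by term. This yields exactly the displayed formula involving $V(i,j)$. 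Extracting the coefficients of $z^1,z^2,z^3,z^4$ from this formula, one then has to verify $A_1^\perp=A_2^\perp=A_3^\perp=0$ and $A_4^\perp>0$; these are closed-form binomial identities in $q$ and $m$, and cancellation of the leading terms follows from the same kind of algebraic manipulation as in the odd case (the self-orthogonality already forces $A_1^\perp=A_2^\perp=0$, and $A_3^\perp=0$ will come out of the $p$-divisibility together with the three-weight structure on $\gf_q^*$-cosets).

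The main computational obstacle will be showing $A_4^\perp>0$ in closed form. I would compute the $z^4$-coefficient of $q^{\frac{3m}{2}+1}A^\perp(z)$ as
\[
\binom{n}{4}(q-1)^4+\sum_{i+j=4}V(i,j)(-1)^i(q-1)^j+t\binom{n}{4},
\]
substitute the explicit values of $u,v,w,t$ and simplify. Using $q^{\frac{m}{2}}\ge q^2$ and $q\ge 3$, one can bound the resulting polynomial in $q$ and $q^{m/2}$ to show it is strictly positive, yielding $d(\overline{\cC}_{(q,m,\delta_2)}^\perp)=4$. Optimality with respect to the sphere-packing bound is then routine, since a $q$-ary $[q^m,q^m-\frac{3m}{2}-1,5]$ code would violate the Hamming bound.

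Finally, for the design and locality conclusions, I would apply the Assmus–Mattson theorem (Lemma \ref{lem-AMth}) with $t=1$: the code $\overline{\cC}_{(q,m,\delta_2)}$ has four nonzero weights, $d=(q-1)q^{m-1}-q^{\frac{m-2}{2}}$ and $d^\perp=4$, so with $t=1$ the number of weights in the relevant range is at most $d^\perp-t=3$ after excluding the weight-$q^m$ codewords, which suffices to conclude that the supports of minimum-weight codewords in $\overline{\cC}_{(q,m,\delta_2)}$ and in $\overline{\cC}_{(q,m,\delta_2)}^\perp$ each form a $1$-design. Lemma \ref{lem-locality} then gives locality $d^\perp-1=3$, completing the proof.
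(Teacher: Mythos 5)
Your proposal follows essentially the same route as the paper: $q$-divisibility plus $\mathbf{1}\in\overline{\cC}_{(q,m,\delta_2)}$ and Theorem \ref{th-selforthogonal} for self-orthogonality, the MacWilliams expansion to get the displayed formula with $V(i,j)$, direct extraction of $A_1^{\perp},\dots,A_4^{\perp}$ (the paper obtains the exact value $A_4^{\perp}=\frac{q^{m}(q-1)(q^m-1)((q^{\frac{m}{2}}+1)(q^2-3q+3)-3q+3)}{24}>0$ rather than a bound), sphere-packing for optimality, and Assmus--Mattson plus Lemma \ref{lem-locality} for the designs and locality $3$. One small caveat: your parenthetical that self-orthogonality "forces" $A_1^{\perp}=A_2^{\perp}=0$ is not a valid shortcut (those vanishings reflect that the code has no zero or repeated columns, not self-orthogonality), but this does not matter since, as in the paper, the vanishing is verified by the explicit binomial computation.
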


\begin{proof}
Firstly, by Lemma \ref{lem-D1even}, we derive that $\overline{\cC}_{(q,m,\delta_2)}$ is $q$-divisible and $\mathbf{1} \in \overline{\cC}_{(q,m,\delta_2)}$. Then by Theorem \ref{th-selforthogonal}, we deduce that $\overline{\cC}_{(q,m,\delta_2)}$ is self-orthogonal.

Then we determine the weight enumerator $A^{\perp}(z)$ and minimum distance of $\overline{\cC}_{(q,m,\delta_2)}^{\perp}$.
  Let $n=q^m$. By Lemma \ref{lem-D1even}, we derive that the weight enumerator of $\overline{\cC}_{(q,m,\delta_2)}$ is
  \begin{eqnarray*}
    A(z) &=& 1+uz^{(q-1)q^{m-1}-q^{\frac{m-2}{2}}}+vz^{(q-1)q^{m-1}}+wz^{(q-1)(q^{m-1}+q^{\frac{m-2}{2}})}+tz^{q^m},
  \end{eqnarray*}
  where $u=q^m(q^{\frac{m}{2}}-1)(q-1)$, $v=q(q^m-1)$, $w=q^{\frac{3m}{2}}-q^m$ and $t=q-1$. By Lemma \ref{lem-Mac}, the weight enumerator of $\overline{\cC}_{(q,m,\delta_2)}$ satisfies
  \begin{eqnarray}\label{eq-B}
    q^{\frac{3m}{2}+1}A^{\perp}(z) &=& (1+(q-1)z)^nA\left(\frac{1-z}{1+(q-1)z}\right)\nonumber\\
    &=&(1+(q-1)z)^n+u(1-z)^{(q-1)q^{m-1}-q^{\frac{m-2}{2}}}(1+(q-1)z)^{n-(q-1)q^{m-1}+q^{\frac{m-2}{2}}}\nonumber\\
    && +v(1-z)^{(q-1)q^{m-1}}(1+(q-1)z)^{n-(q-1)q^{m-1}}\nonumber\\
    && +w(1-z)^{(q-1)(q^{m-1}+q^{\frac{m-2}{2}})}(1+(q-1)z)^{n-(q-1)(q^{m-1}+q^{\frac{m-2}{2}})}+t(1-z)^n.
  \end{eqnarray}
Furthermore, we have
\begin{eqnarray}\label{eq-B1}
  (1+(q-1)z)^n &=& \sum_{l=0}^n\binom{n}{l}(q-1)^lz^l,
\end{eqnarray}
\begin{eqnarray}\label{eq-B2}
  &&u(1-z)^{(q-1)q^{m-1}-q^{\frac{m-2}{2}}}(1+(q-1)z)^{n-(q-1)q^{m-1}+q^{\frac{m-2}{2}}}\nonumber \\
  &=& u\sum_{l=0}^n\left(\sum_{i+j=l}\binom{(q-1)q^{m-1}-q^{\frac{m-2}{2}}}{i}\binom{n-(q-1)q^{m-1}+q^{\frac{m-2}{2}}}{j}(-1)^i(q-1)^j\right)z^l,
\end{eqnarray}
\begin{eqnarray}\label{eq-B3}
  &&v(1-z)^{(q-1)q^{m-1}}(1+(q-1)z)^{n-(q-1)q^{m-1}}\nonumber\\
  &=&v\sum_{l=0}^n\left(\sum_{i+j=l}\binom{(q-1)q^{m-1}}{i}\binom{n-(q-1)q^{m-1}}{j}(-1)^i(q-1)^j\right)z^l,
\end{eqnarray}
\begin{eqnarray}\label{eq-B4}
&&w(1-z)^{(q-1)(q^{m-1}+q^{\frac{m-2}{2}})}(1+(q-1)z)^{n-(q-1)(q^{m-1}+q^{\frac{m-2}{2}})}\nonumber \\
  &=& w\sum_{l=0}^n\left(\sum_{i+j=l}\binom{(q-1)(q^{m-1}+q^{\frac{m-2}{2}})}{i}\binom{n-(q-1)(q^{m-1}+q^{\frac{m-2}{2}})}{j}(-1)^i(q-1)^j\right)z^l,
\end{eqnarray}
\begin{eqnarray}\label{eq-B5}
  t(1-z)^n &=& t\sum_{l=0}^{n}\binom{n}{l}(-1)^lz^l.
\end{eqnarray}
Substituting Equations (\ref{eq-B1}), (\ref{eq-B2}), (\ref{eq-B3}), (\ref{eq-B4}) and (\ref{eq-B5}) into (\ref{eq-B}) yields
\begin{eqnarray*}
    q^{\frac{3m}{2}+1}A^{\perp}(z) &=& \sum_{l=0}^n\left[\binom{n}{l}(q-1)^l+\sum_{i+j=l}V(i,j)(-1)^i(q-1)^j+t\binom{n}{l}(-1)^l\right]z^l.
  \end{eqnarray*}
Then we have
\begin{eqnarray*}
  q^{\frac{3m}{2}+1}A_1^{\perp} &=& \binom{n}{1}(q-1)+\sum_{i+j=1}V(i,j)(-1)^i(q-1)^j-(q-1)\binom{n}{1}\\
  &=& -V(1,0)+V(0,1)(q-1)=0,
\end{eqnarray*}
\begin{eqnarray*}
  q^{\frac{3m}{2}+1}A_2^{\perp} &=& \binom{n}{2}(q-1)^2+\sum_{i+j=2}V(i,j)(-1)^i(q-1)^j+(q-1)\binom{n}{2}\\
  &=& \frac{q^{m+1}(q^m-1)(q-1)}{2}+V(2,0)-(q-1)V(1,1)+(q-1)^2V(0,2)\\
  &=&0,
\end{eqnarray*}
\begin{eqnarray*}
  q^{\frac{3m}{2}+1}A_3^{\perp} &=& \binom{n}{3}(q-1)^3+\sum_{i+j=3}V(i,j)(-1)^i(q-1)^j-(q-1)\binom{n}{3}\\
  &=& \frac{q^{m+1}(q^m-1)(q^m-2)(q-2)(q-1)}{6}+(q-1)^3V(0,3)\\&&-(q-1)^2V(1,2)+(q-1)V(2,1)-V(3,0)\\
  &=&0,
\end{eqnarray*}
\begin{eqnarray*}
  q^{\frac{3m}{2}+1}A_4^{\perp} &=& \binom{n}{4}(q-1)^4+\sum_{i+j=4}V(i,j)(-1)^i(q-1)^j+(q-1)\binom{n}{4}\\
  &=& \frac{q^{m+1}(q^m-1)(q^m-2)(q^m-3)(q^2-3q+3)(q-1)}{24}+(q-1)^4V(0,4)\\&&-(q-1)^3V(1,3)+(q-1)^2V(2,2)-(q-1)V(3,1)+V(4,0)\\
  &=&\frac{q^{\frac{5m}{2}+1}(q-1)(q^m-1)((q^{\frac{m}{2}}+1)(q^2-3q+3)-3q+3)}{24},
\end{eqnarray*}
which implies  $A_4^{\perp}=\frac{q^{m}(q-1)(q^m-1)((q^{\frac{m}{2}}+1)(q^2-3q+3)-3q+3)}{24}$. Then $A_1^{\perp}=A_2^{\perp}=A_3^{\perp}=0$ and $A_4^{\perp} > 0$. Hence, the minimum distance of $\overline{\cC}_{(q,m,\delta_2)}^{\perp}$ is $4$. Besides, it is easy to deduce that $\overline{\cC}_{(q,m,\delta_2)}^{\perp}$ is optimal according to the sphere-packing bound as $[q^m, q^m-\frac{3m}{2}-1, 5]$ linear code over $\gf_q$ does not exist.

Finally, we prove that the locality of $\overline{\cC}_{(q,m,\delta_2)}$ is $3$. By Lemma \ref{lem-AMth}, we deduce that the minimum weight codewords in $\overline{\cC}_{(q,m,\delta_2)}$ hold a $1$-design and the minimum weight codewords in $\overline{\cC}_{(q,m,\delta_2)}^{\perp}$ also hold a $1$-design. Then by Lemma \ref{lem-locality}, the locality of $\overline{\cC}_{(q,m,\delta_2)}$ is $d(\bar{\cC}_{(q,m,\delta_2)}^{\perp})-1=3$.
\end{proof}

\subsection{The sisth family of self-orthogonal codes from the BCH codes with designed distance $\delta_3=(q-1)q^{m-1}-1-q^{\lfloor(m+1)/2\rfloor}$}
In this subsection, let $q$ be an odd prime. Let $m$ be a positive integer with $m \geq 3$. Let $h=\lfloor(m-1)/2\rfloor+1$. Define $\tilde{g}_{(q,m,\delta_3)}(x)=(x-1)g_{(q,m,\delta_3)}(x)$ with $\delta_3=(q-1)q^{m-1}-1-q^{\lfloor(m+1)/2\rfloor}$. Let $\tilde{\cC}_{(q,m,\delta_3)}$ and $\cC_{(q,m,\delta_3)}$ be the cyclic codes over $\gf_q$ with generator polynomials $\tilde{g}_{(q,m,\delta_3)}(x)$ and $g_{(q,m,\delta_3)}(x)$, respectively.
By \cite{DFZ},
\begin{eqnarray*}
  \tilde{\cC}_{(q,m,\delta_3)} = \left\{\left(\tr_{q^m/q}(ax+bx^{1+q^{h}}+cx^{1+q^{h+1}})\right)_{x \in \gf_{q^m}^*}: a \in \gf_{q^m}, b \in \gf_{q^m}, c \in \gf_{q^m}\right\},
\end{eqnarray*}
and
\begin{eqnarray*}
  \cC_{(q,m,\delta_3)} = \left\{\left(\tr_{q^m/q}(ax+bx^{1+q^{h}}+cx^{1+q^{h+1}})+e\right)_{x \in \gf_{q^m}^*}: a \in \gf_{q^m}, b \in \gf_{q^m}, c \in \gf_{q^m}, e \in \gf_q\right\}.
\end{eqnarray*}
Note that $\cC_{(q,m,\delta_3)}$ is the augmented code of $\tilde{\cC}_{(q,m,\delta_3)}$.

Ding et al. determined the weight distribution of $\tilde{\cC}_{(q,m,\delta_3)}$ in \cite{DFZ}. Li et al. presented the weight distributions of $\cC_{(q,m,\delta_3)}$ and $\overline{\cC}_{(q,m,\delta_3)}$ in \cite{LWL}, where $\overline{\cC}_{(q,m,\delta_3)}$ is the extended code of $\cC_{(q,m,\delta_3)}$.
In what follows, we will prove that $\overline{\cC}_{(q,m,\delta_3)}$ is self-orthogonal and determine the dual distance of $\overline{\cC}_{(q,m,\delta_3)}$.

 We recall the weight distributions of $\overline{\cC}_{(q,m,\delta_3)}$ for odd $m$ and even $m$, respectively.

\begin{lemma}\cite{LWL}\label{lem-D2odd}
  Let $q$ be an odd prime and $m$ be an odd integer with $m \geq 5$. Then the extended code $\overline{\cC}_{(q,m,\delta_3)}$ with $\delta_3=(q-1)q^{m-1}-1-q^{\lfloor(m+1)/2\rfloor}$ has parameters $[q^m, 3m+1, (q-1)q^{m-1}-q^{\frac{m+1}{2}}]$ and its weight distribution is listed in Table \ref{tab6.5}.
\end{lemma}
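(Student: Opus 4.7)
The plan is to evaluate codeword weights in $\overline{\cC}_{(q,m,\delta_3)}$ via additive character sums and reduce the computation to the classical theory of quadratic forms over $\gf_q$. For odd $m$ we have $h=(m+1)/2$, and we view
\begin{eqnarray*}
Q_{(b,c)}(x)=\tr_{q^m/q}\!\left(bx^{1+q^{h}}+cx^{1+q^{h+1}}\right)
\end{eqnarray*}
as a $\gf_q$-valued quadratic form on the $m$-dimensional $\gf_q$-space $\gf_{q^m}$, and $L_a(x)=\tr_{q^m/q}(ax)$ as a $\gf_q$-linear form. Writing $n=q^m$, the unextended weight of $\bc_{(a,b,c,e)}\in\cC_{(q,m,\delta_3)}$ equals $n-1-N(a,b,c,e)$, where $N(a,b,c,e)=|\{x\in\gf_{q^m}^{*}:Q_{(b,c)}(x)+L_a(x)+e=0\}|$.

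By orthogonality of additive characters of $\gf_q$,
\begin{eqnarray*}
N(a,b,c,e)=\frac{n-1}{q}+\frac{1}{q}\sum_{y\in\gf_q^{*}}\phi_1(ye)\sum_{x\in\gf_{q^m}^{*}}\phi_y\!\bigl(Q_{(b,c)}(x)+L_a(x)\bigr).
\end{eqnarray*}
The inner sum is a quadratic Weil sum on $\gf_{q^m}\cong\gf_q^m$, so by Lemma \ref{weilsum} together with the standard diagonalization of quadratic forms over $\gf_q$ it evaluates to $\varepsilon\,q^{m-r/2}$ (up to a Gauss-sum factor of absolute value $\sqrt{q}$), where $r=r(b,c)$ is the rank of $Q_{(b,c)}$ and $\varepsilon\in\{\pm 1,\pm\sqrt{-1}\}$ depends on the discriminant, on $\eta(y)$, and on whether $L_a$ lies in the image of the polar bilinear form of $Q_{(b,c)}$. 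Completing the square absorbs $L_a$ into a translate and produces either this generic value or a degenerate vanishing contribution.

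The distribution of the rank $r(b,c)$ as $(b,c)$ runs over $\gf_{q^m}^{2}$ is established by Ding, Fan and Zhou in \cite{DFZ} in their computation of the weight distribution of $\tilde{\cC}_{(q,m,\delta_3)}$; for odd $m\geq 5$ only a small number of ranks occur, with explicit frequencies. Feeding these multiplicities into the character-sum expression and summing over $e\in\gf_q$ yields the weight enumerator of $\cC_{(q,m,\delta_3)}$. The extended coordinate, appended to force the row-sum to vanish, equals $-\sum_{x\in\gf_{q^m}^{*}}\bigl(Q_{(b,c)}(x)+L_a(x)+e\bigr)$; this simplifies (using that $\sum_{x\in\gf_{q^m}}L_a(x)=0$ for $a\neq 0$) to a constant depending only on $a$, $e$, and a trace associated with the quadratic part. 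The extended coordinate thus shifts some weights by $\pm 1$ and preserves the rest, yielding the weights listed in Table \ref{tab6.5} with their frequencies.

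The main obstacle is the detailed rank analysis of $Q_{(b,c)}$ on $\gf_{q^m}$: one must identify exactly which ranks occur, count the pairs $(b,c)$ realizing each rank, and decide when $L_a$ belongs to the image of the polar form (which controls whether the Weil sum vanishes). This is precisely the quadratic-form input supplied by \cite{DFZ} and is the step that would be laborious to redo from scratch, since the exponents $1+q^h$ and $1+q^{h+1}$ impose delicate Dickson-matrix conditions. Given that input, injectivity of $(a,b,c,e)\mapsto\bc_{(a,b,c,e)}$ (using that $Q_{(b,c)}\equiv 0$ forces $b=c=0$ by $\gf_q$-linear independence of $x^{1+q^h}$ and $x^{1+q^{h+1}}$ modulo $x^{q^m}-x$) yields dimension $3m+1$; the length $q^m$ is immediate; and the minimum distance $(q-1)q^{m-1}-q^{(m+1)/2}$ is read off as the smallest nonzero weight in the table.
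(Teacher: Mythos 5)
The paper does not prove this lemma: it is quoted from \cite{LWL}, so there is no internal argument to compare against. Your outline does reconstruct the method actually used in \cite{DFZ} and \cite{LWL}: express the weight of $\bc_{(a,b,c,e)}$ through additive character sums, view $\tr_{q^m/q}(bx^{1+q^h}+cx^{1+q^{h+1}})$ as an $\gf_q$-valued quadratic form on the $m$-dimensional space $\gf_{q^m}$, evaluate the exponential sums via the rank and type of that form, and import the rank distribution from Ding, Fan and Zhou. One correction on the extension step: since none of the exponents $1$, $1+q^h$, $1+q^{h+1}$ is divisible by $q^m-1$, we have $\sum_{x\in\gf_{q^m}^*}x=\sum_{x\in\gf_{q^m}^*}x^{1+q^h}=\sum_{x\in\gf_{q^m}^*}x^{1+q^{h+1}}=0$, while $\sum_{x\in\gf_{q^m}^*}e=(q^m-1)e=-e$; hence the appended symbol equals $e$ exactly, independent of $a$, $b$, $c$ and of any trace of the quadratic part. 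It therefore adds $1$ to the weight precisely when $e\neq 0$ and leaves it unchanged when $e=0$; there is no ``$-1$'' shift.

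The more serious point is that, as written, your argument verifies essentially none of the content of the lemma. The claim is not merely that the weight distribution is computable by this method; it asserts eight specific nonzero weights with eight specific frequencies (Table \ref{tab6.5}), for instance that weight $(q-1)q^{m-1}-q^{\frac{m+1}{2}}$ occurs $\frac{q^{m-2}(q^{m-1}-1)(q^m-1)}{2(q+1)}$ times. Your proposal delegates exactly the step that produces these numbers --- which ranks of $Q_{(b,c)}$ occur, with what multiplicities, and for which $a$ the linear part forces the character sum to vanish --- to \cite{DFZ}, and then does not perform the remaining bookkeeping (splitting by $\eta(y)$, summing over $e$, and attaching the extended coordinate) that converts that input into the table. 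Given that the present paper also only cites the result, this level of detail is defensible in context, but as a standalone proof it establishes only the length $q^m$ and the dimension $3m+1$; the weight distribution, which is the substance of the statement, is asserted rather than derived.
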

 \begin{table}[h]
\begin{center}
\caption{The weight distribution of $\overline{\cC}_{(q,m,\delta_3)}$ in Lemma \ref{lem-D2odd}.}\label{tab6.5}
\begin{tabular}{@{}ll@{}}
\toprule%
Weight & Frequency  \\
\midrule
$0$ & $1$\\
$(q-1)q^{m-1}-q^{\frac{m+1}{2}}$ & $\frac{q^{m-2}(q^{m-1}-1)(q^m-1)}{2(q+1)}$\\
$(q-1)(q^{m-1}-q^{\frac{m-1}{2}})$ & $\frac{q^{m-1}(q^m-1)(q^{m-1}+q^{\frac{m-1}{2}})}{2}$\\
$(q-1)q^{m-1}-q^{\frac{m-1}{2}}$ & $\frac{q^{m-1}(q^m-1)(q^{m+3}-q^{m+2}-q^{m-1}-q^{\frac{m+3}{2}}+q^{\frac{m-1}{2}}+q^3)}{2(q+1)}$\\
$(q-1)q^{m-1}$ & $q(q^m-1)((q-1)(2q^{2m-2}+q^{2m-4}+q^{m-2})+q^{m-3}+1)$\\
$(q-1)q^{m-1}+q^{\frac{m-1}{2}}$ & $\frac{q^{m-1}(q^m-1)(q^{m+3}-q^{m+2}-q^{m-1}+q^{\frac{m+3}{2}}-q^{\frac{m-1}{2}}+q^3)}{2(q+1)}$\\
$(q-1)(q^{m-1}+q^{\frac{m-1}{2}})$ & $\frac{q^{m-1}(q^m-1)(q^{m-1}-q^{\frac{m-1}{2}})}{2}$\\
$(q-1)q^{m-1}+q^{\frac{m+1}{2}}$ & $\frac{q^{m-2}(q^{m-1}-1)(q^m-1)}{2(q+1)}$\\
$q^m$ & $q-1$\\
\bottomrule
\end{tabular}
\end{center}
\end{table}

\begin{lemma}\cite{LWL}\label{lem-D2even}
  Let $q$ be an odd prime and $m$ be an even integer with $m \geq 4$. Then the extended code $\overline{\cC}_{(q,m,\delta_3)}$ has parameters $[q^m, \frac{5m}{2}+1, (q-1)q^{m-1}-q^{\frac{m}{2}}]$ and its weight distribution is listed in Table \ref{tab6.4'}.
\end{lemma}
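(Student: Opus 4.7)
The plan is to compute the weight distribution of $\overline{\cC}_{(q,m,\delta_3)}$ for even $m$ by the same two-step pattern that was used earlier for $\overline{\widehat{\cC_1}}$, $\overline{\widehat{\cC_2}}$, $\overline{\widehat{\cC_3}}$, $\overline{\widehat{\cC_4}}$, and for the odd-$m$ case $\overline{\cC}_{(q,m,\delta_3)}$ in Lemma \ref{lem-D2odd}. Each codeword of $\overline{\cC}_{(q,m,\delta_3)}$ is parametrized by $(a,b,c,e)\in\gf_{q^m}^3\times\gf_q$ with entries $\tr_{q^m/q}(ax+bx^{1+q^h}+cx^{1+q^{h+1}})+e$ for $x\in\gf_{q^m}$ (including the extended coordinate $x=0$, whose value is $e$). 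Writing the weight as $q^m$ minus the number of zero entries and expanding the indicator via the orthogonality of additive characters of $\gf_q$, one gets
\[
\wt(\bc_{(a,b,c,e)})=q^m-\frac{1}{q}\sum_{y\in\gf_q}\zeta_p^{\tr_{q/p}(ye)}\sum_{x\in\gf_{q^m}}\zeta_p^{\tr_{q^m/p}\bigl(yax+ybx^{1+q^{h}}+ycx^{1+q^{h+1}}\bigr)}.
\]
The inner sum is a Weil sum of a quadratic form with a linear shift in odd characteristic, so its value is determined by the rank (and type) of the quadratic form $Q_{b,c}(x)=\tr_{q^m/q}(bx^{1+q^{h}}+cx^{1+q^{h+1}})$ together with the position of $ya$ relative to the image of the associated bilinear form, via Lemma \ref{Guassum} and Lemma \ref{weilsum}.

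The first main step is to import the weight distribution of $\tilde\cC_{(q,m,\delta_3)}$ from \cite{DFZ}, which for even $m$ already encodes the rank distribution of the family $\{Q_{b,c}\}_{(b,c)\in\gf_{q^m}^2}$. From this one reads off the number of triples $(a,b,c)$ producing a codeword of the non-augmented, non-extended code of each given weight. The second step is the analog of the $N(c)$-computation used for $\overline{\widehat{\cC_i}}$ in Section 4: for each $e\in\gf_q^*$, count
\[
N(e;a,b,c)=\bigl|\{x\in\gf_{q^m}:\tr_{q^m/q}(ax+bx^{1+q^{h}}+cx^{1+q^{h+1}})=-e\}\bigr|
\]
by the same Weil-sum evaluation, applied now with a nonzero constant offset. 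Splitting on whether the quadratic form degenerates and on whether $ya$ lies in the bilinear-form image refines each rank class of $(b,c)$ into a finite list of sub-classes whose frequencies and contributions can be read off, producing the list of weights and multiplicities in Table \ref{tab6.4'}.

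The third step is to stitch together the contributions from the $e=0$ slice (giving the weights inherited from $\tilde\cC_{(q,m,\delta_3)}$) and the $e\in\gf_q^*$ slices (giving the shifted weights, including the extra weight $q^m$ attained $q-1$ times by the all-$e$ codewords). The dimension $\frac{5m}{2}+1$ then drops out of the injectivity argument for the parametrization once the repeated parametrizations coming from the symmetries $b\mapsto b^{q^{m-h}}$ and $c\mapsto c^{q^{m-h-1}}$ together with $h=m/2$ are quotiented out; this collapse (absent in the odd-$m$ case) is what distinguishes Lemma \ref{lem-D2even} from Lemma \ref{lem-D2odd}, where the dimension is $3m+1$.

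The main obstacle is the rank-distribution bookkeeping for even $m$: when $h=m/2$, the exponents $1+q^{h}$ and $1+q^{h+1}$ interact with the Frobenius twist of order $m/2$, so the family $\{Q_{b,c}\}$ has non-generic symmetries and additional low-rank degenerations that must be enumerated carefully. Keeping track of the signs of the quadratic Gauss sums through these degenerations, and ensuring the refined $N(e;a,b,c)$-tally is consistent with the frequencies already fixed by the weight distribution of $\tilde\cC_{(q,m,\delta_3)}$ from \cite{DFZ}, is where the delicate work lies; once this rank/sign tally is in hand, the remainder is a mechanical assembly paralleling the proofs of Theorems \ref{th-Ceven}, \ref{th-Codd}, \ref{th-C1}, \ref{th-C3odd}, \ref{th-C3even}, \ref{th-C4odd} and \ref{th-C4even}.
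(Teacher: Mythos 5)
The paper offers no proof of Lemma \ref{lem-D2even}: it is imported verbatim from \cite{LWL}, so there is no internal argument to compare yours against. Judged on its own, your outline follows the standard route (and the one actually taken in \cite{LWL}, building on \cite{DFZ}): write the weight as $q^m$ minus a character-sum count of zero coordinates, evaluate the inner sums through the rank and type of the quadratic forms $Q_{b,c}(x)=\tr_{q^m/q}(bx^{1+q^{h}}+cx^{1+q^{h+1}})$, read the rank distribution off the known weight distribution of $\tilde{\cC}_{(q,m,\delta_3)}$, and refine by the constant shift $e$ exactly as in the $N(c)$ computations of Section~4. That is the right skeleton, but as written it is a plan rather than a proof: every entry of Table \ref{tab6.4'} hinges on the rank/sign bookkeeping that you explicitly defer, so none of the stated frequencies is actually verified. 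One concrete check you should build in: unlike the odd-$m$ Table \ref{tab6.5}, the frequencies in Table \ref{tab6.4'} are \emph{not} symmetric under the sign of the $q^{\frac{m}{2}}$- and $q^{\frac{m}{2}-1}$-deviations, because for even $m$ the relevant quadratic Gauss sums are rational and the type (not just the rank) of each form enters; a tally that only tracks ranks will miss this.

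There is also a concrete misstep in your dimension argument. The collapse from $3m+1$ to $\frac{5m}{2}+1$ comes entirely from the $b$-coordinate: for even $m$ one has $h=m/2$, so $x^{1+q^{m/2}}\in\gf_{q^{m/2}}$ and $\tr_{q^m/q}\bigl(bx^{1+q^{m/2}}\bigr)=\tr_{q^{m/2}/q}\bigl(\tr_{q^m/q^{m/2}}(b)\,x^{1+q^{m/2}}\bigr)$ factors through the relative trace of $b$, killing exactly $m/2$ degrees of freedom. There is no analogous degeneration for $c$: conjugating merely rewrites $\tr\bigl(cx^{1+q^{m/2+1}}\bigr)$ as $\tr\bigl(c^{q^{m/2-1}}x^{1+q^{m/2-1}}\bigr)$, i.e.\ it changes the exponent rather than identifying two values of $c$ in the fixed parametrization, so the $c$-map stays injective. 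Quotienting by a $c$-symmetry as you propose would give a dimension strictly below $\frac{5m}{2}+1$ and contradict the statement you are trying to prove.
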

 \begin{table}[h]
\begin{center}
\caption{The weight distribution of $\overline{\cC}_{(q,m,\delta_3)}$ in Lemma \ref{lem-D2even}.}\label{tab6.4'}
\begin{tabular}{@{}ll@{}}
\toprule%
Weight & Frequency  \\
\midrule
$0$ & $1$\\
$(q-1)q^{m-1}-q^{\frac{m}{2}}$ & $\frac{q^{\frac{m}{2}-2}(q^m-1)(q^{m+2}-q^m+2q^{m-1}-2q^{\frac{m}{2}})}{2(q+1)}$\\
$(q-1)(q^{m-1}-q^{\frac{m}{2}-1})$ & $\frac{q^{m+1}(q^{\frac{m}{2}}+1)(q^m-1)}{2(q+1)}$\\
$(q-1)q^{m-1}-q^{\frac{m}{2}-1}$ & $\frac{q^m(q^{m+1}-2q^m+q)(q^{\frac{m}{2}}-1)}{2}$\\
$(q-1)q^{m-1}$ & $q(q^m-1)(1+q^{\frac{3m}{2}-1}-q^{\frac{3m}{2}-2}+q^{\frac{3m}{2}-3}-q^{m-2})$\\
$(q-1)q^{m-1}+q^{\frac{m}{2}-1}$ & $\frac{q^{m+1}(q-1)(q^{\frac{m}{2}}+1)(q^m-1)}{2(q+1)}$\\
$(q-1)(q^{m-1}+q^{\frac{m}{2}-1})$ & $\frac{q^{m}(q^{m+1}-2q^m+q)(q^\frac{m}{2}-1)}{2(q-1)}$\\
$(q-1)q^{m-1}+q^{\frac{m}{2}}$ & $\frac{q^{\frac{3m}{2}-2}(q^m-1)(q-1)}{2}$\\
$(q-1)(q^{m-1}+q^{\frac{m}{2}})$ & $\frac{q^{m-2}(q^m-1)(q^{\frac{m}{2}-1}-1)}{q^2-1}$\\
$q^m$ & $q-1$\\
\bottomrule
\end{tabular}
\end{center}
\end{table}

To give the following theorem, we denote by
\begin{eqnarray*}
  T(i,j) &=& u_1\binom{(q-1)q^{m-1}-q^{\frac{m+1}{2}}}{i}\binom{n-(q-1)q^{m-1}+q^{\frac{m+1}{2}}}{j}+u_2\binom{(q-1)(q^{m-1}-q^{\frac{m-1}{2}})}{i}\\&&\binom{n-(q-1)(q^{m-1}-q^{\frac{m-1}{2}})}{j}+
 u_3\binom{(q-1)q^{m-1}-q^{\frac{m-1}{2}}}{i}\binom{n-(q-1)q^{m-1}+q^{\frac{m-1}{2}}}{j}+\\
 && u_4\binom{(q-1)q^{m-1}}{i}\binom{n-(q-1)q^{m-1}}{j}+u_5\binom{(q-1)q^{m-1}+q^{\frac{m-1}{2}}}{i}\\
 && \binom{n-(q-1)q^{m-1}-q^{\frac{m-1}{2}}}{j}+u_6\binom{(q-1)(q^{m-1}+q^{\frac{m-1}{2}})}{i}\binom{n-(q-1)(q^{m-1}+q^{\frac{m-1}{2}})}{j}\\
 && +u_7\binom{(q-1)q^{m-1}+q^{\frac{m+1}{2}}}{i}\binom{n-(q-1)q^{m-1}-q^{\frac{m+1}{2}}}{j},
\end{eqnarray*}
where $n=q^m$, $u_1=\frac{q^{m-2}(q^{m-1}-1)(q^m-1)}{2(q+1)}$, $u_2=\frac{q^{m-1}(q^m-1)(q^{m-1}+q^{\frac{m-1}{2}})}{2}$, $u_3=q^{m-1}(q^m-1)(q^{m+3}-q^{m+2}-q^{m-1}-q^{\frac{m+3}{2}}+q^{\frac{m-1}{2}}+q^3)/2(q+1)$, $u_4=q(q^m-1)((q-1)(2q^{2m-2}+q^{2m-4}+q^{m-2})+q^{m-3}+1)$, $u_5=\frac{q^{m-1}(q^m-1)(q^{m+3}-q^{m+2}-q^{m-1}+q^{\frac{m+3}{2}}-q^{\frac{m-1}{2}}+q^3)}{2(q+1)}$, $u_6=\frac{q^{m-1}(q^m-1)(q^{m-1}-q^{\frac{m-1}{2}})}{2}$ and $u_7=\frac{q^{m-2}(q^{m-1}-1)(q^m-1)}{2(q+1)}$.

\begin{theorem}\label{th-C6odd}
  Let $q$ be an odd prime and $m$ be an odd integer with $m \geq 5$. Then the extended code $\overline{\cC}_{(q,m,\delta_3)}$ is a self-orthogonal $q$-divisible code and its dual code $\overline{\cC}_{(q,m,\delta_3)}^{\perp}$ has parameters $[q^m, q^m-3m-1, 6]$ for $q=3$ and $[q^m, q^m-3m-1, 4]$ for $q > 3$. Besides, The weight enumerator $A^{\perp}(z)$ of $\overline{\cC}_{(q,m,\delta_3)}^{\perp}$ is given by
  \begin{eqnarray*}
    q^{3m+1}A^{\perp}(z) &=& \sum_{l=0}^n\left[\binom{n}{l}(q-1)^l+\sum_{i+j=l}T(i,j)(-1)^i(q-1)^j+t\binom{n}{l}(-1)^l\right]z^l,
  \end{eqnarray*}
  where $n=q^m$ and $t=q-1$.
\end{theorem}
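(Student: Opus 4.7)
The overall strategy mirrors the proof of Theorem \ref{th-C5odd}, with the bookkeeping adjusted for the seven (rather than three) nonzero intermediate weights appearing in Table \ref{tab6.5}. First, I would read off from Lemma \ref{lem-D2odd} that every weight of $\overline{\cC}_{(q,m,\delta_3)}$ is divisible by $q$: indeed each of $(q-1)q^{m-1} \pm q^{(m+1)/2}$, $(q-1)(q^{m-1} \mp q^{(m-1)/2})$, $(q-1)q^{m-1} \pm q^{(m-1)/2}$, $(q-1)q^{m-1}$, and $q^m$ is a multiple of $q$, and the all-one vector lies in $\overline{\cC}_{(q,m,\delta_3)}$ since $c\cdot \mathbf{1}$ ranges over all constant vectors when $e\in\gf_q$. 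Invoking Theorem \ref{th-selforthogonal} immediately yields the self-orthogonality assertion.

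Next, I would derive the weight-enumerator formula for the dual. Writing
\[
A(z) = 1 + \sum_{k=1}^{7} u_k z^{w_k} + t z^{q^m},
\]
with $w_k$ the seven nonzero intermediate weights and $u_k$ the frequencies read from Table \ref{tab6.5}, Lemma \ref{lem-Mac} gives
\[
q^{3m+1} A^{\perp}(z) = (1+(q-1)z)^n A\!\left(\frac{1-z}{1+(q-1)z}\right).
\]
Expanding $(1-z)^{w_k}(1+(q-1)z)^{n-w_k}$ via the binomial theorem exactly as in Equations (\ref{eq-B1})--(\ref{eq-B5}), and grouping the coefficients of $z^l$, produces the claimed closed form in terms of the auxiliary function $T(i,j)$.

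The heart of the argument is determining the dual minimum distance. Using the formula for $A^{\perp}(z)$, I would compute $A_l^{\perp}$ for $l=1,2,3,4$ (and also $5,6$ in the ternary case) by evaluating the corresponding coefficients. Equivalently, one may use the first four Pless power moments of Lemma \ref{lem-Pless} to reduce the problem to a small linear system in $A_1^\perp,A_2^\perp,A_3^\perp$, extend by one more moment for $A_4^\perp$, and two further moments for $A_5^\perp,A_6^\perp$ when $q=3$. In every case the answer should simplify to $A_1^\perp = A_2^\perp = A_3^\perp = 0$, while $A_4^\perp$ evaluates to a positive polynomial in $q,q^m$ when $q>3$ but vanishes when $q=3$; similarly $A_5^\perp = 0$ for $q=3$ and $A_6^\perp$ is strictly positive. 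This yields the promised minimum distance $4$ for $q>3$ and $6$ for $q=3$.

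The main obstacle is precisely this last step: the vanishing of $A_4^\perp$ and $A_5^\perp$ in the ternary case is not obvious a priori and will require careful algebraic simplification of sums of products of binomial coefficients evaluated at the seven weights $w_k$. I expect the cleanest route is to feed the weight distribution into the Pless power moments and verify, by a direct but tedious polynomial identity in $q$ and $q^m$, that the right-hand sides collapse. Once the minimum distance is in hand, a quick comparison against the sphere-packing bound (the $q$-ary $[q^m,q^m-3m-1,d+1]$ code does not exist for the relevant $d$) establishes the remaining optimality/almost-optimality statements, and $\mathbf{1}\in \overline{\cC}_{(q,m,\delta_3)}$ together with $q$-divisibility completes the proof via Theorem \ref{th-selforthogonal}.
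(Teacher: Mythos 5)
Your proposal is correct and follows essentially the same route as the paper: self-orthogonality via $q$-divisibility of the weights in Table \ref{tab6.5} together with $\mathbf{1}\in\overline{\cC}_{(q,m,\delta_3)}$ and Theorem \ref{th-selforthogonal}, then the MacWilliams identity with term-by-term binomial expansion to obtain the $T(i,j)$ formula, and finally direct evaluation of $A_1^{\perp},\dots,A_4^{\perp}$ (and $A_5^{\perp},A_6^{\perp}$ when $q=3$) to pin down the dual distance. The only cosmetic difference is that you offer the Pless power moments as an alternative to reading the coefficients off the MacWilliams expansion; the paper uses the latter here, and both lead to the same identities $A_4^{\perp}=\frac{q^{m}(q-3)(q-1)(q-2)(q^m-1)}{24}$ and, for $q=3$, $A_5^{\perp}=0$, $A_6^{\perp}>0$.
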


\begin{proof}
Firstly, by Lemma \ref{lem-D2odd}, we derive that $\overline{\cC}_{(q,m,\delta_3)}$ is $q$-divisible and $\mathbf{1} \in \overline{\cC}_{(q,m,\delta_3)}$. Then by Theorem \ref{th-selforthogonal}, we deduce that $\overline{\cC}_{(q,m,\delta_3)}$ is self-orthogonal.

Then we determine the weight enumerator $A^{\perp}(z)$ and minimum distance of $\overline{\cC}_{(q,m,\delta_3)}^{\perp}$.
  Let $n=q^m$. By Lemma \ref{lem-D2odd}, we derive that the weight enumerator of $\overline{\cC}_{(q,m,\delta_3)}$ is
  \begin{eqnarray*}
    A(z) &=& 1+u_1z^{(q-1)q^{m-1}-q^{\frac{m+1}{2}}}+u_2z^{(q-1)(q^{m-1}-q^{\frac{m-1}{2}})}+u_3z^{(q-1)q^{m-1}-q^{\frac{m-1}{2}}}+u_4z^{(q-1)q^{m-1}}+\\
    &&u_5z^{(q-1)q^{m-1}+q^{\frac{m-1}{2}}}+u_6z^{(q-1)(q^{m-1}+q^{\frac{m-1}{2}})}+u_7z^{(q-1)q^{m-1}+q^{\frac{m+1}{2}}}+tz^{q^m},
  \end{eqnarray*}
  where $u_1=\frac{q^{m-2}(q^{m-1}-1)(q^m-1)}{2(q+1)}$, $u_2=\frac{q^{m-1}(q^m-1)(q^{m-1}+q^{\frac{m-1}{2}})}{2}$, $u_3=q^{m-1}(q^m-1)(q^{m+3}-q^{m+2}-q^{m-1}-q^{\frac{m+3}{2}}+q^{\frac{m-1}{2}}+q^3)/2(q+1)$, $u_4=q(q^m-1)((q-1)(2q^{2m-2}+q^{2m-4}+q^{m-2})+q^{m-3}+1)$, $u_5=\frac{q^{m-1}(q^m-1)(q^{m+3}-q^{m+2}-q^{m-1}+q^{\frac{m+3}{2}}-q^{\frac{m-1}{2}}+q^3)}{2(q+1)}$, $u_6=\frac{q^{m-1}(q^m-1)(q^{m-1}-q^{\frac{m-1}{2}})}{2}$, $u_7=\frac{q^{m-2}(q^{m-1}-1)(q^m-1)}{2(q+1)}$ and $t=q-1$. By Lemma \ref{lem-Mac}, the weight enumerator of $\overline{\cC}_{(q,m,\delta_3)}$ satisfies
  \begin{eqnarray}\label{eq-E}
    q^{3m+1}A^{\perp}(z) &=& (1+(q-1)z)^nA\left(\frac{1-z}{1+(q-1)z}\right)\nonumber\\
    &=&(1+(q-1)z)^n+u_1(1-z)^{(q-1)q^{m-1}-q^{\frac{m+1}{2}}}(1+(q-1)z)^{n-(q-1)q^{m-1}+q^{\frac{m+1}{2}}}\nonumber\\
    && +u_2(1-z)^{(q-1)(q^{m-1}-q^{\frac{m-1}{2}})}(1+(q-1)z)^{n-(q-1)q^{m-1}(q-1)(q^{m-1}-q^{\frac{m-1}{2}})}\nonumber\\
    && +u_3(1-z)^{(q-1)q^{m-1}-q^{\frac{m-1}{2}}}(1+(q-1)z)^{n-(q-1)q^{m-1}+q^{\frac{m-1}{2}}}\nonumber\\
    &&+u_4(1-z)^{(q-1)q^{m-1}}(1+(q-1)z)^{n-(q-1)q^{m-1}}\nonumber\\
    &&+u_5(1-z)^{(q-1)q^{m-1}+q^{\frac{m-1}{2}}}(1+(q-1)z)^{n-(q-1)q^{m-1}-q^{\frac{m-1}{2}}}\nonumber\\
    && +u_6(1-z)^{(q-1)(q^{m-1}+q^{\frac{m-1}{2}})}(1+(q-1)z)^{n-(q-1)(q^{m-1}+q^{\frac{m-1}{2}})}\nonumber\\
    &&+u_7(1-z)^{(q-1)q^{m-1}+q^{\frac{m+1}{2}}}(1+(q-1)z)^{n-(q-1)q^{m-1}-q^{\frac{m+1}{2}}}+t(1-z)^n.
  \end{eqnarray}
Furthermore, we have
\begin{eqnarray}\label{eq-E1}
  (1+(q-1)z)^n &=& \sum_{l=0}^n\binom{n}{l}(q-1)^lz^l,
\end{eqnarray}
\begin{eqnarray}\label{eq-E2}
  &&u_1(1-z)^{(q-1)q^{m-1}-q^{\frac{m+1}{2}}}(1+(q-1)z)^{n-(q-1)q^{m-1}+q^{\frac{m+1}{2}}}\nonumber \\
  &=& u_1\sum_{l=0}^n\left(\sum_{i+j=l}\binom{(q-1)q^{m-1}-q^{\frac{m+1}{2}}}{i}\binom{n-(q-1)q^{m-1}+q^{\frac{m+1}{2}}}{j}(-1)^i(q-1)^j\right)z^l,
\end{eqnarray}
\begin{eqnarray}\label{eq-E3}
  &&u_2(1-z)^{(q-1)(q^{m-1}-q^{\frac{m-1}{2}})}(1+(q-1)z)^{n-(q-1)(q^{m-1}-q^{\frac{m-1}{2}})}\nonumber\\
  &=&u_2\sum_{l=0}^n\left(\sum_{i+j=l}\binom{(q-1)(q^{m-1}-q^{\frac{m-1}{2}})}{i}\binom{n-(q-1)(q^{m-1}-q^{\frac{m-1}{2}})}{j}(-1)^i(q-1)^j\right)z^l,
\end{eqnarray}
\begin{eqnarray}\label{eq-E4}
&&u_3(1-z)^{(q-1)q^{m-1}-q^{\frac{m-1}{2}}}(1+(q-1)z)^{n-(q-1)q^{m-1}+q^{\frac{m-1}{2}}}\nonumber \\
  &=& u_3\sum_{l=0}^n\left(\sum_{i+j=l}\binom{(q-1)q^{m-1}-q^{\frac{m-1}{2}}}{i}\binom{n-(q-1)q^{m-1}+q^{\frac{m-1}{2}}}{j}(-1)^i(q-1)^j\right)z^l,
\end{eqnarray}
\begin{eqnarray}\label{eq-E5}
&&u_4(1-z)^{(q-1)q^{m-1}}(1+(q-1)z)^{n-(q-1)q^{m-1}}\nonumber \\
  &=& u_4\sum_{l=0}^n\left(\sum_{i+j=l}\binom{(q-1)q^{m-1}}{i}\binom{n-(q-1)q^{m-1}}{j}(-1)^i(q-1)^j\right)z^l,
\end{eqnarray}
\begin{eqnarray}\label{eq-E6}
&&u_5(1-z)^{(q-1)q^{m-1}+q^{\frac{m-1}{2}}}(1+(q-1)z)^{n-(q-1)q^{m-1}-q^{\frac{m-1}{2}}}\nonumber \\
  &=& u_5\sum_{l=0}^n\left(\sum_{i+j=l}\binom{(q-1)q^{m-1}+q^{\frac{m-1}{2}}}{i}\binom{n-(q-1)q^{m-1}-q^{\frac{m-1}{2}}}{j}(-1)^i(q-1)^j\right)z^l,
\end{eqnarray}
\begin{eqnarray}\label{eq-E7}
&&u_6(1-z)^{(q-1)(q^{m-1}+q^{\frac{m-1}{2}})}(1+(q-1)z)^{n-(q-1)(q^{m-1}+q^{\frac{m-1}{2}})}\nonumber \\
  &=& u_6\sum_{l=0}^n\left(\sum_{i+j=l}\binom{(q-1)(q^{m-1}+q^{\frac{m-1}{2}})}{i}\binom{n-(q-1)(q^{m-1}+q^{\frac{m-1}{2}})}{j}(-1)^i(q-1)^j\right)z^l,
\end{eqnarray}
\begin{eqnarray}\label{eq-E8}
&&u_7(1-z)^{(q-1)q^{m-1}+q^{\frac{m+1}{2}}}(1+(q-1)z)^{n-(q-1)q^{m-1}-q^{\frac{m+1}{2}}}\nonumber \\
  &=& u_7\sum_{l=0}^n\left(\sum_{i+j=l}\binom{(q-1)q^{m-1}+q^{\frac{m+1}{2}}}{i}\binom{n-(q-1)q^{m-1}-q^{\frac{m+1}{2}}}{j}(-1)^i(q-1)^j\right)z^l,
\end{eqnarray}
\begin{eqnarray}\label{eq-E9}
  t(1-z)^n &=& t\sum_{l=0}^{n}\binom{n}{l}(-1)^lz^l.
\end{eqnarray}
Substituting Equations (\ref{eq-E1}), (\ref{eq-E2}), (\ref{eq-E3}), (\ref{eq-E4}), (\ref{eq-E5}), (\ref{eq-E6}), (\ref{eq-E7}), (\ref{eq-E8}) and (\ref{eq-E9}) into (\ref{eq-E}) yields
\begin{eqnarray*}
    q^{3m+1}A^{\perp}(z) &=& \sum_{l=0}^n\left[\binom{n}{l}(q-1)^l+\sum_{i+j=l}T(i,j)(-1)^i(q-1)^j+t\binom{n}{l}(-1)^l\right]z^l.
  \end{eqnarray*}
Then we have
\begin{eqnarray*}
  q^{3m+1}A_1^{\perp} &=& \binom{n}{1}(q-1)+\sum_{i+j=1}T(i,j)(-1)^i(q-1)^j-(q-1)\binom{n}{1}\\
  &=& -T(1,0)+T(0,1)(q-1)=0,
\end{eqnarray*}
\begin{eqnarray*}
  q^{3m+1}A_2^{\perp} &=& \binom{n}{2}(q-1)^2+\sum_{i+j=2}T(i,j)(-1)^i(q-1)^j+(q-1)\binom{n}{2}\\
  &=& \frac{q^{m+1}(q^m-1)(q-1)}{2}+T(2,0)-(q-1)T(1,1)+(q-1)^2T(0,2)\\
  &=&0,
\end{eqnarray*}
\begin{eqnarray*}
  q^{3m+1}A_3^{\perp} &=& \binom{n}{3}(q-1)^3+\sum_{i+j=3}T(i,j)(-1)^i(q-1)^j-(q-1)\binom{n}{3}\\
  &=& \frac{q^{m+1}(q^m-1)(q^m-2)(q-2)(q-1)}{6}+(q-1)^3T(0,3)\\&&-(q-1)^2T(1,2)+(q-1)T(2,1)-T(3,0)\\
  &=&0,
\end{eqnarray*}
\begin{eqnarray*}
  q^{3m+1}A_4^{\perp} &=& \binom{n}{4}(q-1)^4+\sum_{i+j=4}T(i,j)(-1)^i(q-1)^j+(q-1)\binom{n}{4}\\
  &=& \frac{q^{m+1}(q^m-1)(q^m-2)(q^m-3)(q^2-3q+3)(q-1)}{24}+(q-1)^4T(0,4)\\&&-(q-1)^3T(1,3)+(q-1)^2T(2,2)-(q-1)T(3,1)+T(4,0)\\
  &=&\frac{q^{4m+1}(q-3)(q-1)(q-2)(q^m-1)}{24},
\end{eqnarray*}
which yields that $A_4^{\perp}=\frac{q^{m}(q-3)(q-1)(q-2)(q^m-1)}{24}$. If $q > 3$, then $A_1^{\perp}=A_2^{\perp}=A_3^{\perp}=0$ and $A_4^{\perp} > 0$. Hence, the minimum distance of $\overline{\cC}_{(q,m,\delta_3)}^{\perp}$ is $4$. If $q=3$, then $A_1^{\perp}=A_2^{\perp}=A_3^{\perp}=A_4^{\perp}=0$. Now let $q=3$, we have
\begin{eqnarray*}
  3^{3m+1}A_5^{\perp} &=& \binom{3^m}{5}2^5+\sum_{i+j=5}T(i,j)(-1)^i2^j-2\binom{3^m}{5}\\
  &=& \frac{3^{m}(3^m-1)(3^m-2)(3^m-3)(3^m-4)}{4}+2^5T(0,5)-2^4T(1,4)\\&&+2^3T(2,3)-4T(3,2)+2T(4,1)-T(5,0)\\
  &=&0,
\end{eqnarray*}
\begin{eqnarray*}
  3^{3m+1}A_6^{\perp} &=& \binom{3^m}{6}2^6+\sum_{i+j=6}T(i,j)(-1)^i2^j+2\binom{3^m}{6}\\
  &=& \frac{11\cdot3^{m-1}(3^m-1)(3^m-2)(3^m-3)(3^m-4)(3^m-5)}{40}+2^6T(0,6)-2^5T(1,5)\\&&+2^4T(2,4)-2^3T(3,3)+4T(4,2)-2T(5,1)+T(6,0)\\
  &=& \frac{3^{4m}(3^m-1)(3^{m-1}-1)}{2},
\end{eqnarray*}
which implies that $A_6^{\perp}=\frac{3^{m-1}(3^m-1)(3^{m-1}-1)}{2} > 0$. Then the minimum distance of $\overline{\cC}_{(q,m,\delta_3)}^{\perp}$ is $6$.
The desired conclusions follow.
\end{proof}

To give the next theorem, we denote by
\begin{eqnarray*}
  R(i,j) &=& u_1\binom{(q-1)q^{m-1}-q^{\frac{m}{2}}}{i}\binom{n-(q-1)q^{m-1}+q^{\frac{m}{2}}}{j}+u_2\binom{(q-1)(q^{m-1}-q^{\frac{m}{2}-1})}{i}\\&&\binom{n-(q-1)(q^{m-1}-q^{\frac{m}{2}-1})}{j}+
 u_3\binom{(q-1)q^{m-1}-q^{\frac{m}{2}-1}}{i}\binom{n-(q-1)q^{m-1}+q^{\frac{m}{2}-1}}{j}+\\
 && u_4\binom{(q-1)q^{m-1}}{i}\binom{n-(q-1)q^{m-1}}{j}+u_5\binom{(q-1)q^{m-1}+q^{\frac{m}{2}-1}}{i}\\
 && \binom{n-(q-1)q^{m-1}-q^{\frac{m}{2}-1}}{j}+u_6\binom{(q-1)(q^{m-1}+q^{\frac{m}{2}-1})}{i}\binom{n-(q-1)(q^{m-1}+q^{\frac{m}{2}-1})}{j}\\
 && +u_7\binom{(q-1)q^{m-1}+q^{\frac{m}{2}}}{i}\binom{n-(q-1)q^{m-1}-q^{\frac{m}{2}}}{j}\\
 && +u_8\binom{(q-1)(q^{m-1}+q^{\frac{m}{2}})}{i}\binom{n-(q-1)(q^{m-1}+q^{\frac{m}{2}})}{j},
\end{eqnarray*}
where $n=q^m$, $u_1=\frac{q^{\frac{m}{2}-2}(q^m-1)(q^{m+2}-q^m+2q^{m-1}-2q^{\frac{m}{2}})}{2(q+1)}$, $u_2=\frac{q^{m+1}(q^{\frac{m}{2}}+1)(q^m-1)}{2(q+1)}$, $u_3=\frac{q^m(q^{m+1}-2q^m+q)(q^{\frac{m}{2}}-1)}{2}$, $u_4=q(q^m-1)(1+q^{\frac{3m}{2}-1}-q^{\frac{3m}{2}-2}+q^{\frac{3m}{2}-3}-q^{m-2})$, $u_5=\frac{q^{m+1}(q-1)(q^{\frac{m}{2}}+1)(q^m-1)}{2(q+1)}$, $u_6=q^{m}(q^{m+1}-2q^m+q)(q^\frac{m}{2}-1)/2(q-1)$, $u_7=\frac{q^{\frac{3m}{2}-2}(q^m-1)(q-1)}{2}$ and $u_8=\frac{q^{m-2}(q^m-1)(q^{\frac{m}{2}-1}-1)}{q^2-1}$.

\begin{theorem}\label{th-C6even}
  Let $q$ be an odd prime and $m$ be an even integer with $m \geq 4$. Then the extended code $\overline{\cC}_{(q,m,\delta_3)}$ is a self-orthogonal $q$-divisible code and its dual code $\overline{\cC}_{(q,m,\delta_3)}^{\perp}$ has parameters $[q^m, q^m-\frac{5m}{2}-1, 6]$ for $q=3$ and $[q^m, q^m-\frac{5m}{2}-1, 4]$ for $q > 3$. Besides, The weight enumerator $A^{\perp}(z)$ of $\overline{\cC}_{(q,m,\delta_3)}^{\perp}$ is given by
  \begin{eqnarray*}
    q^{\frac{5m}{2}+1}A^{\perp}(z) &=& \sum_{l=0}^n\left[\binom{n}{l}(q-1)^l+\sum_{i+j=l}R(i,j)(-1)^i(q-1)^j+t\binom{n}{l}(-1)^l\right]z^l,
  \end{eqnarray*}
  where $n=q^m$ and $t=q-1$. In particular, $\overline{\cC}_{(q,m,\delta_3)}^{\perp}$ is optimal according to the sphere-packing bound for $q=3$.
\end{theorem}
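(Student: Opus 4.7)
The plan is to mirror the argument used for Theorem \ref{th-C6odd}, adapted to the even-$m$ weight distribution given in Lemma \ref{lem-D2even}. First, I would read off from Lemma \ref{lem-D2even} that every nonzero weight of $\overline{\cC}_{(q,m,\delta_3)}$ is a multiple of $q$, so the code is $q$-divisible, and that the constant weight $q^m$ appears with multiplicity $q-1$, which certifies $\mathbf{1} \in \overline{\cC}_{(q,m,\delta_3)}$. Since $q$ is an odd prime, Theorem \ref{th-selforthogonal} immediately gives self-orthogonality.

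Next I would write down the weight enumerator $A(z)$ of $\overline{\cC}_{(q,m,\delta_3)}$ as a sum of the nine monomials dictated by Lemma \ref{lem-D2even} with coefficients $u_1,\ldots,u_8,t$, and then apply the MacWilliams transform (Lemma \ref{lem-Mac}) with $k=\frac{5m}{2}+1$. Expanding each factor $(1-z)^{a}(1+(q-1)z)^{n-a}$ via the binomial theorem as in the proof of Theorem \ref{th-C6odd}, and collecting coefficients, yields exactly the stated expression
\begin{eqnarray*}
q^{\frac{5m}{2}+1}A^{\perp}(z) &=& \sum_{l=0}^n\left[\binom{n}{l}(q-1)^l+\sum_{i+j=l}R(i,j)(-1)^i(q-1)^j+t\binom{n}{l}(-1)^l\right]z^l,
\end{eqnarray*}
with $R(i,j)$ as defined just before the theorem.

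To pin down the minimum distance, I would extract the coefficients of $z,z^2,z^3,z^4$ from this formula and verify that $A_1^\perp=A_2^\perp=A_3^\perp=0$ in all cases, and that $A_4^\perp$ reduces, after cancellation, to the same shape as in Theorem \ref{th-C6odd}, namely a nonnegative multiple of $(q-3)$. This gives $d^\perp=4$ when $q>3$ and forces $A_4^\perp=0$ when $q=3$; then, for $q=3$, I would compute $A_5^\perp$ and $A_6^\perp$ exactly as in Theorem \ref{th-C6odd} and show $A_5^\perp=0$ and $A_6^\perp>0$, establishing $d^\perp=6$. The routine but bulky part will be collapsing the alternating sums of products like $(q-1)^j\binom{(q-1)q^{m-1}\pm q^{m/2}}{i}\binom{\cdots}{j}$ for $i+j\le 6$; one can shorten this by exploiting the identity $\sum_{i+j=l}\binom{a}{i}\binom{b}{j}(q-1)^j(-1)^i=$ (coefficient of $z^l$ in $(1-z)^a(1+(q-1)z)^b$), applied at $z=0$ up to the $l$th derivative — i.e., it suffices to evaluate the formal power series on the monomials $1,z,\ldots,z^6$, which can be done symbolically.

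Finally, for optimality when $q=3$, I would invoke the sphere-packing (Hamming) bound: a hypothetical $[3^m,3^m-\frac{5m}{2}-1,7]$ ternary code would need
\begin{eqnarray*}
3^{\frac{5m}{2}+1} \;\geq\; \sum_{i=0}^{3}\binom{3^m}{i}2^i,
\end{eqnarray*}
and checking this fails for all even $m\ge 4$ shows no $d^\perp=7$ code exists, so $d^\perp=6$ is optimal. The main obstacle I anticipate is purely computational: keeping the $R(i,j)$ expansions organized through $l=6$ in the $q=3$ case without algebraic slips; the structural argument (divisibility $\Rightarrow$ self-orthogonality via Theorem \ref{th-selforthogonal}, plus MacWilliams for the dual spectrum) is already mapped out by Theorem \ref{th-C6odd}.
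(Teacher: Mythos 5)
Your proposal follows essentially the same route as the paper's proof: divisibility plus the all-one codeword from Lemma \ref{lem-D2even} combined with Theorem \ref{th-selforthogonal} for self-orthogonality, the MacWilliams transform expanded term by term to get the stated formula for $A^{\perp}(z)$, extraction of $A_1^{\perp},\ldots,A_4^{\perp}$ (with $A_4^{\perp}$ indeed a nonnegative multiple of $q-3$) and then $A_5^{\perp},A_6^{\perp}$ for $q=3$, and the sphere-packing bound for optimality. The plan is correct and matches the paper's argument in all essential respects.
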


\begin{proof}
Firstly, by Lemma \ref{lem-D2even}, we derive that $\overline{\cC}_{(q,m,\delta_3)}$ is $q$-divisible and $\mathbf{1} \in \overline{\cC}_{(q,m,\delta_3)}$. Then by Theorem \ref{th-selforthogonal}, we deduce that $\overline{\cC}_{(q,m,\delta_3)}$ is self-orthogonal.

Then we determine the weight enumerator $A^{\perp}(z)$ and minimum distance of $\overline{\cC}_{(q,m,\delta_3)}^{\perp}$.
  Let $n=q^m$. By Lemma \ref{lem-D2even}, we derive that the weight enumerator of $\overline{\cC}_{(q,m,\delta_3)}$ is
  \begin{eqnarray*}
    A(z) &=& 1+u_1z^{(q-1)q^{m-1}-q^{\frac{m}{2}}}+u_2z^{(q-1)(q^{m-1}-q^{\frac{m}{2}-1})}+u_3z^{(q-1)q^{m-1}-q^{\frac{m}{2}-1}}+u_4z^{(q-1)q^{m-1}}+\\
    &&u_5z^{(q-1)q^{m-1}+q^{\frac{m}{2}-1}}+u_6z^{(q-1)(q^{m-1}+q^{\frac{m}{2}-1})}+u_7z^{(q-1)q^{m-1}+q^{\frac{m}{2}}}
     u_8z^{(q-1)(q^{m-1}+q^{\frac{m}{2}})}+tz^{q^m},
  \end{eqnarray*}
  where $u_1=\frac{q^{\frac{m}{2}-2}(q^m-1)(q^{m+2}-q^m+2q^{m-1}-2q^{\frac{m}{2}})}{2(q+1)}$, $u_2=\frac{q^{m+1}(q^{\frac{m}{2}}+1)(q^m-1)}{2(q+1)}$, $u_3=\frac{q^m(q^{m+1}-2q^m+q)(q^{\frac{m}{2}}-1)}{2}$, $u_4=q(q^m-1)(1+q^{\frac{3m}{2}-1}-q^{\frac{3m}{2}-2}+q^{\frac{3m}{2}-3}-q^{m-2})$, $u_5=\frac{q^{m+1}(q-1)(q^{\frac{m}{2}}+1)(q^m-1)}{2(q+1)}$, $u_6=q^{m}(q^{m+1}-2q^m+q)(q^\frac{m}{2}-1)/2(q-1)$, $u_7=\frac{q^{\frac{3m}{2}-2}(q^m-1)(q-1)}{2}$ and $u_8=\frac{q^{m-2}(q^m-1)(q^{\frac{m}{2}-1}-1)}{q^2-1}$ and $t=q-1$. By Lemma \ref{lem-Mac}, the weight enumerator of $\overline{\cC}_{(q,m,\delta_3)}$ satisfies
  \begin{eqnarray}\label{eq-D}
    q^{\frac{5m}{2}+1}A^{\perp}(z) &=& (1+(q-1)z)^nA\left(\frac{1-z}{1+(q-1)z}\right)\nonumber\\
    &=&(1+(q-1)z)^n+u_1(1-z)^{(q-1)q^{m-1}-q^{\frac{m}{2}}}(1+(q-1)z)^{n-(q-1)q^{m-1}+q^{\frac{m}{2}}}\nonumber\\
    && +u_2(1-z)^{(q-1)(q^{m-1}-q^{\frac{m}{2}-1})}(1+(q-1)z)^{n-(q-1)(q^{m-1}-q^{\frac{m}{2}-1})}\nonumber\\
    && +u_3(1-z)^{(q-1)q^{m-1}-q^{\frac{m}{2}-1}}(1+(q-1)z)^{n-(q-1)q^{m-1}+q^{\frac{m}{2}-1}}\nonumber\\
    &&+u_4(1-z)^{(q-1)q^{m-1}}(1+(q-1)z)^{n-(q-1)q^{m-1}}\nonumber\\
    &&+u_5(1-z)^{(q-1)q^{m-1}+q^{\frac{m}{2}-1}}(1+(q-1)z)^{n-(q-1)q^{m-1}-q^{\frac{m}{2}-1}}\nonumber\\
    && +u_6(1-z)^{(q-1)(q^{m-1}+q^{\frac{m}{2}-1})}(1+(q-1)z)^{n-(q-1)(q^{m-1}+q^{\frac{m}{2}-1})}\nonumber\\
    &&+u_7(1-z)^{(q-1)q^{m-1}+q^{\frac{m}{2}}}(1+(q-1)z)^{n-(q-1)q^{m-1}-q^{\frac{m}{2}}}\nonumber\\
    &&+u_8(1-z)^{(q-1)(q^{m-1}+q^{\frac{m}{2}})}(1+(q-1)z)^{n-(q-1)(q^{m-1}+q^{\frac{m}{2}})}+t(1-z)^n.
  \end{eqnarray}
Furthermore, we have
\begin{eqnarray}\label{eq-D1}
  (1+(q-1)z)^n &=& \sum_{l=0}^n\binom{n}{l}(q-1)^lz^l,
\end{eqnarray}
\begin{eqnarray}\label{eq-D2}
  &&u_1(1-z)^{(q-1)q^{m-1}-q^{\frac{m}{2}}}(1+(q-1)z)^{n-(q-1)q^{m-1}+q^{\frac{m}{2}}}\nonumber \\
  &=& u_1\sum_{l=0}^n\left(\sum_{i+j=l}\binom{(q-1)q^{m-1}-q^{\frac{m}{2}}}{i}\binom{n-(q-1)q^{m-1}+q^{\frac{m}{2}}}{j}(-1)^i(q-1)^j\right)z^l,
\end{eqnarray}
\begin{eqnarray}\label{eq-D3}
  &&u_2(1-z)^{(q-1)(q^{m-1}-q^{\frac{m}{2}-1})}(1+(q-1)z)^{n-(q-1)(q^{m-1}-q^{\frac{m}{2}-1})}\nonumber\\
  &=&u_2\sum_{l=0}^n\left(\sum_{i+j=l}\binom{(q-1)(q^{m-1}-q^{\frac{m}{2}-1})}{i}\binom{n-(q-1)(q^{m-1}-q^{\frac{m}{2}-1})}{j}(-1)^i(q-1)^j\right)z^l,
\end{eqnarray}
\begin{eqnarray}\label{eq-D4}
&&u_3(1-z)^{(q-1)q^{m-1}-q^{\frac{m}{2}-1}}(1+(q-1)z)^{n-(q-1)q^{m-1}+q^{\frac{m}{2}-1}}\nonumber \\
  &=& u_3\sum_{l=0}^n\left(\sum_{i+j=l}\binom{(q-1)q^{m-1}-q^{\frac{m}{2}-1}}{i}\binom{n-(q-1)q^{m-1}+q^{\frac{m}{2}-1}}{j}(-1)^i(q-1)^j\right)z^l,
\end{eqnarray}
\begin{eqnarray}\label{eq-D5}
&&u_4(1-z)^{(q-1)q^{m-1}}(1+(q-1)z)^{n-(q-1)q^{m-1}}\nonumber \\
  &=& u_4\sum_{l=0}^n\left(\sum_{i+j=l}\binom{(q-1)q^{m-1}}{i}\binom{n-(q-1)q^{m-1}}{j}(-1)^i(q-1)^j\right)z^l,
\end{eqnarray}
\begin{eqnarray}\label{eq-D6}
&&u_5(1-z)^{(q-1)q^{m-1}+q^{\frac{m}{2}-1}}(1+(q-1)z)^{n-(q-1)q^{m-1}-q^{\frac{m}{2}-1}}\nonumber \\
  &=& u_5\sum_{l=0}^n\left(\sum_{i+j=l}\binom{(q-1)q^{m-1}+q^{\frac{m}{2}-1}}{i}\binom{n-(q-1)q^{m-1}-q^{\frac{m}{2}-1}}{j}(-1)^i(q-1)^j\right)z^l,
\end{eqnarray}
\begin{eqnarray}\label{eq-D7}
&&u_6(1-z)^{(q-1)(q^{m-1}+q^{\frac{m}{2}-1})}(1+(q-1)z)^{n-(q-1)(q^{m-1}+q^{\frac{m}{2}-1})}\nonumber \\
  &=& u_6\sum_{l=0}^n\left(\sum_{i+j=l}\binom{(q-1)(q^{m-1}+q^{\frac{m}{2}-1})}{i}\binom{n-(q-1)(q^{m-1}+q^{\frac{m}{2}-1})}{j}(-1)^i(q-1)^j\right)z^l,
\end{eqnarray}
\begin{eqnarray}\label{eq-D8}
&&u_7(1-z)^{(q-1)q^{m-1}+q^{\frac{m}{2}}}(1+(q-1)z)^{n-(q-1)q^{m-1}-q^{\frac{m}{2}}}\nonumber \\
  &=& u_7\sum_{l=0}^n\left(\sum_{i+j=l}\binom{(q-1)q^{m-1}+q^{\frac{m}{2}}}{i}\binom{n-(q-1)q^{m-1}-q^{\frac{m}{2}}}{j}(-1)^i(q-1)^j\right)z^l,
\end{eqnarray}
\begin{eqnarray}\label{eq-D9}
&&u_8(1-z)^{(q-1)(q^{m-1}+q^{\frac{m}{2}})}(1+(q-1)z)^{n-(q-1)(q^{m-1}+q^{\frac{m}{2}})}\nonumber \\
  &=& u_8\sum_{l=0}^n\left(\sum_{i+j=l}\binom{(q-1)(q^{m-1}+q^{\frac{m}{2}})}{i}\binom{n-(q-1)(q^{m-1}+q^{\frac{m}{2}})}{j}(-1)^i(q-1)^j\right)z^l,
\end{eqnarray}
\begin{eqnarray}\label{eq-D10}
  t(1-z)^n &=& t\sum_{l=0}^{n}\binom{n}{l}(-1)^lz^l.
\end{eqnarray}
Substituting Equations (\ref{eq-D1}), (\ref{eq-D2}), (\ref{eq-D3}), (\ref{eq-D4}), (\ref{eq-D5}), (\ref{eq-D6}), (\ref{eq-D7}), (\ref{eq-D8}), (\ref{eq-D9}) and (\ref{eq-D10}) into (\ref{eq-D}) yields that
\begin{eqnarray*}
    q^{\frac{5m}{2}+1}A^{\perp}(z) &=& \sum_{l=0}^n\left[\binom{n}{l}(q-1)^l+\sum_{i+j=l}R(i,j)(-1)^i(q-1)^j+t\binom{n}{l}(-1)^l\right]z^l.
  \end{eqnarray*}
Then we have
\begin{eqnarray*}
  q^{\frac{5m}{2}+1}A_1^{\perp} &=& \binom{n}{1}(q-1)+\sum_{i+j=1}R(i,j)(-1)^i(q-1)^j-(q-1)\binom{n}{1}\\
  &=& -R(1,0)+R(0,1)(q-1)=0,
\end{eqnarray*}
\begin{eqnarray*}
  q^{\frac{5m}{2}+1}A_2^{\perp} &=& \binom{n}{2}(q-1)^2+\sum_{i+j=2}R(i,j)(-1)^i(q-1)^j+(q-1)\binom{n}{2}\\
  &=& \frac{q^{m+1}(q^m-1)(q-1)}{2}+R(2,0)-(q-1)R(1,1)+(q-1)^2R(0,2)\\
  &=&0,
\end{eqnarray*}
\begin{eqnarray*}
  q^{\frac{5m}{2}+1}A_3^{\perp} &=& \binom{n}{3}(q-1)^3+\sum_{i+j=3}R(i,j)(-1)^i(q-1)^j-(q-1)\binom{n}{3}\\
  &=& \frac{q^{m+1}(q^m-1)(q^m-2)(q-2)(q-1)}{6}+(q-1)^3R(0,3)\\&&-(q-1)^2R(1,2)+(q-1)R(2,1)-R(3,0)\\
  &=&0,
\end{eqnarray*}
\begin{eqnarray*}
  q^{\frac{5m}{2}+1}A_4^{\perp} &=& \binom{n}{4}(q-1)^4+\sum_{i+j=4}R(i,j)(-1)^i(q-1)^j+(q-1)\binom{n}{4}\\
  &=& \frac{q^{m+1}(q^m-1)(q^m-2)(q^m-3)(q^2-3q+3)(q-1)}{24}+(q-1)^4R(0,4)\\&&-(q-1)^3R(1,3)+(q-1)^2R(2,2)-(q-1)R(3,1)+R(4,0)\\
  &=&\frac{q^{\frac{7m}{2}+1}(q-3)(q-1)(q-2)(q^m-1)}{24},
\end{eqnarray*}
which yields that $A_4^{\perp}=\frac{q^{m}(q-3)(q-1)(q-2)(q^m-1)}{24}$. If $q > 3$, then $A_1^{\perp}=A_2^{\perp}=A_3^{\perp}=0$ and $A_4^{\perp} > 0$. Hence, the minimum distance of $\overline{\cC}_{(q,m,\delta_3)}^{\perp}$ is $4$. If $q=3$, then $A_1^{\perp}=A_2^{\perp}=A_3^{\perp}=A_4^{\perp}=0$. Now let $q=3$, we have
\begin{eqnarray*}
  3^{\frac{5m}{2}+1}A_5^{\perp} &=& \binom{3^m}{5}2^5+\sum_{i+j=5}R(i,j)(-1)^i2^j-2\binom{3^m}{5}\\
  &=& \frac{3^{m}(3^m-1)(3^m-2)(3^m-3)(3^m-4)}{4}+2^5R(0,5)-2^4R(1,4)\\&&+2^3R(2,3)-4R(3,2)+2R(4,1)-R(5,0)\\
  &=&0,
\end{eqnarray*}
\begin{eqnarray*}
  3^{\frac{5m}{2}+1}A_6^{\perp} &=& \binom{3^m}{6}2^6+\sum_{i+j=6}R(i,j)(-1)^i2^j+2\binom{3^m}{6}\\
  &=& \frac{11\cdot3^{m-1}(3^m-1)(3^m-2)(3^m-3)(3^m-4)(3^m-5)}{40}+2^6R(0,6)-2^5R(1,5)\\&&+2^4R(2,4)-2^3R(3,3)+4R(4,2)-2R(5,1)+R(6,0)\\
  &=& \frac{3^{\frac{7m}{2}}(3^m-1)(11\cdot3^{\frac{3m}{2}-1}+11\cdot3^{m-1}+7-33\cdot3^{\frac{m}{2}})}{40},
\end{eqnarray*}
which implies that $A_6^{\perp}=\frac{3^{m-1}(3^m-1)(11\cdot3^{\frac{3m}{2}-1}+11\cdot3^{m-1}+7-11\cdot3^{\frac{m+2}{2}})}{40} > 0$. Then the minimum distance of $\overline{\cC}_{(q,m,\delta_3)}^{\perp}$ is $6$.
The desired conclusion follows.
\end{proof}

\begin{remark}
In this section, we obtain many (almost) optimal codes by the Code Tables in \cite{Codetable}. We list them in Table \ref{tab-optimalcode}.
  \begin{table}[!h]
\begin{center}
\caption{Optimal codes or almost optimal codes derived in Theorem \ref{th-Ceven}, \ref{th-C4odd}, \ref{th-C4even}, \ref{th-C5odd}, \ref{th-C5even}, \ref{th-C6odd}  and \ref{th-C6even}.}\label{tab-optimalcode}
\begin{tabular}{llll}
\toprule
Conditions & Code & Parameters & Optimality \\
\midrule
$p=5,k=1,m=3$ & $\overline{\widehat{\cC_1}}^{\perp}$ & $[125,118,3]$ & Almost optimal\\
$p=3,f=1,s=2$ & $\overline{\widehat{\cC_4}}$ & $[21,5,12]$ & Optimal\\
$p=3,f=1,s=2$ & $\overline{\widehat{\cC_4}}^{\perp}$ & $[21,16,3]$ & Optimal\\
$p=3,f=1,s=3$ & $\overline{\widehat{\cC_4}}^{\perp}$ & $[183,176,3]$ & Optimal\\
$p=5,f=1,s=2$ & $\overline{\widehat{\cC_4}}$ & $[105,5,80]$ & Optimal\\
$p=5,f=1,s=2$ & $\overline{\widehat{\cC_4}}^{\perp}$ & $[105,100,3]$ & Optimal\\
$q=3,m=3$     & $\overline{\cC}_{(q,m,\delta_2)}$    & $[27,7,15]$  & Optimal\\
$q=3,m=3$     & $\overline{\cC}_{(q,m,\delta_2)}^{\perp}$    & $[27,20,5]$  & Optimal\\
$q=3,m=5$     & $\overline{\cC}_{(q,m,\delta_2)}$    & $[243,11,153]$  & Optimal\\
$q=3,m=5$     & $\overline{\cC}_{(q,m,\delta_2)}^{\perp}$    & $[243,232,5]$  & Optimal\\
$q=5,m=3$     & $\overline{\cC}_{(q,m,\delta_2)}$    & $[125,7,95]$  & Optimal\\
$q=5,m=3$     & $\overline{\cC}_{(q,m,\delta_2)}^{\perp}$    & $[125,118,4]$  & Optimal\\
$q=3,m=4$     & $\overline{\cC}_{(q,m,\delta_2)}$    & $[81,7,51]$  & Optimal\\
$q=3,m=4$     & $\overline{\cC}_{(q,m,\delta_2)}^{\perp}$    & $[81,74,4]$  & Optimal\\
$q=3,m=4$     & $\overline{\cC}_{(q,m,\delta_3)}$    & $[81,11,45]$  & Optimal\\
$q=3,m=4$     & $\overline{\cC}_{(q,m,\delta_3)}^{\perp}$    & $[81,70,6]$  & Optimal\\
$q=3,m=5$     & $\overline{\cC}_{(q,m,\delta_3)}$    & $[243,16,135]$  & Optimal\\
$q=3,m=5$     & $\overline{\cC}_{(q,m,\delta_3)}^{\perp}$    & $[243,227,6]$  & Optimal\\
\bottomrule
\end{tabular}
\end{center}
\end{table}
\end{remark}

\section{Two families of self-orthogonal codes with locality $2$}\label{sec5}
In this section, we construct two families of self-orthogonal codes with locality $2$.
\subsection{The first family of self-orthogonal codes with locality $2$}\label{sec5.1}

In this subsection, let $q$ be an odd prime power. Let $m, m_1, m_2$ be three positive integers with $m_2 \mid m_1 \mid m$ and $m \geq 3m_1$. Denote by the defining set $D=\{x \in \gf_{q^m}: \tr_{q^m/q^{m_1}}(x)=0\}$. Define a family of linear codes over $\gf_{q^{m_2}}$ by
$$\cC_D=\left\{\left(\tr_{q^m/q^{m_2}}(bx^2)\right)_{x \in D}:b \in \gf_{q^m}\right\}.$$
Its augmented code is given by
$$\overline{\cC_D}=\left\{\left(\tr_{q^m/q^{m_2}}(bx^2)\right)_{x \in D}+c\mathbf{1}:b \in \gf_{q^m}, c \in \gf_{q^{m_2}}\right\}, $$
where $\mathbf{1}$ is all-$1$ vector of length $| D |$. From now on, denote by $\chi_1$, $\lambda_1$ and $\phi_1$ the canonical additive characters of $\gf_{q^m}$, $\gf_{q^{m_1}}$ and $\gf_{q^{m_2}}$, respectively. Let $\eta$, $\eta'$ and $\eta_0$ respectively denote the quadratic multiplicative characters of $\gf_{q^m}$, $\gf_{q^{m_1}}$ and $\gf_{q^{m_2}}$. In the following, we will first prove that the augmented code $\overline{\cC_D}$ is self-orthogonal and then determine the locality of $\overline{\cC_D}$.

\begin{lemma}\label{lem-N}
Let $q$ be an odd prime power. Let $m, m_1, m_2$ be three positive integers with $m_2 \mid m_1 \mid m$ and $m \geq 3m_1$. Let $\gf_{q^m}^*=\langle\alpha\rangle$. Denote by $N=| \{b \in \langle\alpha^2\rangle: \tr_{q^m/q^{m_1}}(b)=a\} |$, where $a \in \gf_{q^{m_1}}$. When $\frac{m}{m_1}$ is odd,
\begin{eqnarray*}
  N = \begin{cases}
        \frac{q^{m-m_1}-1}{2} & \mbox{if $a=0$,} \\
        \frac{q^m+G(\eta,\chi_1)G(\eta',\lambda_1)\eta'(-a)}{2q^{m_1}} & \mbox{if $a \in \gf_{q^{m_1}}^*$}.
      \end{cases}
\end{eqnarray*}
When $\frac{m}{m_1}$ is even,
\begin{eqnarray*}
  N = \begin{cases}
          \frac{q^m-q^{m_1}+(q^{m_1}-1)G(\eta,\chi_1)}{2q^{m_1}} & \mbox{if $a=0$,} \\
          \frac{q^m-G(\eta,\chi_1)}{2q^{m_1}} & \mbox{if $a \in \gf_{q^{m_1}}^*$.}
        \end{cases}
\end{eqnarray*}
\end{lemma}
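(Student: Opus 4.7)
The plan is to evaluate $N$ by the standard orthogonality argument, expressing both the ``nonzero square'' condition and the trace condition as character sums, and then reducing everything to Gauss sums.

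First I would write, for any $b\in\gf_{q^m}^*$, the indicator of $b\in\langle\alpha^2\rangle$ as $\tfrac{1}{2}(1+\eta(b))$, and the indicator of $\tr_{q^m/q^{m_1}}(b)=a$ as $\tfrac{1}{q^{m_1}}\sum_{y\in\gf_{q^{m_1}}}\lambda_1\!\left(y(\tr_{q^m/q^{m_1}}(b)-a)\right)$. Using the transitivity relation $\lambda_1(y\tr_{q^m/q^{m_1}}(b))=\chi_1(yb)$ for $y\in\gf_{q^{m_1}}$, this gives
\begin{equation*}
N=\frac{1}{2q^{m_1}}\sum_{y\in\gf_{q^{m_1}}}\lambda_1(-ya)\sum_{b\in\gf_{q^m}^*}(1+\eta(b))\chi_1(yb).
\end{equation*}

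Next I would separate the inner sum. For $y=0$ it equals $\sum_{b\in\gf_{q^m}^*}(1+\eta(b))=q^m-1$ by the orthogonality of $\eta$. For $y\neq0$, the change of variable $u=yb$ and the relation $\eta^2=1$ give $\sum_{b\in\gf_{q^m}^*}\chi_1(yb)=-1$ and $\sum_{b\in\gf_{q^m}^*}\eta(b)\chi_1(yb)=\eta(y)G(\eta,\chi_1)$. The crucial technical point, and the step I expect to be the main obstacle, is the determination of the restriction $\eta|_{\gf_{q^{m_1}}^*}$: since $\eta(y)=y^{(q^m-1)/2}$ and $(q^m-1)/2=\frac{q^{m_1}-1}{2}\cdot\frac{q^m-1}{q^{m_1}-1}$, while $\frac{q^m-1}{q^{m_1}-1}=1+q^{m_1}+\cdots+q^{m-m_1}$ is a sum of $m/m_1$ odd numbers (as $q$ is odd), one concludes that $\eta(y)=\eta'(y)$ when $m/m_1$ is odd and $\eta(y)=1$ when $m/m_1$ is even, for every $y\in\gf_{q^{m_1}}^*$.

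With this in hand I would finish by a routine case split. Assume $m/m_1$ odd. Then for $a=0$ the sum over $y\in\gf_{q^{m_1}}^*$ involving $\eta'(y)$ vanishes and the sum involving $\lambda_1(0)$ contributes $-(q^{m_1}-1)$, leading to $N=(q^{m-m_1}-1)/2$. For $a\neq0$, using the substitution $z=-ya$ one gets $\sum_{y\in\gf_{q^{m_1}}^*}\eta'(y)\lambda_1(-ya)=\eta'(-a)G(\eta',\lambda_1)$, yielding the claimed formula $N=\bigl(q^m+G(\eta,\chi_1)G(\eta',\lambda_1)\eta'(-a)\bigr)/(2q^{m_1})$. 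Assume now $m/m_1$ even; then $\eta(y)=1$ on $\gf_{q^{m_1}}^*$, and the $y\neq0$ contribution is $(-1+G(\eta,\chi_1))\sum_{y\in\gf_{q^{m_1}}^*}\lambda_1(-ya)$, which equals $(q^{m_1}-1)(-1+G(\eta,\chi_1))$ when $a=0$ and $-(-1+G(\eta,\chi_1))$ otherwise; assembling these produces the two expressions in the statement.

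Apart from the identification of $\eta|_{\gf_{q^{m_1}}^*}$, every step is a direct application of the orthogonality of additive and multiplicative characters and the definition of the Gauss sum, so no further nontrivial input is needed.
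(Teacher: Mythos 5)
Your proposal is correct and follows essentially the same route as the paper: both reduce $N$ to the quadratic Gauss sum over $\gf_{q^m}$ via additive-character orthogonality and then use the fact that $\eta$ restricted to $\gf_{q^{m_1}}^*$ equals $\eta'$ or the trivial character according to the parity of $m/m_1$. The only cosmetic difference is that you encode the square condition with the indicator $\tfrac12(1+\eta(b))$ and evaluate $\sum_b\eta(b)\chi_1(yb)$ directly, whereas the paper parametrizes $\langle\alpha^2\rangle$ as $\{x^2:x\in\gf_{q^m}^*\}$ and invokes the Weil-sum lemma for $\sum_x\chi_1(yx^2)$ — these are equivalent computations.
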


\begin{proof}
  By the orthogonal relation of additive characters and Lemma \ref{weilsum}, we have
  \begin{eqnarray*}
    N &=& \frac{1}{q^{m_1}}\sum_{y \in \gf_{q^{m_1}}}\sum_{i=0}^{\frac{q^m-3}{2}}\zeta_p^{\tr_{q^{m_1}/p}(y(\tr_{q^m/q^{m_1}}(\alpha^{2i})-a))}\\
    &=& \frac{1}{2q^{m_1}}\sum_{y \in \gf_{q^{m_1}}}\sum_{x \in \gf_{q^m}^*}\chi_1(yx^2)\lambda_1(-ya)\\
    &=& \frac{1}{2q^{m_1}}\sum_{y \in \gf_{q^{m_1}}^*}\lambda_1(-ya)\sum_{x \in \gf_{q^m}}\chi_1(yx^2)-\frac{1}{2q^{m_1}}\sum_{y \in \gf_{q^{m_1}}^*}\lambda_1(-ya)+\frac{q^m-1}{2q^{m_1}}\\
    &=& \begin{cases}
          \frac{1}{2q^{m_1}}\sum\limits_{y \in \gf_{q^{m_1}}^*}\sum\limits_{x \in \gf_{q^m}}\chi_1(yx^2)+\frac{q^{m_1-1}-1}{2} & \mbox{if $a=0$} \\
          \frac{1}{2q^{m_1}}\sum\limits_{y \in \gf_{q^{m_1}}^*}\lambda_1(-ya)\sum\limits_{x \in \gf_{q^m}}\chi_1(yx^2)+\frac{q^{m-m_1}}{2} & \mbox{if $a \neq 0$}
        \end{cases}\\
    &=& \begin{cases}
          \frac{G(\eta,\chi_1)}{2q^{m_1}}\sum\limits_{y \in \gf_{q^{m_1}}^*}\eta(y)+\frac{q^{m_1-1}-1}{2} & \mbox{if $a=0$} \\
          \frac{G(\eta,\chi_1)}{2q^{m_1}}\sum\limits_{y \in \gf_{q^{m_1}}^*}\lambda_1(-ya)\eta(y)+\frac{q^{m-m_1}}{2} & \mbox{if $a \neq 0$}
        \end{cases}\\
    &=& \begin{cases}
          \frac{G(\eta,\chi_1)}{2q^{m_1}}\sum\limits_{y \in \gf_{q^{m_1}}^*}\eta'(y)+\frac{q^{m_1-1}-1}{2} & \mbox{if $a=0$ and $\frac{m}{m_1}$ is odd} \\
          \frac{(q^{m_1}-1)G(\eta,\chi_1)}{2q^{m_1}}+\frac{q^{m-m_1}-1}{2} & \mbox{if $a=0$ and $\frac{m}{m_1}$ is even} \\
          \frac{G(\eta,\chi_1)}{2q^{m_1}}\sum\limits_{y \in \gf_{q^{m_1}}^*}\lambda_1(-ya)\eta'(-ya)\eta'(-a)+\frac{q^{m-m_1}}{2} & \mbox{if $a \neq 0$ and $\frac{m}{m_1}$ is odd}\\
          \frac{G(\eta,\chi_1)}{2q^{m_1}}\sum\limits_{y \in \gf_{q^{m_1}}^*}\lambda_1(-ya)+\frac{q^{m-m_1}}{2} & \mbox{if $a \neq 0$ and $\frac{m}{m_1}$ is odd}
        \end{cases}\\
    &=& \begin{cases}
          \frac{q^{m_1-1}-1}{2} & \mbox{if $a=0$ and $\frac{m}{m_1}$ is odd,} \\
          \frac{q^m-q^{m_1}+(q^{m_1}-1)G(\eta,\chi_1)}{2q^{m_1}} & \mbox{if $a=0$ and $\frac{m}{m_1}$ is even,} \\
          \frac{q^m+G(\eta,\chi_1)G(\eta',\lambda_1)\eta'(-a)}{2q^{m_1}} & \mbox{if $a \neq 0$ and $\frac{m}{m_1}$ is odd,}\\
          \frac{q^m-G(\eta,\chi_1)}{2q^{m_1}} & \mbox{if $a \neq 0$ and $\frac{m}{m_1}$ is even.}
        \end{cases}
  \end{eqnarray*}

Then the desired conclusions follow.
\end{proof}

\begin{lemma}\label{lem-Omega}
   Let $q$ be an odd prime power. Let $m, m_1, m_2$ be three positive integers with $m_2 \mid m_1 \mid m$ and $m \geq 3m_1$. Denote by
   \begin{eqnarray*}
     \Omega(b,c) = \sum_{z \in \gf_{q^{m_2}}^*}\sum_{x \in \gf_{q^m}}\chi_1(zbx^2)\phi_1(zc),
   \end{eqnarray*}
   where $b \in \gf_{q^m}^*$ and $c \in \gf_{q^{m_2}}$.
   If $\frac{m}{m_2}$ is odd, then
   \begin{eqnarray*}
     \Omega(b,c) = \begin{cases}
                     0 & \mbox{if $c=0$,}  \\
                     G(\eta,\chi_1)G(\eta_0,\phi_1)\eta(b)\eta_0(c) & \mbox{if $c \neq 0$.}
                   \end{cases}
   \end{eqnarray*}
   If $\frac{m}{m_2}$ is even, then
   \begin{eqnarray*}
     \Omega(b,c) = \begin{cases}
                     (q^{m_2}-1)G(\eta,\chi_1)\eta(b) & \mbox{if $c=0$,}  \\
                     -G(\eta,\chi_1)\eta(b) & \mbox{if $c \neq 0$.}
                   \end{cases}
   \end{eqnarray*}
\end{lemma}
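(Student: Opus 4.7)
The plan is to collapse the inner sum over $x\in\gf_{q^m}$ by the Weil formula, then factor the $b$-dependence out of the outer sum, and finally reduce the remaining sum over $z\in\gf_{q^{m_2}}^*$ to a (subfield) Gaussian sum. Concretely, applying Lemma \ref{weilsum} to $f(x)=zbx^2$ (for which $zb\ne 0$ since $z,b\ne 0$) yields
$$\sum_{x\in\gf_{q^m}}\chi_1(zbx^2)=\eta(zb)G(\eta,\chi_1)=G(\eta,\chi_1)\,\eta(b)\,\eta(z),$$
hence
$$\Omega(b,c)=G(\eta,\chi_1)\,\eta(b)\sum_{z\in\gf_{q^{m_2}}^*}\eta(z)\phi_1(zc).$$

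The next step is to analyze the restriction of the quadratic character $\eta$ of $\gf_{q^m}^*$ to the subfield $\gf_{q^{m_2}}^*$. Writing $s=m/m_2$ and using the factorization
$$\frac{q^m-1}{2}=\frac{q^{m_2}-1}{2}\cdot\bigl(1+q^{m_2}+q^{2m_2}+\cdots+q^{(s-1)m_2}\bigr),$$
the second factor is a sum of $s$ odd integers and therefore has the same parity as $s$. Consequently, for $z\in\gf_{q^{m_2}}^*$ one has $\eta(z)=z^{(q^m-1)/2}=\eta_0(z)^{s\bmod 2}$; in other words, $\eta\big|_{\gf_{q^{m_2}}^*}=\eta_0$ when $s$ is odd, and $\eta\big|_{\gf_{q^{m_2}}^*}$ is the trivial character when $s$ is even.

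With this dichotomy in hand, I would split into four subcases according to the parity of $s$ and whether $c=0$. For $s$ even, the outer sum becomes $\sum_{z\in\gf_{q^{m_2}}^*}\phi_1(zc)$, which equals $q^{m_2}-1$ if $c=0$ and equals $-1$ if $c\ne 0$ by orthogonality of $\phi_1$ on $\gf_{q^{m_2}}$; this delivers the two claimed values $(q^{m_2}-1)G(\eta,\chi_1)\eta(b)$ and $-G(\eta,\chi_1)\eta(b)$. For $s$ odd, the outer sum is $\sum_{z}\eta_0(z)\phi_1(zc)$, which vanishes for $c=0$ since $\sum_{z\in\gf_{q^{m_2}}^*}\eta_0(z)=0$, and for $c\ne 0$ the substitution $w=zc$ together with $\eta_0(c)^{-1}=\eta_0(c)$ gives $\eta_0(c)G(\eta_0,\phi_1)$, producing the remaining value $G(\eta,\chi_1)G(\eta_0,\phi_1)\eta(b)\eta_0(c)$.

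The main obstacle is the character-restriction step: one must carefully track the exponent $(q^m-1)/2$ modulo $q^{m_2}-1$ to justify the clean split $\eta\big|_{\gf_{q^{m_2}}^*}\in\{\eta_0,\mathbf{1}\}$ governed by the parity of $s$. Once that identification is secured, the rest of the argument is routine bookkeeping with orthogonality of additive characters and the definition of the quadratic Gaussian sum.
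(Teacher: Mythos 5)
Your proposal is correct and follows essentially the same route as the paper: collapse the inner sum via Lemma \ref{weilsum} to get $G(\eta,\chi_1)\eta(b)\sum_{z\in\gf_{q^{m_2}}^*}\eta(z)\phi_1(zc)$, observe that $\eta$ restricted to $\gf_{q^{m_2}}^*$ is $\eta_0$ or trivial according to the parity of $m/m_2$, and finish with orthogonality of additive characters and the definition of the Gaussian sum. Your explicit justification of the restriction step via the factorization of $(q^m-1)/2$ is a welcome detail that the paper leaves implicit.
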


\begin{proof}
  By the orthogonal relation of additive characters and Lemma \ref{weilsum}, we have
  \begin{eqnarray*}
    \Omega(b,c) &=& \sum_{z \in \gf_{q^{m_2}}^*}\eta(zb)G(\eta,\chi_1)\phi_1(zc) \\
     &=& G(\eta,\chi_1)\eta(b)\sum_{z \in \gf_{q^{m_2}}^*}\eta(z)\phi_1(zc) \\
     &=& \begin{cases}
           G(\eta,\chi_1)\eta(b)\sum_{z \in \gf_{q^{m_2}}^*}\eta_0(z)\phi_1(zc) & \mbox{if $\frac{m}{m_2}$ is odd}  \\
           G(\eta,\chi_1)\eta(b)\sum_{z \in \gf_{q^{m_2}}^*}\phi_1(zc), & \mbox{if $\frac{m}{m_2}$ is even}
         \end{cases}\\
     &=& \begin{cases}
           0 & \mbox{if $\frac{m}{m_2}$ is odd and $c=0$,}  \\
           G(\eta,\chi_1)G(\eta_0,\phi_1)\eta(b)\eta_0(c) & \mbox{if $\frac{m}{m_2}$ is odd and $c\neq0$,}  \\
           (q^{m_2}-1)G(\eta,\chi_1)\eta(b), & \mbox{if $\frac{m}{m_2}$ is even and $c=0$,}\\
           -G(\eta,\chi_1)\eta(b), & \mbox{if $\frac{m}{m_2}$ is even and $c\neq0$.}
         \end{cases}
  \end{eqnarray*}
\end{proof}

\begin{lemma}\label{lem-Delta}
  Let $q$ be an odd prime power. Let $m, m_1, m_2$ be three positive integers with $m_2 \mid m_1 \mid m$ and $m \geq 3m_1$. Denote by
  \begin{equation*}
    \Delta(b,c)=\sum_{x \in \gf_{q^m}}\sum_{y \in \gf_{q^{m_1}}^*}\sum_{z \in \gf_{q^{m_2}}^*}\chi_1(zbx^2+yx)\phi_1(zc),
  \end{equation*}
  where $b \in \gf_{q^m}^*$ and $c \in \gf_{q^{m_2}}$.
  If $\frac{m}{m_2}$ is odd and $\frac{m_1}{m_2}$ is odd, then
  \begin{eqnarray*}
    \Delta(b,c) = \left\{\begin{array}{ll}
                   0 \qquad\qquad\qquad\qquad\qquad\qquad \text{if $\tr_{q^m/q^{m_1}}(b^{-1})=0$ and $c=0$,}\\
                   (q^{m_1}-1)G(\eta,\chi_1)G(\eta_0, \phi_1)\eta(b)\eta_0(c)\\ \qquad\qquad\qquad\qquad\qquad\qquad \;\text{if $\tr_{q^m/q^{m_1}}(b^{-1})=0$ and $c\neq0$,}\\
                   (q^{m_2}-1)G(\eta,\chi_1)G(\eta',\lambda_1)\eta(b)\eta'(-\tr_{q^m/q^{m_1}}(b^{-1}))\\ \qquad\qquad\qquad\qquad\qquad\qquad \;\text{if $\tr_{q^m/q^{m_1}}(b^{-1})\neq0$ and $c=0$,}\\
                   -G(\eta,\chi_1)G(\eta',\lambda_1)\eta(b)\eta'(-\tr_{q^m/q^{m_1}}(b^{-1}))-G(\eta,\chi_1)G(\eta_0,\phi_1)\eta(b)\eta_0(c)\\ \qquad\qquad\qquad\qquad\qquad\qquad \;\text{if $\tr_{q^m/q^{m_1}}(b^{-1})\neq 0$ and $c\neq 0$.}
                  \end{array}\right.
  \end{eqnarray*}
  If $\frac{m}{m_2}$ is even and $\frac{m_1}{m_2}$ is odd, then
  \begin{eqnarray*}
    \Delta(b,c) = \left\{\begin{array}{ll}
                   (q^{m_1}-1)(q^{m_2}-1)G(\eta,\chi_1)\eta(b) \qquad\qquad \text{if $\tr_{q^m/q^{m_1}}(b^{-1})=0$ and $c=0$,}\\
                   -(q^{m_1}-1)G(\eta,\chi_1)\eta(b)  \qquad\qquad\qquad \quad\text{if $\tr_{q^m/q^{m_1}}(b^{-1})=0$ and $c\neq0$,}\\
                   -(q^{m_2}-1)G(\eta,\chi_1)\eta(b)  \qquad\qquad\qquad \quad\text{if $\tr_{q^m/q^{m_1}}(b^{-1})\neq0$ and $c=0$,}\\
                   G(\eta,\chi_1)G(\eta',\lambda_1)G(\phi_1, \eta_0)\eta(b)\eta'(-\tr_{q^m/q^{m_1}}(b^{-1}))\eta_0(c)+G(\eta,\chi_1)\eta(b)\\ \qquad\qquad\qquad\qquad\qquad\qquad\qquad\qquad \quad\text{if $\tr_{q^m/q^{m_1}}(b^{-1})\neq 0$ and $c\neq 0$.}
                  \end{array}\right.
  \end{eqnarray*}
  If $\frac{m}{m_2}$ is even and $\frac{m_1}{m_2}$ is even, then
  \begin{eqnarray*}
    \Delta(b,c) = \left\{\begin{array}{ll}
                   (q^{m_1}-1)(q^{m_2}-1)G(\eta,\chi_1)\eta(b) \qquad\qquad \text{if $\tr_{q^m/q^{m_1}}(b^{-1})=0$ and $c=0$,}\\
                   -(q^{m_1}-1)G(\eta,\chi_1)\eta(b)  \qquad\qquad\qquad \quad\text{if $\tr_{q^m/q^{m_1}}(b^{-1})=0$ and $c\neq0$,}\\
                   (q^{m_2}-1)G(\eta,\chi_1)\eta(b)\left(G(\eta',\lambda_1)\eta'(-\tr_{q^m/q^{m_1}}(b^{-1}))-1\right)  \\ \qquad\qquad\qquad\qquad\qquad\qquad\qquad\qquad \quad\text{if $\tr_{q^m/q^{m_1}}(b^{-1})\neq0$ and $c=0$,}\\
                   -G(\eta,\chi_1)G(\eta',\lambda_1)\eta(b)\eta'(-\tr_{q^m/q^{m_1}}(b^{-1}))+G(\eta,\chi_1)\eta(b)\\ \qquad\qquad\qquad\qquad\qquad\qquad\qquad\qquad \quad\text{if $\tr_{q^m/q^{m_1}}(b^{-1})\neq 0$ and $c\neq 0$.}
                  \end{array}\right.
  \end{eqnarray*}
\end{lemma}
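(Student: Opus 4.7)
The plan is to evaluate the triple sum by interchanging the order of summation and applying the quadratic Weil sum formula of Lemma \ref{weilsum} twice, first over $\gf_{q^m}$ in the variable $x$ and then over $\gf_{q^{m_1}}$ in $y$, before finally reducing the outer $z$-sum by the orthogonality of characters. Fix $z\in\gf_{q^{m_2}}^*$ and $y\in\gf_{q^{m_1}}^*$. Since $b\neq 0$, the inner quadratic-in-$x$ sum is
$$\sum_{x\in\gf_{q^m}}\chi_1(zbx^2+yx)=\eta(zb)G(\eta,\chi_1)\,\chi_1\!\left(-y^2(4zb)^{-1}\right),$$
and using $\eta(zb)=\eta(z)\eta(b)$, this factorises $\Delta(b,c)$ as $G(\eta,\chi_1)\eta(b)$ times a double sum over $y$ and $z$.

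Next I would transport the $\chi_1$-value through the trace, exploiting $y\in\gf_{q^{m_1}}$ and $z\in\gf_{q^{m_1}}$ (which follows from $m_2\mid m_1$): for any $\alpha\in\gf_{q^m}$, $\chi_1(\alpha y^2)=\lambda_1(y^2\tr_{q^m/q^{m_1}}(\alpha))$, so writing $T:=\tr_{q^m/q^{m_1}}(b^{-1})\in\gf_{q^{m_1}}$ we obtain $\tr_{q^m/q^{m_1}}(-(4zb)^{-1})=-(4z)^{-1}T$. The $y$-sum becomes
$$\sum_{y\in\gf_{q^{m_1}}^{*}}\lambda_1\!\left(-(4z)^{-1}T\,y^2\right),$$
which equals $q^{m_1}-1$ when $T=0$ and, by Lemma \ref{weilsum} over $\gf_{q^{m_1}}$ together with $\eta'(4)=1$ and $\eta'(z^{-1})=\eta'(z)$, equals $\eta'(-T)\eta'(z)G(\eta',\lambda_1)-1$ when $T\neq 0$. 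This naturally splits the computation into the two cases $T=0$ and $T\neq 0$ in the lemma statement.

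What remains is the $z$-sum, which in either case has the shape $\sum_{z\in\gf_{q^{m_2}}^*}\eta(z)\phi_1(zc)$ or $\sum_{z\in\gf_{q^{m_2}}^*}\eta(z)\eta'(z)\phi_1(zc)$. I will handle these by noting the restriction behaviour of the quadratic characters: for $z\in\gf_{q^{m_2}}^*$, $\eta(z)=\eta_0(z)$ when $m/m_2$ is odd and $\eta(z)=1$ when $m/m_2$ is even, and analogously $\eta'(z)=\eta_0(z)$ when $m_1/m_2$ is odd and $\eta'(z)=1$ when $m_1/m_2$ is even. Combined with the orthogonality relation of additive characters and the Gauss sum evaluation $\sum_{z\in\gf_{q^{m_2}}^*}\eta_0(z)\phi_1(zc)=\eta_0(c)G(\eta_0,\phi_1)$ for $c\neq 0$, each of the three parity regimes $(m/m_2,m_1/m_2)\in\{(\text{odd},\text{odd}),(\text{even},\text{odd}),(\text{even},\text{even})\}$ produces exactly the four sub-expressions listed in the lemma (the missing regime $(\text{odd},\text{even})$ is impossible since $m/m_2=(m/m_1)(m_1/m_2)$).

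The computation is essentially mechanical once the two Weil-sum reductions are set up, so the main obstacle is purely organisational: keeping track of the four-case branching on $(T,c)\in\{=0,\neq 0\}^2$ in parallel with the three parity regimes and making sure that the Gaussian-sum signs and the character restrictions $\eta\!\mid_{\gf_{q^{m_2}}^*}$, $\eta'\!\mid_{\gf_{q^{m_2}}^*}$ are applied consistently. I would present the argument by first deriving a single master formula for $\Delta(b,c)$ in terms of the $T$-dichotomy and the two sums $S_1(c):=\sum_z\eta(z)\phi_1(zc)$ and $S_2(c):=\sum_z\eta(z)\eta'(z)\phi_1(zc)$, and then tabulate the values of $S_1,S_2$ in each parity regime to read off the twelve asserted identities.
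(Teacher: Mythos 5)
Your proposal is correct and follows essentially the same route as the paper's proof: apply Lemma \ref{weilsum} to the $x$-sum, use trace transitivity to pull $-y^2(4zb)^{-1}$ down to $\lambda_1\bigl(-\tfrac{y^2}{4z}\tr_{q^m/q^{m_1}}(b^{-1})\bigr)$, evaluate the $y$-sum via a second application of the Weil sum, and finish the $z$-sum using the restriction behaviour of $\eta,\eta'$ to $\gf_{q^{m_2}}^*$ together with orthogonality and the Gauss sum $G(\eta_0,\phi_1)$. Your reorganisation into a master formula with the two sums $S_1(c)$ and $S_2(c)$ tabulated per parity regime is merely a cleaner bookkeeping of the paper's case-by-case computation, and all twelve resulting values match the statement.
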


\begin{proof}
  By the orthogonal relation of additive characters and Lemma \ref{weilsum}, we have
  \begin{eqnarray*}
    \Delta(b, c) &=& \sum_{y \in \gf_{q^{m_1}}^*}\sum_{z \in \gf_{q^{m_2}}^*}\chi_1(-\frac{y^2}{4zb})\eta(zb)G(\eta,\chi_1)\phi_1(zc) \\
     &=& \sum_{y \in \gf_{q^{m_1}}^*}\sum_{z \in \gf_{q^{m_2}}^*}\lambda_1(-\frac{y^2}{4z}\tr_{q^m/q^{m_1}}(b^{-1}))\eta(zb)G(\eta,\chi_1)\phi_1(zc)\\
     &=& G(\eta,\chi_1)\eta(b)\sum_{y \in \gf_{q^{m_1}}^*}\sum_{z \in \gf_{q^{m_2}}^*}\lambda_1(-\frac{y^2}{4z}\tr_{q^m/q^{m_1}}(b^{-1}))\eta(z)\phi_1(zc).
  \end{eqnarray*}
  To calculate the value of $\Delta(b, c)$, we consider the following four cases:

  {Case 1:} Let $\tr_{q^m/q^{m_1}}(b^{-1})=0$ and $c=0$.
   \begin{eqnarray*}
    \Delta(b, c) &=& G(\eta,\chi_1)\eta(b)\sum_{y \in \gf_{q^{m_1}}^*}\sum_{z \in \gf_{q^{m_2}}^*}\eta(z)\\
    &=& \begin{cases}
          G(\eta,\chi_1)\eta(b)\sum\limits_{y \in \gf_{q^{m_1}}^*}\sum\limits_{z \in \gf_{q^{m_2}}^*}\eta_0(z) & \mbox{if $\frac{m}{m_2}$ is odd} \\
          (q^{m_1}-1)(q^{m_2}-1)G(\eta,\chi_1)\eta(b) & \mbox{if $\frac{m}{m_2}$ is even}
        \end{cases}\\
    &=& \begin{cases}
          0 & \mbox{if $\frac{m}{m_2}$ is odd,} \\
          (q^{m_1}-1)(q^{m_2}-1)G(\eta,\chi_1)\eta(b) & \mbox{if $\frac{m}{m_2}$ is even}.
        \end{cases}
  \end{eqnarray*}

  {Case 2:} Let $\tr_{q^m/q^{m_1}}(b^{-1})=0$ and $c\neq0$.
   \begin{eqnarray*}
    \Delta(b, c) &=& G(\eta,\chi_1)\eta(b)\sum_{y \in \gf_{q^{m_1}}^*}\sum_{z \in \gf_{q^{m_2}}^*}\eta(z)\phi_1(zc)\\
    &=& \begin{cases}
          G(\eta,\chi_1)\eta(b)\sum\limits_{y \in \gf_{q^{m_1}}^*}\sum\limits_{z \in \gf_{q^{m_2}}^*}\eta_0(z)\phi_1(zc) & \mbox{if $\frac{m}{m_2}$ is odd} \\
          G(\eta,\chi_1)\eta(b)\sum\limits_{y \in \gf_{q^{m_1}}^*}\sum\limits_{z \in \gf_{q^{m_2}}^*}\phi_1(zc) & \mbox{if $\frac{m}{m_2}$ is even}
        \end{cases}\\
    &=& \begin{cases}
          G(\eta,\chi_1)\eta(b)\eta_0(c)\sum\limits_{y \in \gf_{q^{m_1}}^*}\sum\limits_{z \in \gf_{q^{m_2}}^*}\eta_0(zc)\phi_1(zc) & \mbox{if $\frac{m}{m_2}$ is odd} \\
          -(q^{m_1}-1)G(\eta,\chi_1)\eta(b) & \mbox{if $\frac{m}{m_2}$ is even}
        \end{cases}\\
    &=& \begin{cases}
          (q^{m_1}-1)G(\eta,\chi_1)G(\eta_0, \phi_1)\eta(b)\eta_0(c) & \mbox{if $\frac{m}{m_2}$ is odd,} \\
          -(q^{m_1}-1)G(\eta,\chi_1)\eta(b) & \mbox{if $\frac{m}{m_2}$ is even}.
        \end{cases}
  \end{eqnarray*}

  {Case 3:} Let $\tr_{q^m/q^{m_1}}(b^{-1}) \neq 0$ and $c=0$.
   \begin{eqnarray*}
    \Delta(b, c) &=& G(\eta,\chi_1)\eta(b)\sum_{y \in \gf_{q^{m_1}}^*}\sum_{z \in \gf_{q^{m_2}}^*}\lambda_1(-\frac{y^2}{4z}\tr_{q^m/q^{m_1}}(b^{-1}))\eta(z)\\
    &=& G(\eta,\chi_1)\eta(b)\sum_{z \in \gf_{q^{m_2}}^*}\left(\sum_{y \in \gf_{q^{m_1}}}\lambda_1(-y^2z\tr_{q^m/q^{m_1}}(b^{-1}))\eta(z)-\eta(z)\right)\\
    &=& G(\eta,\chi_1)\eta(b)\sum_{z \in \gf_{q^{m_2}}^*}\left(G(\eta',\lambda_1)\eta'(-z\tr_{q^m/q^{m_1}}(b^{-1}))\eta(z)-\eta(z)\right)\\
    &=& \begin{cases}
          G(\eta,\chi_1)\eta(b)G(\eta',\lambda_1)\sum\limits_{z \in \gf_{q^{m_2}}^*}\eta'(-z\tr_{q^m/q^{m_1}}(b^{-1}))\eta_0(z)-0 & \mbox{if $\frac{m}{m_2}$ is odd} \\
          G(\eta,\chi_1)\eta(b)\sum\limits_{z \in \gf_{q^{m_2}}^*}\left(G(\eta',\lambda_1)\eta'(-z\tr_{q^m/q^{m_1}}(b^{-1}))-1\right) & \mbox{if $\frac{m}{m_2}$ is even}
        \end{cases}\\
    &=& \left\{\begin{array}{ll}
          (q^{m_2}-1)G(\eta,\chi_1)G(\eta',\lambda_1)\eta(b)\eta'(-\tr_{q^m/q^{m_1}}(b^{-1})) \\ \qquad\qquad\qquad\qquad\qquad\qquad\qquad\qquad \qquad\text{if $\frac{m}{m_2}$ is odd and $\frac{m_1}{m_2}$ is odd} \\
          G(\eta,\chi_1)\eta(b)\sum\limits_{z \in \gf_{q^{m_2}}^*}\left(G(\eta',\lambda_1)\eta_0(z)\eta'(-\tr_{q^m/q^{m_1}}(b^{-1}))-1\right) \\ \qquad\qquad\qquad\qquad\qquad\qquad\qquad\qquad\qquad \text{if $\frac{m}{m_2}$ is even and $\frac{m_1}{m_2}$ is odd}\\
          G(\eta,\chi_1)\eta(b)\sum\limits_{z \in \gf_{q^{m_2}}^*}\left(G(\eta',\lambda_1)\eta'(-\tr_{q^m/q^{m_1}}(b^{-1}))-1\right) \\ \qquad\qquad\qquad\qquad\qquad\qquad\qquad\qquad\qquad \text{if $\frac{m}{m_2}$ is even and $\frac{m_1}{m_2}$ is even}
        \end{array}\right.\\
    &=& \left\{\begin{array}{ll}
          (q^{m_2}-1)G(\eta,\chi_1)G(\eta',\lambda_1)\eta(b)\eta'(-\tr_{q^m/q^{m_1}}(b^{-1})) \\ \qquad\qquad\qquad\qquad\qquad\qquad\qquad\qquad \qquad\text{if $\frac{m}{m_2}$ is odd and $\frac{m_1}{m_2}$ is odd,} \\
          -(q^{m_2}-1)G(\eta,\chi_1)\eta(b) \qquad\qquad\qquad\qquad \;\text{if $\frac{m}{m_2}$ is even and $\frac{m_1}{m_2}$ is odd},\\
          (q^{m_2}-1)G(\eta,\chi_1)\eta(b)\left(G(\eta',\lambda_1)\eta'(-\tr_{q^m/q^{m_1}}(b^{-1}))-1\right) \\ \qquad\qquad\qquad\qquad\qquad\qquad\qquad\qquad \qquad\text{if $\frac{m}{m_2}$ is even and $\frac{m_1}{m_2}$ is even}.
         \end{array}\right.
  \end{eqnarray*}

  {Case 4:} Let $\tr_{q^m/q^{m_1}}(b^{-1}) \neq 0$ and $c\neq0$.
  \begin{eqnarray*}
    \Delta(b, c) &=& G(\eta,\chi_1)\eta(b)\sum_{y \in \gf_{q^{m_1}}^*}\sum_{z \in \gf_{q^{m_2}}^*}\lambda_1(-\frac{y^2}{4z}\tr_{q^m/q^{m_1}}(b^{-1}))\eta(z)\phi_1(zc)\\
    &=& G(\eta,\chi_1)\eta(b)\sum_{z \in \gf_{q^{m_2}}^*}\phi_1(zc)\left(\sum_{y \in \gf_{q^{m_1}}}\lambda_1(-\frac{y^2}{4z}\tr_{q^m/q^{m_1}}(b^{-1}))\eta(z)-\eta(z)\right)\\
    &=& G(\eta,\chi_1)\eta(b)\sum_{z \in \gf_{q^{m_2}}^*}\phi_1(zc)\left(G(\eta',\lambda_1)\eta'(-z\tr_{q^m/q^{m_1}}(b^{-1}))\eta(z)-\eta(z)\right)\\
    &=& G(\eta,\chi_1)G(\eta',\lambda_1)\eta(b)\eta'(-\tr_{q^m/q^{m_1}}(b^{-1}))\sum_{z \in \gf_{q^{m_2}}^*}\phi_1(zc)\eta'(z)\eta(z)\\&&-G(\eta,\chi_1)\eta(b)\sum_{z \in \gf_{q^{m_2}}^*}\phi_1(zc)\eta(z)\\
    &=& \left\{\begin{array}{ll}
         G(\eta,\chi_1)G(\eta',\lambda_1)\eta(b)\eta'(-\tr_{q^m/q^{m_1}}(b^{-1}))\sum\limits_{z \in \gf_{q^{m_2}}^*}\phi_1(zc)\eta_0(z)\eta_0(z)\\-G(\eta,\chi_1)\eta(b)\sum\limits_{z \in \gf_{q^{m_2}}^*}\phi_1(zc)\eta_0(z)\\
           \qquad\qquad\qquad\qquad \qquad\qquad\qquad\qquad\qquad\text{if $\frac{m}{m_2}$ is odd and $\frac{m_1}{m_2}$ is odd} \\
         G(\eta,\chi_1)G(\eta',\lambda_1)\eta(b)\eta'(-\tr_{q^m/q^{m_1}}(b^{-1}))\sum\limits_{z \in \gf_{q^{m_2}}^*}\phi_1(zc)\eta_0(z)\\-G(\eta,\chi_1)\eta(b)\sum\limits_{z \in \gf_{q^{m_2}}^*}\phi_1(zc)\\
           \qquad\qquad\qquad\qquad \qquad\qquad\qquad\qquad\qquad\text{if $\frac{m}{m_2}$ is even and $\frac{m_1}{m_2}$ is odd} \\
           G(\eta,\chi_1)G(\eta',\lambda_1)\eta(b)\eta'(-\tr_{q^m/q^{m_1}}(b^{-1}))\sum\limits_{z \in \gf_{q^{m_2}}^*}\phi_1(zc)\\-G(\eta,\chi_1)\eta(b)\sum\limits_{z \in \gf_{q^{m_2}}^*}\phi_1(zc)\\
           \qquad\qquad\qquad\qquad \qquad\qquad\qquad\qquad\qquad\text{if $\frac{m}{m_2}$ is even and $\frac{m_1}{m_2}$ is even}
        \end{array}\right.\\
    &=& \left\{\begin{array}{ll}
         -G(\eta,\chi_1)G(\eta',\lambda_1)\eta(b)\eta'(-\tr_{q^m/q^{m_1}}(b^{-1}))-G(\eta,\chi_1)G(\eta_0,\phi_1)\eta(b)\eta_0(c)\\
           \qquad\qquad\qquad\qquad \qquad\qquad\qquad\qquad\qquad\text{if $\frac{m}{m_2}$ is odd and $\frac{m_1}{m_2}$ is odd,} \\
         G(\eta,\chi_1)G(\eta',\lambda_1)G(\phi_1, \eta_0)\eta(b)\eta'(-\tr_{q^m/q^{m_1}}(b^{-1}))\eta_0(c)+G(\eta,\chi_1)\eta(b)\\
           \qquad\qquad\qquad\qquad \qquad\qquad\qquad\qquad\qquad\text{if $\frac{m}{m_2}$ is even and $\frac{m}{m_2}$ is odd,} \\
           -G(\eta,\chi_1)G(\eta',\lambda_1)\eta(b)\eta'(-\tr_{q^m/q^{m_1}}(b^{-1}))+G(\eta,\chi_1)\eta(b)\\
           \qquad\qquad\qquad\qquad \qquad\qquad\qquad\qquad\qquad\text{if $\frac{m}{m_2}$ is even and $\frac{m}{m_2}$ is even.}
        \end{array}\right.
  \end{eqnarray*}

  Summarizing the above discussions, the desire conclusions follow.
\end{proof}

\begin{theorem}\label{th-5.1}
  Let $q$ be an odd prime power. Let $m, m_1, m_2$ be three positive integers with $m_2 \mid m_1 \mid m$ and $m \geq 3m_1$. In particular, let $m > 3m_1$ when $\frac{m}{m_2}$ is odd and $\frac{m_1}{m_2}$ is odd. Denote by $l_1=\frac{(p-1)(m+m_2)e}{4}$, $l_2=\frac{(p-1)(m+3m_1)e}{4}$, $l_3=\frac{(p-1)em}{4}+1$, $l_4=\frac{(p-1)(m+3m_1+m_2)e}{4}+1$ and $l_5=\frac{(p-1)(m+m_1)e}{4}$. Then $\overline{\cC_D}$ is a self-orthogonal $q$-divisible code with parameters $[q^{m-m_1}, \frac{m}{m_2}+1, d]$, where
   \begin{eqnarray*}
     d &=& \begin{cases}
             q^{m-m_1}-q^{m-m_1-m_2}-q^{\frac{m-m_2}{2}}, & \mbox{if both $\frac{m}{m_2}$ and $\frac{m_1}{m_2}$ are odd,}  \\
             q^{m-m_1}-q^{m-m_1-m_2}-(q^{m_2}-1)q^{\frac{m-2m_2}{2}}, & \mbox{otherwise}.
           \end{cases}
   \end{eqnarray*}
   Besides, its weight distributions are listed in Tables \ref{tab5.1}, \ref{tab5.2}, \ref{tab5.31} and \ref{tab5.32} for four cases, respectively.
\end{theorem}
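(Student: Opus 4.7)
\noindent\textbf{Proof plan for Theorem \ref{th-5.1}.}

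The overall strategy is to reduce the theorem to Theorem \ref{th-selforthogonal}. Observe that $\mathbf{1}\in \overline{\cC_D}$ since it is obtained by taking $b=0$ and $c=1$. Thus, once we verify that $\overline{\cC_D}$ is $q$-divisible, self-orthogonality follows immediately from Theorem \ref{th-selforthogonal}. The length $q^{m-m_1}$ comes from $|D|=q^{m-m_1}$, which is immediate from the orthogonality relation of additive characters applied to the trace map $\tr_{q^m/q^{m_1}}$. The dimension $\tfrac{m}{m_2}+1$ will follow once we show that the map $(b,c)\mapsto \bc_{(b,c)}$ has trivial kernel, which is guaranteed by the positivity of the minimum weight established below.

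The core of the proof is the weight computation. For $\bc_{(b,c)}=(\tr_{q^m/q^{m_2}}(bx^2)+c)_{x\in D}$, write $\wt(\bc_{(b,c)})=q^{m-m_1}-N_0(b,c)$, where $N_0(b,c)=|\{x\in D:\tr_{q^m/q^{m_2}}(bx^2)+c=0\}|$. Applying orthogonality of additive characters twice, once on $\gf_{q^{m_2}}$ to detect the trace condition and once on $\gf_{q^{m_1}}$ to detect membership in $D$, yields
\begin{equation*}
N_0(b,c)=\frac{1}{q^{m_1+m_2}}\sum_{x\in\gf_{q^m}}\sum_{y\in\gf_{q^{m_1}}}\sum_{z\in\gf_{q^{m_2}}}\chi_1(zbx^2+yx)\phi_1(zc).
\end{equation*}
Split this sum according to the four cases $(y=0,z=0)$, $(y=0,z\neq 0)$, $(y\neq 0,z=0)$, $(y\neq 0,z\neq 0)$. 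The first case contributes $q^m$, the third vanishes. For $b\neq 0$ the second case is precisely the sum $\Omega(b,c)$ evaluated in Lemma \ref{lem-Omega} after applying Lemma \ref{weilsum}, and the fourth is the sum $\Delta(b,c)$ evaluated in Lemma \ref{lem-Delta} by completing the square in $x$ and again invoking Lemma \ref{weilsum}. This produces the closed form
\begin{equation*}
\wt(\bc_{(b,c)})=q^{m-m_1}-q^{m-m_1-m_2}-\frac{1}{q^{m_1+m_2}}\bigl(\Omega(b,c)+\Delta(b,c)\bigr)
\end{equation*}
for $b\neq 0$, and $\wt(\bc_{(0,c)})\in\{0,q^{m-m_1}\}$.

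Next I will determine the weight distribution by counting the frequencies of each value of the pair $(\Omega(b,c),\Delta(b,c))$. From Lemmas \ref{lem-Omega} and \ref{lem-Delta}, these values depend on $\eta(b)$, $\eta_0(c)$, and $\eta'(-\tr_{q^m/q^{m_1}}(b^{-1}))$. After the substitution $b'=b^{-1}$ (which preserves $\eta$), the number of $b\in\gf_{q^m}^*$ with $\eta(b)=1$ and $\tr_{q^m/q^{m_1}}(b^{-1})=a$ is given directly by Lemma \ref{lem-N}; the count for $\eta(b)=-1$ is obtained from the total count ($q^{m-m_1}-1$ if $a=0$, $q^{m-m_1}$ otherwise) minus the square count. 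Splitting into the four parity cases according to $m/m_2$ and $m_1/m_2$, and multiplying by the number of $c\in\gf_{q^{m_2}}$ realizing each value of $\eta_0(c)$, will produce the frequencies of Tables \ref{tab5.1}, \ref{tab5.2}, \ref{tab5.31} and \ref{tab5.32}. The explicit weights $l_1,\dots,l_5$ in the statement arise by substituting the evaluations of $G(\eta,\chi_1)$, $G(\eta',\lambda_1)$, $G(\eta_0,\phi_1)$ from Lemma \ref{Guassum} into the closed form above.

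Finally, since $|G(\eta,\chi_1)|=q^{m/2}$, $|G(\eta',\lambda_1)|=q^{m_1/2}$ and $|G(\eta_0,\phi_1)|=q^{m_2/2}$, the nontrivial corrections to $q^{m-m_1}-q^{m-m_1-m_2}$ are integer multiples of $q^{(m-m_2)/2}$ or $(q^{m_2}-1)q^{(m-2m_2)/2}$, depending on the case; under the hypothesis $m\geq 3m_1\geq 3m_2$, these are divisible by $q$, yielding $q$-divisibility of $\overline{\cC_D}$. The minimum distance is then read off the tables, giving the two formulas in the statement according to the parity cases. The main obstacle I anticipate is the bookkeeping in the frequency counts, particularly in the sub-case where both $m/m_2$ and $m_1/m_2$ are odd: here Lemma \ref{lem-Delta} has an extra ``cross term'' $G(\eta,\chi_1)G(\eta_0,\phi_1)\eta(b)\eta_0(c)$ which combines with the term coming from $\Omega(b,c)$ and requires the strict inequality $m>3m_1$ to guarantee the resulting weights are distinct and $q$-divisible; this accounts for the special hypothesis in that case.
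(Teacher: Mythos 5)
Your plan is correct and follows essentially the same route as the paper: express the weights via the character sum $S=q^m+\Omega(b,c)+\Delta(b,c)$ using Lemmas \ref{lem-Omega} and \ref{lem-Delta}, count frequencies with Lemma \ref{lem-N}, evaluate the Gaussian sums by Lemma \ref{Guassum} case by parity, and conclude self-orthogonality from $q$-divisibility and $\mathbf{1}\in\overline{\cC_D}$ via Theorem \ref{th-selforthogonal}. Your observation that the $G(\eta,\chi_1)G(\eta_0,\phi_1)$ contributions of $\Omega$ and $\Delta$ cancel in the doubly-odd case, leaving a correction of size $q^{(m-m_1-2m_2)/2}$ that forces the hypothesis $m>3m_1$ there, matches the paper's computation.
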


\begin{table}[h]
\begin{center}
\caption{The weight distribution of $\overline{\cC_{D}}$ in Theorem \ref{th-5.1} ($\frac{m}{m_2}$ is odd and $\frac{m_1}{m_2}$ is odd).}\label{tab5.1}
\begin{tabular}{@{}ll@{}}
\toprule%
Weight & Frequency  \\
\midrule
$0$ & $1$\\
$q^{m-m_1}$ & $q^{m_2}-1$ \\
$q^{m-m_1}-q^{m-m_1-m_2}$ & $q^{m-m_1}-1$\\
$q^{m-m_1}-q^{m-m_1-m_2}-(-1)^{l_1}q^{\frac{m-m_2}{2}}$ & $\frac{(q^{m_2}-1)(q^{m-m_1}-1)}{2}$\\
$q^{m-m_1}-q^{m-m_1-m_2}+(-1)^{l_1}q^{\frac{m-m_2}{2}}$ & $\frac{(q^{m_2}-1)(q^{m-m_1}-1)}{2}$\\
$q^{m-m_1}-q^{m-m_1-m_2}-(-1)^{l_2}(q^{m_2}-1)q^{\frac{m-m_1-2m_2}{2}}$ & $\frac{q^{m-m_1}+(-1)^{l_2}(q^{m_1}-1)q^{\frac{m-m_1}{2}}}{2}$ \\
$q^{m-m_1}-q^{m-m_1-m_2}+(-1)^{l_2}(q^{m_2}-1)q^{\frac{m-m_1-2m_2}{2}}$ & $\frac{q^{m-m_1}-(-1)^{l_2}(q^{m_1}-1)q^{\frac{m-m_1}{2}}}{2}$ \\
$q^{m-m_1}-q^{m-m_1-m_2}+(-1)^{l_2}q^{\frac{m-m_1-2m_2}{2}}$ & $\frac{(q^{m_2}-1)(q^{m-m_1}+(-1)^{l_2}(q^{m_1}-1)q^{\frac{m-m_1}{2}})}{2}$\\
$q^{m-m_1}-q^{m-m_1-m_2}-(-1)^{l_2}q^{\frac{m-m_1-2m_2}{2}}$ & $\frac{(q^{m_2}-1)(q^{m-m_1}-(-1)^{l_2}(q^{m_1}-1)q^{\frac{m-m_1}{2}})}{2}$\\
\bottomrule
\end{tabular}
\end{center}
\end{table}

\begin{table}[h]
\begin{center}
\caption{The weight distribution of $\overline{\cC_{D}}$ in Theorem \ref{th-5.1} ($\frac{m}{m_2}$ is even and $\frac{m_1}{m_2}$ is odd).}\label{tab5.2}
\begin{tabular}{@{}ll@{}}
\toprule%
Weight & Frequency  \\
\midrule
$0$ & $1$\\
$q^{m-m_1}$ & $q^{m_2}-1$ \\
$q^{m-m_1}-q^{m-m_1-m_2}$ & $q^{m}-q^{m-m_1}$\\
$q^{m-m_1}-q^{m-m_1-m_2}-(-1)^{l_3}(q^{m_2}-1)q^{\frac{m-2m_2}{2}}$ & $\frac{q^{m-m_1}-1+(-1)^{l_3}(q^{m_1}-1)q^{\frac{m-2m_1}{2}}}{2}$\\
$q^{m-m_1}-q^{m-m_1-m_2}+(-1)^{l_3}(q^{m_2}-1)q^{\frac{m-2m_2}{2}}$ & $\frac{q^{m-m_1}-1-(-1)^{l_3}(q^{m_1}-1)q^{\frac{m-2m_1}{2}}}{2}$\\
$q^{m-m_1}-q^{m-m_1-m_2}+(-1)^{l_3}q^{\frac{m-2m_2}{2}}$ & $\frac{(q^{m_2}-1)(q^{m-m_1}-1+(-1)^{l_3}(q^{m_1}-1)q^{\frac{m-2m_1}{2}})}{2}$\\
$q^{m-m_1}-q^{m-m_1-m_2}-(-1)^{l_3}q^{\frac{m-2m_2}{2}}$ & $\frac{(q^{m_2}-1)(q^{m-m_1}-1-(-1)^{l_3}(q^{m_1}-1)q^{\frac{m-2m_1}{2}})}{2}$\\
$q^{m-m_1}-q^{m-m_1-m_2}+(-1)^{l_4}q^{\frac{m-m_1-m_2}{2}}$ & $\frac{q^{m-m_1}(q^{m_2}-1)(q^{m_1}-1)}{2}$\\
$q^{m-m_1}-q^{m-m_1-m_2}-(-1)^{l_4}q^{\frac{m-m_1-m_2}{2}}$ & $\frac{q^{m-m_1}(q^{m_2}-1)(q^{m_1}-1)}{2}$\\
\bottomrule
\end{tabular}
\end{center}
\end{table}

\begin{table}[h]
\begin{center}
\caption{The weight distribution of $\overline{\cC_{D}}$ in Theorem \ref{th-5.1} ($\frac{m}{m_2}$ is even, $\frac{m_1}{m_2}$ is even and $\frac{m}{m_1}$ is odd ).}\label{tab5.31}
\begin{tabular}{@{}ll@{}}
\toprule%
Weight & Frequency  \\
\midrule
$0$ & $1$\\
$q^{m-m_1}$ & $q^{m_2}-1$ \\
$q^{m-m_1}-q^{m-m_1-m_2}-(-1)^{l_3}(q^{m_2}-1)q^{\frac{m-2m_2}{2}}$ & $\frac{q^{m-m_1}-1}{2}$\\
$q^{m-m_1}-q^{m-m_1-m_2}+(-1)^{l_3}(q^{m_2}-1)q^{\frac{m-2m_2}{2}}$ & $\frac{q^{m-m_1}-1}{2}$\\
$q^{m-m_1}-q^{m-m_1-m_2}+(-1)^{l_3}q^{\frac{m-2m_2}{2}}$ & $\frac{(q^{m-m_1}-1)(q^{m_2}-1)}{2}$\\
$q^{m-m_1}-q^{m-m_1-m_2}-(-1)^{l_3}q^{\frac{m-2m_2}{2}}$ & $\frac{(q^{m-m_1}-1)(q^{m_2}-1)}{2}$\\
$q^{m-m_1}-q^{m-m_1-m_2}-(-1)^{l_5}(q^{m_2}-1)q^{\frac{m-m_1-2m_2}{2}}$ & $\frac{(q^{m_1}-1)(q^{m-m_1}+(-1)^{l_5}q^{\frac{m-m_1}{2}})}{2}$\\
$q^{m-m_1}-q^{m-m_1-m_2}+(-1)^{l_5}(q^{m_2}-1)q^{\frac{m-m_1-2m_2}{2}}$ & $\frac{(q^{m_1}-1)(q^{m-m_1}-(-1)^{l_5}q^{\frac{m-m_1}{2}})}{2}$\\
$q^{m-m_1}-q^{m-m_1-m_2}+(-1)^{l_5}q^{\frac{m-m_1-2m_2}{2}}$ & $\frac{(q^{m_2}-1)(q^{m_1}-1)(q^{m-m_1}+(-1)^{l_5}q^{\frac{m-m_1}{2}})}{2}$\\
$q^{m-m_1}-q^{m-m_1-m_2}-(-1)^{l_5}q^{\frac{m-m_1-2m_2}{2}}$ & $\frac{(q^{m_2}-1)(q^{m_1}-1)(q^{m-m_1}-(-1)^{l_5}q^{\frac{m-m_1}{2}})}{2}$\\
\bottomrule
\end{tabular}
\end{center}
\end{table}

\begin{table}[h]
\begin{center}
\caption{The weight distribution of $\overline{\cC_{D}}$ in Theorem \ref{th-5.1} ($\frac{m}{m_2}$ is even, $\frac{m_1}{m_2}$ is even and $\frac{m}{m_1}$ is even ).}\label{tab5.32}
\begin{tabular}{@{}ll@{}}
\toprule%
Weight & Frequency  \\
\midrule
$0$ & $1$\\
$q^{m-m_1}$ & $q^{m_2}-1$ \\
$q^{m-m_1}-q^{m-m_1-m_2}-(-1)^{l_3}(q^{m_2}-1)q^{\frac{m-2m_2}{2}}$ & $\frac{q^{m-m_1}-1+(-1)^{l_3}(q^{m_1}-1)q^{\frac{m-2m_1}{2}}}{2}$\\
$q^{m-m_1}-q^{m-m_1-m_2}+(-1)^{l_3}(q^{m_2}-1)q^{\frac{m-2m_2}{2}}$ & $\frac{q^{m-m_1}-1-(-1)^{l_3}(q^{m_1}-1)q^{\frac{m-2m_1}{2}}}{2}$\\
$q^{m-m_1}-q^{m-m_1-m_2}+(-1)^{l_3}q^{\frac{m-2m_2}{2}}$ & $\frac{(q^{m_2}-1)(q^{m-m_1}-1+(-1)^{l_3}(q^{m_1}-1)q^{\frac{m-2m_1}{2}})}{2}$\\
$q^{m-m_1}-q^{m-m_1-m_2}-(-1)^{l_3}q^{\frac{m-2m_2}{2}}$ & $\frac{(q^{m_2}-1)(q^{m-m_1}-1-(-1)^{l_3}(q^{m_1}-1)q^{\frac{m-2m_1}{2}})}{2}$\\
$q^{m-m_1}-q^{m-m_1-m_2}-(-1)^{l_5}(q^{m_2}-1)q^{\frac{m-m_1-2m_2}{2}}$ & $\frac{q^{m-m_1}(q^{m_1}-1)}{2}$\\
$q^{m-m_1}-q^{m-m_1-m_2}+(-1)^{l_5}(q^{m_2}-1)q^{\frac{m-m_1-2m_2}{2}}$ & $\frac{q^{m-m_1}(q^{m_1}-1)}{2}$\\
$q^{m-m_1}-q^{m-m_1-m_2}+(-1)^{l_5}q^{\frac{m-m_1-2m_2}{2}}$ & $\frac{q^{m-m_1}(q^{m_1}-1)(q^{m_2}-1)}{2}$\\
$q^{m-m_1}-q^{m-m_1-m_2}-(-1)^{l_5}q^{\frac{m-m_1-2m_2}{2}}$ & $\frac{q^{m-m_1}(q^{m_1}-1)(q^{m_2}-1)}{2}$\\
\bottomrule
\end{tabular}
\end{center}
\end{table}

\begin{proof}
  It is obviously that $n=| D |=q^{m-m_1}$. For any codeword $\bc_{(b,c)}=(\tr_{q^m/q^{m_2}}(bx^2)+c)_{x \in D} \in \overline{\cC_{D}}$, by the orthogonal relation of additive characters, the Hamming weight of $\bc_{(b,c)}$ is
  \begin{eqnarray}\label{eq-wt}
    \text{wt}(\bc_{(b,c)}) = \begin{cases}
                                 0 & \mbox{if $b=0$ and $c=0$,} \\
                                 q^{m-m_1} & \mbox{if $b=0$ and $c \neq 0$,} \\
                                 q^{m-m_1}-\frac{1}{q^{m_1+m_2}}\sum\limits_{x \in \gf_{q^m}}\sum\limits_{y \in \gf_{q^{m_1}}}\sum\limits_{z \in \gf_{q^{m_2}}}\chi_1(zbx^2+yx)\phi_1(zc) & \mbox{if $b \neq 0$}.
                               \end{cases}
  \end{eqnarray}

  Denote by $S=\sum\limits_{x \in \gf_{q^m}}\sum\limits_{y \in \gf_{q^{m_1}}}\sum\limits_{z \in \gf_{q^{m_2}}}\chi_1(zbx^2+yx)\phi_1(zc)$.
  Then by the orthogonal relation of additive characters,
  \begin{eqnarray*}
    S &=& q^m+\sum\limits_{y \in \gf_{q^{m_1}}^*}\sum\limits_{x \in \gf_{q^m}}\chi_1(yx)+\sum\limits_{z \in \gf_{q^{m_2}}^*}\sum\limits_{x \in \gf_{q^m}}\chi_1(zbx^2)\phi_1(zc)\\
    & &+\sum\limits_{y \in \gf_{q^{m_1}}^*}\sum\limits_{z \in \gf_{q^{m_2}}^*}\sum\limits_{x \in \gf_{q^m}}\chi_1(zbx^2+yx)\phi_1(zc)\\
    &=& q^m+\Omega(b,c)+\Delta(b,c),
  \end{eqnarray*}
  where $\Omega(b,c)$ and $\Delta(b,c)$ are defined in Lemmas \ref{lem-Omega} and \ref{lem-Delta}, respectively. By Lemmas \ref{lem-Omega} and \ref{lem-Delta}, the value of $S$ directly follows.
  Substituting the value of $S$ into Equation (\ref{eq-wt}) yields that
  \begin{eqnarray*}
    \text{wt}(\bc_{(b,c)}) &=& \left\{\begin{array}{lll}
                                0 \qquad\qquad \qquad\mbox{if $b=c=0$,}  \\
                                q^{m-m_1} \qquad \qquad\mbox{if $b=0$ and $c \neq 0$,}  \\
                                q^{m-m_1}-q^{m-m_1-m_2}\quad \mbox{if $b\neq0$, $c = 0$ and $\tr_{q^m/q^{m_1}}(b^{-1})=0$,}  \\
                                q^{m-m_1}-q^{m-m_1-m_2}-\frac{G(\eta,\chi_1)G(\eta_0,\phi_1)}{q^{m_2}}\\ \quad\qquad \mbox{if $b\neq0$, $c \neq 0$, $\tr_{q^m/q^{m_1}}(b^{-1})=0$ and $\eta(b)\eta_0(c)=1$,}  \\
                                q^{m-m_1}-q^{m-m_1-m_2}+\frac{G(\eta,\chi_1)G(\eta_0,\phi_1)}{q^{m_2}}\\ \quad\qquad \mbox{if $b\neq0$, $c \neq 0$, $\tr_{q^m/q^{m_1}}(b^{-1})=0$ and $\eta(b)\eta_0(c)=-1$,}  \\
                                q^{m-m_1}-q^{m-m_1-m_2}-\frac{(q^{m_2}-1)G(\eta,\chi_1)G(\eta',\lambda_1)\eta(-1)}{q^{m_1+m_2}}\\ \quad\qquad \mbox{if $b\neq0$, $c = 0$, $\tr_{q^m/q^{m_1}}(b^{-1})\neq0$ and $\eta(b)\eta'(\tr_{q^m/q^{m_1}}(b^{-1}))=1$,}  \\
                                q^{m-m_1}-q^{m-m_1-m_2}+\frac{(q^{m_2}-1)G(\eta,\chi_1)G(\eta',\lambda_1)\eta(-1)}{q^{m_1+m_2}}\\ \quad\qquad \mbox{if $b\neq0$, $c = 0$, $\tr_{q^m/q^{m_1}}(b^{-1})\neq0$ and $\eta(b)\eta'(\tr_{q^m/q^{m_1}}(b^{-1}))=-1$,}  \\
                                q^{m-m_1}-q^{m-m_1-m_2}+\frac{G(\eta,\chi_1)G(\eta',\lambda_1)\eta(-1)}{q^{m_1+m_2}}\\ \quad\qquad \mbox{if $b\neq0$, $c \neq 0$, $\tr_{q^m/q^{m_1}}(b^{-1})\neq0$ and $\eta(b)\eta'(\tr_{q^m/q^{m_1}}(b^{-1}))=1$,}  \\
                                q^{m-m_1}-q^{m-m_1-m_2}-\frac{G(\eta,\chi_1)G(\eta',\lambda_1)\eta(-1)}{q^{m_1+m_2}}\\ \quad\qquad \mbox{if $b\neq0$, $c \neq 0$, $\tr_{q^m/q^{m_1}}(b^{-1})\neq0$ and $\eta(b)\eta'(\tr_{q^m/q^{m_1}}(b^{-1}))=-1$,}
                               \end{array} \right.
  \end{eqnarray*}
  for odd $\frac{m}{m_2}$ and odd $\frac{m_1}{m_2}$,
  \begin{eqnarray*}
    \text{wt}(\bc_{(b,c)}) &=& \left\{\begin{array}{lll}
                                0 \qquad\qquad \qquad\mbox{if $b=c=0$,}  \\
                                q^{m-m_1} \qquad \qquad\mbox{if $b=0$ and $c \neq 0$,}  \\
                                q^{m-m_1}-q^{m-m_1-m_2}\quad \mbox{if $b\neq0$, $c = 0$ and $\tr_{q^m/q^{m_1}}(b^{-1})\neq0$,}  \\
                                q^{m-m_1}-q^{m-m_1-m_2}-\frac{(q^{m_2}-1)G(\eta,\chi_1)}{q^{m_2}}\\ \quad\qquad \mbox{if $b\neq0$, $c = 0$, $\tr_{q^m/q^{m_1}}(b^{-1})=0$ and $\eta(b)=1$,}  \\
                                q^{m-m_1}-q^{m-m_1-m_2}+\frac{(q^{m_2}-1)G(\eta,\chi_1)}{q^{m_2}}\\ \quad\qquad \mbox{if $b\neq0$, $c = 0$, $\tr_{q^m/q^{m_1}}(b^{-1})=0$ and $\eta(b)=-1$,}  \\
                                q^{m-m_1}-q^{m-m_1-m_2}+\frac{G(\eta,\chi_1)}{q^{m_2}}\\ \quad\qquad \mbox{if $b\neq0$, $c \neq 0$, $\tr_{q^m/q^{m_1}}(b^{-1})=0$ and $\eta(b)=1$,}  \\
                                q^{m-m_1}-q^{m-m_1-m_2}-\frac{G(\eta,\chi_1)}{q^{m_2}}\\ \quad\qquad \mbox{if $b\neq0$, $c \neq 0$, $\tr_{q^m/q^{m_1}}(b^{-1})=0$ and $\eta(b)=-1$,}  \\
                                q^{m-m_1}-q^{m-m_1-m_2}+\frac{G(\eta,\chi_1)G(\eta',\lambda_1)G(\eta_0,\phi_1)\eta(-1)}{q^{m_1+m_2}}\\ \quad\qquad \mbox{if $b\neq0$, $c \neq 0$, $\tr_{q^m/q^{m_1}}(b^{-1})\neq0$ and $\eta(b)\eta'(\tr_{q^m/q^{m_1}}(b^{-1}))\eta_0(c)=1$,}  \\
                                q^{m-m_1}-q^{m-m_1-m_2}-\frac{G(\eta,\chi_1)G(\eta',\lambda_1)G(\eta_0,\phi_1)\eta(-1)}{q^{m_1+m_2}}\\ \quad\qquad \mbox{if $b\neq0$, $c \neq 0$, $\tr_{q^m/q^{m_1}}(b^{-1})\neq0$ and $\eta(b)\eta'(\tr_{q^m/q^{m_1}}(b^{-1}))\eta_0(c)=-1$,}  \\
                               \end{array} \right.
  \end{eqnarray*}
  for even $\frac{m}{m_2}$ and odd $\frac{m_1}{m_2}$,
  \begin{eqnarray*}
    \text{wt}(\bc_{(b,c)}) &=& \left\{\begin{array}{lll}
                                0 \qquad\qquad \qquad\mbox{if $b=c=0$,}  \\
                                q^{m-m_1} \qquad \qquad\mbox{if $b=0$ and $c \neq 0$,}  \\
                                q^{m-m_1}-q^{m-m_1-m_2}-\frac{(q^{m_2}-1)G(\eta,\chi_1)}{q^{m_2}}\\ \quad\qquad \mbox{if $b\neq0$, $c = 0$, $\tr_{q^m/q^{m_1}}(b^{-1})=0$ and $\eta(b)=1$,}  \\
                                q^{m-m_1}-q^{m-m_1-m_2}+\frac{(q^{m_2}-1)G(\eta,\chi_1)}{q^{m_2}}\\ \quad\qquad \mbox{if $b\neq0$, $c = 0$, $\tr_{q^m/q^{m_1}}(b^{-1})=0$ and $\eta(b)=-1$,}  \\
                                q^{m-m_1}-q^{m-m_1-m_2}+\frac{G(\eta,\chi_1)}{q^{m_2}}\\ \quad\qquad \mbox{if $b\neq0$, $c \neq 0$, $\tr_{q^m/q^{m_1}}(b^{-1})=0$ and $\eta(b)=1$,}  \\
                                q^{m-m_1}-q^{m-m_1-m_2}-\frac{G(\eta,\chi_1)}{q^{m_2}}\\ \quad\qquad \mbox{if $b\neq0$, $c \neq 0$, $\tr_{q^m/q^{m_1}}(b^{-1})=0$ and $\eta(b)=-1$,}  \\
                                q^{m-m_1}-q^{m-m_1-m_2}-\frac{(q^{m_2}-1)G(\eta,\chi_1)G(\eta',\lambda_1)\eta(-1)}{q^{m_1+m_2}}\\ \quad\qquad \mbox{if $b\neq0$, $c = 0$, $\tr_{q^m/q^{m_1}}(b^{-1})\neq0$ and $\eta(b)\eta'(\tr_{q^m/q^{m_1}}(b^{-1}))=1$,}  \\
                                q^{m-m_1}-q^{m-m_1-m_2}+\frac{(q^{m_2}-1)G(\eta,\chi_1)G(\eta',\lambda_1)\eta(-1)}{q^{m_1+m_2}}\\ \quad\qquad \mbox{if $b\neq0$, $c = 0$, $\tr_{q^m/q^{m_1}}(b^{-1})\neq0$ and $\eta(b)\eta'(\tr_{q^m/q^{m_1}}(b^{-1}))=-1$,}  \\
                                q^{m-m_1}-q^{m-m_1-m_2}+\frac{G(\eta,\chi_1)G(\eta',\lambda_1)\eta(-1)}{q^{m_1+m_2}}\\ \quad\qquad \mbox{if $b\neq0$, $c \neq 0$, $\tr_{q^m/q^{m_1}}(b^{-1})\neq0$ and $\eta(b)\eta'(\tr_{q^m/q^{m_1}}(b^{-1}))=1$,}  \\
                                q^{m-m_1}-q^{m-m_1-m_2}-\frac{G(\eta,\chi_1)G(\eta',\lambda_1)\eta(-1)}{q^{m_1+m_2}}\\ \quad\qquad \mbox{if $b\neq0$, $c \neq 0$, $\tr_{q^m/q^{m_1}}(b^{-1})\neq0$ and $\eta(b)\eta'(\tr_{q^m/q^{m_1}}(b^{-1}))=-1$,}  \\
                               \end{array} \right.
  \end{eqnarray*}
  for even $\frac{m}{m_2}$ and even $\frac{m_1}{m_2}$.

By the above discussions, we derive that the dimension of $\overline{\cC_D}$ is $\frac{m}{m_2}+1$ as the zero codeword in $\overline{\cC_D}$ occurs only once. In the following, we will determine the frequency of each nonzero Hamming weight of $\overline{\cC_D}$ in three cases:

  {Case 1:} Let $\frac{m}{m_2}$ be odd and $\frac{m_1}{m_2}$ be odd.

  Denote by $w_1=q^{m-m_1}$, $w_2=q^{m-m_1}-q^{m-m_1-m_2}$, $w_3=q^{m-m_1}-q^{m-m_1-m_2}-\frac{G(\eta,\chi_1)G(\eta_0,\phi_1)}{q^{m_2}}$, $w_4=q^{m-m_1}-q^{m-m_1-m_2}+\frac{G(\eta,\chi_1)G(\eta_0,\phi_1)}{q^{m_2}}$, $w_5=q^{m-m_1}-q^{m-m_1-m_2}-\frac{(q^{m_2}-1)G(\eta,\chi_1)G(\eta',\lambda_1)\eta(-1)}{q^{m_1+m_2}}$, $w_6=q^{m-m_1}-q^{m-m_1-m_2}+\frac{(q^{m_2}-1)G(\eta,\chi_1)G(\eta',\lambda_1)\eta(-1)}{q^{m_1+m_2}}$, $w_7=q^{m-m_1}-q^{m-m_1-m_2}+\frac{G(\eta,\chi_1)G(\eta',\lambda_1)\eta(-1)}{q^{m_1+m_2}}$ and $w_8=q^{m-m_1}-q^{m-m_1-m_2}-\frac{G(\eta,\chi_1)G(\eta',\lambda_1)\eta(-1)}{q^{m_1+m_2}}$.
Now we determine the frequency $A_{w_i}$, $1 \leq i \leq 8$. It is obvious that $A_{w_1}=q^{m_2}-1$ and $A_{w_2}=q^{m-m_1}-1$. Let $\gf_{q^m}^*=\langle\alpha\rangle$ and $\gf_{q^{m_1}}^*=\langle\beta\rangle$. Note that
\begin{eqnarray*}
  A_{w_3} &=& \left|\{(b,c) \in \gf_{q^m}\times\gf_{q^{m_2}}: \tr_{q^m/q^{m_1}}(b^{-1})=0, \eta(b)\eta_0(c)=1\}\right|\\
  &=& \frac{q^{m_2}-1}{2}\left|\{b \in \langle\alpha^2\rangle: \tr_{q^m/q^{m_1}}(b^{-1})=0\}\right| + \frac{q^{m_2}-1}{2}\left|\{b \in \alpha\langle\alpha^2\rangle: \tr_{q^m/q^{m_1}}(b^{-1})=0\}\right|\\
  &=& \frac{q^{m_2}-1}{2}\left|\{b \in \gf_{q^m}^*: \tr_{q^m/q^{m_1}}(b^{-1})=0\}\right|\\
  &=& \frac{(q^{m_2}-1)(q^{m-m_1}-1)}{2}.
\end{eqnarray*}
Then $A_{w_4}=(q^{m_2}-1)(q^{m-m_1}-1)-A_{w_3}=\frac{(q^{m_2}-1)(q^{m-m_1}-1)}{2}$. By Lemma \ref{lem-N}, we have
\begin{eqnarray*}
  A_{w_5} &=& \left| \{b \in \gf_{q^m}^*:\eta(b)\eta'(\tr_{q^m/q^{m_1}}(b^{-1}))=1\} \right|\\
  &=& \left| \{b \in \langle\alpha^2\rangle : \tr_{q^m/q^{m_1}}(b^{-1}) \in \langle\beta^2\rangle\} \right| + \left| \{b \in \alpha\langle\alpha^2\rangle: \tr_{q^m/q^{m_1}}(b^{-1}) \in \beta\langle\beta^2\rangle\} \right|\\
  &=& \frac{q^m+G(\eta,\chi_1)G(\eta',\lambda_1)\eta'(-1)}{2q^{m_1}} \cdot \frac{q^{m_1}-1}{2}+\\&&\left(q^{m-m_1}-\frac{q^m-G(\eta,\chi_1)G(\eta',\lambda_1)\eta'(-1)}{2q^{m_1}}\right) \cdot \frac{q^{m_1}-1}{2}\\
  &=& \frac{(q^m+G(\eta,\chi_1)G(\eta',\phi_1)\eta(-1))(q^{m_1}-1)}{2q^{m_1}}
\end{eqnarray*}
and
\begin{eqnarray*}
  A_{w_6} &=& \left| \{b \in \gf_{q^m}^*:\eta(b)\eta'(\tr_{q^m/q^{m_1}}(b^{-1}))=-1\} \right|\\
  &=& \left| \{b \in \langle\alpha^2\rangle : \tr_{q^m/q^{m_1}}(b^{-1}) \in \beta\langle\beta^2\rangle\} \right| + \left| \{b \in \alpha\langle\alpha^2\rangle: \tr_{q^m/q^{m_1}}(b^{-1}) \in \langle\beta^2\rangle\} \right|\\
  &=& \frac{q^m-G(\eta,\chi_1)G(\eta',\lambda_1)\eta'(-1)}{2q^{m_1}} \cdot \frac{q^{m_1}-1}{2}+\\&&\left(q^{m-m_1}-\frac{q^m+G(\eta,\chi_1)G(\eta',\lambda_1)\eta'(-1)}{2q^{m_1}}\right) \cdot \frac{q^{m_1}-1}{2}\\
  &=& \frac{(q^m-G(\eta,\chi_1)G(\eta',\phi_1)\eta(-1))(q^{m_1}-1)}{2q^{m_1}}.
\end{eqnarray*}

Obviously, $$A_{w_7}=(q^{m_2}-1)A_{w_5}=\frac{(q^{m_2}-1)(q^m+G(\eta,\chi_1)G(\eta',\phi_1)\eta(-1))(q^{m_1}-1)}{2q^{m_1}}$$ and $$A_{w_8}=(q^{m_2}-1)A_{w_6}=\frac{(q^{m_2}-1)(q^m-G(\eta,\chi_1)G(\eta',\phi_1)\eta(-1))(q^{m_1}-1)}{2q^{m_1}}.$$

In this case, by Lemma \ref{Guassum}, we have
\begin{eqnarray*}
  G(\eta,\chi_1)G(\eta_0,\phi_1)=(-1)^{l_1}q^{\frac{m+m_2}{2}} \mbox{ and } G(\eta,\chi_1)G(\eta',\lambda_1)\eta'(-1)=(-1)^{l_2}q^{\frac{m+m_1}{2}},
\end{eqnarray*}
where $l_1=\frac{(p-1)(m+m_2)e}{4}$ and $l_2=\frac{(p-1)(m+3m_1)e}{4}$.
Then the weight distribution of $\overline{\cC_D}$ for odd $\frac{m}{m_2}$ and odd $\frac{m_1}{m_2}$ is derived and is listed in Table \ref{tab5.1}.

{Case 2:} Let $\frac{m}{m_2}$ be even and $\frac{m_1}{m_2}$ be odd.

Denote by $w_1=q^{m-m_1}$, $w_2=q^{m-m_1}-q^{m-m_1-m_2}$, $w_3=q^{m-m_1}-q^{m-m_1-m_2}-\frac{(q^{m_2}-1)G(\eta,\chi_1)}{q^{m_2}}$, $w_4=q^{m-m_1}-q^{m-m_1-m_2}+\frac{(q^{m_2}-1)G(\eta,\chi_1)}{q^{m_2}}$, $w_5=q^{m-m_1}-q^{m-m_1-m_2}+\frac{G(\eta,\chi_1)}{q^{m_2}}$, $w_6=q^{m-m_1}-q^{m-m_1-m_2}-\frac{G(\eta,\chi_1)}{q^{m_2}}$, $w_7=q^{m-m_1}-q^{m-m_1-m_2}+\frac{G(\eta,\chi_1)G(\eta',\lambda_1)G(\eta_0,\phi_1)\eta(-1)}{q^{m_1+m_2}}$ and $w_8=q^{m-m_1}-q^{m-m_1-m_2}-\\ \frac{G(\eta,\chi_1)G(\eta',\lambda_1)G(\eta_0,\phi_1)\eta(-1)}{q^{m_1+m_2}}$. Now we determine the frequency $A_{w_i}$, $1 \leq i \leq 8$. It is obvious that $A_{w_1}=q^{m_2}-1$ and $A_{w_2}=q^{m}-q^{m-m_1}$. By Lemma \ref{lem-N}, we have
\begin{eqnarray*}
  A_{w_3} = \left|\{b \in \langle\alpha^2\rangle: \tr_{q^m/q^{m_1}}(b^{-1})=0\}\right|=\frac{q^m-q^{m_1}+(q^{m_1}-1)G(\eta,\chi_1)}{2q^{m_1}}
\end{eqnarray*}
and $A_{w_4}=q^{m-m_1}-1-A_{w_3}=\frac{q^m-q^{m_1}-(q^{m_1}-1)G(\eta,\chi_1)}{2q^{m_1}}$. Furthermore, we deduce that $A_{w_5}=(q^{m_2}-1)A_{w_3}=\frac{(q^{m_2}-1)(q^m-q^{m_1}+(q^{m_1}-1)G(\eta,\chi_1))}{2q^{m_1}}$ and $A_{w_6}=(q^{m_2}-1)A_{w_4}=\frac{(q^{m_2}-1)(q^m-q^{m_1}-(q^{m_1}-1)G(\eta,\chi_1))}{2q^{m_1}}$.
Note that
\begin{eqnarray*}
  A_{w_7} &=& \left|\{(b,c) \in \gf_{q^m}^* \times \gf_{q^{m_2}}^*: \eta(b)\eta'(\tr_{q^m/q^{m_1}}(b^{-1}))\eta_0(c)=1\}\right|\\
  &=& \frac{q^{m_2}-1}{2}\left|\{b \in \langle\alpha^2\rangle: \tr_{q^m/q^{m_1}}(b^{-1}) \in \langle\beta^2\rangle\}\right|\\
  & &+\frac{q^{m_2}-1}{2}\left|\{b \in \alpha\langle\alpha^2\rangle: \tr_{q^m/q^{m_1}}(b^{-1}) \in \beta\langle\beta^2\rangle\}\right|\\
  &&+\frac{q^{m_2}-1}{2}\left|\{b \in \langle\alpha^2\rangle: \tr_{q^m/q^{m_1}}(b^{-1}) \in \beta\langle\beta^2\rangle\}\right|\\
  &&+\frac{q^{m_2}-1}{2}\left|\{b \in \alpha\langle\alpha^2\rangle: \tr_{q^m/q^{m_1}}(b^{-1}) \in \langle\beta^2\rangle\}\right|\\
  &=&\frac{q^{m_2}-1}{2}\left|\{b \in \gf_{q^m}^*: \tr_{q^m/q^{m_1}}(b^{-1}) \neq 0\}\right|\\
  &=& \frac{q^{m-m_1}(q^{m_2}-1)(q^{m_1}-1)}{2}.
\end{eqnarray*}
By the same way, we also derive that $A_{w_8}=\frac{q^{m-m_1}(q^{m_2}-1)(q^{m_1}-1)}{2}$.

In this case, by Lemma \ref{Guassum}, we have
\begin{eqnarray*}
  G(\eta,\chi_1)=(-1)^{l_3}q^{\frac{m}{2}} \mbox{ and } G(\eta,\chi_1)G(\eta',\lambda_1)G(\eta_0,\phi_1)\eta'(-1)=(-1)^{l_4}q^{\frac{m+m_1+m_2}{2}},
\end{eqnarray*}
where $l_3=\frac{(p-1)em}{4}+1$ and $l_4=\frac{(p-1)(m+3m_1+m_2)e}{4}+1$.
Then the weight distribution of $\overline{\cC_D}$ for even $\frac{m}{m_2}$ and odd $\frac{m_1}{m_2}$ is derived and is listed in Table \ref{tab5.2}.

{Case 3:} Let $\frac{m}{m_2}$ be even and $\frac{m_1}{m_2}$ be even.

In this case, we should respectively consider two subcases for even $\frac{m}{m_1}$ and for odd $\frac{m}{m_1}$ to calculate the frequency of each nonzero Hamming weight of $\overline{\cC_D}$ by Lemma \ref{lem-N}.  By Lemma \ref{Guassum}, we have
\begin{eqnarray*}
  G(\eta,\chi_1)=(-1)^{l_3}q^{\frac{m}{2}} \mbox{ and } G(\eta,\chi_1)G(\eta',\lambda_1)\eta'(-1)=(-1)^{l_5}q^{\frac{m+m_1}{2}},
\end{eqnarray*}
where $l_3=\frac{(p-1)em}{4}+1$ and $l_5=\frac{(p-1)(m+m_1)e}{4}$.
Similarly to the discussions in Case 1 and Case 2 above, the weight distribution of $\overline{\cC_D}$ for even $\frac{m}{m_2}$ and even $\frac{m_1}{m_2}$ can be derived. It is listed in Tables \ref{tab5.31} and \ref{tab5.32} for two subcases.

By Tables \ref{tab5.1}, \ref{tab5.2}, \ref{tab5.31} and \ref{tab5.32}, we deduce that the augmented code $\overline{\cC_D}$ is $q$-divisible. Then by Lemma \ref{th-selforthogonal}, $\overline{\cC_D}$ is self-orthogonal.
\end{proof}

The locality of $\overline{\cC_D}$ is determined in the following theorem.
\begin{theorem}
 Let $q$ be an odd prime power. Let $m, m_1, m_2$ be three positive integers with $m_2 \mid m_1 \mid m$ and $m \geq 3m_1$. Then  $\overline{\cC_D}$ is a locally recoverable code with locality $2$.
\end{theorem}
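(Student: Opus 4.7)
The strategy is, for each coordinate $x \in D$ of $\overline{\cC_D}$, to exhibit a recovery set of size at most $2$. I split into two cases according to whether $x$ is zero or nonzero.

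For $x \in D$ with $x \neq 0$, note that $-x \in D$ (by additivity of $\tr_{q^m/q^{m_1}}$), that $-x \neq x$ (because $q$ is odd), and that $(-x)^2 = x^2$. Consequently, every codeword $\bc_{(b,c)}=\bigl(\tr_{q^m/q^{m_2}}(by^2)+c\bigr)_{y\in D}$ of $\overline{\cC_D}$ has identical entries at the positions $x$ and $-x$, so $\{-x\}$ is a recovery set of size $1$ for the coordinate at $x$.

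The substantive case is the coordinate at $x=0$, whose value in $\bc_{(b,c)}$ equals the augmentation scalar $c \in \gf_{q^{m_2}}$. I would seek $y_1,y_2\in D\setminus\{0\}$ and $\alpha_1,\alpha_2\in \gf_{q^{m_2}}$ satisfying
\[
\alpha_1 y_1^2+\alpha_2 y_2^2=0 \text{ in } \gf_{q^m}, \qquad \alpha_1+\alpha_2=1 \text{ in } \gf_{q^{m_2}};
\]
these two conditions (obtained by matching the $b$- and $c$-parts of $c=\alpha_1 c_{y_1}+\alpha_2 c_{y_2}$ and invoking non-degeneracy of $\tr_{q^m/q^{m_2}}$) are exactly what is needed for $\{y_1,y_2\}$ to be a valid recovery set. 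Since $m_2 \mid m_1$ yields $\gf_{q^{m_2}}\subseteq \gf_{q^{m_1}}$, and since $D$ is a $\gf_{q^{m_1}}$-subspace of $\gf_{q^m}$, multiplication by any $\lambda \in \gf_{q^{m_2}}$ preserves $D$; I would therefore fix any $y_1\in D\setminus\{0\}$, pick $\lambda\in \gf_{q^{m_2}}^*\setminus\{\pm 1\}$, and set $y_2=\lambda y_1$, at which point the system is solved by $\alpha_2=(1-\lambda^2)^{-1}$ and $\alpha_1=-\lambda^2\alpha_2$ in $\gf_{q^{m_2}}$.

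The main subtlety is the existence of the scalar $\lambda$, which requires $q^{m_2}\geq 4$; this holds in all cases except $q=3$, $m_2=1$, where one must instead draw $\lambda$ from the larger field $\gf_{q^{m_1}}$ with $\lambda^2\in \gf_{q^{m_2}}\setminus\{1\}$ and use a dimension count on $D\cap \lambda^{-1}D$ (valid because $m\geq 3m_1$) to extract a suitable nonzero $y_1$ with $\lambda y_1 \in D$. Combining the two cases, every coordinate admits a recovery set of size at most $2$; conversely, the column of the generator matrix at position $0$ equals $(0,\ldots,0,1)^T$, which is not a scalar multiple of any column at a nonzero position (since those columns have nonzero trace entries), so the locality at $0$ is exactly $2$. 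This establishes the claimed locality of $\overline{\cC_D}$.
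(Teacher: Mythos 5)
Your strategy coincides with the paper's own proof: for a nonzero coordinate $x$ the column at $-x$ is identical to the column at $x$ (so those symbols even have locality $1$), and for the coordinate at $0$ one seeks $y_2=\lambda y_1$ with $y_1\in D\setminus\{0\}$ and coefficients $\alpha_1,\alpha_2\in\gf_{q^{m_2}}$ solving $\alpha_1+\alpha_2=1$ and $\alpha_1+\alpha_2\lambda^2=0$. Your solution $\alpha_2=(1-\lambda^2)^{-1}$, $\alpha_1=-\lambda^2\alpha_2$ is exactly the paper's pair $u,v$ in different notation, and your observation that one needs $\lambda\in\gf_{q^{m_2}}^*\setminus\{\pm1\}$, available once $q^{m_2}\geq 5$, is handled more carefully than in the paper, whose choice of $u\in\gf_q\setminus\{0,1\}$ with $u/(u-1)$ a square in $\gf_q^*$ has no solution when $q=3$.

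The gap is in your patch for the remaining case $q=3$, $m_2=1$. There you need $\lambda^2=-1$, hence $\lambda\in\gf_9$. If $m$ is odd then $\gf_9\not\subseteq\gf_{3^m}$, so $\lambda y_1\notin\gf_{3^m}$ for every $y_1\neq 0$ and $D\cap\lambda^{-1}D=\{0\}$; the dimension count you invoke presupposes that $\lambda^{-1}D$ is an $\gf_3$-subspace of $\gf_{3^m}$, which it is not. (When $m$ is even the patch does work: $\lambda\in\gf_9\subseteq\gf_{3^m}$ and $\dim_{\gf_3}(D\cap\lambda^{-1}D)\geq m-2m_1>0$.) Moreover the failure is not reparable within locality $2$: for $q=3$, $m_2=1$ the only coefficient pair with $\alpha_1+\alpha_2=1$ and both entries nonzero is $\alpha_1=\alpha_2=2$, which forces $y_1^2+y_2^2=0$, i.e.\ $(y_1y_2^{-1})^2=-1$, and this has no solution in $\gf_{3^m}$ for odd $m$ since $4\nmid 3^m-1$. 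Hence the column $(1,0,\dots,0)^T$ at the coordinate $0$ is not a linear combination of any two other columns of $G$, the locality of that symbol is at least $3$, and the theorem as stated fails in this corner (e.g.\ $q=3$, $m_1=m_2=1$, $m=5$). The paper's proof breaks silently at the same point; a correct statement must either exclude the case $q=3$, $m_2=1$, $m$ odd or weaken the conclusion there.
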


\begin{proof}
  Let $\gf_{q^m}^*=\langle\alpha\rangle$. Then $\{\alpha^0, \alpha^1, \cdots, \alpha^{\frac{m}{m_2}-1}\}$ is a $\gf_{q^{m_2}}$-basis of $\gf_{q^m}$. Let $d_1, d_2, \cdots, d_{n-1},d_n$ be all the elements in $D$. For convenience, let  $d_n = 0$ due to $0\in \cC_D$.
By definition, the generator matrix $G$ of $\overline{\cC_D}$ is given by
\begin{eqnarray*}
G:=\left[
\begin{array}{cccc}
1&1&\cdots&1 \\
 \tr_{q^m/q^{m_2}}(\alpha^{0}d_{1}^2)& \tr_{q^m/q^{m_2}}(\alpha^{0}d_{2}^2)& \cdots &\tr_{q^m/q^{m_2}}(\alpha^{0}d_{n}^2) \\
\tr_{q^m/q^{m_2}}(\alpha^{1}d_{1}^2)& \tr_{q^m/q^{m_2}}(\alpha^{1}d_{2}^2)& \cdots &\tr_{q^m/q^{m_2}}(\alpha^{1}d_{n}^2) \\
\vdots &\vdots &\ddots &\vdots \\
\tr_{q^m/q^{m_2}}(\alpha^{\frac{m}{m_2}-1}d_{1}^2)& \tr_{q^m/q^{m_2}}(\alpha^{\frac{m}{m_2}-1}d_{2}^2)& \cdots &\tr_{q^m/q^{m_2}}(\alpha^{\frac{m}{m_2}-1}d_{n}^2)
\end{array}\right].
\end{eqnarray*}
For convenience, let $$\bg_i:=(1, \tr_{q^m/q^{m_2}}(\alpha^0d_i^2), \tr_{q^m/q^{m_2}}(\alpha^1 d_i^2), \cdots, \tr_{q^m/q^{m_2}}(\alpha^{\frac{m}{m_2}-1}d_i^2))^T,$$ where $1 \leq i \leq n$.
 Note that $ud_i\in D$ for any $u\in \gf_q^*$ if $d_i\in D$. For fixed $\bg_i$ $(1 \leq i \leq n-1)$, we select $d_j := -d_i\in D$. Then $1 \leq j\neq i \leq n-1$
 and $\bg_i=\bg_j$.
Let $\gf_q^*=\langle\beta\rangle$. For $\bg_n=(1,0,\cdots,0)$ and any $d_i$ with $1 \leq i \leq n-1$, there exist $u \in \gf_q\ \{0,1\}$ with $\frac{u}{u-1} \in \langle\beta^2\rangle$ and  $v:=1-u$ such that $d_j := (\frac{u}{u-1})^{\frac{1}{2}}d_i\in D$ and
$\bg_n=u\bg_i+v\bg_j$, i.e.
 \begin{eqnarray*}
\left\{\begin{array}{ccc}
1&=&u+v,\\
 0&=&u\tr_{q^m/q^{m_2}}(\alpha^{0}d_i^2)+v\tr_{q^m/q^{m_2}}(\alpha^{0}d_j^2), \\
  0&=&u\tr_{q^m/q^{m_2}}(\alpha^{1}d_i^2)+v\tr_{q^m/q^{m_2}}(\alpha^{1}d_j^2), \\
 & \vdots &\\
  0&=&u\tr_{q^m/q^{m_2}}(\alpha^{\frac{m}{m_2}-1}d_i^2)+v\tr_{q^m/q^{m_2}}(\alpha^{\frac{m}{m_2}-1}d_j^2).
\end{array}\right.
\end{eqnarray*}
Then $\overline{\cC_D}$ is a locally recoverable code with locality $2$ according to Definition \ref{Def-locality}.
\end{proof}

\subsection{The second family of self-orthogonal codes with locality $2$}\label{sec5.2}

In this subsection, let $p$ be an odd prime and $m$ be an positive integer with $m \geq 4$. Let $f(x) \in \mathcal{RF}$ be a $p$-ary weakly regular bent function from $\gf_{p^m}$ to $\gf_p$. Denote by the defining set $D_f=\{x \in \gf_{p^m}: f(x)=0\} $.
In \cite{T}, the authors defined a family of $p$-ary linear codes $\cC_{D_f}$ by
$$\cC_{D_f}=\left\{(\tr_{p^m/p}(bx))_{x \in D_f}:b \in \gf_{p^m}\right\}.$$
Then its augmented code is given by
$$\overline{\cC_{D_f}}=\left\{(\tr_{p^m/p}(bx))_{x \in D_f}+c\mathbf{1}:b \in \gf_{p^m}, c \in \gf_p\right\}, $$
where $\mathbf{1}$ is the all-$1$ vector of length $\mid D_f \mid$. In this subsection, we will first prove that the augmented code $\overline{\cC_{D_f}}$ is self-orthogonal and then determine the locality of $\overline{\cC_{D_f}}$.  The parameters and weight distribution of $\overline{\cC_{D_f}}$ were determined in \cite{Heng2023}.

\begin{lemma}\cite{Heng2023}\label{lem-5.2even}
  Let $m \geq 4$ be even and $f(x) \in \mathcal{RF}$ with $\varepsilon$ the sign of the Walsh transform of $f(x)$. Let $p^*=(-1)^{\frac{p-1}{2}}p$. Then $\overline{\cC_{D_f}}$ has parameters $[\frac{1}{p}(p^m+(p-1)\varepsilon\sqrt{p^*}^m), m+1]$ and its weight distribution is listed in Table \ref{tab5.3}. Besides, $\overline{\cC_{D_f}}^{\perp}$ is a $[\frac{1}{p}(p^m+(p-1)\varepsilon\sqrt{p^*}^m), \frac{1}{p}(p^m+(p-1)\varepsilon\sqrt{p^*}^m)- m-1, 3]$ linear code.
  \begin{table}[h]
\begin{center}
\caption{The weight distribution of $\overline{\cC_{D_f}}$ in Lemma \ref{lem-5.2even} ($m$ is even).}\label{tab5.3}
\begin{tabular}{@{}ll@{}}
\toprule%
Weight & Frequency  \\
\midrule
$0$ & $1$\\
$(p-1)p^{m-2}$ & $\frac{1}{p}(p^m+(p-1)\varepsilon\sqrt{p^*}^m)-1$ \\
$\frac{1}{p}((p-1)p^{m-1}+(p-2)\varepsilon\sqrt{p^*}^m)$ & $\frac{(p-1)^2}{p}(p^m-\varepsilon\sqrt{p^*}^m)$\\
$\frac{1}{p}(p-1)(p^{m-1}+\varepsilon\sqrt{p^*}^m)$ & $\frac{p-1}{p}(2p^m+(p-2)\varepsilon\sqrt{p^*}^m-p)$\\
$\frac{1}{p}(p^m+(p-1)\varepsilon\sqrt{p^*}^m)$ & $p-1$\\
\bottomrule
\end{tabular}
\end{center}
\end{table}
\end{lemma}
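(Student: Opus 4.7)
The plan is to proceed via a systematic character-sum analysis. First I would compute the length $n=|D_f|$ by orthogonality of additive characters:
\begin{equation*}
|D_f|=\frac{1}{p}\sum_{x\in\gf_{p^m}}\sum_{y\in\gf_p}\zeta_p^{yf(x)} = p^{m-1}+\frac{1}{p}\sum_{y\in\gf_p^*}\sum_{x\in\gf_{p^m}}\zeta_p^{yf(x)}.
\end{equation*}
Since $f\in\cRF$ with $h$ even and $\gcd(h-1,p-1)=1$, the substitution $x\mapsto a^{-1}x$ combined with the homogeneity $f(ax)=a^h f(x)$ reduces $\sum_x\zeta_p^{yf(x)}$ for $y=a^h$ to the Walsh value $W_f(0)=\varepsilon\sqrt{p^*}^m$. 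Grouping the $p-1$ nonzero $y$ into cosets of the $h$-th powers and using the even parity of $m$ (so $\sqrt{p^*}^m$ is a rational integer), the outer sum collapses to $(p-1)\varepsilon\sqrt{p^*}^m$, giving $n=\frac{1}{p}(p^m+(p-1)\varepsilon\sqrt{p^*}^m)$.

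For the weight distribution, fix $\bc_{(b,c)}=(\tr_{p^m/p}(bx)+c)_{x\in D_f}$ with $b\in\gf_{p^m}$, $c\in\gf_p$, and set $N(b,c)=|\{x\in D_f:\tr_{p^m/p}(bx)+c=0\}|$, so $\wt(\bc_{(b,c)})=n-N(b,c)$. Double orthogonality yields
\begin{equation*}
p^2 N(b,c)=\sum_{x\in\gf_{p^m}}\sum_{y,z\in\gf_p}\zeta_p^{yf(x)+z\tr_{p^m/p}(bx)+zc}.
\end{equation*}
Splitting over $y=0$ vs $y\ne 0$ and $z=0$ vs $z\ne 0$, the inner $x$-sum evaluates to $p^m$, $0$, $W_{yf}(0)$, or $W_{yf}(-zb)$ in the four subcases. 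Invoking $W_f(\beta)=\varepsilon\sqrt{p^*}^m\zeta_p^{f^*(\beta)}$ together with the $\cRF$-property of $f^*$, the double sum collapses to a closed form whose value depends only on whether $b=0$ and (when $b\ne 0$) on whether $f^*$ evaluated at a suitable scalar multiple of $b$ hits a prescribed value as $y,z$ range.

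Partitioning codewords into $b=0$, $b\ne 0$ with vanishing $f^*$-contribution, and $b\ne 0$ with nonvanishing $f^*$-contribution, each subdivided by $c\in\gf_p$, recovers exactly the five nonzero weight classes listed in Table~\ref{tab5.3}. Their frequencies come from the value distribution $|\{b\in\gf_{p^m}:f^*(b)=i\}|$ for $i\in\gf_p$, which is itself determined by the same character-sum technique applied to the bent function $f^*$. The dimension equals $m+1$ because the character-sum formula shows that no nonzero pure-trace codeword of $\cC_{D_f}$ is constant, so the all-$1$ row is linearly independent of the $\gf_p$-span of $\{(\tr_{p^m/p}(bx))_{x\in D_f}\}_b$. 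Finally, for $d^\perp$, substituting the weight distribution into the first three Pless power moments (Lemma~\ref{lem-Pless}) yields $A_1^\perp=A_2^\perp=0$ and $A_3^\perp>0$, whence $d^\perp=3$.

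The main obstacle is the character-sum reduction in the second step: simplifying $W_{yf}(-zb)$ uniformly in $(y,z)\in(\gf_p^*)^2$ requires tracking cosets of $h$-th powers in $\gf_p^*$, the sign $\varepsilon$, the factor $(-1)^{(p-1)/2}$ hidden inside $\sqrt{p^*}$, and the homogeneity of $f^*$ simultaneously. The even-$m$ hypothesis is what makes $\sqrt{p^*}^m$ a rational integer and ultimately limits the distinct weights to five, permitting the explicit table; once this reduction is done, the frequency count and the Pless computation for $d^\perp$ are routine.
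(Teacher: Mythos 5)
The paper offers no proof of this statement to compare against: Lemma \ref{lem-5.2even} is imported verbatim from \cite{Heng2023}, and the present paper only cites it. Your outline is the standard character-sum argument used in that reference (and in \cite{T}), and as a plan it is sound: the length via orthogonality of additive characters, the weight of $\bc_{(b,c)}$ via the double sum $p^2N(b,c)=\sum_{y,z\in\gf_p}\sum_{x}\zeta_p^{yf(x)+z\tr_{p^m/p}(bx)+zc}$, the reduction of the $y\neq 0$, $z\neq 0$ block to Walsh values and the value distribution of $f^*$, and the dual distance from power moments.

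Two steps need more care than you give them. First, the identity $\sum_x\zeta_p^{yf(x)}=\varepsilon\sqrt{p^*}^{\,m}$ for \emph{every} $y\in\gf_p^*$ does not follow from the substitution $x\mapsto a^{-1}x$ alone: since $h$ is even and $p-1$ is even, the $h$-th powers form a proper subgroup of $\gf_p^*$, so for $y$ outside that subgroup you must apply the Galois automorphism $\sigma_y:\zeta_p\mapsto\zeta_p^y$ to $W_f(0)$ and use that $\sqrt{p^*}^{\,m}=(p^*)^{m/2}$ is a rational integer, hence Galois-invariant; this is precisely where the evenness of $m$ enters, and your phrase about ``grouping into cosets'' gestures at it without supplying the step. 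The same remark applies to evaluating $W_{yf}(-zb)$ uniformly in $y$. Second, the first three Pless power moments in Lemma \ref{lem-Pless} determine only $A_1^{\perp}$ and $A_2^{\perp}$; to certify $A_3^{\perp}>0$ and hence $d^{\perp}=3$ you need the fourth moment (or a direct exhibition of three linearly dependent columns of a generator matrix). Relatedly, your dimension argument shows $\mathbf{1}\notin\cC_{D_f}$ (immediate since $0\in D_f$ forces a zero coordinate on pure-trace codewords) but the injectivity of $b\mapsto(\tr_{p^m/p}(bx))_{x\in D_f}$ also has to be extracted from the weight computation. None of these is a wrong turn, but each is a place where the outline as written is not yet a proof.
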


\begin{lemma}\cite{Heng2023}\label{lem-5.2odd}
  Let $m > 3$ be odd and $f(x) \in \mathcal{RF}$ with $\varepsilon$ the sign of the Walsh transform of $f(x)$. Let $p^*=(-1)^{\frac{p-1}{2}}p$ and $\eta_0$ be the  quadratic multiplicative characters of $\gf_p$. Then $\overline{\cC_{D_f}}$ has parameters $[p^{m-1}, m+1]$ and its weight distribution is listed in Table \ref{tab5.4}. Besides, $\overline{\cC_{D_f}}^{\perp}$ is a $[p^{m-1}, p^{m-1}-m-1, 3]$ linear code.
  \begin{table}[h]
\begin{center}
\caption{The weight distribution of $\overline{\cC_{D_f}}$ in Lemma \ref{lem-5.2odd} ($m$ is odd).}\label{tab5.4}
\begin{tabular}{@{}ll@{}}
\toprule%
Weight & Frequency  \\
\midrule
$0$ & $1$\\
$\frac{p-1}{p^2}(p^m-\varepsilon\sqrt{p^*}^{m+1})$ & $\frac{p-1}{2}(p^{m-1}+\eta_0(-1)\varepsilon\sqrt{p^*}^{m-1})$\\
$\frac{1}{p^2}((p-1)p^m-\varepsilon\sqrt{p^*}^{m+1})$ & $\frac{(p-1)^2}{2}(p^{m-1}-\eta_0(-1)\varepsilon\sqrt{p^*}^{m-1})$\\
$(p-1)p^{m-2}$ & $p(p^{m-1}-1)$ \\
$\frac{1}{p^2}((p-1)p^m+\varepsilon\sqrt{p^*}^{m+1})$ & $\frac{(p-1)^2}{2}(p^{m-1}+\eta_0(-1)\varepsilon\sqrt{p^*}^{m-1})$\\
$\frac{p-1}{p^2}(p^m+\varepsilon\sqrt{p^*}^{m+1})$ & $\frac{p-1}{2}(p^{m-1}-\eta_0(-1)\varepsilon\sqrt{p^*}^{m-1})$\\
$p^{m-1}$ & $p-1$\\
\bottomrule
\end{tabular}
\end{center}
\end{table}
\end{lemma}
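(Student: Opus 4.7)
The plan is to mirror the strategy used for even $m$ in Lemma \ref{lem-5.2even}, but track the extra factor of $\sqrt{p^*}$ that appears when $m$ is odd and the sign $\eta_0(-1)$ coming from the quadratic character on $\gf_p$.

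First I would determine the length $n=|D_f|$. By the orthogonal relation,
\[
|D_f|=\frac{1}{p}\sum_{x\in\gf_{p^m}}\sum_{y\in\gf_p}\zeta_p^{yf(x)}
=p^{m-1}+\frac{1}{p}\sum_{y\in\gf_p^{*}}\sum_{x\in\gf_{p^m}}\zeta_p^{yf(x)}.
\]
For $f\in\cRF$ with $f(ax)=a^h f(x)$ and $\gcd(h-1,p-1)=1$, the substitution $u=zx$ with $z^{h-1}=y$ turns $\sum_x\zeta_p^{yf(x)}$ into a Walsh value at $0$. Combined with (\ref{eq-Wf}) and the fact that $\sum_{y\in\gf_p^{*}}\eta_0(y)^{m}=0$ when $m$ is odd, the inner sum cancels and one gets $n=p^{m-1}$.

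Next, for a codeword $\bc_{(b,c)}=(\tr_{p^m/p}(bx)+c)_{x\in D_f}$ with $(b,c)\neq(0,0)$, I would write its Hamming weight as $n-N(b,c)$, where
\[
N(b,c)=\frac{1}{p^{2}}\sum_{x\in\gf_{p^m}}\sum_{y\in\gf_p}\zeta_p^{yf(x)}\sum_{z\in\gf_p}\zeta_p^{z(\tr_{p^m/p}(bx)+c)}.
\]
Splitting the sum over $(y,z)\in\gf_p\times\gf_p$ into four pieces according to whether $y,z$ vanish, the only nontrivial piece is
\[
S(b,c):=\sum_{y,z\in\gf_p^{*}}\zeta_p^{zc}\sum_{x\in\gf_{p^m}}\zeta_p^{yf(x)+\tr_{p^m/p}(zbx)}.
\]
Using the homogeneity of $f$ and Equation~(\ref{eq-Wf}) for the dual $f^{*}$, I would obtain a closed form of the shape $\varepsilon\sqrt{p^{*}}^{m}\eta_0(y)^{m}\zeta_p^{\bar{y}f^{*}(-\bar{y}zb)}$ (with $\bar{y}$ the appropriate power of $y$), after which the sum over $y,z$ reduces to a quadratic Gaussian sum evaluated by Lemma~\ref{weilsum}. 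Since $m$ is odd, $\sqrt{p^{*}}^{m}\cdot G(\eta_0,\phi_1)=\eta_0(\cdot)\sqrt{p^{*}}^{m+1}$ by Lemma~\ref{Guassum}, which explains the appearance of $\varepsilon\sqrt{p^{*}}^{m+1}$ in Table~\ref{tab5.4}.

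Then I would bookkeep the frequencies. The weight is determined by two binary parameters: whether $f^{*}(-b/\lambda)=0$ for a fixed $\lambda$, and the sign $\eta_0$ of a combination involving $b$ and $c$. Counting $b\in\gf_{p^m}^{*}$ with $f^{*}(b)=0$ via the Walsh transform of $f^{*}$ (which is again weakly regular with sign $\eta_0(-1)^{m}\varepsilon=\eta_0(-1)\varepsilon$ for odd $m$), I would recover each row of Table~\ref{tab5.4}; the row of weight $p^{m-1}$ with multiplicity $p-1$ is the contribution $b=0$, $c\neq 0$, and the row of weight $(p-1)p^{m-2}$ with multiplicity $p(p^{m-1}-1)$ aggregates the $c=0$ contributions. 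This step is also how the dimension $m+1$ is verified: every $(b,c)\neq(0,0)$ yields a nonzero codeword.

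Finally, for the dual distance, I would apply the first three Pless power moments from Lemma~\ref{lem-Pless} to the weight distribution of $\overline{\cC_{D_f}}$. Setting up the linear system in $A_1^{\perp},A_2^{\perp},A_3^{\perp}$ and solving would give $A_1^{\perp}=A_2^{\perp}=0$ and $A_3^{\perp}>0$, proving $d(\overline{\cC_{D_f}}^{\perp})=3$ and the claimed parameters $[p^{m-1},p^{m-1}-m-1,3]$.

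The main obstacle I expect is the character sum evaluation of $S(b,c)$: pulling $y$ out of $\zeta_p^{yf(x)}$ requires a careful use of the $\gcd(h-1,p-1)=1$ hypothesis, and the odd-$m$ case produces extra $\eta_0$ factors that must be tracked consistently so that the final table matches. The rest is bookkeeping and applying Lemmas~\ref{Guassum} and~\ref{weilsum}.
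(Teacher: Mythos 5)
This lemma is imported verbatim from \cite{Heng2023}: the paper states it with a citation and offers no proof of its own (the surrounding text only says the parameters and weight distribution ``were determined in \cite{Heng2023}''), so there is no in-paper argument to compare you against. Judged on its own terms, your plan reconstructs the standard proof correctly: the length via $\sum_{y\in\gf_p^*}\eta_0^m(y)=0$ for odd $m$, the weight of $\bc_{(b,c)}$ via the double character sum $S(b,c)$, evaluation of the inner sum through the weak regularity of $f$ and its dual, the extra $G(\eta_0,\phi_1)=\sqrt{p^*}$ accounting for the jump from $\sqrt{p^*}^m$ to $\sqrt{p^*}^{m+1}$ in Table \ref{tab5.4}, the sign $\eta_0(-1)\varepsilon$ of $W_{f^*}$ entering the frequencies, and Pless power moments for $d^\perp=3$. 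This is exactly the route taken in \cite{Heng2023} and in \cite{T}. Two small imprecisions worth fixing if you write this out: (i) the substitution $u=zx$ with $z^{h-1}=y$ by itself only rescales the exponent to $z^{-1}f(u)$; you still need the Galois-conjugation step $\sigma_z(\sqrt{p^*})=\eta_0(z)\sqrt{p^*}$ applied to $W_f(\beta)=\varepsilon\sqrt{p^*}^m\zeta_p^{f^*(\beta)}$ to extract the $\eta_0^m(y)$ factor, so state that explicitly rather than claiming the substitution alone produces a Walsh value at $0$; (ii) to solve for $A_1^{\perp},A_2^{\perp},A_3^{\perp}$ you need the second through fourth Pless power moments (the first moment contains no dual data), not the ``first three.'' Neither point is a gap in the approach, only in the bookkeeping.
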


\begin{theorem}
 Let $p$ ba an odd prime, $m > 3$ be an integer and $f(x) \in \mathcal{RF}$ with $\varepsilon$ the sign of the Walsh transform of $f(x)$. Then $\overline{\cC_{D_f}}$ is a self-orthogonal $p$-divisible code. Besides, $\overline{\cC_{D_f}}$ is a locally recoverable code with locality $2$.
\end{theorem}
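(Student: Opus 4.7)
The plan has two parts: first establish self-orthogonality via Theorem \ref{th-selforthogonal}, then exhibit an explicit locality-$2$ repair scheme using the $\gf_p^*$-invariance of $D_f$. By construction $\mathbf{1}\in\overline{\cC_{D_f}}$ (take $b=0$, $c=1$), so it suffices to verify that $\overline{\cC_{D_f}}$ is $p$-divisible, which reduces to inspecting the nonzero weights listed in Lemmas \ref{lem-5.2even} and \ref{lem-5.2odd}.

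For the divisibility check I would use the standard fact that $\sqrt{p^*}^m=\pm p^{m/2}$ when $m$ is even and $\sqrt{p^*}^{m+1}=\pm p^{(m+1)/2}$ when $m$ is odd. Substituting these into Table \ref{tab5.3} shows every weight in the even case is a $\gf_p$-integer combination of $p^{m-1}$, $p^{m-2}$, and $p^{m/2-1}$; substituting into Table \ref{tab5.4} shows every weight in the odd case is a $\gf_p$-integer combination of $p^{m-1}$, $p^{m-2}$, and $p^{(m-3)/2}$. The hypothesis $m>3$ forces $m\ge 4$ when $m$ is even and $m\ge 5$ when $m$ is odd, so all these exponents are at least $1$. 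Hence $p$ divides every weight, and Theorem \ref{th-selforthogonal} delivers self-orthogonality.

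For the locality, the main structural fact is that $f\in\cRF$ satisfies $f(ax)=a^h f(x)$, so $D_f$ is closed under multiplication by $\gf_p^*$, and $0\in D_f$ since $f(0)=0$. Fix a $\gf_p$-basis $\{\alpha^0,\alpha^1,\ldots,\alpha^{m-1}\}$ of $\gf_{p^m}$ and write the generator matrix columns of $\overline{\cC_{D_f}}$ as
\[
\bg_d := \bigl(1,\tr_{p^m/p}(\alpha^0 d),\,\tr_{p^m/p}(\alpha^1 d),\,\ldots,\,\tr_{p^m/p}(\alpha^{m-1}d)\bigr)^T,\qquad d\in D_f.
\]
I would handle the all-zero tail column $\bg_0=(1,0,\ldots,0)^T$ first: pick any $d\in D_f\setminus\{0\}$ (non-empty for $m\ge 4$); then both $d,-d\in D_f$ are distinct from $0$ and a direct check gives $\bg_0=\tfrac{1}{2}\bg_d+\tfrac{1}{2}\bg_{-d}$ (valid since $p$ is odd). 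For any column $\bg_d$ with $d\ne 0$, both $-d\in D_f$ and $0\in D_f$ are distinct from $d$, and one verifies the identity $\bg_d=-\bg_{-d}+2\bg_0$ componentwise. By Definition \ref{Def-locality} the locality is at most $2$; and it is exactly $2$ because every $\bg_d$ has first coordinate $1$, so $\bg_d=\lambda\bg_{d'}$ would force $\lambda=1$ and hence $d=d'$, ruling out locality $1$.

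The only nontrivial bookkeeping is tracking the signs of $\sqrt{p^*}^m$ and $\sqrt{p^*}^{m+1}$ across the parity cases during the divisibility step; the locality argument itself is essentially combinatorial once the $\gf_p^*$-closure of $D_f$ and the identity $\bg_d+\bg_{-d}=2\bg_0$ are in hand.
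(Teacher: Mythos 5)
Your proposal is correct and follows essentially the same route as the paper: self-orthogonality via $\mathbf{1}\in\overline{\cC_{D_f}}$ plus $p$-divisibility read off from the weight tables of Lemmas \ref{lem-5.2even} and \ref{lem-5.2odd}, then Theorem \ref{th-selforthogonal}; and locality $2$ via explicit dependencies among generator-matrix columns exploiting the $\gf_p^*$-closure of $D_f$ and the zero column. Your identities $\bg_0=\tfrac{1}{2}\bg_d+\tfrac{1}{2}\bg_{-d}$ and $\bg_d=-\bg_{-d}+2\bg_0$ are just the $u=-1$ instance of the paper's general relation $\bg_i=u\bg_j+(1-u)\bg_n$ with $d_j=u^{-1}d_i$, so the argument is the same in substance (your explicit verification that locality $1$ is impossible is a small extra refinement the paper omits).
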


\begin{proof}
By Lemmas \ref{lem-5.2even} and \ref{lem-5.2odd}, we deduce that $\overline{\cC_{D_f}}$ is $p$-divisible and $\mathbf{1} \in \overline{\cC_{D_f}}$. Then by Theorem \ref{th-selforthogonal}, $\overline{\cC_{D_f}}$ is self-orthogonal.

  Let $\gf_{p^m}^*=\langle\beta\rangle$. Then $\{\beta^0, \beta^1, \cdots, \beta^{m-1}\}$ is a $\gf_p$-basis of $\gf_{p^m}$. Let $d_1, d_2, \cdots, d_{n-1},d_n$ be all the elements in $D_f$. For convenience, let  $d_n = 0$ due to $0\in \cC_{D_f}$.
By definition, the generator matrix $G$ of $\overline{\cC_{D_f}}$ is given by
\begin{eqnarray*}
G:=\left[
\begin{array}{cccc}
1&1&\cdots&1 \\
 \tr_{p^m/p}(\beta^{0}d_{1})& \tr_{p^m/p}(\beta^{0}d_{2})& \cdots &\tr_{p^m/p}(\beta^{0}d_{n}) \\
\tr_{p^m/p}(\beta^{1}d_{1})& \tr_{p^m/p}(\beta^{1}d_{2})& \cdots &\tr_{p^m/p}(\beta^{1}d_{n}) \\
\vdots &\vdots &\ddots &\vdots \\
\tr_{p^m/p}(\beta^{m-1}d_{1})& \tr_{p^m/p}(\beta^{m-1}d_{2})& \cdots &\tr_{p^m/p}(\beta^{m-1}d_{n})
\end{array}\right].
\end{eqnarray*}
For convenience, let $\bg_i:=(1, \tr_{p^m/p}(\beta^0d_i), \tr_{p^m/p}(\beta^1 d_i), \cdots, \tr_{p^m/p}(\beta^{m-1}d_i))^T$, where $1 \leq i \leq n$.
By Definition \ref{def-weakly}, we derive that $ad_i\in D_f$ for any $a\in \gf_p^*$ if $d_i\in D_f$. For fixed $\bg_i$ $(1 \leq i \leq n-1)$ and any $u\in \gf_p\backslash\{0,1\}$, then there exists $v=1-u$ such that $d_j := u^{-1}d_i\in D_f$ and
 \begin{eqnarray*}
\left\{\begin{array}{ccc}
1&=&u+v,\\
  \tr_{p^m/p}(\beta^{0}d_{i})&=&u\tr_{p^m/p}(\beta^{0}d_j)+v\tr_{p^m/p}(\beta^{0}d_n), \\
  \tr_{p^m/p}(\beta^{1}d_{i})&=&u\tr_{p^m/p}(\beta^{1}d_j)+v\tr_{p^m/p}(\beta^{1}d_n), \\
 & \vdots &\\
  \tr_{p^m/p}(\beta^{m-1}d_{i})&=&u\tr_{p^m/p}(\beta^{m-1}d_j)+v\tr_{p^m/p}(\beta^{m-1}d_n),
\end{array}\right.
\end{eqnarray*}
where $d_n=0$. This implies that $\bg_i=u\bg_j+v\bg_n$, where $1 \leq i \leq n-1$.
Since $v\neq 0,1$, it is clear that $\bg_n$ is also a linear combination of $\bg_i$ and $\bg_j$.
Then $\overline{\cC_{D_f}}$ is a locally recoverable code with locality $2$ according to Definition \ref{Def-locality}.
\end{proof}

\begin{remark}
 We remark that $\overline{\cC_{D_f}}$ was proved to be self-orthogonal for $p=3$ by Heng et al. in \cite{Heng2023}.
 However, it was open to prove the self-orthogonality of $\overline{\cC_{D_f}}$ for general odd prime $p$.
  In this subsection, we have solved this open problem.
\end{remark}

\section{Concluding remarks}\label{sec6}
In this paper, we found an interesting relationship between self-orthogonal codes and $p$-divisible codes.
This provides us a useful way to construct $q$-ary self-orthogonal codes from $p$-divisible codes.
By this result, we studied the orthogonality of projective two-weight codes and the Griesmer codes.
Besides, we constructed six families of self-orthogonal $p$-divisible codes from known cyclic codes.
Furthermore, we constructed two more families of self-orthogonal $p$-divisible codes with locality 2.
Some families of optimal or almost optimal linear codes were also derived.

We remark that the self-orthogonal codes obtained by us can be used to construct quantum codes and even lattices \cite{C, LLX, Steane1, Steane2, W}.
The self-orthogonal codes with small locality have nice application in distributed storage systems \cite{GH}.
For further research, it is interesting to classify $p$-divisible codes as they can be used to construct self-orthogonal codes.

\section*{References}

\end{document}